\newif\ifnotes
\definecolor{asparagus}{rgb}{0.53, 0.66, 0.42}
\definecolor{sacramentostategreen}{rgb}{0.0, 0.3, 0.15}
\definecolor{teal}{rgb}{0.0, 0.5, 0.5}
\definecolor{forestgreen}{rgb}{0.13, 0.6, 0.13}
\newtheorem{theorem}{Theorem}[section]
\newtheorem{definition}[theorem]{Definition}
\newtheorem{claim}[theorem]{Claim}
\newtheorem{corollary}[theorem]{Corollary}
\newtheorem{remark}[theorem]{Remark}
\newtheorem{mainproblem}{Problem}
\newtheorem{mainthm}{Theorem}
\newtheorem{maindefn}{Definition}
\newtheorem*{problem*}{Problem}
\theoremstyle{remark}
\Crefname{theorem}{Theorem}{Theorems}
\Crefname{claim}{Claim}{Claims}
\Crefname{lemma}{Lemma}{Lemmas}
\Crefname{proposition}{Proposition}{Propositions}
\Crefname{corollary}{Corollary}{Corollaries}
\Crefname{definition}{Definition}{Definitions}
\newcommand{\E}{\mathop{{}\mathbb{E}}}
\newcommand{\R}{\mathbb{R}}
\newcommand{\cA}{\mathcal{A}}
\newcommand{\cC}{\mathcal{C}}
\newcommand{\cE}{\mathcal{E}}
\newcommand{\cH}{\mathcal{H}}
\newcommand{\cO}{\mathcal{O}}
\newcommand{\cP}{\mathcal{P}}
\newcommand{\mA}{\mathbf{A}}
\newcommand{\mD}{\mathbf{D}}
\newcommand{\mG}{\mathbf{G}}
\newcommand{\mI}{\mathbf{I}}
\newcommand{\mM}{\mathbf{M}}
\newcommand{\mQ}{\mathbf{Q}}
\newcommand{\mS}{\mathbf{S}}
\newcommand{\mU}{\mathbf{U}}
\newcommand{\mV}{\mathbf{V}}
\newcommand{\mW}{\mathbf{W}}
\newcommand{\mZ}{\mathbf{Z}}
\newcommand{\mSigma}{\mathbf{\Sigma}}
\newcommand{\va}{\bm{a}}
\newcommand{\vc}{\bm{c}}
\newcommand{\ve}{\bm{e}}
\newcommand{\vm}{\bm{m}}
\newcommand{\vq}{\bm{q}}
\newcommand{\vr}{\bm{r}}
\newcommand{\vu}{\bm{u}}
\newcommand{\vv}{\bm{v}}
\newcommand{\vw}{\bm{w}}
\newcommand{\vx}{\bm{x}}
\newcommand{\vy}{\bm{y}}
\newcommand{\vz}{\bm{z}}
\def\to{\rightarrow}
\def\eps{\varepsilon}
\def\eps{\varepsilon}
\def\cal{\mathcal}
\newcommand{\defeq}{\coloneqq}
\newcommand{\vspan}[1]{\mathsf{aff}\inparen{#1}}
\newcommand{\lspan}[1]{\mathsf{span}\inparen{#1}}
\newcommand{\suchthat}{{\;\; : \;\;}}
\newcommand{\inparen}[1]{\left(#1\right)}             %\inparen{x+y}  is (x+y)
\newcommand{\inbraces}[1]{\left\{#1\right\}}           %\inbrace{x+y}  is {x+y}
\newcommand{\insquare}[1]{\left[#1\right]}             %\insquare{x+y}  is [x+y]
\newcommand{\inangle}[1]{\left\langle#1\right\rangle} %\inangle{A}    is <A>
\newcommand{\abs}[1]{\ensuremath{\left\lvert #1 \right\rvert}}
\newcommand{\norm}[1]{\ensuremath{\left\lVert #1 \right\rVert}}
\newcommand{\fnorm}[1]{\ensuremath{\left\lVert #1 \right\rVert_{F}}}
\newcommand{\ip}[1]{\left\langle #1 \right\rangle}
\let\nfrac=\nicefrac
\newcommand{\Tr}{\operatorname{tr}}
\newcommand{\unif}{\mathsf{Unif}}
\newcommand{\logv}[1]{\log\inparen{#1}}
\newcommand{\diag}[1]{\mathsf{diag}\inparen{#1}}
\newcommand{\diam}[1]{\mathsf{diam}\inparen{#1}}
\newcommand{\so}[1]{\mathbb{O}\inparen{#1}}
\newcommand{\expv}[1]{\mathsf{exp}\inparen{#1}}
\newcommand{\customlabel}[2]{%
   \protected@write \@auxout {}{\string \newlabel {#1}{{#2}{\thepage}{#2}{#1}{}} }%
   \hypertarget{#1}{#2}
}
\newcounter{casenum}
\newcounter{subcasenum}
\newcounter{casenump}
\newcommand{\casep}[2]{
    \ifthenelse{\equal{\value{casenump}}{0}}{
    \vskip.5\baselineskip\par\noindent
    }{}
    {\it Case \arabic{casenump}:} {\it #1}
    \vskip0.1\baselineskip
    \begin{addmargin}[1.5em]{1em}
    #2
    \end{addmargin}
    \addtocounter{casenump}{1}
}
\newcounter{subcasenump}
\newcommand{\detv}[1]{\mathsf{det}\inparen{#1}}
\newcommand{\conv}[1]{\mathsf{conv}\inparen{#1}}
\renewcommand{\S}{\mathbb{S}}
\definecolor{ForestGreen}{RGB}{34, 139, 34}
\definecolor{forest}{RGB}{0,155,85}
\newcommand{\fixout}{\bgroup\markoverwith{\textcolor{forest}{\rule[0.5ex]{2pt}{0.4pt}}}\ULon}
\newcommand{\vol}{\operatorname{\mathsf{vol}}}
\newcommand{\opt}{\mathsf{OPT}}
\numberwithin{equation}{section}
\numberwithin{algorithm}{section}
\begin{document}

\title{Near-Optimal Streaming Ellipsoidal Rounding for General Convex Polytopes}
\author{Yury Makarychev
\thanks{Toyota Technological Institute at Chicago. Email: \texttt{yury@ttic.edu}. Supported by NSF Awards CCF-1955173, CCF-1934843, and ECCS-2216899.}
\and Naren Sarayu Manoj  
\thanks{Toyota Technological Institute at Chicago. Email: \texttt{nsm@ttic.edu}. Supported by NSF Graduate Research Fellowship.}
\and Max Ovsiankin
\thanks{Toyota Technological Institute at Chicago. Email: \texttt{maxov@ttic.edu}. Supported by NSF Award ECCS-2216899}}
\date{\today}

\sloppy
\maketitle
\thispagestyle{empty}
% \newpage
\begin{abstract}
    We give near-optimal algorithms for computing an ellipsoidal rounding of a convex polytope whose vertices are given in a stream.
    The approximation factor is linear in the dimension (as in John's theorem) and only loses an excess logarithmic factor in the aspect ratio of the polytope.
    Our algorithms are nearly optimal in two senses: first, their runtimes nearly match those of the most efficient known algorithms for the offline version of the problem. 
    Second, their approximation factors nearly match a lower bound we show against a natural class of geometric streaming algorithms.
    In contrast to existing works in the streaming setting that compute ellipsoidal roundings only for centrally symmetric convex polytopes, our algorithms apply to general convex polytopes.

    We also show how to use our algorithms to construct coresets from a stream of points that approximately preserve both the ellipsoidal rounding and the convex hull of the original set of points.
\end{abstract}
\newpage
\tableofcontents
% \pagenumbering{roman}
\newpage
\pagenumbering{arabic}

\newcommand{\Gbar}{\overline{\mG}}
\newcommand{\cbar}{\overline{\vc}}

%\mnote{Todo -- check when we're mentioning previous work it doesn't reveal our identities}

\section{Introduction}

We consider the problem of approximating convex polytopes in $\R^d$ with ``simpler'' convex bodies.
Consider a convex polytope $Z \subset \R^d$.
Our goal is to find a convex body $\widehat{Z}\subset \R^d$ from a given family of convex bodies, a translation vector $\vc \in \R^d$, and a scaling factor $\alpha \in (0, 1]$ such that
\begin{align}
    \vc + \alpha \cdot \widehat{Z} \subseteq Z \subseteq \vc + \widehat{Z}\label{eq:intro_approx_general}.
\end{align}
We say that $\widehat{Z}$ is a $\nfrac{1}{\alpha}$-approximation to $Z$; an algorithm that computes $\widehat{Z}$ is a $\nfrac{1}{\alpha}$-approximation algorithm. In this paper, we will be interested in approximating $Z$ with (a) ellipsoids and (b) polytopes defined by small number of vertices.

This problem has many applications in computational geometry, graphics, robotics, data analysis, and other fields (see \citep{agarwal2005geometric} for an overview of some applications).
It is particularly relevant when we are in the big-data regime and  storing polytope $Z$ requires too much memory. In this case, instead of storing $Z$, we find a reasonable approximation $\widehat{Z}$ with a succinct representation and then use it as a proxy for $Z$. In this setting, it is crucial that we use a \textit{low-memory} approximation algorithm to find $\widehat{Z}$.
%extending existing algorithms relying on approximations of convex polytopes to big-data scenarios. memory-efficient %low-memory algorithms 

In this paper, we study the problem of approximating convex polytopes in the streaming model. The streaming model is a canonical big-data setting that conveniently lends itself to the study of low-memory algorithms. 
We assume that $Z$ is the convex hull of points $\vz_1,\dots, \vz_n$: $Z = \conv{\{\vz_1,\dots,\vz_n\}}$; the stream of points $\{\vz_1,\dots,\vz_n\}$ contains all the vertices of $Z$ and additionally may contain other points from polytope $Z$.
In our streaming model, points $\vz_1,\dots, \vz_n$ arrive one at a time. At every timestep $t$, we must maintain an approximating body $\widehat{Z}_t$ and translate $\vc_t$ such that
% \begin{align}
%     \vp_t + \alpha_t \cdot X_t' \subseteq \conv{\inbraces{\vx_1,\dots,\vx_t}} \subseteq \vp_t + X_t'.\label{eq:intro_approx_streaming}
% \end{align}
\begin{align}
    \conv{\inbraces{\vz_1,\dots,\vz_t}} \subseteq \vc_t + \widehat{Z}_t.\label{eq:intro_approx_streaming}
\end{align}
Once a new point $\vz_{t+1}$ arrives, the algorithm must compute a new approximating body $\widehat{Z}_{t+1}$ and translation $\vc_{t+1}$ such that the guarantee \eqref{eq:intro_approx_streaming} holds for timestep $t+1$. Finally, after the algorithm has seen all $n$ points, we must have 
\begin{align}
    \vc_n + \alpha \cdot \widehat{Z}_n \subseteq \underbrace{\conv{\inbraces{\vz_1,\dots,\vz_n}}}_{Z} \subseteq \vc_n + \widehat{Z}_n.\label{eq:intro_approx_final}
\end{align}
for some $0 < \alpha \le 1$ (where $\nicefrac{1}{\alpha}$ is the approximation factor).
Note that the algorithm may not know the value of $n$ beforehand. We consider two types of approximation. 

\paragraph{Ellipsoidal roundings.} In one thrust, we aim to calculate an \textit{ellipsoidal rounding} of $Z$ -- we are looking for ellipsoidal approximation $\widehat Z=\cal E$. Formally, we would like to output an origin-centered ellipsoid $\cE$, a center/translate $\vc \in \R^d$, and a scaling parameter $0 < \alpha \le 1$ such that
\begin{align*}
    \vc + \alpha \cdot \cE \subseteq Z \subseteq \vc + \cE.
\end{align*}
Ellipsoidal roundings are convenient representations of convex sets. They have applications to preconditioning convex sets for efficient sampling and volume estimation \cite{he21}, algorithms for convex programming \cite{nesterov2008rounding}, robotics \cite{boyd97}, and other areas. They also require the storage of at most $\sim d^2$ floating point numbers, as every ellipsoid can be represented with a center $\vc$ and semiaxes $\vv_1,\dots,\vv_{d'}$ for $d' \le d$.

We note that by John's theorem \cite{john1948}, the minimum-volume outer ellipsoid for $Z$ achieves approximation $\nfrac{1}{\alpha} \le d$. Moreover, the upper bound of $d$ is tight, which is witnessed when $Z$ is a $d$-dimensional simplex (that is, the convex hull of $d+1$ points in general position).

We now formally state the streaming ellipsoidal rounding problem.

\begin{mainproblem}[Streaming ellipsoidal rounding]\label{prob:main}
Let \(Z = \conv{\inbraces{\vz_1, \ldots, \vz_n}} \subseteq \R^d\). 
%Assume that the points \(\vz_1, \ldots, \vz_n\) are received one at a time in a stream.
A streaming algorithm \(\cA\)  receives points \(\vz_1, \ldots, \vz_n\) one at a time  and produces a sequence of ellipsoids \(\vc_t + \cE_t\) and scalings \(\alpha_t\).
The algorithm must satisfy the following guarantee at the end of the stream:
\[\vc_n + \alpha_n \cdot \cE_n \subseteq Z \subseteq c_n + \cE_n\]
We say that \(\vc_n + \cE_n\) is an ellipsoidal rounding of \(Z\) with approximation factor \(\nfrac{1}{\alpha_n}\).
\end{mainproblem}

We note that in the special case where $Z$ is centrally symmetric (i.e., $Z = -Z$), there are algorithms with nearly optimal approximation factors  $O(\sqrt{d\logv{n\kappa^{\mathsf{OL}}}})$ and $O(\sqrt{d\log \kappa})$ due to \citet{woodruff2022high} and \citet{mmo22}, respectively (here, $\kappa^{\mathsf{OL}}$ is the online condition number and $\kappa$ is the aspect ratio of the dataset). The running times of these algorithms nearly match those of the best-known offline solutions. However, these algorithms do not work with non-symmetric polytopes and we are not aware of any way to adapt them so that they do. We defer a more detailed discussion of the algorithms for the symmetric case to Section~\ref{sec:rw}. 

\paragraph{Convex hull approximation.} In another thrust, we want to find a translate $\vc \in \R^d$, subset $S \subseteq [n]$, and scale $\alpha$ such that
\begin{align*}
\conv{\inbraces{\vz_i: i \in S}} \subseteq \conv{\inbraces{\vz_1,\dots,\vz_n}} \subseteq  \vc + \frac{1}{\alpha} \cdot \conv{\inbraces{\vz_i - \vc : i \in S}}.
\end{align*}
Note that $\vc + \nfrac{1}{\alpha} \cdot \conv{\inbraces{\vz_i - \vc : i \in S}}$ is a $\nfrac{1}{\alpha}$-scaled copy of $\conv{\inbraces{\vz_i : i \in S}}$. In other words, we desire to find a \textit{coreset} $\inbraces{\vz_i : i \in S}$ that approximates $Z$. This approach has the advantage of yielding an interpretable solution -- one can think of a coreset as consisting of the most ``important'' datapoints of the input dataset.

We formally state the streaming convex hull approximation problem we study in Problem \ref{prob:coreset}.

\begin{mainproblem}[Streaming convex hull approximation]
\label{prob:coreset}
Let \(Z = \conv{\vz_1, \ldots, \vz_n} \subseteq \R^d\).
%be a convex set where the points \(\vz_1, \ldots, \vz_n\) are received one-by-one in a stream.
A streaming algorithm \(\cA\) receives points \(\vz_1, \ldots, \vz_n\) one at a time and produces a sequence of scalings \(\alpha_t\), centers \(\vc_t\), subsets \(S_t \subseteq [n]\) such that \(S_t \subseteq S_{t+1}\).
The algorithm must satisfy the following guarantee at the end of the stream.
\[\conv{\inbraces{\vz_i : i \in S_n}} \subseteq \conv{\inbraces{\vz_1,\dots,\vz_n}} \subseteq \vc_n + \frac{1}{\alpha} \cdot \conv{\inbraces{\vz_i - \vc_n : i \in S_n}}\]
We say that $\inbraces{\vz_i : i \in S_n}$ is a coreset of \(Z\) with approximation factor \(\nfrac{1}{\alpha_n}\). We will also call \(S_n\) a coreset.
\end{mainproblem}
Note that the model considered in Problem \ref{prob:coreset} is essentially the same as the \textit{online coreset model} studied by \citet{woodruff2022high}.
Similar to Problem \ref{prob:main}, Problem \ref{prob:coreset} has been studied in the case where $Z$ is centrally symmetric. In particular, \citet{woodruff2022high} obtain approximation factor $O(\sqrt{d\logv{n\kappa^{\mathsf{OL}}}})$ (where $\kappa^{\mathsf{OL}}$ is the same online condition number mentioned earlier).
However, whether analogous results for asymmetric polytopes hold was an important unresolved question.

\subsection{Our contributions}

In this section, we present our results for Problems \ref{prob:main} and \ref{prob:coreset}.

\subsubsection{Algorithmic results}

We start with defining several quantities that we need to state the results and describe their proofs.

\paragraph{Notation.} We will denote the linear span of a set of points $A$ by $\lspan{A}$. That is, $\lspan{A}$ is the minimal linear subspace that contains $A$. We denote the affine span of $A$ by $\vspan{A}$. That is, $\vspan{A}$ is the minimal affine subspace that contains $A$. Note that $\vspan{A} = \va + \lspan{A-\va}$ if $\va \in A$. Finally, we denote the unit ball centered at the origin by $B_2^d$.

\begin{maindefn}[Inradius]
\label{def:inradius}
Let $K \subset \R^d$ be a convex body. The \textit{inradius} $r(K)$ of $K$ is the largest $r$ such that there exists a point $\vc_I$ (called the \textit{incenter}) for which $\vc_I + r \cdot \inparen{B_2^d \cap \lspan{K - \vc_I}} \subseteq K$. 
\end{maindefn}

\begin{maindefn}[Circumradius]
\label{def:circumradius}
Let $K \subset \R^d$ be a convex body.  The \textit{circumradius} $R(K)$ of $K$ is the smallest $R$ such that there exists a point $\vc_C$ (called the \textit{circumcenter}) for which $K \subseteq \vc_C + R \cdot B_2^d$.
\end{maindefn}

\begin{maindefn}[Aspect Ratio]
\label{def:aspect-ratio}
Let $K \subset \R^d$ be a convex body. We say that $\kappa(K) \coloneqq \nfrac{R(K)}{r(K)}$ is the \textit{aspect ratio} of $K$.
\end{maindefn}

We now state Theorem \ref{thm:main_one}, which provides an algorithm for Problem \ref{prob:main}. In addition to the data stream of $z_1, \dots, z_n$, this algorithm needs a suitable initialization: a ball $\vc_0 + r_0 \cdot B_2^d$ inside $Z$.
%; if the radius $r_0$ of this ball is close to the inradius $r(Z)$ of $Z$, the algorithm gives an $O(d\log \kappa(Z))$ approximation.

\begin{mainthm}
\label{thm:main_one}
Consider the setting of Problem \ref{prob:main}. Suppose the algorithm is given an initial center $\vc_0$ and radius $r_0$ for which it is guaranteed that $\vc_0 + r_0 \cdot B_2^d \subseteq \conv{\inbraces{\vz_1,\dots,\vz_n}}$.
There exists an algorithm (Algorithm \ref{alg:main}) that, for every timestep $t$, maintains an origin-centered ellipsoid $\cE_t$, center $\vc_t$, and scaling factor $\alpha_t$ such that at every timestep $t$: $\conv{\inbraces{\vz_1,\dots,\vz_t}} \subseteq \vc_t + \cE_t$ 
and at timestep $n$: $\vc_n + \alpha_n \cdot \cE_n \subseteq Z \subseteq c_n + \cE_n$, where  \[\nfrac{1}{\alpha_n} = O\inparen{\min\inparen{ \nfrac{R(Z)}{r_0}, d\logv{\nfrac{R(Z)}{r_0}}}}\] The algorithm has runtime $\widetilde{O}(nd^2)$ and stores $O(d^2)$ floating point numbers.
\end{mainthm}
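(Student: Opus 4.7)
The plan is to design an incremental algorithm that maintains an outer ellipsoid $\vc_t + \cE_t \supseteq \conv{\inbraces{\vz_1, \ldots, \vz_t}}$ using a greedy L\"owner--John-style MVEE update rule, and then determines the scaling factor $\alpha_n$ at the end of the stream. I would initialize $\cE_0 = r_0 B_2^d$ and take the initial center to be the given $\vc_0$. When $\vz_{t+1}$ arrives: if $\vz_{t+1} \in \vc_t + \cE_t$ then set $(\vc_{t+1}, \cE_{t+1}) = (\vc_t, \cE_t)$; otherwise, replace the current ellipsoid with an enclosing ellipsoid of $(\vc_t + \cE_t) \cup \inbraces{\vz_{t+1}}$ via a closed-form rank-one update on the PSD matrix representation of $\cE_t$, translating the center as dictated by the MVEE formula. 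The containment invariant is immediate by induction. Storage is $O(d^2)$ floating-point numbers (center plus PSD matrix), and each containment check and update takes $O(d^2)$ time, giving the claimed $\widetilde{O}(nd^2)$ runtime.

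To establish the approximation guarantee, I would prove the two terms of the $\min$ separately. The trivial bound $1/\alpha_n = O(R(Z)/r_0)$ would follow by outputting $\vc_n$ close to $\vc_0$, using $\vc_0 + r_0 B_2^d \subseteq Z$ together with the fact that $\cE_n$ is contained in a ball of radius $O(R(Z))$ (since greedy MVEE updates never grow the ellipsoid beyond what is forced by containment of the new point); scaling $\cE_n$ by $\Theta(r_0/R(Z))$ then yields a scaled copy contained in $\vc_0 + r_0 B_2^d \subseteq Z$. The improved bound $1/\alpha_n = O(d\log(R(Z)/r_0))$ requires more work. The key step would be a volume/potential argument showing that each nontrivial MVEE update grows $\vol(\cE_t)$ by a definite multiplicative factor, so the final $\cE_n$ cannot be too much larger than the true MVEE of $Z$. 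Applying the asymmetric version of John's theorem to $\cE_n$ extracts the factor of $d$, and the extra $\log(R(Z)/r_0)$ accounts for the gap between the greedy $\cE_n$ and the true MVEE of $Z$.

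The main technical obstacle will be designing the update rule so that simultaneously (i) each nontrivial update provides sufficient multiplicative volume growth, (ii) the center $\vc_t$ is translated correctly to avoid blowing up the approximation factor when $Z$ is highly asymmetric (since the MVEE center shifts as new points are incorporated), and (iii) the structural relationship between the greedy $\cE_n$ and the true John ellipsoid of $Z$ can be quantified, yielding the claimed $\log(R(Z)/r_0)$ overhead. I expect the analysis to rely on a carefully chosen potential function that tracks the discrepancy between $\cE_t$ and an ``ideal'' John ellipsoid, combined with a volumetric bound relating $\vol(\cE_n)$ to $\vol(r_0 B_2^d)$ and $\vol(R(Z) B_2^d)$, together with a case analysis that selects the better of the two bounds depending on the regime of $R(Z)/r_0$.
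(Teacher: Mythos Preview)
Your proposal has a genuine gap at its core: the greedy MVEE update you propose is precisely the approach the paper (and prior work it cites) identifies as failing. The paper explicitly remarks that taking the minimum-volume ellipsoid containing $\cE_{t-1}$ and $\vz_t$ is ``a highly suboptimal choice,'' and references \citet{mukhopadhyayapproximate,mukhopadhyay2010approximate}, who showed the approximation factor of exactly this greedy rule can be \emph{unbounded}. So the outer-ellipsoid sequence you build may be fine as a cover, but you will not be able to extract the claimed $\alpha_n$ from it.

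The deeper conceptual issue is your plan to compute $\alpha_n$ only at the end. You propose to ``apply the asymmetric version of John's theorem to $\cE_n$,'' but John's theorem gives a particular ellipsoid inside $Z$; it says nothing about whether a scaled copy of \emph{your} $\cE_n$ (about \emph{your} center $\vc_n$) lies in $Z$. Two outer ellipsoids of $Z$ with comparable volume can have completely different shapes and centers, and only one of them might admit a good inner scaling. The paper resolves this by maintaining the inner containment \emph{throughout the stream}: it tracks $\alpha_t$ at every step and enforces the monotone invariant $\vc_t + \alpha_t \cE_t \subseteq \conv{(\vc_{t-1}+\alpha_{t-1}\cE_{t-1})\cup\{\vz_t\}}$, so that by induction $\vc_n + \alpha_n \cE_n \subseteq Z$ is guaranteed. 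The update rule is then designed not to minimize outer volume, but to balance two competing constraints (outer growth versus inner fit), yielding an explicit non-MVEE formula with semi-axes $a=e^\gamma$, $b=1+\tfrac{\alpha-\alpha'}{2}$, center shift $c=-\alpha+\alpha' a$, and $\tfrac{1}{\alpha'}=\tfrac{1}{\alpha}+2\gamma$. The evolution condition $\tfrac{1/\alpha_t - 1/\alpha_{t-1}}{\log\vol(\cE_t)-\log\vol(\cE_{t-1})}\le C$ then telescopes to give $1/\alpha_n \lesssim d\log(R/r_0)$. Your volume-potential intuition is in the right spirit, but without the inner invariant and the bespoke update there is no way to tie the potential to $\alpha_n$.
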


Note that 
%Theorem \ref{thm:main_one} requires the algorithm to know some center and ball initialization $(\vc_0, r_0)$, and 
the final approximation factor depends on the quality of the initialization $(\vc_0, r_0)$. If the radius $r_0$ of this ball is reasonably close to the inradius $r(Z)$ of $Z$, the algorithm gives an $O(\min (\kappa(Z),d\log \kappa(Z)))$ approximation.
In Theorem \ref{thm:main_two}, we adapt the algorithm form Theorem~\ref{thm:main_one} to the setting where the algorithm does not have the initialization information.
%this approach to obtain an algorithm that yields a solution for Problem \ref{prob:main} when the algorithm does not have this information. 
Note that the approximation guarantee of $O(\min (\kappa(Z),d\log \kappa(Z)))$ is a natural analogue of the bounds by~\cite{mmo22} and \cite{woodruff2022high} for the symmetric case (see Section~\ref{sec:rw}).

%See Theorem \ref{thm:main_two}.

\begin{mainthm}
\label{thm:main_two}
Consider the setting of Problem \ref{prob:main}. There exists an algorithm (Algorithm \ref{alg:fully_online_approx}) that, for every timestep $t$, maintains an ellipsoid $\cE_t$, center $\vc_t$, and approximation factor $\alpha_t$ such that
\begin{align*}
    \vc_t + \alpha_t \cdot \cE_t \subseteq \conv{\inbraces{\vz_1,\dots,\vz_t}} \subseteq \vc_t + \cE_t.
\end{align*}
Additionally, let $r_t$ and $R_t$ be the largest and smallest parameters, respectively, for which there exists $\vc^{\star}_t$ such that
\begin{align*}
    \vc^{\star}_t + r_t \cdot \inparen{B_2^d \cap \lspan{\vz_1-\vc^{\star}_t,\dots,\vz_t-\vc^{\star}_t}} \subseteq \conv{\inbraces{\vz_1,\dots,\vz_t}} \subseteq \vc^{\star}_t + R_t \cdot B_2^d
\end{align*}
and $d_t \coloneqq \mathsf{dim}\inparen{\vspan{\vz_1,\dots,\vz_t}}$.
Then, for all timesteps $t$, we have
\begin{align*}
    \nfrac{1}{\alpha_t} = O\inparen{d_t\logv{d_t \cdot \max_{t' \le t} \frac{R_t}{r_{t'}}}}.
\end{align*}
The algorithm runs in time $\widetilde{O}(nd^2)$ and stores $O(d^2)$ floating point numbers.
\end{mainthm}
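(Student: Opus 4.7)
The plan is to apply Algorithm~\ref{alg:main} in phases, where a new phase begins whenever an arriving point $\vz_{t+1}$ lies outside $\vspan{\vz_1,\dots,\vz_t}$, i.e.\ when the affine dimension of the convex hull increases. Within each phase of dimension $k = d_t$, all points lie in a fixed $k$-dimensional affine subspace, and Theorem~\ref{thm:main_one} applies provided we can supply it with an initial inscribed ball. The core task will therefore be to produce such an inscribed ball at the start of each phase, and to seamlessly stitch the phases together so that the outer-containment guarantee holds at every timestep.

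To supply the inscribed ball, I would maintain online an inscribed simplex $S_t$ with $d_t + 1$ affinely independent vertices drawn from the stream. When a new phase of dimension $k$ begins at time $t_0$, we take the incenter and inradius of $S_{t_0}$ within the new affine subspace as the initialization $(\vc_0, r_0)$ for Algorithm~\ref{alg:main}. The simplex is maintained by extending with each new dimension-raising point and by greedy volume-maximizing swaps: when a new point would significantly increase the simplex's volume by replacing an existing vertex, we perform the swap. The central combinatorial claim to establish is that this procedure maintains a simplex with inradius $r_0 \ge \Omega(r_{t_0}/k)$, where $r_{t_0}$ is the inradius of $\conv{\inbraces{\vz_1,\dots,\vz_{t_0}}}$ within its affine span.

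Given this fatness guarantee, the approximation bound follows directly: within the current phase, Theorem~\ref{thm:main_one} gives
\[\nfrac{1}{\alpha_t} = O\inparen{k \log\inparen{R_t/r_0}} = O\inparen{k \log\inparen{k R_t/r_{t_0}}} \le O\inparen{d_t \log\inparen{d_t \max_{t' \le t} R_t/r_{t'}}},\]
where the last inequality uses $t' = t_0$ in the maximum. The outer containment $\conv{\inbraces{\vz_1,\dots,\vz_t}} \subseteq \vc_t + \cE_t$ is preserved at every timestep by Algorithm~\ref{alg:main} within a phase, and at each phase transition we enlarge $\cE_t$ along the newly discovered affine direction so that it contains the new dimension-raising point. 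The runtime and memory remain $\widetilde{O}(nd^2)$ and $O(d^2)$ floats respectively, with only $O(d^2)$ overhead to track the simplex vertices and the current affine basis.

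The main obstacle is the combinatorial claim $r(S_{t_0}) \ge \Omega(r_{t_0}/k)$. It decomposes into (i) an offline fact that any $k$-dimensional convex polytope $K$ with inradius $r$ contains a simplex on $k+1$ of its vertices whose inradius is $\Omega(r/k)$ (a consequence of John's theorem applied to $K$), and (ii) a competitive analysis showing that the greedy online swap-based maintenance remains within a constant factor of this offline optimum as the stream progresses. If instead only a looser $\Omega(r/\poly(k))$ guarantee holds, the extra factor merely inflates the logarithm and the stated approximation bound still follows.
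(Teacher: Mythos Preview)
Your plan has a genuine gap at the phase restart, and the paper takes a different route that avoids it.

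If you restart Algorithm~\ref{alg:main} at time $t_0$ with the simplex-inscribed ball as $(\vc_0,r_0)$, its outer ellipsoid at step $t>t_0$ is only guaranteed to contain the points fed to it since the restart together with $\vc_0+r_0 B_2^d$; it need not contain $\vz_1,\dots,\vz_{t_0-1}$. Theorem~\ref{thm:main_one} begins with $\cE_0$ equal to that small ball and $\alpha_0=1$, so the assertion ``outer containment is preserved at every timestep by Algorithm~\ref{alg:main} within a phase'' is false for the earlier points. Re-feeding them is not streaming. The alternative of keeping the previous outer ellipsoid (extended along the new direction) while resetting the inner one to the simplex ball is also blocked: in the monotone framework the inner and outer ellipsoids are concentric scalings of one another, so forcing $\alpha_{t_0}\cE_{t_0}$ inside a ball of radius $\rho_{t_0}\approx r_{t_0}/k$ yields $1/\alpha_{t_0}\gtrsim kR_{t_0}/r_{t_0}$, not $k\log(kR_{t_0}/r_{t_0})$.

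The paper never restarts. It designs an explicit \emph{irregular update} for a dimension-increasing point (Claims~\ref{claim:irregular_update_easy} and~\ref{claim:irregular_update_hard}): after an affine map placing the new point in a canonical position, a closed-form new ellipsoid is shown to satisfy both monotonicity invariants while increasing $1/\alpha$ by exactly $1$. The bound on $1/\alpha_t$ then comes from summing the evolution condition over regular steps and the $+1$ over the at most $d_t$ irregular steps, using a telescoping volume argument in which each $\cE_{t'}$ is lifted to a full-dimensional ellipsoid $T(\cE_{t'})$ by padding the orthogonal directions with axes of length $\hat r=\min_{t'\le t}r_{t'}$. No inscribed simplex is maintained, so the online competitive-ratio claim you flag as the ``main obstacle'' is never needed.
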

Let us now quickly compare the guarantees of Theorem~\ref{thm:main_one} and~\ref{thm:main_two}. Notice that the algorithm in Theorem~\ref{thm:main_two} does not require an initialization pair $(\vc_0, r_0)$. Additionally, the algorithm in Theorem~\ref{thm:main_two} outputs a per-timestep approximation as opposed to just an approximation at the end of the stream. However, these advantages come at a cost -- it is easy to check that the aspect ratio term seen in Theorem \ref{thm:main_two} can be larger than that in Theorem \ref{thm:main_one}, e.g., it is possible to have $\nfrac{R(Z)}{r_0} \le \max_{t' \le n} \nfrac{R_n}{r_{t'}}$.

However, when we impose the additional constraint that the points $\vz_t$ have coordinates that are integers in the range $[-N,N]$, we can improve over the guarantee in \Cref{thm:main_two} and obtain results that are independent of the aspect ratio. This is similar in spirit to the condition number-independent bound that \citet{woodruff2022high} obtain for the sums of online leverage scores. However, a key difference is that our results still remain independent of the length of the stream. See \Cref{thm:main_two_ints}.

\begin{mainthm}
\label{thm:main_two_ints}
Consider the setting of Problem \ref{prob:main}, where in addition, the points $\vz_1,\dots,\vz_n$ are such that their coordinates are integers in $\inbraces{-N, -N+1, \dots, N-1, N}$. There exists an algorithm (Algorithm \ref{alg:fully_online_approx}) that, for every timestep $t$, maintains an ellipsoid $\cE_t$, center $\vc_t$, and approximation factor $\alpha_t$ such that
\begin{align*}
    \vc_t + \alpha_t \cdot \cE_t \subseteq \conv{\inbraces{\vz_1,\dots,\vz_t}} \subseteq \vc_t + \cE_t.
\end{align*}
Let $d_t \coloneqq \mathsf{dim}\inparen{\vspan{\vz_1,\dots,\vz_t}}$. Then, for all timesteps $t$, we have
\begin{align*}
    \nfrac{1}{\alpha_t} = O\inparen{d_t\logv{d N}}.
\end{align*}
The algorithm runs in time $\widetilde{O}(nd^2)$ and stores $O(d^2)$ floating point numbers.
\end{mainthm}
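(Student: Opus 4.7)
My plan is to derive Theorem~\ref{thm:main_two_ints} by adapting the analysis of Algorithm~\ref{alg:fully_online_approx} from Theorem~\ref{thm:main_two} to exploit the integer structure of the input. I would first establish two elementary facts under the integrality assumption. Since each $\vz_i \in \inbraces{-N, \ldots, N}^d$, the polytope $\conv{\inbraces{\vz_1, \ldots, \vz_t}}$ lies in the ball of radius $\sqrt{d}\,N$ around the origin, so $R_t \le \sqrt{d}\,N$. Moreover, whenever the current affine dimension satisfies $d_{t'} = k$, the $k$-dimensional volume of $\conv{\inbraces{\vz_1, \ldots, \vz_{t'}}}$, measured inside its affine span, is at least $1/k!$: the Gram matrix of any $k$ linearly independent integer edge vectors is a positive-definite integer matrix of determinant at least $1$, so the spanned parallelepiped has $k$-volume $\ge 1$ and any $k$-simplex inside the convex hull has $k$-volume $\ge 1/k!$.

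Next I would convert this volume bound into a lower bound on $r_{t'}$. Applying John's theorem within the affine span, the maximum-volume inscribed ellipsoid $\cE^J$ of $\conv{\inbraces{\vz_1, \ldots, \vz_{t'}}}$ satisfies $\cE^J \subseteq \conv{\inbraces{\vz_1, \ldots, \vz_{t'}}} \subseteq d_{t'} \cdot \cE^J$, so $\vol(\cE^J) \ge 1/(d_{t'}!\, d_{t'}^{d_{t'}})$ up to universal constants. Together with $\sigma_{\max}(\cE^J) \le R_t \le \sqrt{d}\,N$, the product-of-semiaxes lower bound forces $r_{t'} \ge \sigma_{\min}(\cE^J) \ge (dN)^{-O(d_{t'})}$, and hence $\log(R_t / r_{t'}) = O(d_{t'} \log(dN))$.

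Plugging these bounds directly into Theorem~\ref{thm:main_two} yields $\nfrac{1}{\alpha_t} = O(d_t^2 \log(dN))$, off by a factor of $d_t$ from the target, and closing this gap is the hard part of the proof. My plan is to open up the analysis of Algorithm~\ref{alg:fully_online_approx} and replace the global aspect-ratio bound by a phase-based analysis: partition the stream into the at most $d_t + 1$ maximal intervals on which $d_{t'}$ is constant, and argue that within each phase the algorithm's potential (typically the $\log\det$ of the ellipsoid's defining matrix) changes by only $O(\log(dN))$. This should be tractable in the integer setting because every nontrivial expansion step is triggered by an integer point whose Euclidean ``excess'' relative to the current ellipsoid is lower bounded polynomially in $1/(dN)$, so only $O(\log(dN))$ such events can occur before the ellipsoid stabilizes in the current dimension. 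Summing over the at most $d_t + 1$ phases then delivers $\nfrac{1}{\alpha_t} = O(d_t \log(dN))$ as claimed, with the runtime and memory bounds inherited unchanged from Theorem~\ref{thm:main_two}.
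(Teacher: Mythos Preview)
Your first two paragraphs are sound: the circumradius bound $R_t \le \sqrt{d}\,N$ is immediate, and the inradius bound $r_{t'} \ge (dN)^{-O(d_{t'})}$ via John's theorem and the integer-determinant lower bound on the simplex volume is correct. You also correctly identify that plugging these into Theorem~\ref{thm:main_two} gives only $O(d_t^2\log(dN))$.

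The gap is in your proposed refinement. The claim that ``within each phase the algorithm's potential changes by only $O(\log(dN))$'' is not justified and is not true in general. Within a phase of dimension $k$, the $k$-dimensional log-volume of $\cE_t$ can grow by $\Theta(k\log(dN))$, not $O(\log(dN))$: the ellipsoid lives in a $k$-dimensional space and each of its $k$ semiaxes can expand by a factor of order $dN$. Your heuristic that ``every nontrivial expansion step is triggered by an integer point whose excess is lower bounded polynomially in $1/(dN)$'' does not help here, because it bounds neither the number of expansion events nor the total log-volume change within a phase; the excess $\|\mA_{t-1}(\vz_t-\vc_{t-1})\|-1$ is measured in transformed coordinates and carries no direct integer structure.

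The paper closes the $d_t$ gap by a different mechanism. Rather than bounding each phase separately, it tracks how the volume changes \emph{across} irregular steps: at an irregular step $t$ (dimension jump $k-1\to k$), one shows $\frac{\vol_k(\cE_t)}{\vol_{k-1}(\cE_{t-1})}\cdot\frac{\vol_{k-1}(B_2^{k-1})}{\vol_k(B_2^k)} \ge \tfrac{1}{2}\|\vz_t^\perp\|$, where $\vz_t^\perp$ is the component of $\vz_t$ orthogonal to the previous affine span. Telescoping over the whole stream, the product $\prod_{t\in S}\|\vz_t^\perp\|$ over all irregular steps equals $\sqrt{\det(\mM\mM^\top)}$ for the integer matrix $\mM$ whose rows are the $\vz_t$, $t\in S$; this determinant is a nonzero integer, hence $\ge 1$. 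That single global lower bound is what replaces your per-phase inradius estimate and removes the extra factor of $d_t$. Your argument uses integrality only to lower-bound one simplex volume at a time; the paper uses it once, multiplicatively, across all dimension jumps.
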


We prove Theorems \ref{thm:main_one}, \ref{thm:main_two}, and \ref{thm:main_two_ints} in Section \ref{sec:main_alg_top}.
With Theorems \ref{thm:main_two} and \ref{thm:main_two_ints} in hand, obtaining results for Problem \ref{prob:coreset} becomes straightforward. We use the algorithm guaranteed by Theorem \ref{thm:main_two} along with a simple subset selection criterion to arrive at our result for Problem \ref{prob:coreset}. %See Theorem \ref{thm:main_three}.

\begin{mainthm}
\label{thm:main_three}
Consider $Z = \conv{\inbraces{\vz_1,\dots,\vz_n}}$. For a subset $S \subseteq [n]$, let $Z\vert_S=\conv{\inbraces{\vz_i : i \in S}}$.
Consider the setting of Problem \ref{prob:coreset}. There exists a streaming algorithm (Algorithm \ref{alg:ellipse_to_coreset}) that, for every timestep $t$, maintains a subset $S_t$, center $\vc_t$, and scaling factor $\alpha_t$ such that
\begin{align*}
    Z\vert_{S_t} \subseteq \conv{\inbraces{\vz_1,\dots,\vz_t}} \subseteq \vc_t + \frac{1}{\alpha_t} \cdot \inparen{Z\vert_{S_t} - \vc_t}.
\end{align*}
Additionally, for $d_t$, $r_t$ and $R_t$ as defined in Theorem \ref{thm:main_two}, we have for all $t$ that
\begin{align*}
    \frac{1}{\alpha_t} &= O\inparen{d_t\logv{d_t \cdot \max_{t' \le t} \frac{R_t}{r_{t'}}}} & \text{ and }& & \abs{S_t} &= O\inparen{d_t\logv{\max_{t' \le t} \frac{R_t}{r_{t'}}}},
\end{align*}
and, if the $\vz_t$ have integer coordinates ranging in $\insquare{-N, N}$, then
\begin{align*}
    \frac{1}{\alpha_t} &= O\inparen{d_t\logv{dN}} & \text{ and }& & \abs{S_t} &= O\inparen{d_t\logv{dN}}.
\end{align*}
Each $S_t$ is either $S_{t-1}$ or $S_{t-1} \cup \{t\}$ (where $t\geq 1$ and $S_0 = \varnothing$).
The algorithm runs in time $\widetilde{O}(nd^2)$ and stores at most $O(d^2)$ floating point numbers.
\end{mainthm}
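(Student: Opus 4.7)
The plan is to piggyback on the ellipsoidal rounding algorithm from Theorems~\ref{thm:main_two} and~\ref{thm:main_two_ints} and extract a coreset from its trajectory. Specifically, we run Algorithm~\ref{alg:fully_online_approx} in parallel with the stream, and define $S_t$ to be the set of indices that trigger an update to the algorithm's state:
\[S_t \defeq \inbraces{\, t' \le t \suchthat (\cE_{t'},\vc_{t'}) \ne (\cE_{t'-1},\vc_{t'-1}) \,},\]
with $S_0 = \varnothing$. By construction, $S_t$ is monotone with $S_t \in \inbraces{S_{t-1},\; S_{t-1} \cup \inbraces{t}}$, as the problem requires.

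For the approximation guarantee, we first establish a memoryless property: the state $(\cE_t, \vc_t, \alpha_t)$ produced by running Algorithm~\ref{alg:fully_online_approx} on the full stream equals the state it would produce if run only on the ordered substream $(\vz_i)_{i \in S_t}$, because the indices outside $S_t$ leave the state unchanged by definition. Applying the inner-ellipsoid guarantee of Theorem~\ref{thm:main_two} (or Theorem~\ref{thm:main_two_ints}) to this substream yields $\vc_t + \alpha_t \cdot \cE_t \subseteq \conv{\inbraces{\vz_i : i \in S_t}}$. Combined with the outer containment $\conv{\inbraces{\vz_1,\dots,\vz_t}} \subseteq \vc_t + \cE_t$, this gives the chain
\[\conv{\inbraces{\vz_i : i \in S_t}} \;\subseteq\; \conv{\inbraces{\vz_1,\dots,\vz_t}} \;\subseteq\; \vc_t + \cE_t \;\subseteq\; \vc_t + \tfrac{1}{\alpha_t}\inparen{Z\vert_{S_t} - \vc_t},\]
which is exactly the coreset inclusion required by Problem~\ref{prob:coreset}. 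The approximation factor $\nfrac{1}{\alpha_t}$ therefore inherits whatever bound Theorem~\ref{thm:main_two} or~\ref{thm:main_two_ints} provides.

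For the coreset size bound, we apply a volume-doubling argument to Algorithm~\ref{alg:fully_online_approx}'s update rule: the aim is to show that each state change multiplies the $d_t$-dimensional volume of $\cE_t$ by a factor of at least some absolute constant $c > 1$. Combined with the coarse volume bounds $\vol(\cE_t) \le \vol\inparen{R_t \cdot B_2^{d_t}}$ at all times and $\vol\inparen{\cE_{t^\star}} \ge \vol\inparen{r_{t^\star} \cdot B_2^{d_t}}$ at the first timestep $t^\star$ that reaches each dimension, telescoping across the at most $d$ dimension phases yields $\abs{S_t} = O(d_t \logv{\max_{t' \le t} R_t/r_{t'}})$. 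For the integer-coordinate setting, we substitute the sharper $O(\logv{dN})$ surrogate for the aspect ratio already established in Theorem~\ref{thm:main_two_ints}.

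The main obstacle is simultaneously verifying the memoryless property and the per-update volume-increase lemma. The memoryless property should hold essentially by design, provided Algorithm~\ref{alg:fully_online_approx}'s update rule depends only on the current state and the incoming point, and trivially absorbs points already well-approximated by $\vc_t + \cE_t$. The volume-increase lemma is the quantitatively subtle piece: it has to be traced through the specific John-style update used inside Algorithm~\ref{alg:fully_online_approx}, and the accounting must separate the updates that enlarge the affine hull (raising $d_t$) from the within-dimension updates, so that no dimension-jump artifacts contaminate the volume comparison.
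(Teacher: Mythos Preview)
Your approximation-factor argument via the memoryless property is essentially the paper's approach and is correct: running Algorithm~\ref{alg:fully_online_approx} on the substream $(\vz_i)_{i\in S_t}$ produces the same state $(\cE_t,\vc_t,\alpha_t)$, so the inner-ellipsoid guarantee of Theorem~\ref{thm:main_two} applied to that substream gives $\vc_t+\alpha_t\cE_t\subseteq Z\vert_{S_t}$, and the chain of inclusions follows.

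The gap is in the size bound. Your ``volume-doubling'' lemma---that every state change in Algorithm~\ref{alg:fully_online_approx} multiplies $\vol_{d_t}(\cE_t)$ by a fixed constant $c>1$---is false. The full update $\cA^{\mathsf{full}}$ (Algorithm~\ref{alg:full_update}) fires whenever $\|\vu\|>1$, and by Claim~\ref{claim:vol_t} the volume increase is only $\exp(\gamma_t^\star)$ where $\gamma_t^\star$ solves $a(\gamma_t^\star)+c(\gamma_t^\star)=\|\vu\|$. As $\|\vu\|\downarrow 1$ one has $\gamma_t^\star\downarrow 0$, so an adversary can feed points just outside the current outer ellipsoid and force arbitrarily many updates with negligible volume gain. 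With your definition of $S_t$ as ``all indices that change the state,'' $|S_t|$ is unbounded in terms of $d_t$ and the aspect ratio.

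The paper's fix is to change the selection rule (Algorithm~\ref{alg:ellipse_to_coreset}): compute the \emph{tentative} update $\cE_{\mathsf{test}}$, but commit to it and add $t$ to the coreset only if $d_t>d_{t-1}$ or $\vol_{d_t}(\cE_{\mathsf{test}})/\vol_{d_t}(\cE_{t-1})\ge e$; otherwise leave the ellipsoid unchanged. This makes the volume argument go through by construction, but it breaks your outer-containment step, since now unselected points need not lie in $\vc_t+\cE_t$. The paper closes this with an extra constant: if $\vz_{t'}$ is skipped then the tentative volume increase is below $e$, which forces $\gamma_{t'}^\star<1$ and hence $\|\vu\|=a+c\le a(1+\alpha')<2e$, so $\vz_{t'}\in \vc_{t'-1}+2e\cdot\cE_{t'-1}\subseteq \vc_t+(2e+1)\cE_t$ by monotonicity. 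The final approximation factor picks up this $(2e+1)$ factor, still $O(1/\alpha_t)$.
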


We prove Theorem \ref{thm:main_three} in Section \ref{sec:coreset}.

\subsubsection{Approximability lower bound}

Observe that the approximation factors obtained in Theorems \ref{thm:main_one}, \ref{thm:main_two}, and \ref{thm:main_three} all incur a mild dependence on (variants of) the aspect ratio of the dataset. A natural question is whether this dependence is necessary. In Theorem \ref{thm:main_four}, we conclude that the approximation factor from Theorem~\ref{thm:main_one} is in fact nearly optimal for a wide class of \textit{monotone} algorithms. We defer the discussion of the notion of a monotone algorithm to Section~\ref{sec:monotone}. Loosely speaking, a monotone algorithm commits to the choices it makes; namely, the outer ellipsoid may only increase over time $\vc_{t} + \cE_{t} \supseteq \vc_{t-1} + \cE_{t-1}$ and the inner ellipsoid $\vc_t + \alpha_t \cE_t$ satisfies a related but more technical condition $\vc_t + \alpha_t \cE_t \subseteq \conv{(\vc_{t-1} + \alpha_{t-1} \cdot \cE_{t-1}) \cup \{\vz_t\}}$.
%Now we are ready to state our lower bound.

\begin{mainthm}
\label{thm:main_four}
Consider the setting of Problem \ref{prob:main}. Let \(\cA\) be any monotone algorithm (see  Definition \ref{def:alg-invariant} in Section~\ref{sec:monotone}) that solves \Cref{prob:main} with approximation factor $\nfrac{1}{\alpha_n}$.
For every \(d \ge 2\), there exists a sequence of points \(\inbraces{\vz_1, \ldots, \vz_n} \subset \R^d\) 
such that algorithm \(\cA\) gets  an approximation factor of \(\nfrac{1}{\alpha_n} \geq \Omega\inparen{\frac{d \logv{\kappa(Z)}}{\log d}}\) on \(Z = \conv{\inbraces{z_1,\dots,z_n}}\).
\end{mainthm}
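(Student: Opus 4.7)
The plan is to construct an adversarial input stream for any given monotone algorithm $\cA$. The construction proceeds in $L = \Theta\inparen{\log \kappa(Z) / \log d}$ phases; in phase $i$, the adversary reveals $d+1$ points forming a $d$-simplex $\Delta_i$ whose scale shrinks by a factor of roughly $d$ between consecutive phases. The position and orientation of $\Delta_i$ are chosen adaptively, based on $\cA$'s outer ellipsoid at the end of phase $i-1$. After all $L$ phases, the polytope $Z = \conv{\bigcup_i \Delta_i}$ will have aspect ratio $\kappa(Z) = \Theta(d^L)$, justifying the choice of $L$.

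The first ingredient is a single-phase argument exploiting the tightness of John's theorem for simplices: any ellipsoidal rounding of a $d$-simplex has approximation factor exactly $d$, and any outer ellipsoid containing the simplex has a ``waste'' direction in which its extent exceeds the simplex's extent by a factor of at least $\Omega(d)$. For each $\Delta_i$, this yields a waste direction $\vu_i$ for the algorithm's outer ellipsoid $\cE_{t_i}$ at the end of phase $i$. The adversary orients $\Delta_i$ so that $\vu_i$ is approximately orthogonal to $\vu_1, \ldots, \vu_{i-1}$; this is straightforward when $L \leq d$ and otherwise would require a more delicate construction that reuses directions at finer and finer scales.

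The second ingredient combines the per-phase bounds via the two monotonicity conditions. Outer monotonicity $\cE_n \supseteq \cE_{t_i}$ ensures that the final ellipsoid inherits the waste from every phase. The inner monotonicity condition $\vc_t + \alpha_t \cE_t \subseteq \conv{(\vc_{t-1} + \alpha_{t-1} \cdot \cE_{t-1}) \cup \inbraces{\vz_t}}$ further constrains the inner ellipsoid's shape: it cannot reshape itself arbitrarily when a new point arrives. Together, these force the scalar $1/\alpha_n$ to pay for the waste across every phase, yielding $1/\alpha_n \geq \Omega\inparen{dL} = \Omega\inparen{d \log \kappa / \log d}$.

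The main obstacle I anticipate is this aggregation step: the per-phase losses must compound to $\Omega(dL)$ rather than collapse to a single-phase $\Omega(d)$. The natural approach is to track how much the inner ellipsoid $\alpha_n \cE_n$ fails to fit $Z$ in each waste direction and to use the inner monotonicity condition to argue that matching $Z$'s narrow shape in one waste direction forces a mismatch in all the others. I expect this to require an inductive argument over phases that carefully tracks the evolving geometry of both the outer and inner ellipsoids in relation to the adversarial simplices $\Delta_1, \ldots, \Delta_L$.
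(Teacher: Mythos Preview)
Your proposal has a genuine gap in the construction, and the approach diverges substantially from the paper's.

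First, the aspect-ratio accounting does not work. If the simplices shrink and are placed so that $\Delta_{i+1}$ lies near (or inside) $\Delta_i$, then $Z = \conv{\bigcup_i \Delta_i}$ is essentially $\Delta_1$, and $\kappa(Z) \approx d$, not $d^L$. If instead they are spread out to force large $\kappa$, the ``orthogonal waste direction'' story breaks down. Either way, the claim $\kappa(Z) = \Theta(d^L)$ is not justified by the construction as described.

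Second, and more fundamentally, shrinking simplices cannot force a monotone algorithm to do anything. Once the outer ellipsoid covers $\Delta_1$, every vertex of a smaller $\Delta_i$ already lies inside $\vc_{t-1} + \cE_{t-1}$, so the algorithm may simply leave $(\vc_t,\cE_t,\alpha_t)=(\vc_{t-1},\cE_{t-1},\alpha_{t-1})$ and both monotonicity invariants are trivially satisfied. No per-phase loss accrues. Relatedly, the ``waste direction with factor $\Omega(d)$'' premise is false for the simplex: the circumscribed ball and the simplex have the same extent in vertex directions, and in face-normal directions the ratio is only $\Theta(1)$. The factor $d$ in John's theorem is a ratio of inner to outer radii, not a width mismatch in any single direction.

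The paper's argument is structurally different. It uses a single simplex only to bootstrap $1/\alpha$ to at least $d$, then feeds points on $\partial(\vc_{t-1} + 2\cE_{t-1})$ one at a time. The core is a \emph{reverse evolution condition}: for any monotone update with $1/\alpha \ge d$, one shows $\Delta(1/\alpha) \ge C \cdot \Delta\log\vol(\cE)$ (proved by Steiner-symmetrizing the update down to a two-dimensional picture and analyzing that). Since the adversary forces $\log\vol(\cE_T) \ge d\log(R/2)$, telescoping gives $1/\alpha_T \ge \Omega(d\log R)$, and the simplex phase contributes the $1/\log d$ loss. The key idea you are missing is this volume-potential link; the multi-simplex waste-accumulation picture does not substitute for it.
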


\subsection{Related work and open questions}
\label{sec:rw}

\paragraph{Streaming asymmetric ellipsoidal roundings.}

To our knowledge, the first paper to study ellipsoidal roundings in the streaming model is that of \citet{mukhopadhyayapproximate}. The authors consider the case where $d=2$ and prove that the approximation factor of the greedy algorithm (that which updates the ellipsoid to be the minimum volume ellipsoid containing the new point and the previous iterate) can be unbounded. Subsequent work by  \citet{mukhopadhyay2010approximate} generalizes this result to all $d \ge 2$.

\paragraph{Nearly-optimal streaming symmetric ellipsoidal roundings.} 
Recently, \citet{mmo22}, and \citet{woodruff2022high} gave the first positive results for streaming ellipsoidal roundings. Both \cite{mmo22} and \cite{woodruff2022high} considered the problem only in the \textit{symmetric setting} -- when the goal is to approximate the polytope $\conv{\inbraces{\pm \vz_1,\dots, \pm \vz_n}}$.  \cite{mmo22} and \cite{woodruff2022high} obtained $O(\sqrt{d\log \kappa(Z)})$ and $O(\sqrt{d\log n\kappa^{\mathsf{OL}}})$-approximations, respectively (here, $\kappa^{\mathsf{OL}}$ is the online condition number; see  \cite{woodruff2022high} for details). Their algorithms use only $\widetilde{O}(\mathsf{poly}(d))$ space, where the $\widetilde{O}$ suppresses $\log d$, $\log n$, and aspect ratio-like terms. 
Note that by John's theorem, the $\Omega(\sqrt{d})$ dependence is required in the symmetric setting even for offline algorithms.

A natural question is whether the techniques of \cite{mmo22} or \cite{woodruff2022high} extend to Problems \ref{prob:main} and~\ref{prob:coreset}. The update rule used in \cite{mmo22} essentially updates $\cE_{t+1}$ to be the minimum volume ellipsoid covering both $\cE_t$ and points $\pm\vz_{t+1}$. In the non-symmetric case, it would be natural to consider the minimum volume ellipsoid covering $\cE_t$ and point $\vz_{t+1}$. However, this approach does not give an $\tilde O(d)$ approximation.
%Although this update has a simple closed form in the symmetric case, the form of the asymmetric update is far more challenging to work with.
The algorithm in \cite{woodruff2022high} maintains a quadratic form that consists of sums of outer products of ``important points'' (technically speaking, those with a constant online leverage score). Unfortunately, this approach does not suggest how to move the previous center $\vc_{t-1}$ to a new center $\vc_{t}$ in a way that allows the algorithm to maintain a good approximation factor. It is not hard to see that there exist example streams for which the center $\vc_{t-1}$ must be shifted in each iteration to maintain even a bounded approximation factor. This means that any nontrivial solution to Problems \ref{prob:main} and \ref{prob:coreset} must overcome this difficulty.

\paragraph{Offline ellipsoidal roundings for general convex polytopes.} \citet{nesterov2008rounding} gives an efficient \textit{offline} $O(d)$-approximation algorithm for the ellipsoidal rounding problem, with a runtime of $\widetilde{O}(nd^2)$. Observe that this is essentially the same runtime as those achieved by the algorithms we give (see Theorems \ref{thm:main_one} and \ref{thm:main_two}).

\paragraph{Streaming convex hull approximations.}  \citet{agarwal2010streaming} studied related problems of computing \textit{extent measures} of a convex hull in the streaming model, in particular finding coresets for the minimum enclosing ball, and obtained both positive and negative results.
% Note that these imply similar inapproximability results for Problem \ref{prob:coreset} for any sufficiently small approximation factor. 
% In contrast, our algorithms obtain a \(O(d)\) approximation rather than \((1+\eps)\).  Similarly,
\citet{blum2017approximate} showed that one cannot maintain an \textit{$\eps$-hull} in space proportional to the number of vertices belonging to the offline optimal solution (where a body $\widehat{Z}$ is an $\eps$-hull for $Z$ if every point in $\widehat{Z}$ is distance at most $\eps$ away from $Z$).

\paragraph{Coresets for the minimum volume enclosing ellipsoid problem (MVEE).} 
Let $\mathsf{MVEE}(K)$ denote the minimum volume enclosing ellipsoid for a convex body $K \subset \R^d$.
%A related, though not equivalent, problem to \Cref{prob:coreset}, is the problem of finding a coreset for the MVEE problem. 
We say that a subset $S \subseteq [n]$ is an $\eps$-coreset for the MVEE problem if we have
\begin{align}
    \vol\inparen{\mathsf{MVEE}(Z)} \le \inparen{1+\eps}^d\vol\inparen{\mathsf{MVEE}(Z\vert_S)}\label{eq:mvee_coreset}.
\end{align}
There is extensive literature on coresets for the MVEE problem, and we refer the reader to papers by \citet{ky05}, \citet{ty07}, \citet{clarkson10}, \citet{bmv23}, and the book by \citet{todd16}.

Importantly, $\mathsf{MVEE}(Z\vert_S)$ may not be a good approximation for $\mathsf{MVEE}(Z)$ (for that reason, some authors refer to coresets satisfying~(\ref{eq:mvee_coreset}) as weak coresets for MVEE). Therefore, even though $\mathsf{MVEE}(Z)$ provides a good ellipsoidal rounding for $Z$, $\mathsf{MVEE}(Z\vert_S)$ generally speaking does not. Please see \cite[page 2]{ty07} and \cite[Section 2.1]{bmv23} for an extended discussion.

\iffalse
Although at a first glance our coreset result (\Cref{thm:main_three}) appears to apply to a problem that is equivalent to that of finding an $\eps$-coreset for MVEE, this is not the case. This is because a coreset satisfying \eqref{eq:mvee_coreset} may be a \textit{weak} coreset, whereas \Cref{prob:coreset} is essentially asking for a \textit{strong} coreset (see the discussion in Section 2.1 of \cite{bmv23}). In other words, observe that there exist subsets $S$ for which $S$ satisfies \eqref{eq:mvee_coreset} but $Z\vert_S$ is not a convex hull approximation to $Z$, even when $d=2$. 

Additionally, to our knowledge, all prior work on constructing coresets for MVEE applies in the offline setting. Since our , it follows that the setting and problem we study is totally incomparable to those of the aforementioned prior work.
\fi

% \paragraph{Open questions.}

% % \maxcmt{to turn into a real thing} Wy algorithm gives nnz time.
% % Connection between suggests this may be possible for the general case.
% Although the runtime of our algorithms nearly matches those of the best-known offline algorithms in the worst case, it would be  interesting to design algorithms that run in input-sparsity time. In turn, this would be useful for more structured inputs.
% \mnote{thanks for writing this, but I would like to add some more to this paragraph after I finish appendix B}

\section{Summary of Techniques}

In this section, we give an overview of the technical methods behind our results.

\subsection{Monotone algorithms}
\label{sec:monotone}
The algorithm we give in \Cref{thm:main_one} is of a certain class of \textit{monotone} algorithms, which we now define.

\begin{maindefn}[Monotone algorithm]
\label{def:alg-invariant}
Consider the setting of Problem \ref{prob:main}. Note the following invariants for every timestep \(t\).
\begin{align}
    \vc_t + \cE_t 
     &\supseteq 
     \conv{(\vc_{t-1} + \cE_{t-1}) \cup \{\vz_t\}}     
\label{item:def-invariant-1} \\
    \vc_t + \alpha_t \cE_t &\subseteq \conv{(\vc_{t-1} + \alpha_{t-1} \cdot \cE_{t-1}) \cup \{\vz_t\}} \label{item:def-invariant-2}
\end{align}

We say that an algorithm \(\cA\) is \textnormal{monotone} if for any initial \((\vc_0 + \cE_0, \alpha_0)\) and sequence of data points \(\vz_1, \ldots, \vz_n\), the resulting sequence \(\{(\vc_0 + \cE_0, \alpha_0), (\vc_1 + \cE_1, \alpha_1), \ldots, (\vc_n + \cE_n, \alpha_n)\}\) arising from applying \(\cA\) to the stream satisfies the two invariants \eqref{item:def-invariant-1} and \eqref{item:def-invariant-2}. Refer to Figure \ref{fig:update_step}.

We will sometimes consider how a monotone algorithm \(\cA\) makes a single update upon seeing a new point $\vx$. In this setting, we will call \(\cA\) a \textit{monotone update rule}.
\end{maindefn}

% Link to edit figures: https://drive.google.com/drive/folders/1i6EOC9rYHTdqFn0iK4SrGEW2UYLTMktq?usp=share_link
\begin{figure}[h]
\centering
\includegraphics[height=6cm]{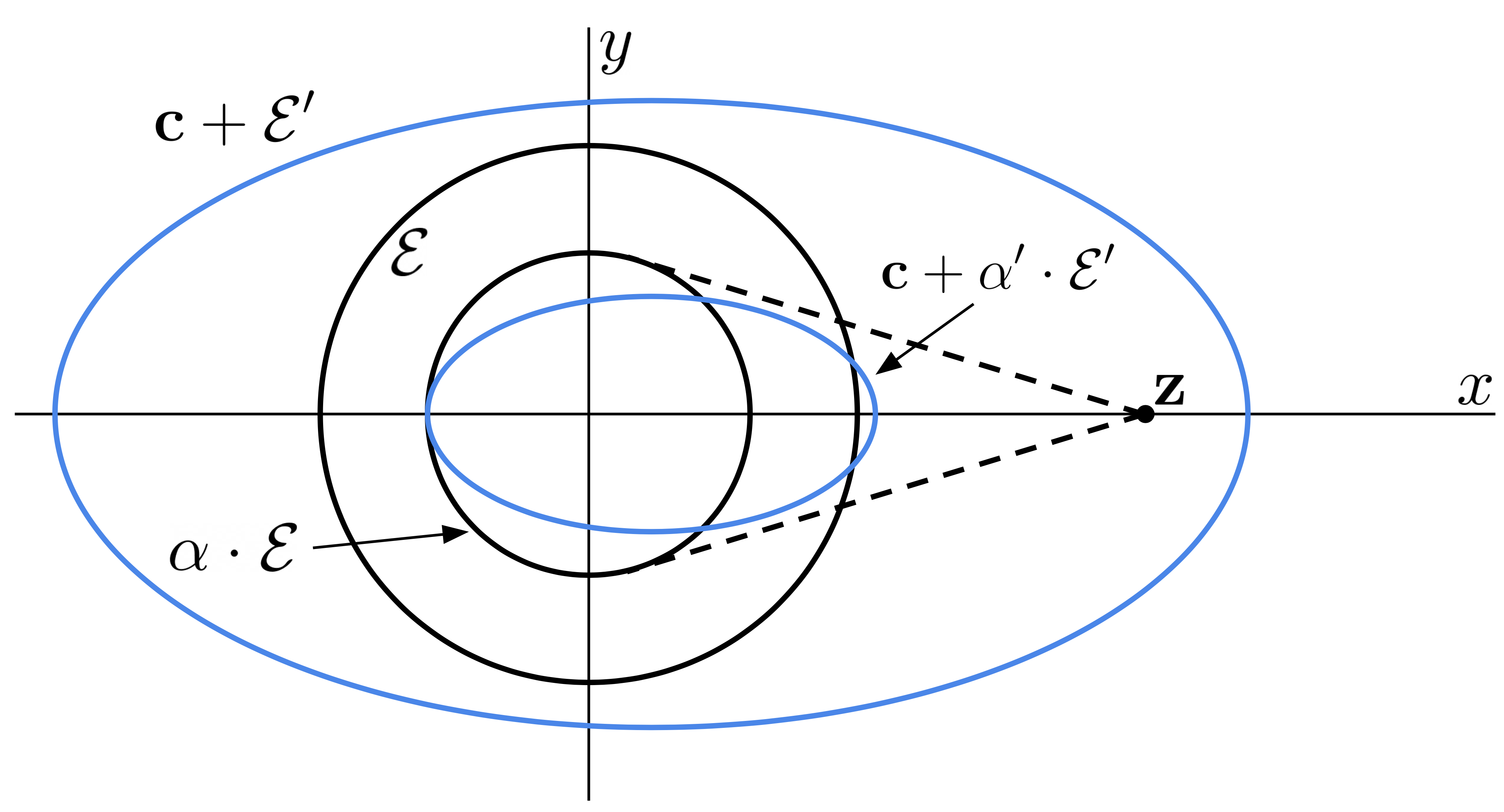}
\caption{A monotone update step. For brevity, we refer to \(\cE\) and \(\alpha \cdot \cE\) as the previous ellipsoids \(\cE_{t-1}, \alpha \cE_{t-1}\), and \(\cE'\) and \(\alpha' \cdot \cE'\) as the next ellipsoids \(\cE_{t}, \alpha' \cdot \cE_t\). \(\cE\) and \(\alpha \cE\) are, respectively, the larger and smaller black circles. 
\(c + \cE'\) and \(c + \alpha' \cE'\) are the larger and smaller blue ellipses. The dotted lines show \(\partial(\conv{\alpha \cE \cup \{\vz\}}) \setminus \partial(\alpha \cE)\), i.e. the the boundary of \(\conv{\alpha \cdot \cE \cup \{\vz\}}\) minus the boundary of \(\alpha \cE\). }
\label{fig:update_step}
\end{figure}

Here we will refer to \(\vc_t + \cE_t, \vc + \alpha_t \cE_t\) as the `next' ellipsoids and 
to \(\vc_{t-1} + \cE_{t-1}, \vc + \alpha_{t-1} \cE_{t-1}\) as the `previous' ellispoids.
The first condition we require is that 
\begin{equation}\vc_{t} + \cE_{t} \supseteq \vc_{t-1}+\cE_{t-1}.\label{item:def-invariant-1-alt}
\tag{\ref*{item:def-invariant-1}a}
\end{equation}
It ensures that each successive outer ellipsoid contains the previous outer ellipsoid. Thus once the algorithm decides  that some $\vz \in \vc_t + \cE_t$, it makes a commitment that $\vz \in \vc_{t'} + \cE_{t'}$ for all $t'\geq t$.
Note that (\ref{item:def-invariant-1-alt}) implies (\ref{item:def-invariant-1}), since $\vz_t$ must be in $\vc_t + \cE_t$ and $\vc_t + \cE_t$ is convex.
The second condition (\ref{item:def-invariant-2}) looks more complex but is also very natural. Assume that the algorithm only knows that (a) $\vc_{t-1} + \alpha_{t-1}\cE_{t-1}\subseteq Z$ (this is true from induction) and (b) $\vz_t\in Z$ (this is true by the definition of $Z$). Then it can only be certain that $A = \conv{(\vc_{t-1} + \alpha_{t-1} \cdot \cE_{t-1}) \cup \{\vz_t\}}$ lies in $Z$; as far as the algorithm is concerned, any point outside of $A$ may also be outside of $Z$.
Since the algorithm must ensure that $\vc_t + \alpha_t {\cal E}_t \subseteq Z$, it will also ensure that $\vc_t + \alpha_t {\cal E}_t\subseteq A$ and thus satisfy (\ref{item:def-invariant-2}).

\subsection{Streaming ellipsoidal rounding (Theorems \ref{thm:main_one}, \ref{thm:main_two}, and \ref{thm:main_two_ints})}

Now we describe the algorithm from Theorem~\ref{thm:main_one} in more detail. 
Our algorithm keeps track of the current ellipsoid $\cE_t$, center $\vc_t$, and scaling parameter $\alpha_t$. Initially, $\vc_0 + \cE_0$ is the ball of radius $r_0$ around $\vc_0$ ($r_0$ and $\vc_0$ are given to the algorithm),  and $\alpha_0=1$. Each time the algorithm gets a new point $\vz_t$, it updates $\cE_{t-1}$, $\vc_{t-1}$, $\alpha_{t-1}$ using a \textit{monotone} update rule (as defined in Definition~\ref{def:alg-invariant}) and obtains $\cE_{t}$, $\vc_{t}$, $\alpha_{t}$. 
The monotonicity condition is sufficient to guarantee that the algorithm gets a $1/\alpha_n$ approximation to $Z$.
Indeed, first using condition~(\ref{item:def-invariant-1}), we get
$$\vc_n + \cE_n \supseteq (\vc_{n-1} + \cE_{n-1}) \cup\{\vz_n\} \supseteq (\vc_{n-2} + \cE_{n-2}) \cup\{\vz_{n-1},\vz_n\}\supseteq \dots \supseteq \{\vz_1,\dots,\vz_n\}.$$
Thus, $\vc_n + \cE_n \supseteq Z$.
Then, using condition~(\ref{item:def-invariant-2}), we get
\begin{align*}
\vc_n + \alpha_n \cE_n &\subseteq
\conv{(\vc_{n-1} + \alpha_{n-1} \cE_{n-1}} \cup \{\vz_n\}) \subseteq \conv{(\vc_{n-2} + \alpha_{n-2} \cE_{n-2}} \cup \{\vz_{n-1},\vz_n\}) \\
&\subseteq \dots \subseteq \conv{(\vc_0 + \alpha_0 \cE_0) \cup \{\vz_1,\dots,\vz_n\}}.
\end{align*}
The initial ellipsoid $\vc_0 + \alpha_0 \cE_0 = \vc_0 + r_0 B_2^d$ is in $Z$ and therefore $\vc_n + \alpha_n \cE_n\subseteq \conv{\vz_1,\dots, \vz_n} = Z$.
We verified that the algorithm finds a $\nicefrac{1}{\alpha_n}$ approximation for $Z$.

Now, the main challenge is to design an update rule that ensures that $1/\alpha_n$ is small (as in the statement Theorem~\ref{thm:main_one}) and prove that the rule satisfies the monotonicity conditions/invariants from Definition~\ref{def:alg-invariant}.
We proceed as follows.
%There are three main components to the proofs of Theorem \ref{thm:main_one} and \ref{thm:main_two}. 

First, we design a monotone update rule that satisfies a particular evolution condition. This condition upper bounds the increase of the approximation factor $\nicefrac{1}{\alpha_t} - \nicefrac{1}{\alpha_{t-1}}$. Second, we prove that any monotone update rule satisfying the evolution condition yields the approximation we desire. These two parts imply Theorem \ref{thm:main_one}. Finally, we remove the initialization requirement from Theorem \ref{thm:main_one} and obtain Theorem \ref{thm:main_two}. 
%The second component is straightforward, so we focus on the first and third components.

\paragraph{Designing a monotone update rule.} Suppose that at the end of timestep $t-1$ our solution consists of a center $\vc_{t-1}$, ellipsoid $\cE_{t-1}$, and scaling parameter $\alpha_{t-1}$ for which the invariants in Definition \ref{def:alg-invariant} hold. We give a procedure that, given the next point $\vz_{t}$, computes $\vc_{t}, \cE_{t}, \alpha_{t}$ that still satisfy the invariants of Definition \ref{def:alg-invariant}. Further, we prove that the resulting update satisfies an evolution condition \eqref{eq:overview_evolution}:
\begin{align}
    \frac{\nfrac{1}{\alpha_{t}} - \nfrac{1}{\alpha_{t-1}}}{\log\mathsf{vol}(\cE_{t}) - \log\mathsf{vol}(\cE_{t-1})} \le C\label{eq:overview_evolution}
\end{align}
where $C$ is an absolute constant; $\vol{\cE}$ denotes the volume of ellipsoid $\cE$. 
%It is possible to find \textit{some} monotone update via convex optimization techniques. However it is unclear how to show that such an update satisfies the evolution condition \eqref{eq:overview_evolution}. In particular, the increase in approximation factor in this iteration may be large compared to the increase of the volume of the ellipsoids.
While it is possible to find the optimal update using convex optimization (the update that satisfies the invariants and minimizes the ratio on the left of (\ref{eq:overview_evolution})), we instead provide an explicit formula for an update that readily satisfies (\ref{eq:overview_evolution}) and as we show is monotone. 

Now we describe how we get the formula for the update rule.
By applying an affine transformation, we may assume that ${\cal E}_{t-1}$ is a unit ball and $\vc_{t-1} = 0$. Further, we may assume that $\vz_t$ is colinear with $\ve_1$ (the first basis vector): $\vz_t = \|\vz_t\| \ve_1$. Importantly, affine transformations preserve (a) the invariants in Definition~\ref{def:alg-invariant} (if they hold for the original ellipsoids and points, then they also do for the transformed ones and vice versa) and (b) the value of the ratio in (\ref{eq:overview_evolution}), since they preserve the value of $\vol({\cE_t})/\vol({\cE_{t-1}})$.

% \todo{mention Steiner symmetrization instead?} 
Now consider the group $G = \so{d}_{\ve_1}\cong \so{d-1}$ of orthogonal transformations that map $\ve_1$ to itself: all of them map the unit ball ${\cal E}_{t-1}$ to itself and $\vz_{t}$ to itself. Thus, it is natural to search for an update $(\vc_t, {\cal E}_t)$ that is symmetric with respect to all these transformations.
%\footnote{In fact, when we prove our lower bound, we show that the optimal update is  symmetric with respect to $G$.} 
It is easy to see that in this case ${\cal E}_t$ is defined by equation 
$(x_1/a)^2 + \sum_{i=2}^d (x_i/b)^2 =1$ where $a$ and $b$ are some parameters (equal to the semiaxes of ${\cal E}_{t}$) and $\vc_t = c \ve_1$ for some $c$. 
Since all ellipsoids and points appearing in the invariant conditions are symmetric w.r.t.\ $G$, it is sufficient now to restrict our attention to their sections by $2$d-plane $\lspan{\ve_1,\ve_2}$ and prove that the invariants hold in this plane. Hence, the problem reduces to a statement in two-dimensional Euclidean geometry (however, when we analyze~(\ref{eq:overview_evolution}), we still use that the volume of ${\cal E}_t$ is proportional to $ab^{d-1}$ and not $ab$).

Let us denote the coordinates corresponding to basis vectors $\ve_1$ and $\ve_2$ by $x$ and $y$. For brevity, let ${\cal E} = {\cal E}_{t-1}$, $\vz = \vz_t$, ${\cal E}' = {\cal E}_{t}$, $\vc = \vc_t =c\ve_1$, $\alpha = \alpha_{t-1}$, and $\alpha' = \alpha_t$.
We now need to choose parameters $a$, $b$, and $c$ so that invariants from Definition \ref{def:alg-invariant} and equation (\ref{eq:overview_evolution}) hold. See \Cref{fig:update_step}.
%We will later see that this statement is just the restriction of the monotone update condition to the special case where the previous ellipses are unit balls. The newly received point will be on the \(x\)-axis as depicted in \Cref{fig:update_step}. 
As shown in that figure, the new outer ellipse \(\vc + \cE'\) must contain the previous outer ellipse \(\cE\) and the newly received point \(\vz\). The new inner ellipse \(\vc + \alpha' \cE'\) must be contained within the convex hull of the previous inner ellipse \(\alpha \cE\) and \(\vz\).

It is instructive to consider what happens when point $\vz$ is at infinitesimal distance $\Delta$ from $\cE$: $\|\vz\| = 1 +\Delta$. We consider a minimal axis-parallel outer ellipse $\cE'$ that contains $\cE$ and $\vz$. It must go through $\vz = (1+\Delta,0)$ and touch $\cE$ at two points symmetric w.r.t.\ the $x$-axis, say, $(-\sin \varphi, \pm \cos \varphi)$. Angle $\varphi$ uniquely determines $\cE'$. Now we want to find the largest value of the scaling parameter $\alpha'$ so that $\alpha' \cE'$ fits inside the convex hull of $\cE$ and $\vz$. When $\Delta$ is infinitesimal, this condition splits into two lower bounds on $\alpha'$ --  loosely speaking, they say that $\cE$ does not extend out beyond the convex hull in the horizontal (one bound) and vertical directions (the other). The former bound becomes stronger (gives a smaller upper bound on $\alpha'$) when $\varphi$ increases, and the latter becomes stronger when $\varphi$ decreases. When $\varphi = \alpha/2 \pm O(\alpha^2)$, then all terms linear in $\alpha$ vanish in both bounds and then $\alpha' = \alpha - \Theta(\alpha^2\Delta)$ satisfies both of them; for other choices of $\varphi$, we have $\alpha' \leq \alpha - \Omega(\alpha\Delta)$. 
So we let $\varphi=\alpha/2$ and from the formula for $\alpha'$ get $1/\alpha' = 1/\alpha + O(\Delta)$. On the other hand, $\vol(\cE') \geq (1 + \Delta/2)\vol(\cE)$, since $\cE'$ covers $\vz = (1 + \Delta,0)$. It is easy to see now that the evolution condition (\ref{eq:overview_evolution}) holds: the numerator is $O(\Delta)$ and the denominator is $\Omega(\Delta)$ in (\ref{eq:overview_evolution}).

We remark that letting $\vc + {\cal E}'$ be the minimum volume ellipsoid that contains $\cal E$ and $\vz$ is a highly suboptimal choice (it corresponds to setting $\varphi=\Theta(1/d)$). To derive our specific update formulas for arbitrary $\vz$, we, loosely speaking,  represent an arbitrary update as a series of infinitesimal updates, get a differential equation on $a$, $b$, $c$, and $\alpha'$, solve it, and then simplify the solution (remove non-essential terms etc).  We get the following.

Our updates come from a family parameterized by \(\gamma \geq 0\). %We now describe this family. 
 Define \(\alpha'\) by \(\nfrac{1}{\alpha'} = \nfrac{1}{\alpha} + 2\gamma\). With this choice of $\alpha'$, define the new ellipses to be
% \begin{gather*}
% c + \cE' \defeq \inbraces{(x, y) \in \R^2 \middle| \frac{1}{\expv{\gamma}^2}\inparen{x+ \inparen{\alpha - \frac{1}{\frac{1}{\alpha} + 2 \gamma} \cdot \exp(\gamma)}}^2 + \frac{1}{\inparen{1 + \frac{\alpha - \frac{1}{\frac{1}{\alpha} + 2 \gamma}}{2}}^2} y^2 \leq 1} \\
% c + \alpha' \cE' \defeq \inbraces{(x, y) \in \R^2 \middle| \frac{1}{\expv{\gamma}^2}\inparen{x+ \inparen{\alpha - \frac{1}{\frac{1}{\alpha} + 2 \gamma} \cdot \exp(\gamma)}}^2 + \frac{1}{\inparen{1 + \frac{\alpha - \frac{1}{\frac{1}{\alpha} + 2 \gamma}}{2}}^2} y^2 \leq \frac{1}{\inparen{\frac{1}{\alpha} + 2\gamma}^2}}
% \end{gather*}
% where \(\alpha' = \frac{1}{\frac{1}{\alpha} + 2 \gamma}\).
\[\underbrace{\frac{1}{a^2}(x-c)^2 + \frac{1}{b^2} y^2 = 1}_{\vc + \cE'}, \qquad \underbrace{\frac{1}{a^2}(x-c)^2 + \frac{1}{b^2} y^2 = \alpha'^2}_{\vc + \alpha' \cE'}\]
where we use parameters
\begin{equation*}
% \label{eqn:update_params}
\left.\begin{aligned}
a &= \expv{\gamma} \\
b &=1 + \frac{\alpha - \alpha'}{2} \\
c &= - \alpha + \alpha' \cdot a
\end{aligned}\qquad\right\}.
\end{equation*}
Choose $\gamma\approx \ln \|\vz\|$ so that $\vc + \cE'$ covers point $\vz$.
We use two-dimensional geometry to prove that $\cE'$, $\vc$, and $\alpha'$ satisfy the invariants (see Figure~\ref{fig:update_step}).
Now to prove the evolution condition, we observe two key properties: (1) the increase in the approximation factor is given by \(\frac{1}{\alpha'} - \frac{1}{\alpha} = 2 \gamma\) and (2) the length of the horizontal semiaxis of the new outer ellipse is \(\exp(\gamma)\). The length of the vertical semiaxis is at least $1$, so by the second property we have \(\log \vol(\cE') - \log \vol(\cE) \geq \gamma\). We combine this with the first property to prove that this update satisfies the evolution condition \eqref{eq:overview_evolution}.

Finally, we obtain an upper bound on $1/\alpha_n$ from the evolution equation. We have
\begin{align*}
    \nfrac{1}{\alpha_{n}} = \nfrac{1}{\alpha_0} + \sum_{t=1}^n
\left(\nfrac{1}{\alpha_{t}} - \nfrac{1}{\alpha_{t-1}}\right) \stackrel{\tiny(\text{by }\ref{eq:overview_evolution})}{\leq} 1+ C\sum_{t=1}^n(\log\mathsf{vol}(\cE_{t}) - \log\mathsf{vol}(\cE_{t-1})) = 1 + C\log \frac{\vol{\cE_n}}{\vol{\cE_0}}.
\end{align*}
It remains to get an upper bound on $\vol(\cE_n)$. We know that $\cE_n$ approximates $Z$, and $Z$, in turn, is contained in the ball of radius $R(Z)$. Loosely speaking, we get $\vol(\cE_n) \approx \vol(Z) \leq R(Z)^d \vol(B_2^d)$. Since $\cE_0$ is the ball of radius $r$, $\vol{\cE_0}= r^d \vol(B_2^d)$. We conclude that 
the approximation factor is at most $\nfrac{1}{\alpha_{n}} \lessapprox 1 + C\log \frac{R(Z)^d}{r^d} = 1 + O(d\log \frac{R(Z)}{r})$, as desired.

\paragraph{Removing the initialization assumption.} Once we have a monotone update rule and  guarantee on its approximation factor, we have to convert this to a guarantee where the algorithm does not have access to the initialization. 

One natural approach is as follows. Let $d' \le d$ be the largest timestep for which points $\vz_1, \dots,\vz_{d'+1}$ are in general position. We can compute the John ellipsoid for $\conv{\inbraces{\vz_1,\dots,\vz_{d'+1}}}$ and after that apply the monotone update rule guaranteed by Theorem \ref{thm:main_one} to obtain the rounding for every $t \ge d'+2$, so long as for every such timestep we have $\vz_t  \in \vspan{\vz_1,\dots,\vz_{t-1}}$.

The principal difficulty in this approach is designing an \textit{irregular update step} that will handle points $\vz_t$ outside of $\vspan{\vz_1,\dots,\vz_{t-1}}$; when we add these points the dimensionality of the affine hull increases by 1. We consider the special case where the new point $\vz_t$ is conveniently located with respect to our previous ellipsoid $\cE_{t-1}$ (see \Cref{fig:update_step_irreg} for a 2d-picture): $\cE_{t-1}$ is the unit ball in $\lspan{\ve_1,\dots,\ve_{d'}}$; point $z_t = (0,\dots, 0, \sqrt{1+2\alpha}),0,\dots)$, here only coordinate $d'+1$ is non-zero. We show that we can design an irregular update step for this special case that makes the new approximation factor $\nfrac{1}{\alpha_{t}}$ satisfy $\nfrac{1}{\alpha_t} = \nfrac{1}{\alpha_{t-1}}+1$.

% Link to edit figures: https://drive.google.com/drive/folders/1i6EOC9rYHTdqFn0iK4SrGEW2UYLTMktq?usp=share_link
\begin{figure}[h]
\centering
\includegraphics[height=6cm]{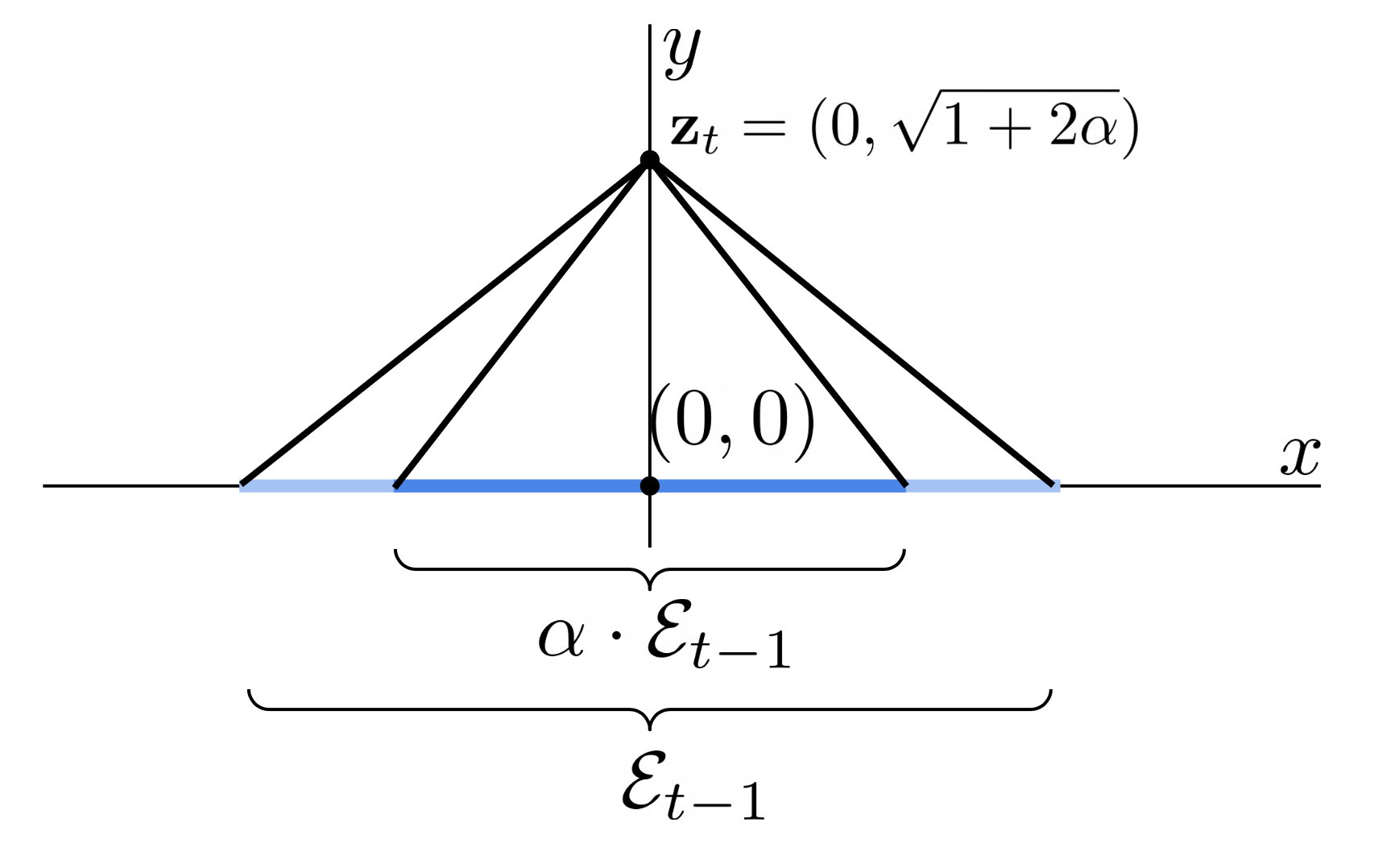}
\caption{Irregular update step. \(\cE_{t-1}\) and \(\alpha \cdot \cE_{t-1}\) are, respectively, the light blue strip on the \(x\)-axis and the dark blue strip on the \(x\)-axis. \(\vz_t = (0, \sqrt{1+2 \alpha})\) is the newly received point.
}
\label{fig:update_step_irreg}
\end{figure}

It turns out that it is sufficient to consider only this special case. To see this, note that we can choose an affine transformation that maps any new point $\vz_{t}$ and previous ellipsoid $\cE_{t-1}$ to the setting shown in \Cref{fig:update_step_irreg}.
%; recall that set containments are maintained under affine transformations. 
Observe that there are at most $d-1$ irregular update steps. This means that the irregular update steps contribute at most an additive $d-1$ to the final approximation factor.

Finally, observe that the inradius of $\conv{\inbraces{\vz_1,\dots,\vz_t}}$ is not monotone in $t$. In particular, it can decrease after each irregular update step. Nonetheless, we can still give a bound on the radius of a ball that our convex body $\conv{\vz_1,\dots,\vz_t}$ contains for all $t$. This will give us everything we need to apply \Cref{thm:main_one} to this setting, and \Cref{thm:main_two} follows.

\paragraph{Improved bounds on lattices.} 
% \nnote{what's a good way to write this?} 
Finally, we briefly discuss how to remove the aspect ratio dependence in the setting where the input points $\vz_t$ have coordinates in $[-N,N]$. At a high level, this improvement follows from carefully tracking how the approximation factors of our solutions change after an irregular update step. Following \eqref{eq:overview_evolution}, recall that our goal is to analyze (where we write $\alpha_0 = 1$)
\begin{align*}
    \sum_{t \ge 1} \frac{1}{\alpha_t}-\frac{1}{\alpha_{t-1}}.
\end{align*}
By \eqref{eq:overview_evolution}, we see that for all ``regular'' updates, we have
\begin{align*}
    \frac{1}{\alpha_t}-\frac{1}{\alpha_{t-1}} \lesssim \logv{\frac{\vol_{d_t}\inparen{\cE_t}}{\vol_{d_t}\inparen{\cE_{t-1}}}},
\end{align*}
where $d_t = \mathsf{dim}\inparen{\vspan{\vz_1,\dots,\vz_t}}$. Furthermore, as previously mentioned, in our irregular update step, we get
\begin{align*}
    \frac{1}{\alpha_t}-\frac{1}{\alpha_{t-1}} = 1.
\end{align*}
In order to control the sum of the $\nfrac{1}{\alpha_{t}}-\nfrac{1}{\alpha_{t-1}}$, it remains to bound $\nfrac{\vol_{d_t}(\cE_t)}{\vol_{d_{t-1}}(\cE_{t-1})}$ for an irregular update step $t$. We will then get a telescoping upper bound whose last term is the ratio of the volume of the final ellipsoid to the Euclidean ball in the same affine span.

Similarly to the improvements of \citet{woodruff2022high} in the integer-valued case, it will turn out that we will be interested in the total product of these volume changes. By carefully tracking these, we will get that this product can be expressed as the determinant of a particular integer-valued matrix. Then, since this matrix has integer entries, the magnitude of its determinant must be at least $1$. We then observe that the volume of $\cE_n$ after normalizing by the volume of $\vol(B_2^{d_n})$ must be at most $(N\sqrt{d})^{d_n}$, since the length of any vector in this lattice is at most $N\sqrt{d}$. The desired result then follows.

\subsection{Coresets for convex hull \texorpdfstring{(\Cref{thm:main_three})}{(Theorem~\ref{thm:main_three})}}
\label{sec:overview_coreset}

We now outline our proof strategy for \Cref{thm:main_three}.
Our main task is to design an appropriate selection criterion for every new point -- in other words, we must check whether a new point $\vz_t$ is ``important enough'' to be added to our previous set of points $S_{t-1}$. We then have to show that this selection criterion yields the approximation guarantee promised by Theorem \ref{thm:main_three}.

To design the selection criterion, we run an instance of the algorithm in Theorem \ref{thm:main_two} on the stream. For every new point $\vz_t$, we ask two questions -- ``Does $\vz_t$ result in an irregular update step? Does it cause $\mathsf{vol}(\cE_{t})$ to be much larger than $\mathsf{vol}(\cE_{t-1})$?'' If the answer to any of these questions is affirmative, we add $\vz_t$ to the coreset. The first question is necessary to obtain even a bounded approximation factor (for example, imagine that the final point $\vz_n$ results in an irregular update step, then we must add it). The second question is quite natural, as it ensures that the algorithm adds ``important points'' -- those that necessitate a significant update.

%-- in particular, note that if $\vz_{t}$ is ``close enough'' to $\cE_{t-1}$, a constant-factor rescaling of $\cE_{t-1}$ will contain $\vz_t$ while only losing a constant in the final approximation factor.

We now observe that at every irregular update step $t_{d'}$ for $d' \le d$ and subsequent timestep $t \ge t_{d'}$ for which there are no irregular update steps in between $t_{d'}$ and $t$, there exists a translation $\vc_{d'}$ (which is the center for $\cE_{d'}$ that the algorithm maintains) and a value $r_{d'}$ for which we know
\begin{align*}
    \vc_{d'} + r_{d'} \cdot \inparen{B_2^d \cap \lspan{\vz_1-\vc_{d'},\dots,\vz_{d'}-\vc_{d'}}} \subseteq \conv{\vz_1,\dots,\vz_{t}} \subseteq \vc_C + R_{t} \cdot B_2^d,
\end{align*}
where $\vc_C$ is the circumcenter of $\conv{\inbraces{\vz_1,\dots,\vz_t}}$. The resulting bound on $\abs{S_t}$ follows easily from the above observation and a simple volume argument.

Finally, we obtain the approximation guarantee from noting that for all \(t\), the output of the algorithm from Theorem \ref{thm:main_two} given the first \(t\) points is the same as running it only on the points selected by \(S_t\).

% CMT: LOWER BOUND
\subsection{Lower bound (Theorem \ref{thm:main_four})}

Whereas in the upper bound we demonstrated a particular algorithm that satisfies the evolution condition \eqref{eq:overview_evolution}, 
for the lower bound it suffices to show that for any monotone algorithm, there exists an instance of the problem (a sequence of $\vz_1$,\dots, $\vz_n$) where the algorithm must satisfy the ``reverse evolution condition'', i.e.
\begin{align}
    \frac{\nfrac{1}{\alpha_{t}} - \nfrac{1}{\alpha_{t-1}}}{\log\mathsf{vol}(\cE_{t}) - \log\mathsf{vol}(\cE_{t-1})} \ge C\label{eq:lb_overview_evolution}
\end{align}
for some $C > 0$.
In analogy to the argument of the upper bound, showing this reverse evolution condition yields a lower bound of the form \(\frac{1}{\alpha_n} \geq \widetilde{\Omega}\inparen{d \log(\kappa)}\).
Given any monotone algorithm \(\cA\), the instance we use is produced by an adversary that repeatedly feeds \(\cA\) a point that is a constant factor away from the previous ellipsoid.

In order to simplify showing this reverse evolution condition, we use a symmetrization argument.
Specifically, by a particular sequence of Steiner symmetrizations, we see that the optimal response of \(\cA\) can be completely described in two dimensions. Thus, it is sufficient to only show this reverse evolution condition in the two-dimensional case where the previous outer ellipsoid is the unit ball.

% show that suffices to only show that this reverse evolution condition holds in the two-dimensional case where the previous outer ellipsoid is the unit ball. 

This transformed two-dimensional setting is significantly simpler to analyze. Specifically, we can assume that the point given by the adversary is always \(2 \ve_1\). The rest of the argument proceeds by cases, again using two-dimensional Euclidean geometry. On a high level, the constraints placed on the new outer and inner ellipsoid by the monotonicity condition force the update of \(\cA\) to satisfy the reverse evolution condition.

\section{Preliminaries}
\label{section:preliminaries_notation}

\subsection{Notation} 

We denote the standard Euclidean norm of a vector $\vv$ by $\norm{\vv}$ and the Frobenius norm of a matrix $\mA$ by $\fnorm{\mA}$. We denote the singular values of a matrix $\mA \in \R^{d\times d}$ by $\sigma_1(\mA),\dots, \sigma_d(\mA)$. Let $\sigma_{\max}(\mA)$ and $\sigma_{\min}(\mA)$ be the largest and smallest singular values of $\mA$, respectively.
We write \(\diag{a_1, \ldots, a_d}\) to mean the \(d \times d\) diagonal matrix whose diagonal entries are \(a_1, \ldots, a_d\).
We use \(\mS_{++}^d\) to denote the set of \(d \times d\) positive definite matrices.
We use \(\ve_1, \ldots, \ve_d\) for the standard basis in \(\R^d\).

Denote the $\ell_2$-unit ball by $B_2^d = \inbraces{\vx \in \R^d \suchthat \norm{\vx}_2 \le 1}$,
and \(\S^{d-1} = \inbraces{\vx \in \R^d \colon \|\vx\|_2 = 1}\) the unit sphere. We use \(\partial S\) for the boundary of an arbitrary set \(S\). We use natural logarithms unless otherwise specified. 

In this paper, we will work extensively with ellipsoids. We will always assume that all ellipsoids and balls we consider are centered at the origin. We use the following representation of ellipsoids. For a non-singular matrix $\mA\in\R^{d\times d}$, let $\cE_\mA \coloneqq \inbraces{\vx \suchthat \norm{\mA\vx}\le 1}$. In other words, the matrix $\mA$ defines an bijective linear map satisfying $\mA \cE_\mA = B_2^d$. Every full-dimensional ellipsoid (centered at the origin) has such a representation. We note that this representation is not unique as matrices $\mA$ and $\mM\mA$ define the same ellipsoid if matrix $\mM$ is orthogonal (since $\norm{\mA\vv} = \norm{\mM\mA\vv}$ for every vector $\vv$). Sometimes, we will have to consider lower-dimensional ellipsoids within an ambient space of higher dimension; in this case, we will use the notation $\cE \cap H$ where $H$ is some linear or affine subspace -- note that $\cE \cap H$ is also an ellipsoid. 

Now consider the singular value decomposition of  $\mA$: $\mA = \mU\Sigma^{-1} \mV^T$ (it will be convenient for us to write $\Sigma^{-1}$ instead of standard $\Sigma$ in the decomposition). The diagonal entries of $\Sigma$ are exactly the semi-axes of $\cE_\mA$. As mentioned above, matrices $\mU\Sigma^{-1} \mV^T$ and $\mU'\Sigma^{-1} \mV^T$ define the same ellipsoid for any orthogonal \(\mU' \in \mathbb{R}^{d \times d}\); in particular, every ellipsoid can be represented by a matrix of the form $\mA=\Sigma^{-1} \mV^T$.

\subsection{Geometry}
We restate the well-known result that five points determine an ellipse.
This is usually phrased for conics, but for nondegenerate ellipses the usual condition that no three of the five points are collinear is vacuously true.
\begin{claim}[Five points determine an ellipse]\label{claim:conic_5pt}
Let \(\vc_1 + \partial \cE_1, \vc_2 + \partial \cE_2\) be two ellipses in \(\R^2\).
If they intersect at five distinct points, then \(\vc_1 + \partial \cE_1\) and \(\vc_2 + \partial \cE_2\) are the same.
\end{claim}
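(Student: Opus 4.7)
The plan is to reduce the geometric claim to a linear algebra statement about the 6-dimensional space of quadratic polynomials in two variables. Write each ellipse as the zero set of a quadratic polynomial
\[P_i(x,y) = A_i x^2 + B_i xy + C_i y^2 + D_i x + E_i y + F_i, \qquad i=1,2.\]
Evaluation at a fixed point $(x_0,y_0)$ is a linear functional on the 6-dimensional coefficient space $\R^6$. So the polynomials vanishing at all five common intersection points $\vp_1,\dots,\vp_5$ form the kernel of a $5\times 6$ matrix $M$ whose rows are $(x_j^2, x_j y_j, y_j^2, x_j, y_j, 1)$ for $j=1,\dots,5$. Both $P_1$ and $P_2$ are in this kernel.

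The main step is to show $\mathrm{rank}(M)=5$, so that its kernel is one-dimensional and hence $P_2$ is a scalar multiple of $P_1$, which immediately gives $\vc_1 + \partial\cE_1 = \vc_2 + \partial\cE_2$. To establish this rank bound, I would first argue that no three of the five points are collinear: any line meets a nondegenerate ellipse in at most two points (since restricting the ellipse's defining quadratic to a line gives a univariate quadratic with at most two roots), and the five points lie on the ellipse $\vc_1 + \partial\cE_1$. Given this non-collinearity condition on five points, a standard argument shows $\mathrm{rank}(M)=5$: if the rank were at most $4$, the kernel would have dimension at least $2$, producing a two-parameter family of conics through the five points. But one can exhibit a rank-$5$ configuration directly—e.g., by selecting four points $\vp_1,\dots,\vp_4$ no three collinear and checking that the three pairs of lines through the $\binom{4}{2}=6$ pairings yield three linearly independent degenerate conics (each a product of two lines) vanishing on $\{\vp_1,\dots,\vp_4\}$, so the space of conics through these four points is exactly three-dimensional. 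Adding a fifth point not on any line through two of the first four then drops the dimension to one.

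The step I expect to require the most care is verifying linear independence of the five evaluation functionals under the non-collinearity hypothesis. The cleanest way is the argument sketched above using the classification of degenerate conics through four points in general position; an alternative is to choose an explicit affine coordinate change sending four of the five points to a convenient configuration (say the vertices of a unit square) where the rank can be read off by direct computation, and then use affine invariance of the rank of $M$.
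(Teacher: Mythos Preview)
The paper does not prove this claim: it is stated as a well-known restatement of the classical fact that five points (no three collinear) determine a conic, with the observation that the non-collinearity hypothesis is automatic for points on a nondegenerate ellipse. Your proposal supplies the standard linear-algebra proof, which is exactly the right approach.

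There is, however, a slip in your dimension count. The three degenerate conics through four points $\vp_1,\dots,\vp_4$ in general position (the three line-pairs coming from the three ways to partition four points into two pairs) are \emph{not} linearly independent: they satisfy one linear relation, and the space of conics through four points in general position is $2$-dimensional in the $6$-dimensional coefficient space, not $3$-dimensional. This does not break your argument, but the justification you sketch for $\mathrm{rank}(M)=5$ needs repair. A clean fix: if $\mathrm{rank}(M)\le 4$, the kernel has dimension $\ge 2$, so the pencil $\lambda P_1 + \mu P_2$ (with $P_1,P_2$ independent in the kernel) contains a degenerate conic (the discriminant is a nonconstant cubic in $[\lambda:\mu]$, hence has a root). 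A degenerate conic is a pair of lines or a double line, and five points on two lines force three on one line by pigeonhole, contradicting the non-collinearity you already established. Alternatively, your suggested affine normalization of four points to a fixed configuration and direct rank computation also works.
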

% \ynote{This is well known. We don't need this paragraph.}\mnote{Citation?}
% For an arbitrary conic, this claim comes with the condition that no three of the five points of interesection are collinear.
% However, no three points of any nondegenerate ellipse in \(\R^2\) can be collinear, hence for ellipses this condition is vacuously true and we get the claim above.

The following claim, that every full-rank ellipsoid (i.e. an ellipsoid whose span has full dimension) can be represented by a positive definite matrix, follows from looking at the SVD.
\begin{claim}\label{claim:prelim_ellips_sym}
    Let \(\cE \subseteq \R^d\) be a full-rank ellipsoid.
    Then there exists \(\mA \succ 0\) such that \(\cE = \cE_{\mA}\).
\end{claim}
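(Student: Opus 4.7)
The plan is to start from any representation $\cE = \cE_{\mA}$ with $\mA$ non-singular (guaranteed to exist because $\cE$ is full-rank, as noted in the preliminaries) and then symmetrize. The key observation is that $\cE_{\mA}$ depends on $\mA$ only through the Gram matrix $\mA^T \mA$, since
\[\|\mA \vx\|^2 = \vx^T \mA^T \mA \vx.\]
Hence any two non-singular matrices $\mA_1, \mA_2$ with $\mA_1^T \mA_1 = \mA_2^T \mA_2$ yield the same ellipsoid $\cE_{\mA_1} = \cE_{\mA_2}$. So the task reduces to finding a symmetric positive definite matrix $\mA'$ whose Gram matrix equals $\mA^T \mA$.

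For this, I would write the SVD $\mA = \mU \Sigma^{-1} \mV^T$ (following the paper's convention), where $\mU, \mV \in \R^{d \times d}$ are orthogonal and $\Sigma$ is diagonal with strictly positive entries (possible because $\mA$ is non-singular). Set
\[\mA' \coloneqq \mV \Sigma^{-1} \mV^T.\]
Then $\mA'$ is symmetric, and positive definite since $\Sigma^{-1}$ has strictly positive diagonal entries; we can check that
\[(\mA')^T \mA' \;=\; \mV \Sigma^{-1} \mV^T \mV \Sigma^{-1} \mV^T \;=\; \mV \Sigma^{-2} \mV^T \;=\; \mA^T \mA.\]
By the observation above, $\cE_{\mA'} = \cE_\mA = \cE$, giving the claimed representation.

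There is no real obstacle here; this is a routine consequence of the SVD. The only point worth being careful about is that a full-rank (that is, full-dimensional) origin-centered ellipsoid admits at least one non-singular matrix representation, which is immediate from the paper's definition of $\cE_\mA$ and the fact that the semi-axes appear as the diagonal entries of $\Sigma$ in the SVD of any representing matrix.
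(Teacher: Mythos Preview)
Your proof is correct and is exactly the SVD-based argument the paper has in mind; the paper does not spell out a proof but simply notes that the claim ``follows from looking at the SVD,'' and in the notation section already observes that every ellipsoid can be written as $\cE_{\Sigma^{-1}\mV^T}$, from which your symmetrized representative $\mV\Sigma^{-1}\mV^T$ is the natural choice.
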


We also have the standard result relating volume and determinants, which follows from observing \(\mA \cE_{\mA} = B_2^d\).
\begin{claim}\label{claim:prelim_vol}
Let \(\mA \succ 0\). Then 
\[\vol(\cE_{\mA}) = \det(\mA^{-1}) \vol(B_2^d)\]
\end{claim}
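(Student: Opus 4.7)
The plan is to use the change of variables formula from multivariable calculus. The key observation, which is already stated in the excerpt, is that $\mA$ defines a bijective linear map with $\mA\,\cE_{\mA} = B_2^d$. So I would first recall that for any invertible linear map $T\colon \R^d \to \R^d$ and any measurable set $S$, we have $\vol(T(S)) = |\det(T)| \cdot \vol(S)$.

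Applying this with $T = \mA$ and $S = \cE_{\mA}$, I would obtain
\[\vol(B_2^d) \;=\; \vol(\mA\,\cE_{\mA}) \;=\; |\det(\mA)| \cdot \vol(\cE_{\mA}).\]
Since $\mA \succ 0$, all eigenvalues are positive, so $\det(\mA) > 0$ and I can drop the absolute value. Rearranging gives
\[\vol(\cE_{\mA}) \;=\; \frac{1}{\det(\mA)} \cdot \vol(B_2^d) \;=\; \det(\mA^{-1}) \cdot \vol(B_2^d),\]
which is exactly the statement of the claim.

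There is essentially no obstacle here: the claim reduces to the change of variables formula plus the elementary fact that $\det(\mA^{-1}) = 1/\det(\mA)$. The only subtlety worth mentioning is verifying that $\mA\,\cE_{\mA} = B_2^d$ as a set equality (both inclusions follow immediately from the definition $\cE_{\mA} = \{\vx : \|\mA\vx\|\le 1\}$ together with invertibility of $\mA$), but this is already asserted in the preliminaries.
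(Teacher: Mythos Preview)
Your proposal is correct and matches the paper's approach exactly: the paper states only that the claim ``follows from observing $\mA \cE_{\mA} = B_2^d$'', and you have simply filled in the standard change-of-variables details behind that observation.
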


In order to give the reduction in the lower bound from the general case to the two-dimensional case, we use the technique of Steiner symmetrization (see e.g. \cite[Section 1.1.7]{artstein2015asymptotic}).
Given some unit vector \(\vu \in \R^d\) and convex body \(K \subseteq \R^d\), we write \(S_{\vu}(K)\) for the \textit{Steiner symmetrization} in the direction of \(\vu\).
Recall that the Steiner symmetrization is defined so that for any \(\vx \perp \vu\):
\[\vol((\vx + \R \vu) \cap K) = \vol((\vx + \R \vu) \cap S_{\vu}(K))\]
and so that \((\vx + \R \vu) \cap S_{\vu}(K)\) is an interval centered at \(\vx\).
Note that we will overload notation slightly as we will allow you \(\vu\) to be a vector of any non-zero length while Steiner symmetrization is usually defined with \(\vu\) being a unit vector, but we will simply take \(S_{\vu} = S_{\frac{\vu}{\|\vu\|_2}}\).

Importantly, Steiner symmetrization will preserve important properties of the update.
We have the key facts that \(\vol(S_{\vu}(K)) = \vol(K)\), \(S_{\vu}(K') \subseteq S_{\vu}(K)\) if \(K \subseteq K'\), and further the Steiner symmetrization preserves \(K\) being an ellipsoid:
\begin{claim}[\protect{\cite[Lemma 2]{bourgain2006estimates}}]\label{claim:steiner_ellipsoid}
    If \(c + \cE \subseteq \R^d\) is an ellipsoid, \(S_{\vu}(c + \cE)\) is still an ellipsoid.
\end{claim}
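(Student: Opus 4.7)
The plan is to reduce to the case $\vu = \ve_d$ by an orthogonal change of coordinates (both the property of being an ellipsoid and the Steiner symmetrization operation commute with rotations), and then write things out explicitly. Assuming $c + \cE$ is full-dimensional (the lower-dimensional case reduces to the full-dimensional case carried out inside the affine hull when $\vu$ is parallel to it, and is trivial when $\vu$ is transverse), I represent the ellipsoid as
\[(c + \cE) = \inbraces{\vx \in \R^d \suchthat (\vx - \vc)^T \mM (\vx - \vc) \le 1}\]
for some $\mM \succ 0$ (using Claim \ref{claim:prelim_ellips_sym}). Partitioning $\vx = (\vy, t)$ with $\vy \in \R^{d-1}$ and $t \in \R$, I block-decompose
\[\mM = \begin{pmatrix} \mA & \vb \\ \vb^T & \gamma \end{pmatrix}, \qquad \vc = \begin{pmatrix} \vc_1 \\ c_2 \end{pmatrix},\]
and note for later use that $\gamma > 0$ and that the Schur complement $\mA - \gamma^{-1}\vb\vb^T$ is positive definite, by the standard Schur complement characterization of $\mM \succ 0$.

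Next, for each fixed $\vy \in \R^{d-1}$, the intersection of $c + \cE$ with the vertical line $\inbraces{(\vy, t) \suchthat t \in \R}$ is (empty or) a closed interval determined by a univariate quadratic inequality in $t$. Completing the square in $t$ identifies the midpoint and shows that the half-length squared equals
\[\gamma^{-1}\inparen{1 - (\vy - \vc_1)^T \inparen{\mA - \gamma^{-1}\vb\vb^T}(\vy - \vc_1)}.\]
By definition, Steiner symmetrization in $\ve_d$ discards the original midpoint and recenters each such interval at $t = 0$ while preserving its length. Therefore
\[S_{\ve_d}(c + \cE) = \inbraces{(\vy, t) \suchthat \gamma\, t^2 + (\vy - \vc_1)^T \inparen{\mA - \gamma^{-1}\vb\vb^T}(\vy - \vc_1) \le 1},\]
which is the sublevel set of the block-diagonal quadratic form with blocks $\mA - \gamma^{-1}\vb\vb^T$ and $\gamma$. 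Both blocks are positive definite, so the defining matrix is positive definite, and $S_{\ve_d}(c + \cE)$ is an ellipsoid centered at $(\vc_1, 0)$.

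The main obstacle is essentially organizational rather than analytic: one has to translate the purely geometric description of Steiner symmetrization (same length, recentered on $\vu^\perp$) into the algebra of the defining quadratic form, and then notice that the $(d-1) \times (d-1)$ block that controls the half-length is exactly the Schur complement $\mA - \gamma^{-1}\vb\vb^T$. Once that identification is made, positive definiteness of the new defining matrix --- and hence the conclusion that the symmetrization is itself an ellipsoid --- is automatic from $\mM \succ 0$, with no further estimates needed.
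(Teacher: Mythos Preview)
Your proof is correct. The reduction to $\vu=\ve_d$ via an orthogonal change of coordinates is valid, and the Schur-complement computation accurately identifies both the half-length of each vertical slice and the resulting block-diagonal quadratic form after recentering; positive definiteness of $\mM$ then gives exactly what you need.

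The paper does not prove this claim at all: it simply cites it as \cite[Lemma 2]{bourgain2006estimates} and uses it as a black box in the symmetrization argument for the lower bound. So there is no ``paper's proof'' to compare against; your argument supplies a self-contained elementary proof where the paper relies on a reference. One minor remark: your handling of the degenerate (lower-dimensional) case is a bit hand-wavy, but for the paper's application only full-dimensional ellipsoids arise, so this does not affect the use of the claim.
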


Further, if we apply Steiner symmetrization to a body that is a body of revolution about an axis,
it does not change the body if \(\vu\) is perpendicular to the axis of revolution.
\begin{claim}\label{claim:steiner_sym}
    Let \(K \subseteq \R^d\) be a body of revolution about the \(\ve_1\)-axis.
    Then if \(\vu \perp \ve_1\), \(S_{\vu}(K) = K\).
\end{claim}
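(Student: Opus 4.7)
The plan is to exploit the rotational symmetry of $K$ about the $\ve_1$-axis to show that every chord of $K$ parallel to $\vu$ is already symmetric about $\vu^{\perp}$, which is precisely the condition for $K$ to be a fixed point of $S_{\vu}$.

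First I would reduce to the case $\vu = \ve_2$. Since Steiner symmetrization depends only on the direction of $\vu$ (as noted in the excerpt), we may assume $\vu$ is a unit vector. Because $\vu \perp \ve_1$, there is a rotation $R \in SO(d)$ fixing the $\ve_1$-axis with $R\vu = \ve_2$. Since $K$ is a body of revolution about the $\ve_1$-axis, $R(K) = K$, and $S_{R\vu}(R(K)) = R(S_{\vu}(K))$, so it suffices to show $S_{\ve_2}(K) = K$.

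Next I would analyze an arbitrary chord of $K$ in the direction $\ve_2$. Fix $\vy' \in \ve_2^{\perp}$ and write $\vy' = y_1 \ve_1 + \sum_{i \geq 3} y_i \ve_i$, and consider the line $L = \vy' + \R \ve_2$. A point $\vy' + t\ve_2 \in L$ lies in the affine hyperplane $H_{y_1} \coloneqq \{\vx \in \R^d : x_1 = y_1\}$, and its distance from the axis point $y_1 \ve_1$ equals $\sqrt{t^2 + \sum_{i \geq 3} y_i^2}$. Because $K$ is invariant under every rotation of $\R^d$ fixing the $\ve_1$-axis, the slice $K \cap H_{y_1}$ is invariant under all rotations of $H_{y_1}$ about $y_1 \ve_1$; consequently, membership of $\vy' + t\ve_2$ in $K$ depends on $t$ only through $\sqrt{t^2 + \sum_{i \geq 3} y_i^2}$. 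In particular, $t_0$ belongs to $\{t : \vy' + t\ve_2 \in K\}$ if and only if $-t_0$ does, so $K \cap L$ is symmetric about the point $\vy' \in \ve_2^{\perp}$.

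Finally, I would invoke convexity (standard for bodies considered with Steiner symmetrization, and also following from the hypothesis that $K$ is a convex body in the paper's setting): $K \cap L$ is an interval, and by the previous paragraph it is centered at $\vy'$. This is exactly the description of $S_{\ve_2}(K) \cap L$, namely a centered interval on $L$ with the same length as $K \cap L$. Since this holds for every line $L$ parallel to $\ve_2$, we conclude $S_{\ve_2}(K) = K$, and the reduction from the first paragraph finishes the proof. The only real subtlety is keeping the reduction clean; once $\vu$ is aligned with $\ve_2$, the rotational invariance of $K$ within the slices $H_{y_1}$ makes the symmetry of the chord immediate, so I do not anticipate any obstacle beyond the bookkeeping.
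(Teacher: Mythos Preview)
Your proof is correct. The paper itself does not supply a proof of this claim; it is stated in the preliminaries as a standard fact about Steiner symmetrization and used later without further justification. Your argument---reducing to $\vu=\ve_2$ via a rotation fixing the $\ve_1$-axis, then observing that membership of $\vy'+t\ve_2$ in $K$ depends only on the distance $\sqrt{t^2+\sum_{i\geq 3}y_i^2}$ to the axis and hence only on $|t|$---is exactly the natural way to see why the chords are already centered, and your appeal to convexity to conclude each chord is an interval is appropriate given how the paper sets up Steiner symmetrization. One very minor remark: the reduction step via $R\in SO(d)$ mapping $\vu$ to $\ve_2$ while fixing $\ve_1$ requires $d\geq 3$ (in $d=2$ there is no such rotation if $\vu=-\ve_2$), but the $d=2$ case is immediate anyway since a body of revolution about $\ve_1$ in the plane is by definition symmetric about the $x_1$-axis.
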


\section{Streaming Ellipsoidal Rounding}
\label{sec:main_alg_top}

Our goal in this section is to prove Theorems \ref{thm:main_one} and \ref{thm:main_two}.

\subsection{Monotone algorithms solve \texorpdfstring{\Cref{prob:main}}{Theorem \ref{prob:main}}}

To design algorithms to solve the streaming ellipsoidal rounding problem,
we first show that any monotone algorithm gives a valid solution.
We let \(\vc_0 \in \R^d\) and \(r_0 \geq 0\) be given so that
\(\vc_0 + r_0 \cdot B_2^d \subseteq Z\),
and denote the initial ellipsoid as \(\cE_0 = r_0 \cdot B_2^d\).
Note that \(r_0\) need not be the inradius, although it is upper bounded by the inradius.

If we had for each intermediate step \(t\)
that \(\vc_t + \alpha_t \cdot \cE_t \subseteq \conv{\vz_1, \ldots \vz_t} \subseteq \vc_t + \cE_t\),
then clearly any algorithm that satisfies this would give a valid final solution as well.
However, in intermediate steps it is not clear that 
\(\vc_t + \alpha_t \cdot \cE_t \subseteq \conv{\vz_1, \ldots \vz_t}\),
due to the initialization of \(\vc_0 + \cE_0\) in our monotone algorithm framework.
Instead, we relax this invariant to \(\vc_t + \alpha_t \cdot \cE_t \subseteq \conv{\{\vz_1, \ldots \vz_t\} \cup (\vc_0 + \cE_0)}\),
which still suffices to produce a valid final solution.
\begin{claim}\label{claim:alg_invariant}
To solve \Cref{prob:main}, it suffices for %\(\vc_0 + \alpha_0 \cE_0 \subseteq \vc_0 + r B_2^d\), and for
the sequence of ellipsoids \(\vc_i + \cE_i\) and scalings \(\alpha_i\) to satisfy the invariants of \Cref{def:alg-invariant}.
\end{claim}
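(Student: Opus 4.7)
The plan is a straightforward double induction on $t$, separately establishing the containment $Z \subseteq \vc_n + \cE_n$ from invariant~(\ref{item:def-invariant-1}) and the containment $\vc_n + \alpha_n \cE_n \subseteq Z$ from invariant~(\ref{item:def-invariant-2}), using the initialization assumption $\vc_0 + \cE_0 \subseteq Z$ (with $\alpha_0 = 1$) as the base case in both directions.

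For the outer containment, I would first show by induction on $t$ that
\[\vc_t + \cE_t \;\supseteq\; \{\vz_1, \ldots, \vz_t\} \cup (\vc_0 + \cE_0).\]
The base case $t=0$ is trivial since the right-hand side is just $\vc_0 + \cE_0$. For the inductive step, invariant~(\ref{item:def-invariant-1}) gives $\vc_t + \cE_t \supseteq (\vc_{t-1} + \cE_{t-1}) \cup \{\vz_t\}$, and the inductive hypothesis covers $\vc_{t-1} + \cE_{t-1}$. Applying this at $t=n$ and using convexity of $\vc_n + \cE_n$ yields $\vc_n + \cE_n \supseteq \conv{\{\vz_1,\ldots,\vz_n\}} = Z$.

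For the inner containment, I would show by induction that
\[\vc_t + \alpha_t \cE_t \;\subseteq\; \conv{\{\vz_1, \ldots, \vz_t\} \cup (\vc_0 + \cE_0)}.\]
The base case $t = 0$ holds because $\alpha_0 = 1$ and so $\vc_0 + \alpha_0 \cE_0 = \vc_0 + \cE_0$. For the inductive step, invariant~(\ref{item:def-invariant-2}) gives $\vc_t + \alpha_t \cE_t \subseteq \conv{(\vc_{t-1} + \alpha_{t-1} \cE_{t-1}) \cup \{\vz_t\}}$, and the inductive hypothesis allows us to replace $\vc_{t-1} + \alpha_{t-1} \cE_{t-1}$ with a subset of $\conv{\{\vz_1,\ldots,\vz_{t-1}\} \cup (\vc_0 + \cE_0)}$; taking convex hulls and absorbing $\vz_t$ closes the induction. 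At $t = n$, every $\vz_i$ lies in $Z$ and $\vc_0 + \cE_0 \subseteq Z$ by initialization, so the right-hand side is contained in the convex set $Z$, giving $\vc_n + \alpha_n \cE_n \subseteq Z$.

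There is no real obstacle here; the only subtle point is that the invariant~(\ref{item:def-invariant-2}) deliberately does not assert $\vc_t + \alpha_t \cE_t \subseteq \conv{\{\vz_1,\dots,\vz_t\}}$ at intermediate steps (since the initialization ball may not lie in $\conv{\{\vz_1,\dots,\vz_t\}}$ for small $t$), so one must carry the auxiliary initialization region $\vc_0 + \cE_0$ through the induction and only discard it at the final step, where $\vc_0 + \cE_0 \subseteq Z$ by hypothesis.
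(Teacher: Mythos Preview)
Your proposal is correct and matches the paper's proof essentially line for line: both use induction on $t$ with invariant~(\ref{item:def-invariant-1}) for the outer containment and invariant~(\ref{item:def-invariant-2}) for the inner containment, carrying the initialization region $\vc_0 + \cE_0$ through the inner induction and discarding it only at $t=n$ via the hypothesis $\vc_0 + \cE_0 \subseteq Z$. The only cosmetic difference is that you also track $\vc_0 + \cE_0$ in the outer induction (harmless), whereas the paper tracks only $\{\vz_1,\ldots,\vz_t\}$ there.
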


\begin{proof}
First, we argue that \(\conv{\vz_1, \ldots, \vz_n} \subseteq \vc_n + \cE_n\).
As \(\cE_n\) is an ellipsoid and therefore a convex set, it suffices to show \(\{\vz_1, \ldots, \vz_n\} \subseteq \vc_n + \cE_n\).
We actually argue by induction that \(\{\vz_1, \ldots, \vz_t\} \subseteq \vc_t + \cE_t\) for all \(0 \leq t \leq n\).
This is vacuously true for \(t = 0\). At each step \(t > 0\) the inductive hypothesis gives \(\{\vz_1, \ldots, \vz_{t-1}\} \subseteq \vc_{t-1} + \cE_{t-1}\), and thus by (\ref{item:def-invariant-1}) we have \(\{\vz_1, \ldots, \vz_{t}\} \subseteq \vc_t + \cE_t\).

Now, we argue that \(\vc_n + \alpha_n \cdot \cE_n \subseteq \conv{\vz_1, \ldots, \vz_n}\).
We show by induction that \(\vc_t + \alpha_t \cdot \cE_t \subseteq \conv{\{\vz_1, \ldots, \vz_t\} \cup (\vc_0 + \cE_0)}\)
for all \(0 \leq t \leq n\).
This is sufficient as \(\conv{\{\vz_1, \ldots, \vz_n\} \cup (\vc_0 + \cE_0)} = Z\).
The case for \(t = 0\) is trivial.
For \(t > 0\), the inductive hypothesis gives \(\vc_{t-1} + \alpha_{t-1} \cdot \cE_{t-1} \subseteq \conv{\{\vz_1, \ldots, \vz_{t-1}\} \cup (\vc_0+\cE_0)}\),
and by (\ref{item:def-invariant-2}) we have
\[c_t + \alpha_t \cdot \cE_t \subseteq \conv{(c_{t-1} + \alpha_{t-1} \cdot \cE_{t-1}) \cup \{\vz_i\}} \subseteq \conv{\{\vz_1, \ldots, \vz_t\} \cup (\vc_0 + \cE_0)}\]
as desired.
\end{proof}

\subsection{Special case}\label{sec:two_dim_update}

In light of \Cref{claim:alg_invariant}, our strategy is to design an algorithm that preserves the invariants given in \Cref{def:alg-invariant}. This algorithm can be thought of as an \textit{update rule} that,
given the previous outer and inner ellipsoids \(\vc_{t-1} + \cE_{t-1}, \vc_{t-1} + \alpha_{t-1} \cE_{t-1}\) and next point \(\vz_t\), produces the next outer and inner ellipsoids \(\vc_t + \cE_t, \vc_t + \alpha_t \cE_t\).

It is in fact sufficient to consider the simplified case where the previous outer ellipsoid is the unit ball, and the previous inner ellipsoid is some scaling of the unit ball;
we will show this in \Cref{sec:gen_high_dim}. 
% (essentially, by an appropriate choice of coordinates where the previous inner ellipsoid ).
We can further specialize by considering only the two-dimensional case \(d=2\).
We will later show that the high-dimensional case is not much different, as all the relevant sets \(\vc_{t_1} + \cE_{t-1}, \vc_t + \cE_{t}\) and \(\conv{\alpha \cdot \cE_{t-1} \cup \{\vz_t\}}\) form bodies of revolution about the axis through \(\vc_{t-1}\) and \(\vz_t\).

We now describe our two-dimensional update rule.
In order to simplify notation, we will let \(\alpha\) be the previous scaling \(\alpha_{t-1}\), and \(\alpha'\)
be the next scaling \(\alpha_t\). 
We will assume that \(\alpha \leq \nfrac{1}{2}\) to simplify the analysis
of our update rule; this will not affect the quality of our final approximation as this update rule
will only be used in `large approximation factor' regime.
We will also overload notation; writing \(c + \cE\) even when \(c\)
is a scalar to mean \((c, 0) + \cE\).
% The previous outer ellipsoid is the unit ball, and the previous inner ellipsoid is \(\alpha\) times the unit ball so that the previous scaling is \(\alpha\).
% Further, let \(y^2 = x_2^2 + \ldots + x_d^2\) refer to the coordinates except \(x_1\).
% We will also overload notation and use \(x\) to refer to \(x_1\).
We can describe the previous outer ellipsoid \(\cE\) with the equation \(x^2 + y^2 \leq 1\), and the previous inner ellipsoid \(\alpha \cE\) with \(x^2 + y^2 \leq \alpha^2\).
We define the next outer and inner ellipsoids \(c + \cE'\), \(c + \alpha' \cE'\) as
\[\underbrace{\frac{1}{a^2}(x-c)^2 + \frac{1}{b^2} y^2 \leq 1}_{c + \cE'}, \qquad \underbrace{\frac{1}{a^2}(x-c)^2 + \frac{1}{b^2} y^2 \leq \alpha'^2}_{c + \alpha' \cE'}\]
where we use parameters
\begin{equation}\label{eqn:update_params}
\left.\begin{aligned}
a &= \expv{\gamma} \\
b &=1 + \frac{\alpha - \alpha'}{2} \\
c &= - \alpha + \alpha' \cdot a \\
\alpha'& = \frac{1}{\frac{1}{\alpha} + 2\gamma} \\
\end{aligned}\qquad\right\}
\end{equation}

We will let \(\vz\) be the rightmost point of \(c + \cE'\), so that \(\vz = (c+ a, 0)\). Eventually, we will choose \(\gamma\) so that \(\vz\) coincides with \(\vz_{t}\), the point received in the next iteration.
In \Cref{sec:gen_alg}, these parameters \(a(\gamma), b(\gamma), c(\gamma), \alpha'(\gamma)\)
%, as well as the resulting ellipsoids \(c(t) + \cE'(t), c(t) + \alpha' (t)\cE'(t)\)
will be used as functions of the parameter \(\gamma \geq 0\). However, we will not yet explicitly specify \(\gamma\), so in this section these parameters can be thought of as constants for some fixed \(\gamma\). 
This update rule is pictured in \Cref{fig:update_step}.

We first collect a few straightforward properties of this update rule.
\begin{claim}\label{claim:update_params}
The parameters in the setup (\ref{eqn:update_params}) satisfy
\begin{enumerate}
    \item\label{item:claim_update_params_1} \(\frac{1}{\alpha'} = \frac{1}{\alpha} + 2\gamma\)
    \item\label{item:claim_update_params_2} \(b \geq 1\)
    \item\label{item:claim_update_params_3} \(c \geq 0\)
    \item\label{item:claim_update_params_4} \(c + \alpha' \cdot a \geq \alpha\)
\end{enumerate}
\end{claim}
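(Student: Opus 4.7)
The plan is to verify each of the four items directly from the defining equations in (\ref{eqn:update_params}), using only the standing assumptions that $\alpha \le \nfrac{1}{2}$ and $\gamma \ge 0$. This is an elementary calculation, so the only interesting question is which item requires the assumption $\alpha \le \nfrac{1}{2}$ (items 3 and 4 will).

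Item 1 is immediate by inverting the last line of (\ref{eqn:update_params}). Item 2 follows once we observe, using item 1, that $\nfrac{1}{\alpha'} = \nfrac{1}{\alpha} + 2\gamma \ge \nfrac{1}{\alpha}$ and hence $\alpha' \le \alpha$; this gives $b = 1 + (\alpha-\alpha')/2 \ge 1$.

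The substantive content is in items 3 and 4, but I expect both to reduce to a single inequality. From $c = -\alpha + \alpha' a$ we get $c + \alpha'a = -\alpha + 2\alpha' a$, so item 3 is equivalent to $\alpha' a \ge \alpha$ and item 4 is equivalent to $2\alpha' a \ge 2\alpha$, i.e., to exactly the same inequality $\alpha' a \ge \alpha$. Substituting $a = \expv{\gamma}$ and $\alpha' = \alpha/(1 + 2\alpha\gamma)$, this becomes
\[
\expv{\gamma} \ge 1 + 2\alpha\gamma.
\]
Since $\alpha \le \nfrac{1}{2}$, we have $2\alpha\gamma \le \gamma$, and then the standard inequality $\expv{\gamma} \ge 1 + \gamma$ closes out the proof.

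There is no real obstacle here; the main pitfall to avoid is overlooking that the hypothesis $\alpha \le \nfrac{1}{2}$ is what makes the key inequality $\expv{\gamma} \ge 1 + 2\alpha\gamma$ reduce to the textbook bound $\expv{\gamma} \ge 1 + \gamma$. These four facts will then be used as drop-in lemmas when verifying the monotonicity invariants (\ref{item:def-invariant-1}) and (\ref{item:def-invariant-2}) for the two-dimensional update rule in later sections.
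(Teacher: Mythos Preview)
Your proof is correct and follows essentially the same approach as the paper: both reduce items~3 and~4 to the single inequality $\alpha' a \ge \alpha$, rewrite it as $\expv{\gamma} \ge 1 + 2\alpha\gamma$, and then use $\alpha \le \nfrac{1}{2}$ together with $\expv{\gamma} \ge 1 + \gamma$.
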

Before proving these properties, we provide geometric interpretations.
Intuitively, (\ref{item:claim_update_params_1}) means that \(\gamma\) is proportional to the increase in the approximation factor at this step, a fact that we will use when analyzing the general-case algorithm.
(\ref{item:claim_update_params_2}) means that the outer ellipsoid grows on every axis;
and (\ref{item:claim_update_params_3}) means that the centers of the next ellipsoids are to the right of the \(y\)-axis, i.e. the centers of the next ellipsoids are further towards \(\vv\) than those of the previous ellipsoids.
The rightmost point of \(c + \alpha' \cE'\) is \(c + \alpha' \cdot a\), so (\ref{item:claim_update_params_4}) shows that this point is to the right of the rightmost point of \(\alpha \cdot \cE\).
\begin{proof}
(\ref{item:claim_update_params_1}) is clear from rearranging the definition of \(\alpha'\).
From (\ref{item:claim_update_params_1}) we also have \(\alpha' \leq \alpha\), so that (\ref{item:claim_update_params_2}) follows immediately.

For (\ref{item:claim_update_params_3}), observe that \(\frac{\alpha}{\alpha'} = 1 + 2 \gamma \alpha\).
When \(\alpha \leq \nfrac{1}{2}\), this means \begin{equation}\label{eqn:pup} % ceqn: property update params
\frac{\alpha}{\alpha'} \leq 1 + \gamma \leq \expv{\gamma} = a
\end{equation}
using \(1 + x \leq e^x\), \Cref{claim:wk_e}-(\ref{item:wk_1}).
By definition of \(c\), \(\alpha/\alpha' \leq a\) is equivalent to \(c \geq 0\).

To show (\ref{item:claim_update_params_4}), by definition we have that \(c + \alpha' \cdot a = - \alpha + 2 \alpha' a\).
Thus showing \(c + \alpha' \cdot a \geq \alpha\) is equivalent to showing that \(\alpha' a \geq \alpha\), which is equivalent to the inequality in (\ref{eqn:pup}).
\end{proof}

As \Cref{fig:update_step} depicts, the update step we defined satisfies the invariants in \Cref{def:alg-invariant} and so is monotone;
in the rest of this section we make this picture formal.
To start, we consider the invariant concerning outer ellipsoids; we will show that \(\cE \subseteq c + \cE'\).
For now we can think of \(\vz\) as replacing \(\vz_t\)
%(the point received in the next iteration)
, and clearly \(\vz \in c + \cE'\), so if we show that \(\cE \subseteq c + \cE'\), then \(\conv{\cE \cup \{\vz\}} \subseteq c + \cE'\) as well since \(c + \cE'\) is convex.

\begin{claim}\label{claim:update_step_outer}
    \(\cE \subseteq c + \cE'\)
\end{claim}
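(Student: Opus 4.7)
The plan is to reduce $\cE \subseteq c + \cE'$ to a one-variable quadratic inequality on the unit circle and then exploit concavity. Since both $\cE$ (the unit disk) and $c+\cE'$ are convex, it suffices to show that every point of $\partial\cE$ lies in $c+\cE'$. Parameterizing such a point as $(x,y)$ with $x \in [-1,1]$ and $y^2 = 1 - x^2$, substituting into the inequality defining $c+\cE'$ and clearing denominators reduces the entire claim to showing
\[
g(x) \;:=\; (b^2 - a^2)\, x^2 \;-\; 2 b^2 c\, x \;+\; b^2 c^2 + a^2(1 - b^2) \;\le\; 0 \qquad \text{for all } x \in [-1,1].
\]

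The main step will be to establish the comparison $a \ge b$, which forces $g$ to be concave on $[-1,1]$. To do this I will unpack the definitions: from \Cref{claim:update_params}-(\ref{item:claim_update_params_1}) together with $\gamma \ge 0$, one obtains $\alpha' = \alpha/(1+2\gamma\alpha)$ and hence $\alpha - \alpha' = 2\gamma\alpha^2/(1+2\gamma\alpha)$, so
\[
b \;=\; 1 + \frac{\gamma\alpha^2}{1+2\gamma\alpha} \;\le\; 1 + \gamma \;\le\; e^\gamma \;=\; a,
\]
using $\alpha \le 1$ and $1 + x \le e^x$. Consequently $b^2 - a^2 \le 0$, so $g$ is concave on $[-1,1]$ and attains its maximum on that interval at one of the two endpoints.

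It then remains to verify $g(\pm 1) \le 0$. A direct expansion yields $g(-1) = b^2[(1+c)^2 - a^2]$ and $g(1) = b^2[(1-c)^2 - a^2]$, which geometrically assert precisely that the leftmost/rightmost points $(\pm 1, 0) \in \partial\cE$ lie in $c+\cE'$. The bound $g(-1) \le 0$ rearranges to $c \le a - 1$, equivalently $a(1-\alpha') \ge 1 - \alpha$, which holds because $a \ge 1$ and $\alpha' \le \alpha$ (both consequences of \Cref{claim:update_params}-(\ref{item:claim_update_params_1}) combined with $\gamma \ge 0$). The bound $g(1) \le 0$ rearranges to $|1 - c| \le a$, which follows from $c \ge 0$ (\Cref{claim:update_params}-(\ref{item:claim_update_params_3})), $a \ge 1$, and the preceding bound $c \le a-1$. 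I expect the main obstacle in this argument to be purely the inequality $a \ge b$; once that is established, concavity of $g$ packages the containment into the two endpoint checks, which are elementary algebraic manipulations of the parameter formulas in (\ref{eqn:update_params}). Notably, this route verifies containment without needing to locate the tangent points $(-\sin\varphi, \pm\cos\varphi)$ suggested by \Cref{fig:update_step}.
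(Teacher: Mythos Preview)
Your proof is correct and takes a genuinely different, more elementary route than the paper's.

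The paper first observes $\cE \subseteq \cE'$, then computes the \emph{maximal} rightward translate $c_r$ for which $\cE \subseteq c_r + \cE'$ by solving for the two tangent points of $\partial\cE$ and $\partial(c_r+\cE')$. This yields the closed form $c_r^2 = \frac{b^2-1}{b^2}(a^2-b^2)$, and the claim reduces to the inequality $c^2 \le c_r^2$. Proving that inequality (\Cref{claim:pty_outer}) requires a chain of analytic reductions through \Cref{claim:outer_int} and ultimately the one-variable inequality $\frac{(e^\gamma-1)^2}{e^{2\gamma}-(1+\gamma/4)^2} \le \tfrac{3}{2}\gamma$ of \Cref{claim:outer_t_ineq}.

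Your argument bypasses all of this. Once $a \ge b$ is established (which the paper also proves, \Cref{claim:outer_int}-(\ref{item:outer_int_2}), and which you derive directly), the quadratic $g$ is concave on $[-1,1]$, so the containment reduces to checking the two horizontal extreme points $(\pm 1,0)$ of the unit circle. Those checks, $|1\pm c|\le a$, follow immediately from $c\ge 0$, $a\ge 1$, and $c \le a-1$, the last being the one-line computation $a(1-\alpha') \ge 1-\alpha$. What the paper's approach buys is the exact threshold $c_r$, i.e., information about how much slack the containment has; what your approach buys is a substantially shorter proof that does not need \Cref{claim:pty_outer} or \Cref{claim:outer_t_ineq} at all for this claim.
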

\begin{proof}
First, observe that \(\cE \subseteq \cE'\) because both axes of \(\cE'\) have greater length than those of \(\cE\): \(a \geq 1\) by definition, and \(b \geq 1\) from \Cref{claim:update_params}-(\ref{item:claim_update_params_2}).
Now, we translate \(\cE'\) to the right until it touches \(\cE\) at two points. 
We call this translated ellipse \(c_r + \cE'\), as shown in \Cref{fig:update_step_outer}.
Observe that as long as \(c \leq c_r\), we have \(\cE \subseteq c + \cE'\).
We now determine \(c_r\).

\begin{figure}[ht]
\centering
\includegraphics[height=6cm]{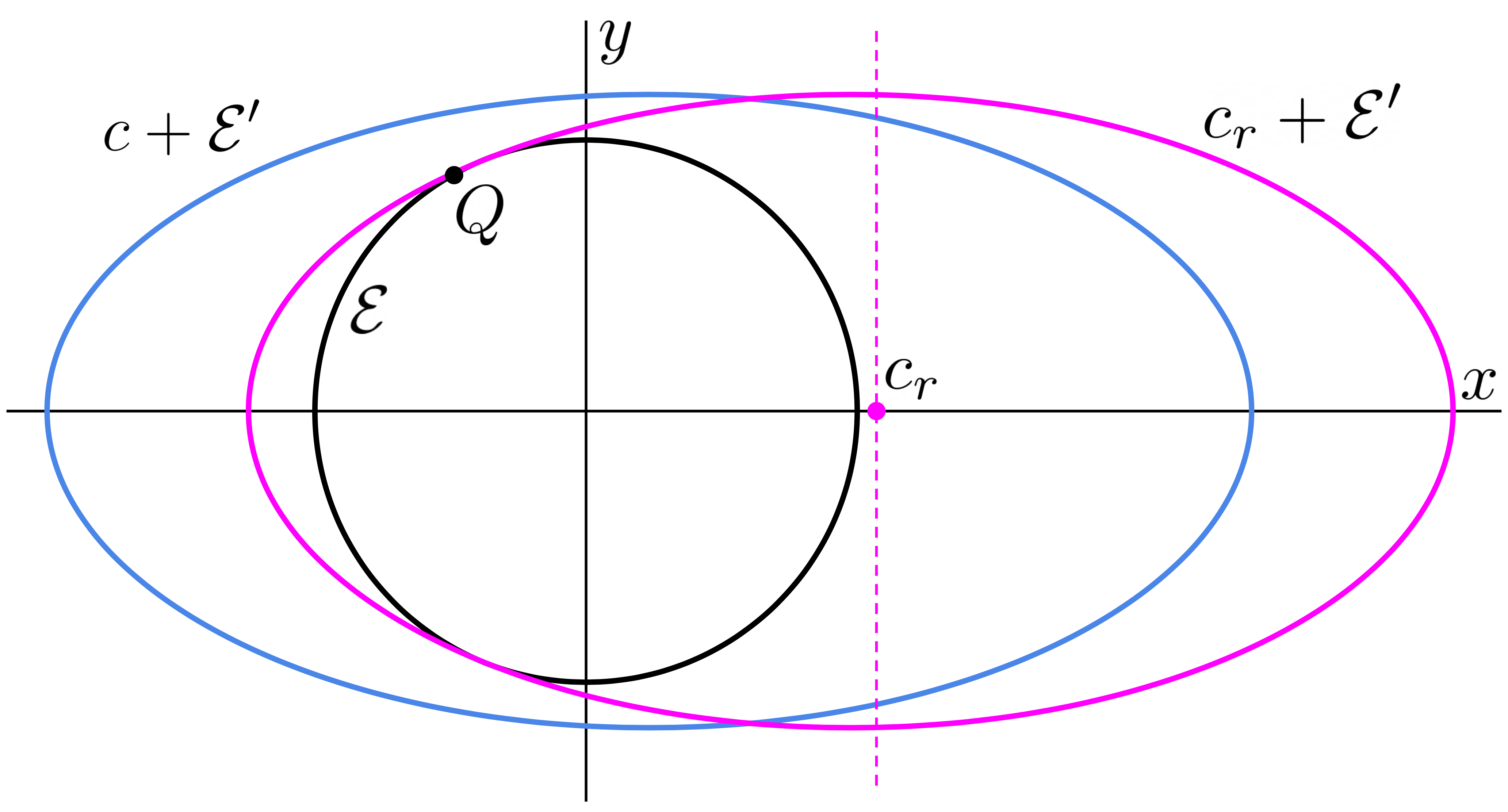}
\caption{Outer ellipses of the update step. As before, \(\cE\) is the black circle and \(c + \cE'\) is the blue ellipse.
\(c_r + \cE'\) is the magenta ellipse, with its center at \(c_r\) and the dotted magenta line showing the position of \(c_r\) along the \(x\)-axis.
\(c_r\) is defined so \(c_r + \cE'\) and \(\cE\) are tangent at two points. \(Q\) is one of these two tangent points.}
\label{fig:update_step_outer}
\end{figure}

First, note points on the boundary of \(c_r + \mathcal{E}'\) are described by the equation \begin{equation}\label{eqn:def_cr}
\frac{(x - c_r)^2}{a^2} + \frac{y^2}{b^2} = 1
\end{equation}
Let \(Q = (x', y')\) be the point of intersection between \(\cE\) and \(c_r + \cE'\) where \(y' > 0\).
Since \(Q\) is on the boundary of both ellipses, the vectors \(\left(\frac{2(x'-c_r)}{a^2}, \frac{2 y'}{b^2}\right)\) and \(\left(2x', 2y'\right)\), which are the normal vectors at \(Q\) of \(c_r + \cE'\) and \(\cE\) respectively, must be parallel.
Thus \(\frac{4 (x' - c_r)}{a^2} \cdot y' = \frac{4 y' x'}{b^2}\), which simplifies to \begin{equation}\label{eqn:x_cr}
x' = \frac{c_r}{1 - \frac{a^2}{b^2}}
\end{equation}
At this point we have a system of three equations relating \((x', y')\) and \(c_r\): (\ref{eqn:x_cr}), \(Q\) lying on \(\cE\), and \(Q\) satisfying (\ref{eqn:def_cr}).
We now solve this system to find \(c_r\).
To start, we expand (\ref{eqn:def_cr}) into
\(x'^2 - 2 x' c_r + c_r^2 + y'^2 \frac{a^2}{b^2} = a^2\),
which we rewrite into \(x'^2 \frac{a^2}{b^2} + x'^2 \left(1 - \frac{a^2}{b^2}\right) - 2 x' c_r + c_r^2 + y'^2 \frac{a^2}{b^2} = a^2\).
As \(Q\) lies on \(\cE\), this becomes \(x'^2 \left(1 - \frac{a^2}{b^2}\right) - 2 x' c_r + c_r^2 +  \frac{a^2}{b^2} = a^2\).
Substituting in (\ref{eqn:x_cr}), we get
\[\frac{c_r^2}{1 - \frac{a^2}{b^2}} - 2 \frac{c_r^2}{1 - \frac{a^2}{b^2}} + c_r^2 + \frac{a^2}{b^2} = a^2\]
Simplifying, we have \(c_r^2 \left(1 - \frac{b^2}{b^2 - a^2}\right) = a^2 \left(1 - \frac{1}{b^2}\right)\), i.e.
\[c_r^2 = \frac{b^2 - 1}{b^2} (a^2 - b^2)\]
To complete the claim it suffices to show \(c^2 \leq \frac{b^2 - 1}{b^2} (a^2 - b^2)\), which we do in \Cref{claim:pty_outer}.
\end{proof}

Now, we move on to the inner ellipsoid invariant of \Cref{def:alg-invariant}.
In particular, we will argue that \(c + \alpha' \cE' \subseteq \conv{\alpha\cE \cup \{\vz\}}\).
On a high level, we show this by arguing that the boundary of \(c + \alpha' \cE'\) does not intersect the boundary of \(\conv{\alpha\cE \cup \{\vz\}}\), except at points of tangency.

We can split the boundary of \(\conv{\alpha\cE \cup \{\vz\}}\) into two pieces: the part that intersects with the boundary of \(\alpha \cE\), which is an arc of the boundary of \(\alpha \cE\); and the remainder, which can described as two line segments connecting \(\vz\) to that arc. In particular, there are two lines that go through \(\vz\) and are tangent to \(\alpha \cE\), one of which we call line \(L\), and the other line is the reflection of \(L\) across the \(x\)-axis.
We define \(P_1\) and \(P_2\) as the tangent points of these lines to \(\alpha \cE\). 
Then, the boundary of \(\conv{\alpha\cE \cup \{\vz\}}\) consists of an arc \(P_1 P_2\) and the line segments \(\overline{P_1 \vz}, \overline{P_2 \vz}\). This is illustrated in \Cref{fig:update_step_inner}.
Note that at this point it is possible a priori for the arc \(P_1 P_2\) that coincides with the boundary of \(\conv{\alpha\cE \cup \{\vz\}}\)
to be either the major or minor arc; we will later show it must be the major arc.
We will take \(L\) to be the line whose tangent point to \(\alpha \cE\), \(P_1\), is above the \(x\)-axis, though this choice is arbitrary due to symmetry across the \(x\)-axis.

\begin{figure}[h]
\centering
\includegraphics[height=6cm]{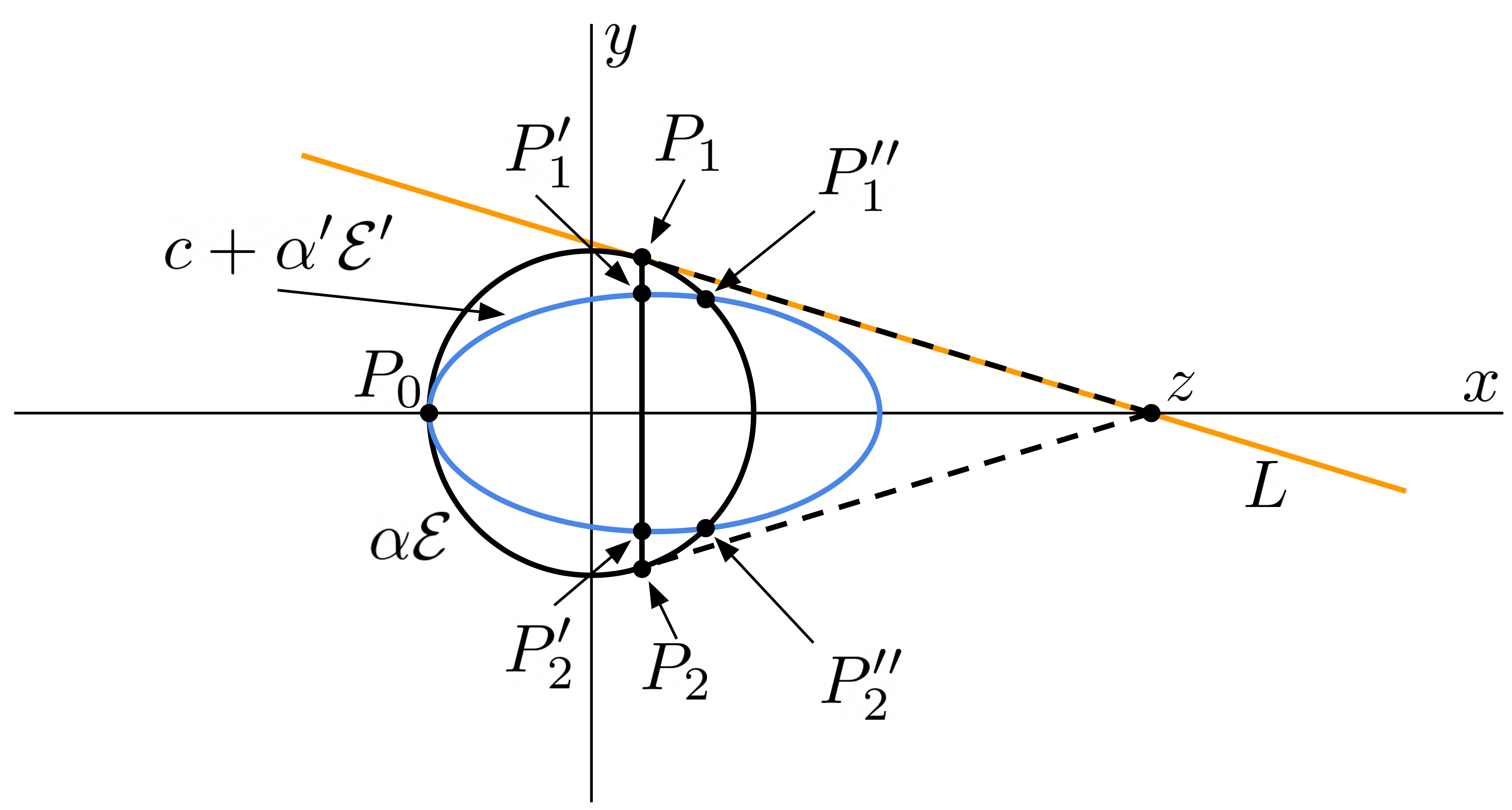}
\caption{Inner ellipses of the update step. As before, \(\alpha \cE\) is the black circle and \(c + \alpha' \cE\) is the blue ellipse. 
\(P_0\) is the shared leftmost point of \(\alpha \cE\) and \(c + \alpha' \cE'\).
There are two lines through \(\vv\) that are tangent to \(\alpha \cE\), one of which we call \(L\) and pictured in orange. We call the tangent points \(P_1\) and \(P_2\).
The line segments \(\overline{P_1 \vz}, \overline{P_2 \vz}\) are the dotted black lines.
\(P_1'\) and \(P_2'\) are the two points of intersection between \(c + \alpha' \cE\) and the line segment \(\overline{P_1 P_2}\). \(P_1''\) and \(P_2''\)
are the two points of intersection between \(\partial (c + \alpha' \cE')\) and \(\partial \alpha \cE\) to the right of the \(y\)-axis. Note that \(P_2, P_2', P_2''\) are the reflections of \(P_1, P_1', P_1''\) across the \(x\)-axis.}
\label{fig:update_step_inner}
\end{figure}

We first show that \(c + \alpha' \cE'\) does not intersect \(\overline{P_1 \vz}\) and \(\overline{P_2 \vz}\), except possibly at points of tangency.
In fact, we show a slightly stronger statement, in similar fashion to \Cref{claim:update_step_outer}.

\begin{claim}\label{claim:translate_ellipse_angle}
 % Ellipse 
 \(c + \alpha' \cE'\) lies inside the angle \(\angle P_1 \vz P_2\).
\end{claim}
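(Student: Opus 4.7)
The plan is to leverage the reflection symmetry of both $c + \alpha' \cE'$ and the angle $\angle P_1 \vz P_2$ across the $x$-axis. Because $\overline{P_2 \vz}$ is the image of $L = \overline{P_1 \vz}$ under reflection across the $x$-axis and $c + \alpha' \cE'$ is fixed by this reflection, it suffices to show that $c + \alpha' \cE'$ lies in the closed half-plane bounded by $L$ that contains the origin; the bound coming from $\overline{P_2 \vz}$ then follows by symmetry.

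To set this up, I first write $L$ explicitly. Since $L$ passes through $\vz = (c+a, 0)$ and is tangent to the circle $\alpha \cE$ of radius $\alpha$ about the origin, I can express it as $L = \inbraces{\vx \in \R^2 \suchthat \langle \vn, \vx\rangle = \alpha}$, where
\[\vn = \inparen{\frac{\alpha}{c+a},\; \sqrt{1 - \frac{\alpha^2}{(c+a)^2}}}\]
is the unit normal to $L$ pointing away from the origin, with the positive sign on the second coordinate chosen so that $L$ passes through $P_1$ in the upper half-plane. Note that $c + a > \alpha$ (which follows from $a \geq 1$ and $\alpha \leq \nfrac{1}{2}$), so $\vn$ is well-defined.

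Next, mimicking the strategy in \Cref{claim:update_step_outer}, I translate $c + \alpha' \cE'$ horizontally to the right until it first touches $L$, producing the ellipse $c_L + \alpha' \cE'$. The support function of the axis-aligned ellipsoid $\alpha' \cE'$ in direction $\vn$ equals $\alpha' \sqrt{a^2 n_1^2 + b^2 n_2^2}$, so $c_L$ is determined by
\[c_L \cdot n_1 + \alpha' \sqrt{a^2 n_1^2 + b^2 n_2^2} = \alpha.\]
The desired containment of $c + \alpha' \cE'$ in the origin-side half-plane of $L$ is then equivalent to $c \leq c_L$, which after substituting $n_1 = \alpha/(c+a)$ and $n_2^2 = 1 - \alpha^2/(c+a)^2$ reduces to the inequality
\[\alpha' \sqrt{\alpha^2 a^2 + b^2 \inparen{(c+a)^2 - \alpha^2}} \leq \alpha\, a.\]

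The main obstacle will be verifying this algebraic inequality. As is done with \Cref{claim:pty_outer} in the outer case, I will defer the verification to a separate technical sub-claim. The verification will rely on the parameter relations from (\ref{eqn:update_params}) — in particular the identity $c + a = a(1 + \alpha') - \alpha$, the value $b = 1 + (\alpha - \alpha')/2$, the relation $\nfrac{1}{\alpha'} = \nfrac{1}{\alpha} + 2\gamma$, and the assumption $\alpha \leq \nfrac{1}{2}$ — to reduce the expression under the square root to a quantity bounded by $\alpha^2 a^2 / \alpha'^2$, at which point the inequality follows by taking square roots.
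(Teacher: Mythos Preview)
Your proposal is correct and follows essentially the same approach as the paper: translate the inner ellipse along the $x$-axis until it becomes tangent to $L$, and show that the original center $c$ is no greater than the tangent center. The only difference is cosmetic—you compute the tangency condition directly via the support function of $\alpha'\cE'$, whereas the paper sets up and solves a system of three equations for the tangent point; both derivations collapse to the same inequality (the paper's $\alpha'\sqrt{(1+r)/r}\le 1$ is your inequality after squaring and unwinding $r$), which the paper then proves as \Cref{claim:pty_inner} using exactly the parameter relations you list.
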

% When we say that \(c + \alpha' \cE'\) lies inside the angle \(\angle P_1 v P_2\), we mean that it is contained in \(\mathsf{conv}(\overrightarrow{v P_1}, \overrightarrow{v P_2})\).
\begin{proof}
We translate \(c + \alpha' \cE'\) to the right until it touches \(L\) (and, by symmetry, \(\overline{P_2 \vz}\)).
We call this translated ellipse \(c_+ + \alpha' \cE'\), as shown in \Cref{fig:update_step_inner_translate}. 
(Formally, the center \(c_+\) can be described not as a translation from some other ellipse,
but as \(c_+\) such that \(c_+ + \alpha' \cE'\) intersects \(L\) at one point).
Observe that 
if \(c \leq c_{+}\), then \(c + \alpha' \cE'\) lies inside the angle \(\angle P_1 \vz P_2\).
We now determine \(c_{+}\).

\begin{figure}[h]
\centering
\includegraphics[height=6cm]{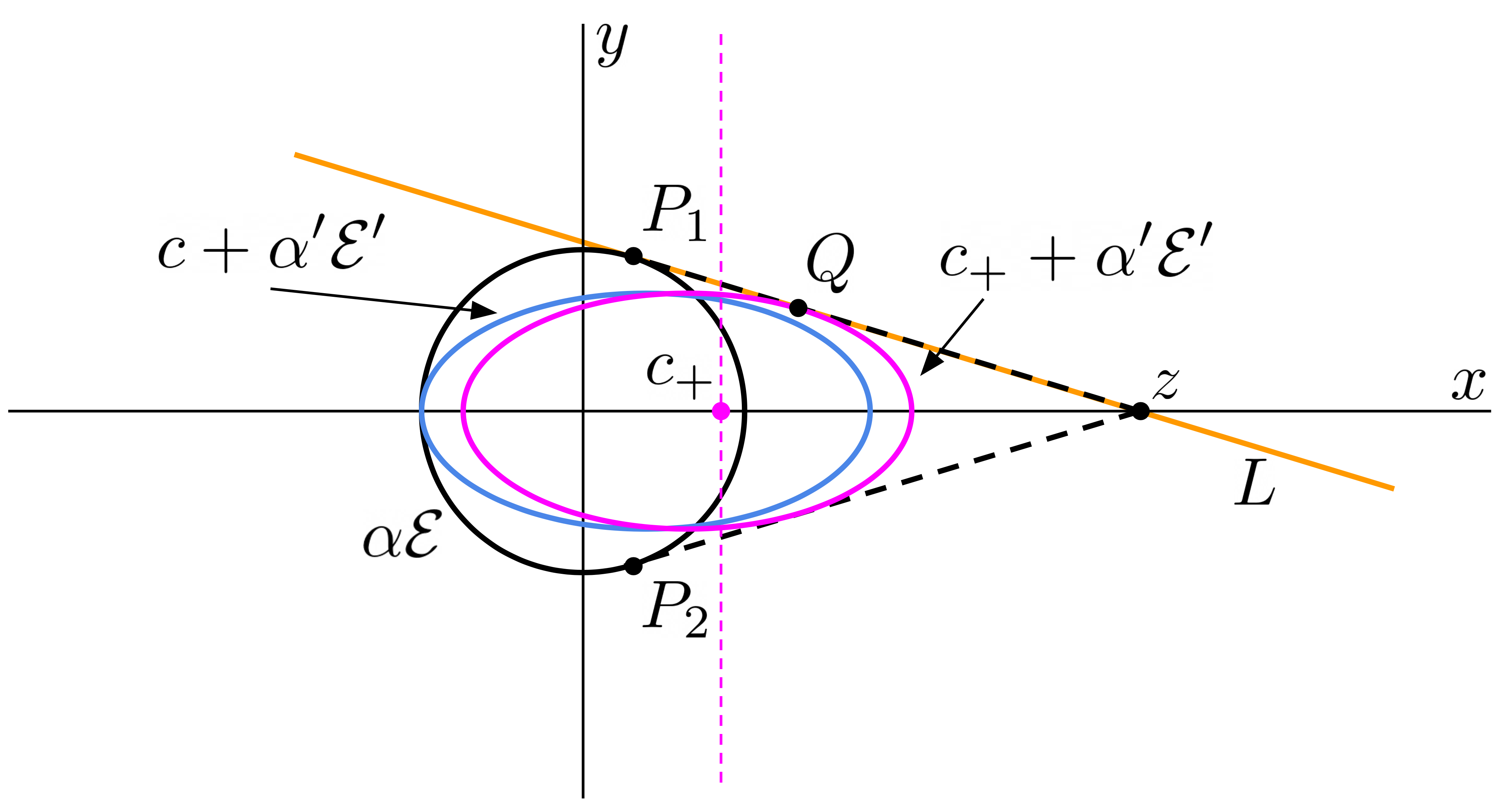}
\caption{Inner ellipses of the update step. As before, \(\alpha \cE\) is the black circle, \(c + \alpha' \cE\) is the blue ellipse, \(L\) is the orange line through \(\vz\) and tangent to \(\alpha \cE\), \(P_1\) and \(P_2\) are the tangent points on the lines through \(\vz\) tangent to \(\alpha \cE\), and \(\overline{P_1 \vz}, \overline{P_2 \vz}\) are the dotted black lines.
\(c_{+} + \alpha' \cE'\) is the magenta ellipse, with its center at \(c_{+}\) and magenta dotted line showing its position on the \(x\)-axis.
\(c_{+}\) is defined so that \(c_{+} + \alpha' \cE'\) is tangent to \(\overline{P_1 \vz}\) and \(\overline{P_2 \vz}\), with \(Q\) as the tangent point of \(c_{+} + \alpha' \cE'\) and \(\overline{P_1 \vz}\).
}
\label{fig:update_step_inner_translate}
\end{figure}

The equation of \(L\) is \[\underbrace{\frac{1}{c+a}}_{\ell_1} \cdot x + \underbrace{\sqrt{\frac{1}{\alpha^2} - \frac{1}{(c+a)^2}}}_{\ell_2} \cdot y = 1\]
where we define \(\ell_1, \ell_2\) as the coefficents for \(x\) and \(y\).
Observe that \(\vz\) is on \(L\), and \(L\) is tangent to \(\alpha \cE\) at \(P_1\), which has coordinates \begin{equation}\label{eqn:usit_p1} % equation: update step inner translate p1
P_1 = \left(\frac{\alpha^2}{c+a}, \alpha^2 \sqrt{\frac{1}{\alpha^2} - \frac{1}{(c+a)^2}} \right).
\end{equation}
Tangency can be confirmed by checking that \(P_1\) is parallel to \((\ell_1, \ell_2)\), the normal vector definining \(L\).

Let \(Q = (x', y')\) be the point of intersection of \(L\) and \(c_{+} + \alpha' \cE\), there are three properties that define \(Q\). First it lies on the boundary of \(c_{+} + \alpha' \cE\), so it satisfies
\begin{equation}\label{eqn:pty_inner_oe} % equation: inner property on ellipsoid
    \frac{(x'-c_{+})^2}{a^2} + \frac{y'^2}{b^2} = \alpha'^2.
\end{equation}

Second, at \(Q\) the normal vectors for the equations defining \(c_{+} + \alpha' \cE\) and \(L\) are parallel, i.e.
\((\frac{2(x-c_{+})}{a^2},\frac{2y}{b^2})\) is parallel to \((\ell_1, \ell_2)\). So
\begin{equation}\label{eqn:pty_inner_gm} % equation: inner property gradient matching
    \frac{(x'-c_{+})}{a^2} \ell_2 = \frac{y'}{b^2} \ell_1.
\end{equation}
Finally, \(Q\) lies on \(L\), so we have \(\ell_1 x' + \ell_2 y' = 1\).
Solving this for \(y'\), we get
\begin{equation}\label{eqn:pty_inner_ol} % equation: inner property on line
    y' = \frac{1-\ell_1 x'}{\ell_2}.
\end{equation}
These three equations form a system for \(x', y'\) and \(c_{+}\), which we now solve to find \(c_{+}\).
Taking the square of (\ref{eqn:pty_inner_gm}) and rearranging gives \(\frac{y'^2}{b^2} = \frac{b^2 (x'-c_{+})^2 \ell_2^2}{a^4 \ell_1^2} \).
Substituting this into (\ref{eqn:pty_inner_oe}), we get
\(\frac{(x'-c_{+})^2}{a^2} + \frac{b^2 (x'-c_{+})^2 \ell_2^2}{a^4 \ell_1^2} = \alpha'^2\).
Now, defining \(r \defeq\frac{a^2 \ell_1^2}{b^2 \ell_2^2}\), we group the terms of this equation into the form
% \[(x'-c_t)^2 \cdot \frac{1}{a^2} \left(1 + \frac{b^2 \ell_2^2}{a^2 \ell_1^2}\right) = \alpha'^2\]
% We, as we will see this quantity again.
% Thus we get
\begin{equation}\label{eqn:pty_inner_1}
    (x'-c_{+})^2 \cdot \frac{1}{a^2} \left(1 + \frac{1}{r}\right) = \alpha'^2.
\end{equation}
We substitute (\ref{eqn:pty_inner_ol}) into (\ref{eqn:pty_inner_gm})
to get \(\frac{x'-c_{+}}{a^2} \ell_2 = \frac{\ell_1}{b^2} \frac{1 - x' \ell_1}{\ell_2}\).
%Expanding gives \(\frac{x'-c_t}{a^2} \ell_2 = \frac{\ell_1}{b^2 \ell_2} - \frac{x' \ell_1^2}{b^2 \ell_2}\),
%Grouping for \(x'\) yields %\(x' \left(1 + \frac{a^2\ell_1^2}{b^2 \ell_2}^2\right) = c_t + \frac{a^2 \ell_1}{b^2 \ell_2}\), i.e. 
%\[
%x' (1+r) = c_t + \frac{r}{\ell_1}
%\]
%Thus \(x' = \frac{c_t + \frac{r}{\ell_1}}{1+r}\), and so \(x'-c_t = \frac{c_t+\frac{r}{\ell_1} - c_t(1+r)}{1 + r} = \frac{\frac{r}{\ell_1} - c_t r}{1+r}\), and we get
Grouping for \(x'\) and rearranging yields
\begin{equation}\label{eqn:pty_inner_2}
    x' - c_{+} = \frac{r}{1+r} \left(\frac{1}{\ell_1} - c_{+}\right).
\end{equation}
%Next, we substitute (\ref{eqn:pty_inner_2}) into (\ref{eqn:pty_inner_1}) and get
% \[ \left(\frac{r}{1+r}\right)^2 \left(\frac{1}{\ell_1} - c_t\right)^2 \cdot \frac{1}{a^2} \left(\frac{r+1}{r}\right) = \alpha'^2\]
%After some cancellation, we have
Next, we substitute (\ref{eqn:pty_inner_2}) into (\ref{eqn:pty_inner_1}), and get after some cancellation
\[ \left(\frac{1}{\ell_1} - c_{+}\right)^2 = \alpha'^2 a^2 \cdot \frac{1+r}{r}.\]
Observe on the left hand side that \(\frac{1}{\ell_1} - c_{+} = c+a-c_{+}\).
Clearly the center \(c_{+}\) must be to the left of \(\vz\), so this must be non-negative.
%Hence after taking the positive square root, we obtain \(\frac{1}{\ell_1} - c_t = \alpha' \cdot a \sqrt{\frac{1+r}{r}}\), i.e. 
%\[c_t = \frac{1}{\ell_1} - \alpha' \cdot a \sqrt{\frac{1+r}{r}} = c + a - \alpha' \cdot a \sqrt{\frac{1+r}{r}}\]
Hence after taking the positive square root, we obtain
\[c_{+} = c + a - \alpha' \cdot a \sqrt{\frac{1+r}{r}}\]
It remains to show that \(c \leq c_{+}\), or equivalently that
\[a - \alpha' \cdot a \sqrt{\frac{1+r}{r}} \geq 0\]
which we do in \Cref{claim:pty_inner}.
\end{proof}

Now, we build on the previous claim to show the inner ellipsoid invariant.
\begin{claim}\label{claim:update_step_inner}
    \(c + \alpha' \cdot \cE' \subseteq \conv{\alpha\cdot\cE \cup \{\vz\}}\)
\end{claim}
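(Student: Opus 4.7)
I will prove the stronger statement $\partial(c + \alpha' \cE') \subseteq \conv{\alpha \cE \cup \{\vz\}}$; since the convex hull is convex and $c + \alpha' \cE'$ is a full-dimensional convex body, every interior point of $c + \alpha' \cE'$ lies on a chord between two of its boundary points, both of which are then in $\conv{\alpha \cE \cup \{\vz\}}$, giving the full containment. The boundary of $\conv{\alpha \cE \cup \{\vz\}}$ decomposes into the two tangent segments $\overline{P_1 \vz}, \overline{P_2 \vz}$ together with the far arc of $\partial(\alpha \cE)$ from $P_1$ through $P_0 = (-\alpha, 0)$ to $P_2$. \Cref{claim:translate_ellipse_angle} already places $c + \alpha' \cE'$ inside the wedge $\angle P_1 \vz P_2$, so $\partial(c + \alpha' \cE')$ cannot cross the tangent segments; what remains is to rule out crossings of $\partial(c + \alpha'\cE')$ with the far arc.

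To study the intersections of $\partial(c + \alpha' \cE')$ with $\partial(\alpha \cE)$, note that the parameter identity $c - \alpha' a = -\alpha$ places $P_0$ on both boundaries, and both curves have a vertical tangent there, so $P_0$ is a tangential intersection of multiplicity at least $2$. \Cref{claim:conic_5pt} bounds the total intersection count (with multiplicity) by four, leaving at most two further simple crossings; by the $x$-axis symmetry of both ellipses, these form a mirror pair $P_1'', P_2''$. The crux is to show that $P_1''$ and $P_2''$ lie on the near arc (between $P_1$ and $P_2$ through $(\alpha, 0)$) rather than on the far arc. I will combine a local second-order expansion at $P_0$ with a global tracing argument: the expansions $\partial(\alpha \cE) \colon x \approx -\alpha + y^2/(2\alpha)$ and $\partial(c + \alpha' \cE') \colon x \approx -\alpha + a y^2/(2 b^2 \alpha')$, together with the algebraic inequality $a\alpha > b^2 \alpha'$ (a routine check from~(\ref{eqn:update_params}) using $\alpha \leq \tfrac{1}{2}$), show that $\partial(\alpha \cE)$ is locally strictly outside $c + \alpha' \cE'$ just above and below $P_0$. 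Tracing $\partial(\alpha \cE)$ counter-clockwise from $P_0$, it remains outside $c + \alpha' \cE'$ through $P_1$ (which lies outside $c + \alpha'\cE'$ by \Cref{claim:translate_ellipse_angle}) and must cross into $c + \alpha' \cE'$ before reaching $(\alpha, 0)$, which is in the interior of $c + \alpha'\cE'$ by $\alpha < \alpha' a$ (equivalent to \Cref{claim:update_params}-(\ref{item:claim_update_params_4})). Since the upper half admits only one transversal crossing, that crossing must be $P_1''$, placing it on the upper near arc; symmetry places $P_2''$ on the lower near arc.

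With $P_1'', P_2''$ localized, the portion of $\partial(c + \alpha' \cE')$ strictly outside $\alpha \cE$ is a connected arc bounded by $P_1'', P_2''$, contained in the wedge, and passes through the rightmost point $(c + \alpha' a, 0)$ of $c + \alpha' \cE'$ (which is outside $\alpha \cE$ by \Cref{claim:update_params}-(\ref{item:claim_update_params_4})). The set $\angle P_1 \vz P_2 \setminus \mathrm{int}(\alpha \cE)$ has exactly two connected components: the bounded cap $\conv{\alpha \cE \cup \{\vz\}} \setminus \mathrm{int}(\alpha \cE)$ between the near arc and $\vz$, and an unbounded region beyond the far arc. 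Since $(c + \alpha' a, 0)$ lies in the cap, the entire outside-arc of $\partial(c + \alpha' \cE')$ is contained in $\conv{\alpha \cE \cup \{\vz\}}$; the inside-arc is trivially contained, completing the chord argument. The main obstacle is the tracing step---pinning $P_1''$ and $P_2''$ to the near arc---which fuses the tangential analysis at $P_0$, the intersection count from \Cref{claim:conic_5pt}, the wedge containment of \Cref{claim:translate_ellipse_angle}, and the inequality $\alpha < \alpha' a$ from \Cref{claim:update_params}-(\ref{item:claim_update_params_4}).
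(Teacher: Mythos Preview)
Your proof is correct and follows the same architecture as the paper: use \Cref{claim:translate_ellipse_angle} for the wedge containment, identify $P_0$ as a tangential intersection of multiplicity two, invoke \Cref{claim:conic_5pt} to bound the remaining crossings to a symmetric pair $P_1'', P_2''$, pin these to the near arc, and conclude. The one substantive difference is how you pin $P_1'', P_2''$ to the near arc. The paper introduces the vertical chord $\overline{P_1 P_2}$, finds auxiliary points $P_1', P_2'$ where $\partial(c+\alpha'\cE')$ meets this chord (inside $\alpha\cE$), and argues the crossings into $\partial(\alpha\cE)$ occur to the right of the chord; you instead use a second-order comparison at $P_0$ (the inequality $a\alpha > b^2\alpha'$ does check out from~(\ref{eqn:update_params}) under $\alpha\leq\tfrac12$) together with an intermediate-value trace along the upper semicircle. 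Both mechanisms work; yours trades the paper's auxiliary chord construction for one extra parameter inequality.

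Two minor points worth tightening. First, your assertion that $P_1$ lies outside $c+\alpha'\cE'$ follows from \Cref{claim:translate_ellipse_angle} only up to the boundary; to rule out $P_1\in\partial(c+\alpha'\cE')$, observe that both conics would then be tangent to $L$ at $P_1$, giving intersection multiplicity $\geq 2$ there, and together with $P_0$ (mult.\ $2$) and $P_2$ (mult.\ $2$ by symmetry) this exceeds the four intersections permitted for distinct conics. Second, $\angle P_1\vz P_2\setminus\mathrm{int}(\alpha\cE)$ is actually connected---the cap and the unbounded piece meet at $P_1,P_2$---but your outside-arc avoids $P_1,P_2$ by the intersection count just established, so after puncturing those two points the component argument goes through as written.
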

\begin{proof}
We will argue that the boundary of \(c + \alpha' \cE'\) does not intersect the boundary of \(\conv{\alpha\cE \cup \{\vz\}}\), except at points of tangency.
This is sufficient to establish the claim, as \Cref{claim:translate_ellipse_angle} shows that \(c + \alpha' \cE'\) is internal to \(\angle P_1 \vz P_2\), and so if \(c + \alpha' \cE'\) does not intersect the boundary of \(\conv{\alpha\cE \cup \{\vz\}}\), \(c + \alpha' \cE'\) must lie inside of, or be disjoint from \(\conv{\alpha\cE \cup \{\vz\}}\).
Since the leftmost points of \(\alpha \cE\) and \(c + \alpha' \cE'\) coincide, \(c + \alpha' \cE'\) must then lie inside of  \(\conv{\alpha\cE \cup \{\vz\}}\).
Recall that the boundary of \(\conv{\alpha\cE \cup \{\vz\}}\) consists of the arc \(P_1 P_2\) and the line segments \(\overline{P_1 \vz}, \overline{P_2 \vz}\).
\Cref{claim:translate_ellipse_angle} already shows that the boundary of \(c + \alpha' \cE'\) does not intersect \(\overline{P_1 \vz}\) and \( \overline{P_2 \vz}\), so we only need to show that the boundary of \(c + \alpha' \cE'\) does not intersect the arc \(P_1 P_2\).

To do this, we start by enumerating the points of intersection of \(\partial \alpha \cE\) and \(\partial(c + \alpha' \cE')\), recalling that \(P_1 P_2\) is an arc of \(\partial \alpha \cE\).
Observe that the leftmost points of \(\alpha \cE\) and \(c + \alpha' \cE'\) coincide, as the leftmost point of \(c + \alpha' \cE'\) is \(c - \alpha' \cdot a = -\alpha\) by definition; we call this point \(P_0\). \(P_0\) is a point of tangency and hence has intersection multiplicity 2, because the centers of \(\alpha \cE\) and \(c + \alpha' \cdot \cE'\) both lie on the \(x\)-axis.

Next, we argue for the existence of two more distinct intersection points \(P_1'', P_2''\) as depicted in \Cref{fig:update_step_inner}.
The leftmost point of \(c + \alpha' \cE'\) is \((- \alpha, 0)\), and the rightmost point is \(c + \alpha'\), which by \Cref{claim:update_params}-(\ref{item:claim_update_params_4}) is to the right of \((\alpha, 0)\), the rightmost point of \(\alpha \cE\).
Thus, by lying on \(\partial \alpha \cE\), \(P_1, P_2\) lie between the leftmost and rightmost points of \(c + \alpha' \cE'\), and so \(c + \alpha' \cE'\) intersects the line through \(P_1\) and \(P_2\).
Further, by \Cref{claim:translate_ellipse_angle}, as \(c + \alpha' \cE'\) lies in the angle \(\angle P_1 \vv P_2\), \(c + \alpha' \cE'\) actually intersects the line segment \(\overline{P_1 P_2}\).
Observe that this intersection happens at two distinct points, which we call \(P_1'\) and \(P_2'\).
Both points are inside of \(\alpha \cE\), yet \(\partial (c + \alpha' \cE')\) is a continuous path that connects both to
the rightmost point of \(c + \alpha' \cE'\), which is outside of \(\alpha \cE\).
Thus \(\partial (c + \alpha' \cE')\) intersects \(\partial \alpha \cE\) at two more distinct points, which we call \(P_1''\) and \(P_2''\).

Now, we argue that \(P_1''\) and \(P_2''\) lie on the minor arc \(P_1 P_2\).
First, observe that the arc \(P_1 P_2\) containing \(P_0\) is the major arc. This is because \(P_1\) lies to the right of the \(y\)-axis, as determined in (\ref{eqn:usit_p1}); and by symmetry so does \(P_2\).
This also implies that major arc \(P_1 P_2\) is the arc with which the boundary of \(\conv{\alpha \cE \cup \{\vz\}}\) coincides.
\(P_1'\) and \(P_2'\) are collinear with \(P_1\) and \(P_2\), and as \(P_1''\) and \(P_2''\) are to the right of \(P_1'\) and \(P_2'\), this implies that they must lie on the minor arc \(P_1 P_2\).

Counting all the intersection points of \(\partial \alpha \cE\) and \(\partial(c + \alpha' \cE')\), we have \(P_0\) (with multiplicity \(2\))
and \(P_1''\) and \(P_2''\) (both with multiplicity 1); with total multiplicity 4.
Using \Cref{claim:conic_5pt}, it is impossible for them to have another intersection point without both ellipses being the same.
Thus \(\partial(c + \alpha' \cE')\) cannot intersect the major arc \(P_1 P_2\) except at \(P_0\), and so except at points of tangency the boundary of \(c + \alpha' \cE'\) does not intersect the boundary of \(\conv{\alpha\cE \cup \{\vz\}}\).
\end{proof}

\subsection{Generalizing to high dimension and arbitrary previous ellipsoid}
\label{sec:gen_high_dim}
Now that we have demonstrated the invariants of \Cref{def:alg-invariant} for the special two-dimensional case where the previous ellipsoid is the unit ball, we generalize slightly to high dimension, although we first still assume the previous ellipsoid is the unit ball.

Using the parameters as defined in (\ref{eqn:update_params}), we will let \(\cE = B_2^d\), and define the boundary of \(\cE'\) as
\[\frac{1}{a^2} (\vx_1 - c)^2 + \frac{1}{b^2} \vx_2^2 + \ldots + \frac{1}{b^2} \vx_d^2 = 1\]
Observe that we can also write \(\cE' = \cE_{\mD}\) where \(\mD = \diag{\frac{1}{a^2}, \frac{1}{b^2}, \ldots, \frac{1}{b^2}}\).
Similarly to before, we let \(\vz = (c+a, 0, 0, \ldots, 0) \in \R^d\),
the furthest point of \(c + \cE'\) in the positive direction of the \(x_1\)-axis.
% As before, we overload notation so that \(c + \cE\) when \(c\) is a scalar means \((c, 0, \ldots, 0) + \cE\).

Now, we argue that the invariants of \cref{def:alg-invariant}
still hold in this setting.
\begin{claim}\label{claim:i_gen_dim1}
The inner and outer ellipsoid invariants hold in this setting:
\begin{enumerate}
    \item \(\cE \subseteq c \cdot \ve_1 + \cE'\)
    \item \(c \cdot \ve_1 + \alpha' \cE' \subseteq \conv{\alpha\cE \cup \{\vz\}}\)
\end{enumerate}
\end{claim}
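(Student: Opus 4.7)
The plan is to reduce the high-dimensional claim to the two-dimensional case already handled in Claims~\ref{claim:update_step_outer} and~\ref{claim:update_step_inner}, exploiting the fact that every set appearing in the two invariants is a body of revolution about the $\ve_1$-axis. Let $G \subset \so{d}$ denote the group of orthogonal transformations fixing $\ve_1$; equivalently, $G$ acts as $\so{d-1}$ on the orthogonal complement $\ve_1^\perp$. I will check that each of $\cE$, $c \ve_1 + \cE'$, $\alpha \cE$, $c \ve_1 + \alpha' \cE'$, and $\{\vz\}$ is $G$-invariant. The ball $\cE = B_2^d$ and the scaled ball $\alpha \cE$ are manifestly invariant. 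For $c \ve_1 + \cE'$, note that its defining inequality has the form $\frac{1}{a^2}(x_1-c)^2 + \frac{1}{b^2}(x_2^2 + \cdots + x_d^2) \le 1$; since every element of $G$ fixes $x_1$ and preserves $x_2^2 + \cdots + x_d^2$, the set is $G$-invariant, and the same argument applies to $c \ve_1 + \alpha' \cE'$. Finally, $\vz = (c+a) \ve_1$ lies on the fixed axis, so $\{\vz\}$ is $G$-invariant, and hence so is $\conv{\alpha \cE \cup \{\vz\}}$ (the convex hull of a $G$-invariant set is $G$-invariant because the action of $G$ is linear).

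Next, for any point $\vp \in \R^d$, I can write $\vp = p_1 \ve_1 + \vw$ with $\vw \in \ve_1^\perp$, and choose $R \in G$ so that $R\vw$ is a non-negative multiple of $\ve_2$; then $R\vp$ lies in the plane $P := \lspan{\ve_1, \ve_2}$. Because every relevant set $S$ is $G$-invariant, we have $\vp \in S$ iff $R\vp \in S$, i.e.\ iff $R\vp \in S \cap P$. Thus the containments in the claim reduce to containments between the restrictions of these sets to $P$.

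To finish, I will identify the restrictions to $P$. Restricting to $P$ gives $\cE \cap P = B_2^2$, $(\alpha \cE) \cap P = \alpha B_2^2$, and $(c\ve_1 + \cE') \cap P$ is precisely the planar ellipse $c + \cE'$ considered in Section~\ref{sec:two_dim_update}, since plugging $x_3 = \cdots = x_d = 0$ recovers the two-dimensional equation; the analogous statement holds for the inner ellipsoid. Also $\vz \in P$. For the outer invariant, given $\vp \in \cE$ with $R\vp \in P$, the two-dimensional Claim~\ref{claim:update_step_outer} yields $R\vp \in (c\ve_1 + \cE') \cap P$, hence $\vp \in c\ve_1 + \cE'$ by $G$-invariance. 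For the inner invariant, given $\vp \in c\ve_1 + \alpha' \cE'$ with $R\vp \in P$, Claim~\ref{claim:update_step_inner} applied in $P$ gives $R\vp \in \conv{(\alpha \cE \cap P) \cup \{\vz\}} \subseteq \conv{\alpha \cE \cup \{\vz\}}$, and again $G$-invariance of $\conv{\alpha \cE \cup \{\vz\}}$ transfers this back to $\vp$.

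The only real subtlety is verifying $G$-invariance of $\conv{\alpha\cE \cup \{\vz\}}$, and this is routine from $G$ acting linearly on both constituents; beyond that the argument is just bookkeeping. I therefore do not anticipate a genuine obstacle, and the whole proof should fit in a few lines once the symmetry reduction is set up cleanly.
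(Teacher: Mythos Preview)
Your proposal is correct and follows essentially the same approach as the paper: the paper's proof simply observes that all the relevant sets are bodies of revolution about the $\ve_1$-axis whose two-dimensional cross-sections are exactly the planar objects from Section~\ref{sec:two_dim_update}, so Claims~\ref{claim:update_step_outer} and~\ref{claim:update_step_inner} for the cross-sections immediately yield the high-dimensional containments. Your write-up just spells out the rotational-symmetry reduction in more detail via the action of $G\cong\so{d-1}$, which is fine and perhaps clearer.
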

\begin{proof}
Observe that \(\cE\), \(c \cdot \ve_1 + \cE'\), \(c \cdot \ve_1 + \alpha' \cE'\), and \(\conv{\alpha \cE \cup \{\vz\}}\) are all bodies of revolution about the \(x_1\)-axis,
with their cross-sections given by their counterparts in \Cref{sec:two_dim_update}.
As \Cref{claim:update_step_outer} and \Cref{claim:update_step_inner}
hold for these cross sections, the set containments hold for
the bodies of revolution as well.
\end{proof}

We further generalize to the case where the previous ellipsoid is arbitrary.
In particular, let \(\vc^{\circ} + \cE\) be the previous ellipsoid, with a vector \(\vc^{\circ} \in \R^d\) and \(\cE = \cE_{\mA}\) for non-singular matrix \(\mA \in \R^{d \times d}\).
Let \(\vz^{\circ} \in \R^d\) be an arbitrary vector, representing the next point received.
We let \(\vu = \mA(\vz^{\circ} - \vc^{\circ})\), and \(\mW \in \R^{d \times d}\) be an orthogonal matrix with \(\vw = \frac{\vu}{\|\vu\|}\) as its first column (e.g. by using as its columns an orthonormal basis containing \(\vw\)).
We define the next outer ellipsoid as \(\vc^{\circ}+c \mA^{-1} \vw + \cE' \) for \(\cE' = \cE_{\mW \mD \mW^\top \mA}\), with \(\mD = \diag{\frac{1}{a^2}, \frac{1}{b^2}, \ldots, \frac{1}{b^2}}\) as before.
Observe that \(\vz = \vc^{\circ}+(c+a) \mA^{-1} \vw\) is the furthest point of \(\vc^{\circ}+c \mA^{-1} \vw + \cE'\) from the previous center \(\vc^{\circ}\) towards \(\vz^{\circ}\).

This setup works to preserve the key invariants, as we see in the next claim.
\begin{claim}\label{claim:i_gen_us}
   The inner and outer ellipsoid invariants hold in this setting:
    \begin{enumerate}
        \item \(\vc^{\circ} + \cE \subseteq \vc^{\circ}+c \mA^{-1} \vw + \cE' \)
        \item \(\vc^{\circ}+c \mA^{-1} \vw + \alpha' \cE' \subseteq \conv{(\vc^{\circ} + \alpha\cE) \cup \{\vz\}}\)
    \end{enumerate}
\end{claim}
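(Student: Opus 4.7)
The plan is to reduce Claim \ref{claim:i_gen_us} to the previously established Claim \ref{claim:i_gen_dim1} via an affine change of coordinates. The key observation is that all four sets appearing in the two desired inclusions are defined in terms of $\vc^{\circ}$, $\mA$, $\mW$, and the scalars $a, b, c, \alpha, \alpha'$; moreover, both invariants are statements about set inclusions, which are preserved by any bijective affine map. Thus, it suffices to find an affine transformation $T$ that carries the current configuration exactly onto the configuration analyzed in Claim~\ref{claim:i_gen_dim1} (i.e., the case where the previous outer ellipsoid is the origin-centered unit ball and the new point lies on the positive $\ve_1$-axis).

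Concretely, I would define $T(\vx) = \mW^\top \mA(\vx - \vc^{\circ})$ and verify five identities in turn: (i) $T(\vc^{\circ} + \cE) = B_2^d$, using that $\mA \cE_{\mA} = B_2^d$ and $\mW$ is orthogonal; (ii) $T(\vc^{\circ} + \alpha \cE) = \alpha B_2^d$, by linearity; (iii) $T(\vc^{\circ} + c \mA^{-1}\vw) = c\ve_1$, since $\mW^\top \vw = \ve_1$ by construction of $\mW$; (iv) $T(\cE') = \cE_{\mD}$, which follows because $\cE' = \cE_{\mW \mD \mW^\top \mA}$ and substituting $\vx = \mA^{-1}\mW\vy$ into the defining inequality $\|\mW\mD\mW^\top\mA\vx\|\le 1$ simplifies (using orthogonality of $\mW$) to $\|\mD\vy\|\le 1$; and (v) $T(\vz) = (c+a)\ve_1$, by direct computation from the definition of $\vz$. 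Steps (iii), (iv), and (v) together yield $T(\vc^{\circ} + c\mA^{-1}\vw + \cE') = c\ve_1 + \cE_{\mD}$ and $T(\vc^{\circ} + c\mA^{-1}\vw + \alpha'\cE') = c\ve_1 + \alpha'\cE_{\mD}$, which are precisely the next outer and inner ellipsoids in the setting of Claim~\ref{claim:i_gen_dim1}.

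With these identifications in hand, applying $T$ to both sides of the two desired inclusions transforms them into exactly the statements
\[B_2^d \subseteq c\ve_1 + \cE_{\mD}, \qquad c\ve_1 + \alpha' \cE_{\mD} \subseteq \conv{\alpha B_2^d \cup \{(c+a)\ve_1\}},\]
which are the content of Claim~\ref{claim:i_gen_dim1}. Since $T$ is a bijection (as both $\mA$ and $\mW$ are non-singular) and affine, the inclusions in the original coordinate system follow immediately by applying $T^{-1}$, noting that $T^{-1}$ preserves convex hulls.

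I expect the only non-trivial step to be the calculation that $T(\cE') = \cE_{\mD}$, which requires carefully manipulating the defining quadratic form for $\cE'$ and using $\mW^\top\mW = \mI$. Every other identity is a one-line computation. No additional geometric content beyond the two-dimensional analysis already carried out is needed, so the main obstacle is bookkeeping rather than new ideas.
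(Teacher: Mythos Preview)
Your proposal is correct and follows essentially the same approach as the paper: translate by $-\vc^{\circ}$, then apply the linear map $\mW^\top \mA$, and observe that this carries the configuration exactly onto the setting of Claim~\ref{claim:i_gen_dim1}. The paper's proof is more terse (it notes $\mW^\top \mA \cE' = \cE_{\mW\mD} = \cE_{\mD}$ and $\cE_{\mW} = B_2^d$ directly rather than enumerating five verifications), but the argument is identical; one small notational slip is that in your step (iv) you write $T(\cE') = \cE_{\mD}$ where you really mean the linear part $\mW^\top\mA$ applied to $\cE'$, since $T$ itself is affine.
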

\begin{proof}
    We translate both set inclusions by \(- \vc^{\circ}\), then apply the nonsingular linear transformation \(\mW^\top \mA\).
    Observe that the set inclusions we wish to prove hold if and only if the transformed ones do.
    Noting that \(\mW^\top \mA \cE' = \cE_{\mW \mD}\), the transformed set inclusions are
    \(\cE_{\mW} \subseteq c \cdot \ve_1 + \cE_{\mW \mD}\)
    and \(c \cdot \ve_1 + \alpha' \cE_{\mW \mD} \subseteq \conv{\alpha \cE_{\mW} \cup \{(c + a) \cdot \ve_1\}}\).
    However, since \(\mW\) is an orthogonal matrix,
    \(\cE_{\mW} = B_2^d\) and \(\cE_{\mW \mD} = \cE_{\mD}\),
    and so the inclusions are exactly those shown in \Cref{claim:i_gen_dim1}.
\end{proof}

Choosing \(\gamma\) correctly in (\ref{eqn:update_params})
ensures that \(\vz \in \vc^{\circ}+c \mA^{-1} \vw + \cE'\) coincides with \(\vz^{\circ}\),
as stated in the upcoming claim. This can be seen by looking at the definition of \(\vz\).
\begin{claim}\label{claim:i_gen_v}
    If \(\gamma\) is chosen so that \(c + a = \|\vu\|\), then \(\vz = \vz^{\circ}\).
\end{claim}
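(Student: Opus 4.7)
The plan is to prove Claim \ref{claim:i_gen_v} by a direct chain of substitutions from the definitions given just above the claim statement, with no geometric argument required. The statement is essentially a consistency check that the extra affine transformation applied to embed the special case into the general case (via $\mA$ and $\mW$) is compatible with the scalar parameter $\gamma$ controlling how far out along the $\ve_1$-axis the rightmost point of the new outer ellipsoid sits.

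First, I would unpack the definition $\vz = \vc^{\circ} + (c+a)\,\mA^{-1}\vw$ and immediately apply the hypothesis $c+a = \|\vu\|$ to rewrite
\[
\vz = \vc^{\circ} + \|\vu\|\,\mA^{-1}\vw.
\]
Next, I would substitute $\vw = \vu/\|\vu\|$, which is how $\vw$ was defined as the first column of the orthogonal matrix $\mW$; the factor $\|\vu\|$ cancels, giving $\vz = \vc^{\circ} + \mA^{-1}\vu$. Finally, I would plug in the definition $\vu = \mA(\vz^{\circ} - \vc^{\circ})$ and cancel $\mA^{-1}\mA$, which is valid since $\mA$ is assumed non-singular. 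The result is $\vz = \vc^{\circ} + (\vz^{\circ} - \vc^{\circ}) = \vz^{\circ}$, as desired.

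There is no real obstacle here, since every step is a substitution of a definition stated in the paragraphs preceding the claim; the only thing to be mildly careful about is that $\mA^{-1}$ exists (ensured by the non-singularity of $\mA$) and that the scaling $\|\vu\|$ is treated as a scalar so that it commutes through the linear map $\mA^{-1}$. I would note in passing (though it is not part of the claim itself) that such a $\gamma \geq 0$ exists for any $\|\vu\|\geq 1$: from (\ref{eqn:update_params}) we have $c + a = -\alpha + \alpha' a + a = a(1+\alpha') - \alpha$, which is a continuous function of $\gamma$ equal to $1$ at $\gamma = 0$ and diverging to $+\infty$ as $\gamma \to \infty$, so by the intermediate value theorem the required $\gamma$ can always be chosen when $\vz^{\circ}$ lies outside the previous ellipsoid.
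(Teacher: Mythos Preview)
Your proof is correct and is precisely the direct substitution the paper has in mind; the paper's own justification is simply ``This can be seen by looking at the definition of \(\vz\),'' and your write-up just spells out that definition-chase explicitly. Your additional remark on the existence of such a \(\gamma\) is a nice bonus but is not required for the claim itself.
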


\subsection{General algorithm}
\label{sec:gen_alg}
The goal of this section is to give and analyze a full algorithm that solves the streaming ellipsoid approximation problem, building on the analysis of the update rule from the previous sections.

Before we describe the complete algorithm, we give pseudocode in \Cref{alg:full_update} for its primary primitive, which is an update step like the one we analyzed in the previous section, \Cref{sec:gen_high_dim}.

% old: full algorithm with just the old update step
% \begin{algorithm}[H]
% \caption{Streaming ellipsoid approximation}\label{alg:main}

% \textbf{input}: \(\vc_0+r B_2^d \subseteq Z\) \\
% \textbf{output}: \(\vc_n + \cE_n, \vc_n + \alpha_n \cdot \cE_n\)
% %represented as \(\{x \in \R^d \colon \|A_n (x - c_n)\| \leq 1\}\), \(\{x \in \R^d \colon \|A_n (x - c_n)\| \leq \alpha_n\}\)
% \begin{algorithmic}[1]
% \State Initialize \(\mA_0 = \frac{1}{r} \mI_d, \vc_0 = \vc_0, \alpha_0 = \frac{1}{2}\)
% \For{\(i = 1, \ldots, n\)}
%     \State Receive point \(\vz_i\)
%     \State Let \(\vu = \mA_{i-1}(\vz_i - \vc_{i-1})\), \(\vw = \frac{\vu}{\|\vu\|}\)
%     \If{\(\|\vu\| > 1\)}
%         \State Let \(\gamma^{\star}_t\) be such that \(a(\gamma^{\star}_t) + c(\gamma^{\star}_t) = \|\vu\|\)\label{alg_line:compute_update_t}
%         \State \(\hat{\mA} = \frac{1}{b(\gamma^{\star}_t)} \mI_d + \left(\frac{1}{a(\gamma^{\star}_t)} - \frac{1}{b(\gamma^{\star}_t)}\right) \vw \vw^\top\)\label{alg_line:ellipsoid_update}
%         \State \(\mA_i = \hat{\mA} \cdot \mA_{i-1}, \ \vc_i = \vc_{i-1} + c(t^*) \mA_{i-1}^{-1} \vw, \ \alpha_i = \alpha'(\gamma^{\star}_t)\) \label{alg_line:update_mat}
%     \Else
%         \State \(\mA_i = \mA_{i-1}, \vc_i = \vc_{i-1}, \alpha_i = \alpha_{i-1}\)
%         \label{alg_line:no_update}
%     \EndIf
% \EndFor
% \State \Return \(\mA_n, \vc_n, \alpha_n\)
% \end{algorithmic}
% \end{algorithm}

\begin{algorithm}[h]
\caption{Full update step \(\cA^{\mathsf{full}}\)}\label{alg:full_update}

\textbf{input}: \(\mA_{t-1} \in \R^{d \times d}, \vc_{t-1} \in \R^d, \alpha_{t-1} \in [0, \frac{1}{2}], \vz_t \in \R^d\) \\
\textbf{output}: \(\mA_{t} \in \R^{d \times d}, \vc_{t} \in \R^{d}, \alpha_{t} \in [0, \alpha_{t-1}]\)
%represented as \(\{x \in \R^d \colon \|A_n (x - c_n)\| \leq 1\}\), \(\{x \in \R^d \colon \|A_n (x - c_n)\| \leq \alpha_n\}\)
\begin{algorithmic}[1]
% \For{\(t = j+1, \ldots, n\)}
%     \State Receive point \(\vz_i\)
    \State Let \(\vu = \mA_{t-1}(\vz_t - \vc_{t-1})\), \(\vw = \frac{\vu}{\|\vu\|}\)
    \If{\(\|\vu\| > 1\)}
        \State Let \(\gamma^{\star}_t\) be such that \(a(\gamma^{\star}_t) + c(\gamma^{\star}_t) = \|\vu\|\)\label{alg_line:compute_update_t}
        \State \(\hat{\mA} = \frac{1}{b(\gamma^{\star}_t)} \mI_d + \left(\frac{1}{a(\gamma^{\star}_t)} - \frac{1}{b(\gamma^{\star}_t)}\right) \vw \vw^\top\)\label{alg_line:ellipsoid_update}
        \State \Return \(\mA_t = \hat{\mA} \cdot \mA_{t-1}, \ \vc_t = \vc_{t-1} + c(\gamma^{\star}_t) \mA_{t-1}^{-1} \vw, \ \alpha_t = \alpha'(\gamma^{\star}_t)\) \label{alg_line:update_mat}
    \Else
        \State \Return \(\mA_t = \mA_{t-1}, \vc_t = \vc_{t-1}, \alpha_i = \alpha_{t-1}\)
        \label{alg_line:no_update}
    \EndIf
% \EndFor
% \State \Return \(\mA_n, \vc_n, \alpha_n\)
\end{algorithmic}
\end{algorithm}

In Lines \ref{alg_line:compute_update_t}, \ref{alg_line:ellipsoid_update} and \ref{alg_line:update_mat}, we use the definition of \(a(\gamma), b(\gamma), c(\gamma), \alpha'(\gamma)\) from (\ref{eqn:update_params}),
substituting \(\alpha_{t-1}\) for \(\alpha\).
Although the update step does not explicitly mention ellipsoids,
we use \(\cE_t = \cE_{\mA_t}\) so that at iteration \(t\) the next outer and inner ellipsoids are \(\vc_t + \cE_{\mA_t}\)
and \(\vc_t + \alpha_t \cE_{\mA_t}\), respectively.
If at this iteration \(\|\vu\| \leq 1\), we will refer to this as the case where the ellipsoids are not updated, as is clear from Line \ref{alg_line:no_update}.

Observe also that if in iteration \(t\) we let \(\mW \in \R^{d \times d}\) be an orthogonal matrix with \(\vw\) as its first column,
we can write
\begin{equation}\label{eqn:alg_mat_update}
\hat{\mA} = \mW \cdot \diag{\frac{1}{a(\gamma^{\star}_t)}, \frac{1}{b(\gamma^{\star}_t)}, \cdots, \frac{1}{b(\gamma^{\star}_t)}} \cdot \mW^\top
\end{equation}

% Now, we analyze the approximation factor \(\frac{1}{\alpha_n}\) given by the algorithm.
%Let \(\gamma^{\star}_t\) be \(t^*\) in Line \ref{alg_line:compute_update_t} at the \(t\)th iteration of the algorithm. 
Now, we argue that this algorithm satisfies the invariants defined in \Cref{def:alg-invariant}.
This argument is essentially the observation that the update step in the algorithm is the one analyzed in \Cref{claim:i_gen_us}.
\begin{claim}\label{claim:alg_correctness}
\Cref{alg:full_update} is a monotone update; i.e. satisfies the invariants in \Cref{def:alg-invariant}.
\end{claim}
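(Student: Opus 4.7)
The plan is to prove \Cref{claim:alg_correctness} by casework on the branch taken in \Cref{alg:full_update}, essentially reducing the correctness to the already-established claims from \Cref{sec:gen_high_dim}.

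First I would handle the trivial branch where $\|\vu\| \leq 1$ (Line \ref{alg_line:no_update}). Here the algorithm leaves the triple $(\mA_{t-1}, \vc_{t-1}, \alpha_{t-1})$ unchanged, so $\vc_t + \cE_t = \vc_{t-1} + \cE_{t-1}$ and $\vc_t + \alpha_t \cE_t = \vc_{t-1} + \alpha_{t-1}\cE_{t-1}$. The condition $\|\vu\| \leq 1$ says exactly that $\mA_{t-1}(\vz_t - \vc_{t-1}) \in B_2^d$, i.e. $\vz_t \in \vc_{t-1} + \cE_{\mA_{t-1}}$, so $\vz_t$ already lies in the previous outer ellipsoid. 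Convexity of $\vc_t + \cE_t$ then gives invariant~\eqref{item:def-invariant-1}, and invariant~\eqref{item:def-invariant-2} holds trivially since $\vc_{t-1} + \alpha_{t-1}\cE_{t-1}$ is contained in $\conv{(\vc_{t-1}+\alpha_{t-1}\cE_{t-1}) \cup \{\vz_t\}}$.

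Next I would handle the nontrivial branch $\|\vu\| > 1$ and verify that the output is exactly the construction from \Cref{sec:gen_high_dim}, so that \Cref{claim:i_gen_us} applies. Concretely, taking $\vc^{\circ} = \vc_{t-1}$, $\mA = \mA_{t-1}$, and $\vz^{\circ} = \vz_t$, the vector $\vu$ and unit vector $\vw$ defined at the top of the update match those in \Cref{sec:gen_high_dim}. Letting $\mW$ be any orthogonal matrix with first column $\vw$ and $\mD = \diag{1/a^2, 1/b^2, \ldots, 1/b^2}$, a direct expansion shows
\[
\mW \diag{1/a, 1/b, \ldots, 1/b} \mW^\top \;=\; \tfrac{1}{b}\mI_d + \bigl(\tfrac{1}{a} - \tfrac{1}{b}\bigr)\vw\vw^\top \;=\; \hat{\mA},
\]
so $\mA_t = \hat{\mA}\,\mA_{t-1}$ satisfies $\cE_{\mA_t} = \cE_{\mW\mD\mW^\top \mA_{t-1}}$, which is precisely the next outer ellipsoid $\cE'$ constructed in \Cref{sec:gen_high_dim}. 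The center update $\vc_t = \vc_{t-1} + c\,\mA_{t-1}^{-1}\vw$ and scaling $\alpha_t = \alpha'(\gamma^{\star}_t)$ agree with that construction as well.

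Finally, because Line \ref{alg_line:compute_update_t} picks $\gamma^{\star}_t$ with $a(\gamma^{\star}_t) + c(\gamma^{\star}_t) = \|\vu\|$, \Cref{claim:i_gen_v} tells us that the distinguished point $\vz$ of the construction coincides with $\vz_t$. Invoking \Cref{claim:i_gen_us} then yields
\[
\vc_{t-1} + \cE_{t-1} \;\subseteq\; \vc_t + \cE_t \qquad\text{and}\qquad \vc_t + \alpha_t \cE_t \;\subseteq\; \conv{(\vc_{t-1} + \alpha_{t-1}\cE_{t-1}) \cup \{\vz_t\}}.
\]
Since $\vz_t \in \vc_t + \cE_t$ (it is the furthest point from $\vc_{t-1}$ along $\vw$ on the boundary) and $\vc_t + \cE_t$ is convex, the first inclusion upgrades to invariant~\eqref{item:def-invariant-1}; the second inclusion is invariant~\eqref{item:def-invariant-2}. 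The main subtlety, which is essentially bookkeeping, is the matrix identification above; once that is established, everything follows from the claims of \Cref{sec:gen_high_dim}, which is why the harder geometric work was placed there. I would also note in passing that the hypothesis $\alpha_{t-1} \leq \nfrac{1}{2}$ in the input signature matches the standing assumption of \Cref{sec:two_dim_update}, so those claims apply without modification.
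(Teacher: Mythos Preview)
Your proposal is correct and follows essentially the same approach as the paper's proof: handle the no-update branch trivially, then in the update branch identify the algorithm's output with the construction of \Cref{sec:gen_high_dim} and invoke \Cref{claim:i_gen_us} and \Cref{claim:i_gen_v}. You are slightly more explicit than the paper in spelling out the matrix identity $\mW\,\diag{1/a,1/b,\ldots,1/b}\,\mW^\top = \tfrac{1}{b}\mI_d + (\tfrac{1}{a}-\tfrac{1}{b})\vw\vw^\top = \hat{\mA}$, which the paper compresses into a reference to \eqref{eqn:alg_mat_update}.
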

\begin{proof}
If \(\|\vu\| \leq 1\), then \(\vz_i \in \vc_n + \cE_n\) and the inner and outer ellipsoids are not updated, so the invariants clearly hold.
Otherwise, we apply \Cref{claim:i_gen_us} and \Cref{claim:i_gen_v} setting \(\mA = \mA_{t-1}, \vc^{\circ} = \vc_{t-1}, \vz^{\circ} = \vz_{t}, \alpha=\alpha_{t-1}\).
Using (\ref{eqn:alg_mat_update}), \(\cE_{\mA_t}\) is the same as  \(\cE'\) in \Cref{claim:i_gen_us}; and clearly \(\alpha_{t} = \alpha'\).
This establishes the inner ellipsoid invariant \(\vc_{t} + \alpha_{t} \cE_{t} \subseteq \conv{(\vc_{t-1} + \alpha_{t-1} \cE_{t-1}) \cup \{\vz_{t}\}}\) directly.
To show \(\conv{(\vc_{t-1} + \cE_{t-1}) \cup \{\vz_{t}\}} \subseteq \vc_{t} + \cE_{t}\),
observe that we have \(\vc_{t-1} + \cE_{t-1} \subseteq \vc_{t} + \cE_{t}\) from \Cref{claim:i_gen_us},
and \(\vz_{t} \in \vc_{t} + \cE_{t}\) from \Cref{claim:i_gen_v}.
Then the outer ellipsoid invariant follows as \(\vc_{t} + \cE_{t}\) is a convex set.
\end{proof}

Finally, we bound the relevant quantities that will be used in the analysis of the full algorithm's approximation factor.
In particular, we show that \(\exp(\gamma^{\star}_t)\) gives a lower bound on the increase in volume at each iteration \(t\).
If \(\|\vu\| \leq 1\), and the ellipsoids are not updated, in that iteration we think of \(\gamma^{\star}_t = 0\).
\begin{claim}\label{claim:vol_t}
For any input given to \Cref{alg:full_update}, we have  \(\vol(\cE_i) \geq \exp(\gamma^{\star}_t) \vol(\cE_{t-1})\).
\end{claim}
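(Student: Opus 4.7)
The plan is to directly compute the ratio $\vol(\cE_t)/\vol(\cE_{t-1})$ using the determinant formula for ellipsoid volume, and then read off the desired lower bound from the definitions of the parameters $a(\gamma^\star_t), b(\gamma^\star_t)$.

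First I would handle the trivial case $\|\vu\| \leq 1$: here \Cref{alg:full_update} does not update the ellipsoid ($\mA_t = \mA_{t-1}$), we adopted the convention $\gamma^\star_t = 0$, and so $\vol(\cE_t) = \vol(\cE_{t-1}) = \exp(0) \cdot \vol(\cE_{t-1})$, which suffices.

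In the nontrivial case $\|\vu\| > 1$, by \Cref{claim:prelim_vol} (applied to the matrices $\mA_t$ and $\mA_{t-1}$, after first observing by \Cref{claim:prelim_ellips_sym} that we may take them to be positive definite, or simply by noting that the standard proof via change of variables does not use symmetry) we have
\[\frac{\vol(\cE_t)}{\vol(\cE_{t-1})} = \frac{\abs{\det(\mA_t^{-1})}}{\abs{\det(\mA_{t-1}^{-1})}} = \frac{1}{\abs{\det(\hat{\mA})}},\]
since $\mA_t = \hat{\mA}\mA_{t-1}$. Then I would invoke the representation (\ref{eqn:alg_mat_update}) of $\hat{\mA}$ together with the fact that $\mW$ is orthogonal (so $\abs{\det(\mW)} = 1$) to compute
\[\abs{\det(\hat{\mA})} = \frac{1}{a(\gamma^\star_t)\, b(\gamma^\star_t)^{d-1}}.\]

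To finish, I would use the two key facts about the parameters: by definition in (\ref{eqn:update_params}), $a(\gamma^\star_t) = \exp(\gamma^\star_t)$, and by \Cref{claim:update_params}-(\ref{item:claim_update_params_2}), $b(\gamma^\star_t) \geq 1$. Combining these gives
\[\frac{\vol(\cE_t)}{\vol(\cE_{t-1})} = a(\gamma^\star_t)\, b(\gamma^\star_t)^{d-1} \geq \exp(\gamma^\star_t),\]
which is the desired inequality.

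I do not expect any real obstacle here; the only minor subtlety is being careful that the formula for $\hat{\mA}$ in (\ref{eqn:alg_mat_update}) really does have singular values $\{1/a, 1/b, \dots, 1/b\}$ (the $d-1$ copies of $1/b$ correspond to the directions orthogonal to $\vw$), so that the determinant computation goes through cleanly despite $\hat{\mA}$ not being a diagonal matrix in the standard basis.
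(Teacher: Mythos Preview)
Your proposal is correct and follows essentially the same argument as the paper: handle the trivial case via the convention $\gamma^\star_t = 0$, then in the nontrivial case compute $\vol(\cE_t)/\vol(\cE_{t-1}) = \det(\hat{\mA}^{-1}) = a(\gamma^\star_t)\,b(\gamma^\star_t)^{d-1}$ via (\ref{eqn:alg_mat_update}), and conclude using $b \geq 1$ from \Cref{claim:update_params}-(\ref{item:claim_update_params_2}) together with $a = \exp(\gamma^\star_t)$. Your extra care with absolute values and positive definiteness is fine but not strictly needed here, since $\hat{\mA}$ is manifestly positive definite by (\ref{eqn:alg_mat_update}).
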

\begin{proof}
This formula is clearly true when the ellipsoids are not updated because \(\gamma^{\star}_t = 0\), so we consider the nontrivial case.
Recall the formula \(\vol(\cE_\mA) = \det(\mA^{-1}) \vol(B_2^d)\) from \Cref{claim:prelim_vol}. Then we have
\[\vol(\cE_{\mA_i}) = \det(\mA_i^{-1}) \vol(B_2^d) = \det(\hat{\mA}^{-1}) \cdot \det(\mA_{t-1}^{-1}) \cdot \vol(B_2^d) = \det(\hat{\mA}^{-1}) \vol(\cE_{\mA_{t-1}})\]
where we use the definition of \(\hat{\mA}\) from Line \ref{alg_line:ellipsoid_update} on the \(t\)-th iteration.
Then \begin{align*}
    \det(\hat{\mA}^{-1}) &= a(\gamma^{\star}_t) \cdot b(\gamma^{\star}_t)^{d-1} & \text{using (\ref{eqn:alg_mat_update})}\\
    &\geq a(\gamma^{\star}_t) & \text{by \Cref{claim:update_params}-(\ref{item:claim_update_params_2})} \\
    &= \exp(\gamma^{\star}_t) & \text{by definition of \(a\) in (\ref{eqn:update_params})}
\end{align*}
and using \(\vol(\cE_{\mA_i}) = \det(\hat{\mA}^{-1}) \cdot \vol(\cE_{\mA_{t-1}})\) completes the proof.
\end{proof}

We are now ready to present the complete algorithm in \Cref{alg:main}.
The algorithm is explicitly given \(\vc_0 + r_0 \cdot B_2^d \subseteq Z\),
for simplicity here we say \(r = r_0\).
Let \(R = R(Z)\); while the final approximation factor depends on this quantity,
the algorithm is not given it.
Note that \(\kappa(Z) \leq \nfrac{R}{r}\), so the quality of the approximation depends not only on \(\kappa(Z)\), but also on how well the given ball \(\vc_0 + r \cdot B_2^d\) is centered within \(Z\).
This algorithm proceeds in two phases. It begins with a `local' first phase, where the inner ellipsoid is a ball kept at radius \(r\), and the outer ellipsoid is a ball scaled to contain all the points. For readability, the variables of the algorithm in this phase are annotated with a superscript \(^{(l)}\). The second phase starts if the approximation factor of the first phase ever reaches \(\alpha^{(l)} \leq \frac{1}{d \log d}\),
at which point the algorithm uses the `full' update that was just described in \Cref{alg:full_update}.
We use two phases because while the full update reaches a near-optimal approximation factor when \(\nfrac{R}{r} \geq d \log d\), the local phase using balls does better when \(\nfrac{R}{r} \leq d \log d\).
While we cannot tell when to switch phases exactly (this would require knowing \(\nfrac{R}{r}\)), we show that it is enough to approximate the aspect ratio during the first phase up to a constant factor.

\begin{algorithm}[h]
\caption{Streaming ellipsoid rounding -- complete algorithm}\label{alg:main}

\textbf{input}: \(\vc_0+r B_2^d \subseteq Z\) \\
\textbf{output}: \(\vc_n + \cE_n, \vc_n + \alpha_n \cdot \cE_n\)
%represented as \(\{x \in \R^d \colon \|A_n (x - c_n)\| \leq 1\}\), \(\{x \in \R^d \colon \|A_n (x - c_n)\| \leq \alpha_n\}\)
\begin{algorithmic}[1]
\State Initialize \(\mA^{(l)}_0 = \frac{1}{r} \mI_d, \vc_0^{(l)} = \vc_0, \alpha_0^{(l)} = 1\)
\State \({t^{(l)}} = 0, R_0 = 0\)
\While{\({t^{(l)}} \leq n\)}
\Comment{Phase I: Local update step that maintains a ball}
    \State Receive point \(\vz_{t^{(l)}}\)
    \If{\(\|\vz_{t^{(l)}} - \vc_0\| \leq r \cdot d \log d\)}\label{alg_line:phaseI_if}
        \If{\(\|\vz_{t^{(l)}} - \vc_0\| > R_{{t^{(l)}}-1}\)}
            \State \(\mA_{t^{(l)}}^{(l)} = \frac{1}{\|\vz_{t^{(l)}} - \vc_0\|} \cdot \mI_d, \vc_{t^{(l)}}^{(l)} = \vc_{{t^{(l)}}-1}^{(l)}, \alpha_{t^{(l)}}^{(l)} = \frac{r}{\|\vz_{t^{(l)}} - \vc_0\|}\)
            \Comment{Grow the ball to contain \(\vz_{t^{(l)}}\)}
            \State \(R_{t^{(l)}} = \frac{\|\vz_{t^{(l)}} - \vc_0\|}{r}\)
        \Else
        \State \(\mA_{t^{(l)}}^{(l)} = \mA_{{t^{(l)}}-1}, \vc_i^{(l)} = \vc_{{t^{(l)}}-1}^{(l)}, \alpha_{t^{(l)}}^{(l)} = \alpha_{{t^{(l)}}-1}\)
        \State \(R_{t^{(l)}} = R_{{t^{(l)}}-1}\)
        \EndIf
    \Else{}
        \State \textbf{break} 
        \Comment{Break the loop and jump to Line \ref{alg_line:end_phase1}}
    \EndIf
    \State \({t^{(l)}} = {t^{(l)}} + 1\)
\EndWhile
\If{\({t^{(l)}} > n\)}\label{alg_line:end_phase1}
    \Comment{If we stayed in Phase I for the entire execution of the algorithm}
    \State \Return \(\mA_n^{(l)}, \vc_n^{(l)}, \alpha_n^{(l)}\)
\EndIf
\State \(t_{s} = {t^{(l)}}\) \label{alg_line:tswitch}
\Comment{Point \(\vz_{t_{s}}\) has not yet been processed}
\State \(\mA_{t_{s}-1} =\frac{1}{r d \log d} \cdot \mI_d, \vc_{t_{s}-1} = \vc_{t_{s}-1}^{(l)}, \alpha_{t_{s}-1} = \frac{1}{d \log d}\)\label{alg_line:phase_trans}
\Comment{Transition: grow the ball to maximium size}
\For{\(t \in \{t_{s}, t_{s}+1, \ldots, n\}\)}
    \Comment{Phase II: full update for the remaining points}
    \State Receive point \(\vz_i\)
    \State \(\mA_i, \vc_i, \alpha_i = \cA^{\mathsf{full}}(\mA_{t-1}, \vc_{t-1}, \alpha_{t-1}, \vz_i)\)\label{alg_line:phaseII_update}
\EndFor
\State \Return \(\mA_n, \vc_n, \alpha_n\)
\end{algorithmic}
\end{algorithm}

Before Line \ref{alg_line:end_phase1}, the algorithm executes the first phase that has the outer and inner ellipsoids as balls.
In Line \ref{alg_line:end_phase1}, we have \({t^{(l)}} > n\) if the algorithm stayed in Phase I for every point, i.e. we had \(\max_{1 \leq {t^{(l)}} \leq n} \| \vz_{t^{(l)}} - \vc_0 \| \leq r \cdot d \log d\). In this case, the algorithm returns the approximation maintained by Phase I.
Otherwise we must have come across a point where \(\|\vz_{t^{(l)}} - \vc_0 \| > r \cdot d \log d\), and the algorithm proceeds with Phase II.
We let \(t_{s}\) in Line \ref{alg_line:tswitch} mark the point received that causes the algorithm to proceed to Phase II.
We then perform a `transition' on Line \ref{alg_line:phase_trans} that grows the ball of Phase I to its maximum size. This transition step makes the analysis of the complete algorithm easier, as then the starting approximation for the second phase is exactly \(\alpha_{t_s - 1} = \frac{1}{d \log d}\).
Then the algorithm runs the full update \(\cA^{\mathsf{full}}\) for the rest of the points, including \(\vz_{t_{s}}\).
For simplicity, we write our algorithm so that it `receives' \(\vz_{t_{s}}\) twice, once for each phase.
However, the first phase does not commit to an update for this point, and the ellipsoids in Line \ref{alg_line:phase_trans} are not committed either; the algorithm does not commit to an update for this point until Line \ref{alg_line:phaseII_update}.

Recall the approximation guarantee stated in \Cref{thm:main_one}:
\begin{equation}\label{eq:main_approx_guarantee}
    \frac{1}{\alpha_n} \leq O(\min\inparen{\nfrac{R}{r}, d \log\inparen{\nfrac{R}{r}}})
\end{equation}

We can interpret the approximation guarantee \eqref{eq:main_approx_guarantee} by cases depending on if \(\nfrac{R}{r} \geq d \log d\) (i.e. if the algorithm ever enters the second phase):
\begin{claim}
We have for all \(d \geq 2\) that
\[\min\inparen{\nfrac{R}{r}, d \log\inparen{\nfrac{R}{r}}} = \Theta\inparen{\begin{cases}
    d \log\inparen{\nfrac{R}{r}} & \text{if } \nfrac{R}{r} > d \log d \\
    \nfrac{R}{r} & \text{if } \nfrac{R}{r} \leq d \log d
\end{cases}}\]
\end{claim}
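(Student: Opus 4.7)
The plan is to let $x = R/r \geq 1$ and verify each case of the claim separately via elementary calculus. The key fact driving both cases is that the function $f(t) = (\log t)/t$ is strictly decreasing on $t \geq e$, since $f'(t) = (1 - \log t)/t^2$. This monotonicity pins down the transition point of $\min(x, d\log x)$ at $x \approx d\log d$.

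For the case $x > d\log d$, I would show $d\log x \leq 2x$, which immediately yields $\min(x, d\log x) = \Theta(d\log x)$ (since $\min(x, d\log x) \leq d\log x$ trivially and the bound gives $\min(x, d\log x) \geq x \geq (d\log x)/2$). Assuming $d\log d \geq e$ (true for $d \geq 3$, with $d=2$ handled by direct inspection), monotonicity of $f$ gives
\[\frac{\log x}{x} \leq \frac{\log(d\log d)}{d\log d} = \frac{\log d + \log\log d}{d \log d} \leq \frac{2 \log d}{d \log d} = \frac{2}{d},\]
where I used $\log\log d \leq \log d$ (which holds since $\log t \leq t$). Multiplying through by $d$ gives $d\log x \leq 2x$, as desired.

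For the case $x \leq d\log d$, I would show $x \leq d\log x$, so that $\min(x, d\log x) = x = \Theta(x)$. This is equivalent to $x/d \leq \log x$. If $x \geq d$, then $x/d \leq \log d \leq \log x$; and if $e \leq x < d$, then $x/d < 1 \leq \log x$. The sole remaining subcase, $1 \leq x < e$, is an edge case where both $x$ and $d\log x$ are $O(1)$; it does not affect the claim because the approximation factor is in any case at least a constant, which can be absorbed into the hidden constants of $\Theta(\cdot)$. I do not anticipate a substantive obstacle in this claim — the only mild subtlety is the behavior near $x = 1$, which the $\Theta$-notation gracefully absorbs; the entire argument is a routine case split using monotonicity of $(\log t)/t$.
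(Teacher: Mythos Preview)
The paper states this claim but provides no proof, so there is nothing to compare against; your approach stands on its own and is essentially correct. Two small points are worth flagging.

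First, in Case~1 you write ``the bound gives $\min(x, d\log x) \geq x \geq (d\log x)/2$.'' The middle inequality $\min(x, d\log x) \geq x$ is false whenever $d\log x < x$ (which does happen for large enough $x$). The intended argument is cleaner stated as: from $d\log x \leq 2x$ we get $x \geq (d\log x)/2$, hence $\min(x, d\log x) \geq \min\bigl((d\log x)/2,\, d\log x\bigr) = (d\log x)/2$. This is purely an exposition fix; your conclusion is right.

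Second, your treatment of the edge case $1 \leq x < e$ is the right call in spirit, but the justification you give is slightly off: you say ``both $x$ and $d\log x$ are $O(1)$,'' yet $d\log x$ can be as large as $d$ on that interval. What \emph{is} true is that $\min(x, d\log x) \leq x < e$ is $O(1)$ there, while the claim as literally stated fails at $x=1$ (where the min is $0$). Your contextual defense---that in the paper the quantity being bounded is an approximation factor $\geq 1$, so an additive $O(1)$ slack is harmless---is exactly the right resolution; just be precise that it is the \emph{min} that is $O(1)$, not both terms, and that you are really proving the claim up to an additive constant.
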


Now, we claim a straightforward geometric fact, that the distance of the furthest \(\vz_{t}\) to \(\vc_0\)
approximates the circumradius of \(Z\) up to a constant factor.
We will use this to show that Line \ref{alg_line:phaseI_if} will be able to properly detect when
\(\nfrac{R}{r} > d \log d\) (again, up to a constant factor).
\begin{claim}\label{claim:alg_r_est}
    Let \(\vc_0 + r_0 B_2^d \subseteq Z\), and 
    \(R = R(Z)\). Then
    \[R \leq \max_{1 \leq t^{(l)} \leq n} \| \vc_0 - \vz_{t^{(l)}} \| \leq 2 \cdot R\]
\end{claim}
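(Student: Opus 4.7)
Proof plan for \Cref{claim:alg_r_est}:

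The plan is to establish the two inequalities separately, each from a single geometric observation.

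For the upper bound $\max_{t^{(l)}}\|\vc_0 - \vz_{t^{(l)}}\| \leq 2R$, I would first note that since the ball $\vc_0 + r_0 B_2^d$ is contained in $Z$, the point $\vc_0$ itself lies in $Z$. Let $\vc_C$ be a circumcenter of $Z$, so that $Z \subseteq \vc_C + R \cdot B_2^d$. Then both $\vc_0$ and each $\vz_{t^{(l)}}$ lie in $\vc_C + R \cdot B_2^d$, so each is within distance $R$ of $\vc_C$. A single application of the triangle inequality gives $\|\vc_0 - \vz_{t^{(l)}}\| \leq \|\vc_0 - \vc_C\| + \|\vc_C - \vz_{t^{(l)}}\| \leq 2R$, and taking the max yields the claim.

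For the lower bound $R \leq \max_{t^{(l)}}\|\vc_0 - \vz_{t^{(l)}}\|$, I would set $\rho \defeq \max_{1 \leq t^{(l)} \leq n}\|\vc_0 - \vz_{t^{(l)}}\|$ and argue that $\vc_0 + \rho \cdot B_2^d$ contains $Z$. By definition, $\rho$ is chosen so that each vertex $\vz_{t^{(l)}}$ lies in $\vc_0 + \rho \cdot B_2^d$. Since $\vc_0 + \rho \cdot B_2^d$ is convex, it contains $\conv{\{\vz_1,\dots,\vz_n\}} = Z$. Since $R$ is the infimum over all centers of the radius needed to cover $Z$ (by \Cref{def:circumradius}), and $\vc_0$ is a particular choice of center, we must have $R \leq \rho$.

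There is no real obstacle here; both directions are immediate from the definitions and the triangle inequality, and the key observation that makes it all work is simply $\vc_0 \in Z$, which follows from the hypothesis $\vc_0 + r_0 B_2^d \subseteq Z$.
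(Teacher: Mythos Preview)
Your proposal is correct and follows essentially the same argument as the paper: both establish the lower bound by noting that $\vc_0 + \rho \cdot B_2^d$ is a covering ball for $Z$, and both establish the upper bound using that $\vc_0 \in Z$ together with the circumball (the paper phrases this last step as a diameter bound, which is exactly your triangle-inequality step).
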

\begin{proof}
    For the left inequality, observe that if we let \(r_{\max} = \max_{1 \leq t^{(l)} \leq n} \| \vc_0 - \vz_{t^{(l)}} \|\),
    then \(Z \subseteq \vc_0 + r_{\max} \cdot B_2^d\).
    For the right inequality, observe that for any containing ball \(\vc' + R' \cdot B_2^d \supseteq Z\),
    its diameter is \(2 R'\).
    But as \(\vc' + R' \cdot B_2^d\) contains \(\vc_0\) and \(\vz_{1}, \ldots, \vz_{n}\), we must have \(\diam{\vc' + R' \cdot B_2^d} \geq \diam{\{\vc_0\} \cup \{\vz_1, \ldots, \vz_n\}}\) and so \(2R' \geq r_{\max}\).
\end{proof}

Next, we discuss the approximation guarantee that the algorithm achieves, depending on the phase that it terminates with.
We start with if the algorithm only stays in the local phase, in which case we can readily apply the previous claim.
\begin{claim}\label{claim:approx_local}
    If \Cref{alg:main} never enters Phase II, then its approximation guarantee satisfies
    \(\frac{1}{\alpha_n} \leq \nfrac{2 R}{r}\).
\end{claim}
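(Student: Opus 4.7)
The plan is to read off the value of $\alpha_n$ directly from the pseudocode of Phase I and then apply \Cref{claim:alg_r_est} to bound it.

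First I would observe that while the algorithm remains in Phase I, the inner ellipsoid is always the initial ball $\vc_0 + r \cdot B_2^d$ (the center $\vc_t^{(l)}$ and the inner radius $r$ never change), and the outer ellipsoid is $\vc_0 + r_{\max,t} \cdot B_2^d$, where
\[
r_{\max,t} \;\defeq\; \max_{1 \leq t' \leq t} \|\vz_{t'} - \vc_0\|.
\]
By inspecting the update in Phase I, we have $\alpha_t^{(l)} = r / r_{\max,t}$ at every iteration $t$ (with the convention that $r_{\max,t} \geq r$, so $\alpha_t^{(l)} \leq 1$; otherwise $\alpha_t^{(l)}=1$, which only strengthens the bound).

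If \Cref{alg:main} never enters Phase II, then this formula holds at $t = n$, giving
\[
\frac{1}{\alpha_n} \;=\; \frac{r_{\max,n}}{r} \;=\; \frac{\max_{1 \leq t \leq n} \|\vz_t - \vc_0\|}{r}.
\]
Now I would apply the right inequality of \Cref{claim:alg_r_est}, which states $\max_{1 \leq t \leq n} \|\vz_t - \vc_0\| \leq 2R$. Substituting this in yields $\nfrac{1}{\alpha_n} \leq \nfrac{2R}{r}$, as desired.

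There is no real obstacle here; the only point to verify carefully is that the quantity $R_{t^{(l)}}$ tracked by the algorithm in Phase I equals $r_{\max,t}/r$ (modulo the convention when no point yet lies outside the initial ball), and hence that $\alpha_t^{(l)} = r / r_{\max,t}$ is exactly what the pseudocode maintains. Once this bookkeeping is in place, the claim follows in one line from \Cref{claim:alg_r_est}.
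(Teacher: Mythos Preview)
Your proposal is correct and follows essentially the same route as the paper's proof: read off $\nfrac{1}{\alpha_n^{(l)}} = \max_{1 \le t \le n}\|\vz_t-\vc_0\|/r$ from the Phase~I update rule, then apply \Cref{claim:alg_r_est} to bound the numerator by $2R$. Your version is slightly more explicit about the Phase~I bookkeeping, but the argument is the same.
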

\begin{proof}
    At the termination of Phase I, the algorithm produces approximation \(\alpha_n^{(l)} = \max_{1 \leq {t^{(l)}} \leq n} \frac{\|\vz_{t^{(l)}} - \vc_0\|}{r}\).
    Using \Cref{claim:alg_r_est} we obtain \[
    \frac{1}{\alpha_n^{(l)}} = \max_{1 \leq {t^{(l)}} \leq n} \frac{\|\vz_{t^{(l)}} - \vc_0\|}{r} \leq \frac{2 R}{r}
    \]
\end{proof}

The analysis in the case where the algorithm enters the full phase is more involved.
We use \Cref{claim:vol_t}, which shows that the increase in approximation factor each iteration is not too large compared to the increase in volume, to bound \(\frac{1}{\alpha_n}\).
We know that the volume of the final ellipsoid \(\vc_n + \cE_n\) must be bounded relative to \(R \cdot B_2^d\), as the algorithm produces \(\vc_n + \alpha_n \cdot \cE_n \subseteq Z\);
however, this does lead to an upper bound that is still a function of \(\frac{1}{\alpha_n}\).
\begin{claim}\label{claim:alpha_rel}
    If \Cref{alg:main} enters Phase II, the approximation guarantee satisfies
    \[\frac{1}{\alpha_n} \leq 2 \left(d \log\left(\frac{1}{\alpha_n}\right) + d \log\left(\frac{R}{r}\right)\right)\]
\end{claim}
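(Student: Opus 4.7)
The plan is to telescope the per-step evolution condition over the Phase II iterations, substitute the values set by the transition step on Line \ref{alg_line:phase_trans}, and bound the final volume using $\vc_n + \alpha_n \cE_n \subseteq Z$. First, I would sum the evolution identity from Claim \ref{claim:update_params}-(\ref{item:claim_update_params_1}), which gives $\frac{1}{\alpha_t}-\frac{1}{\alpha_{t-1}} = 2\gamma^\star_t$ at every full-update iteration $t \in \{t_s, \ldots, n\}$ (and $0$ when $\|\vu\| \le 1$ and no update is made). Combined with the volume lower bound from Claim \ref{claim:vol_t}, namely $\gamma^\star_t \le \log\vol(\cE_t) - \log\vol(\cE_{t-1})$, telescoping yields
\[
\frac{1}{\alpha_n} - \frac{1}{\alpha_{t_s-1}} \;\le\; 2\log\frac{\vol(\cE_n)}{\vol(\cE_{t_s-1})}.
\]

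Next, I would plug in the Phase II initial values established on Line \ref{alg_line:phase_trans}: $\frac{1}{\alpha_{t_s-1}} = d\log d$ and $\cE_{t_s-1} = rd\log d \cdot B_2^d$, so that $\vol(\cE_{t_s-1}) = (rd\log d)^d\vol(B_2^d)$. The main step is then upper bounding $\vol(\cE_n)$. By the monotone invariant, $\vc_n + \alpha_n\cE_n \subseteq Z$, and by definition $Z \subseteq \vc_C + R\cdot B_2^d$, so the origin-centered body $\alpha_n\cE_n$ lies inside the translated ball $(\vc_C - \vc_n) + R\cdot B_2^d$. For any direction $\vu \in \S^{d-1}$ and corresponding semiaxis length $a$ of $\alpha_n\cE_n$, both $a\vu$ and $-a\vu$ belong to this ball, so by the triangle inequality $2a \le 2R$; hence every semiaxis of $\alpha_n\cE_n$ is at most $R$, giving $\vol(\alpha_n\cE_n) \le R^d\vol(B_2^d)$, i.e.\ $\vol(\cE_n) \le (R/\alpha_n)^d\vol(B_2^d)$.

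Substituting everything in and simplifying,
\[
\frac{1}{\alpha_n} \;\le\; d\log d + 2d\log\frac{R}{\alpha_n\, r\, d\log d} \;=\; d\log d - 2d\log(d\log d) + 2d\log\frac{1}{\alpha_n} + 2d\log\frac{R}{r}.
\]
Since $d\log d - 2d\log(d\log d) \le 0$ for all $d \ge 2$, this yields the claimed bound $\frac{1}{\alpha_n} \le 2d\log\frac{1}{\alpha_n} + 2d\log\frac{R}{r}$. The main obstacle is the volume bound for $\cE_n$: the monotonicity invariant only controls the translated inner ellipsoid $\vc_n + \alpha_n\cE_n$, not $\cE_n$ at the origin, and the elementary symmetry-plus-triangle-inequality argument above is exactly what is needed to transfer the containment to the origin-centered ellipsoid without losing a dimension-dependent factor.
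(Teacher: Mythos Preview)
Your proof is correct and follows essentially the same route as the paper's: telescope the per-step identity $\frac{1}{\alpha_t}-\frac{1}{\alpha_{t-1}}=2\gamma_t^\star$ over Phase~II, bound $\sum\gamma_t^\star$ by $\log(\vol(\cE_n)/\vol(\cE_{t_s-1}))$ via \Cref{claim:vol_t}, substitute the transition values from Line~\ref{alg_line:phase_trans}, and bound $\vol(\cE_n)$ using $\vc_n+\alpha_n\cE_n\subseteq Z\subseteq \vc_C+R\cdot B_2^d$. One small simplification worth noting: the step you flag as the ``main obstacle'' is immediate from monotonicity of Lebesgue measure under inclusion together with translation invariance, i.e.\ $\alpha_n^d\vol(\cE_n)=\vol(\vc_n+\alpha_n\cE_n)\le\vol(\vc_C+R\cdot B_2^d)=R^d\vol(B_2^d)$, so your (correct) symmetry-plus-triangle-inequality argument is not actually needed.
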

\begin{proof}
The algorithm transitions to Phase II at Line \ref{alg_line:tswitch}, starting at iteration \(t_{s}\).
At each subsequent iteration, we claim that \Cref{alg:full_update} guarantees \(\frac{1}{\alpha_{t}} = \frac{1}{\alpha_{t-1}} + 2 \gamma^{\star}_t\).
By \Cref{claim:update_params}-(\ref{item:claim_update_params_1}), we have for all \(t_{s} \leq t \leq n-1\) where the ellipsoids were updated that \(\frac{1}{\alpha_{t}} = \frac{1}{\alpha_{t-1}} + 2 \gamma^{\star}_t\).
When the ellipsoids are not updated, this still holds, as in that case \(\gamma^{\star}_t = 0\). 

As in Phase II the algorithm begins with \(\alpha_{t_{s} - 1} = \frac{1}{d \log d}\), we have
\begin{equation}\label{eqn:alpha_rel_t}
\frac{1}{\alpha_n} = d \log d + 2 \sum_{t=t_{s}}^{n-1} \gamma^{\star}_t
\end{equation}
Now applying \Cref{claim:vol_t} for each \(t\), we have \(\vol(\cE_n) \geq \exp\left(\sum_{t=t_{s}}^{n-1} \gamma^{\star}_t\right) \cdot \vol(\cE_{t_{s}})\). Taking logarithms gives
\begin{equation}\label{eq:ub_t_vol}
\log\inparen{\frac{\vol \cE_n}{\vol \cE_{t_{s} - 1}}} \geq \sum_{t=t_{s}}^{n-1} \gamma^{\star}_t
\end{equation}

Recall that \(\vc_0 + r \cdot B_2^d \subseteq Z\),
and by \Cref{def:circumradius},
\(Z \subseteq \vc_c + R \cdot B_2^d\) for some center \(\vc_c\).
By \Cref{claim:alg_correctness}, we have \(\vc_n + \alpha_n \cdot \cE_n \subseteq Z\), so
that \(\vol(\cE_n) \leq \frac{1}{\alpha_n^d} \cdot \vol(R \cdot B_2^d)\).
As in Phase II we start with \(\cE_{t_s - 1} = \vc_0 + r d \log d \cdot B_2^d\), this yields
\begin{align*}\sum_{t=t_{s}}^{n-1} \gamma^{\star}_t &\leq \log\inparen{\frac{\vol(\cE_n)}{\vol (\cE_{t_{s}-1})}} &\text{by \eqref{eqn:alpha_rel_t}} \\
&\leq d \log \inparen{\frac{1}{\alpha_n}} + \log\inparen{\frac{\vol(R \cdot B_2^d)}{\vol(r d \log d \cdot B_2^d)}} & \text{by } \vol(\cE_n) \leq \frac{1}{\alpha_n^d} \vol(R \cdot B_2^d)\\
&= d \log\inparen{\frac{1}{\alpha_n}} + d\log\inparen{\frac{R}{rd \log d}}\\
&\leq d \log \inparen{\frac{1}{\alpha_n}} + d \log \inparen{\frac{R}{r}} - d \log d
\end{align*}
and plugging into (\ref{eqn:alpha_rel_t}) finishes the claim.
\end{proof}

Intuitively, \(x \leq a + b \cdot \log x\) for some constants \(a, b > 0\) can only be true for bounded \(x\), as \(x = \omega(\log x)\).
As we showed \(1/\alpha_n\) satisfies a relation like this in \Cref{claim:alpha_rel}, we develop this intuition to give a quantitative upper bound on \(1/\alpha_n\).
\begin{claim}\label{claim:approx_full}
If \Cref{alg:main} enters Phase II, then we have \[\frac{1}{\alpha_n} \leq 8 d(\log d + \log \nfrac{R}{r})\]
\end{claim}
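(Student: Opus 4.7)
The plan is to invert the implicit inequality from \Cref{claim:alpha_rel} — namely $\frac{1}{\alpha_n} \leq 2d\log(1/\alpha_n) + 2d\log(R/r)$ — into an explicit upper bound on $1/\alpha_n$. Define $f(x) = x - 2d\log x - 2d\log(R/r)$; its derivative $f'(x) = 1 - 2d/x$ is positive on $(2d, \infty)$, so $f$ is strictly increasing there. \Cref{claim:alpha_rel} says $f(1/\alpha_n) \leq 0$, and therefore any value $y > 2d$ satisfying $f(y) \geq 0$ forces $1/\alpha_n \leq y$ by monotonicity (the case $1/\alpha_n \leq 2d$ is covered for free by any such $y$, since we will have $y \geq 2d$).

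I will take $y = 8d(\log d + \log(R/r))$. This exceeds $2d$ because $\log d \geq \log 2 > 1/4$ for $d \geq 2$. Expanding
\[\log\inparen{8d(\log d + \log(R/r))} = \log 8 + \log d + \log\inparen{\log d + \log(R/r)}\]
and substituting into the definition of $f(y)$, the $2d\log d$ and $2d\log(R/r)$ terms cancel, and the condition $f(y) \geq 0$ reduces to
\[3\inparen{\log d + \log(R/r)} \geq \log 8 + \log\inparen{\log d + \log(R/r)}.\]
Writing $t = \log d + \log(R/r)$, this becomes the one-variable inequality $3t - \log t \geq \log 8$.

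Verifying this last inequality is straightforward: since $\log d \geq \log 2$ and $\log(R/r) \geq 0$, we have $t \geq \log 2$. The function $g(t) = 3t - \log t$ has derivative $3 - 1/t$, which is nonnegative on $[\log 2, \infty)$ because $\log 2 > 1/3$. Evaluating at the left endpoint gives $g(\log 2) = 3\log 2 - \log\log 2 > 3\log 2 = \log 8$ because $\log\log 2 < 0$, so $g(t) \geq \log 8$ for all $t \geq \log 2$. This closes the verification and yields the claimed bound $1/\alpha_n \leq 8d(\log d + \log(R/r))$.

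All of the genuine geometric and volumetric work has already been carried out in \Cref{claim:vol_t} (volume growth per iteration) and \Cref{claim:alpha_rel} (the telescoping estimate combined with $\vc_n + \alpha_n \cdot \cE_n \subseteq Z \subseteq \vc_c + R \cdot B_2^d$). The present claim is purely an analytic consequence of inverting that implicit relation, so the only obstacle is bookkeeping: choosing a candidate $y$ that lies in the monotonicity region $x > 2d$ and is simple enough that the resulting elementary inequality in $t$ is easy to check. The constant $8$ is chosen precisely to make the reduction to $3t - \log t \geq \log 8$ go through cleanly; any constant a little larger would also work.
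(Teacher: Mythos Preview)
Your proposal is correct and takes essentially the same approach as the paper: both invert the implicit inequality from \Cref{claim:alpha_rel} by reducing to an elementary one-variable estimate. The paper phrases this as a proof by contradiction (assume the bound fails and derive $2\log x > \tfrac{3}{4}x$ for $x = \tfrac{1}{d\alpha_n}$, which is impossible), whereas you phrase it as a direct monotonicity argument on $f(x)=x-2d\log x-2d\log(R/r)$; the substance is the same, and your presentation is arguably a bit cleaner.
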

\begin{proof}
Assume towards contradiction that \(\frac{1}{\alpha_n} > 8 d (\log d + \log \nfrac{R}{r})\).
Observe then that \(\frac{1}{\alpha_n} - \frac{3}{4} \cdot\frac{1}{\alpha_n} > 2 d (\log d + \log \nfrac{R}{r})\). Using \Cref{claim:alpha_rel}, we have
\[2 (d \log \nfrac{1}{\alpha_n} + d \log \nfrac{R}{r}) \geq \frac{1}{\alpha_n} > 2 (d \log d + d \log \nfrac{R}{r}) + \frac{3}{4} \cdot \frac{1}{\alpha_n}\]
Simplifying the above inequality gives \(2d \log \nfrac{1}{d \cdot \alpha_n} > \frac{3}{4} \cdot \frac{1}{\alpha_n}\), i.e. 
\(2\log \nfrac{1}{d \cdot \alpha_n} > \frac{7}{8} \cdot \frac{1}{d \cdot \alpha_n}\).
It is clear that this is impossible by looking at the graph of the function \(x \mapsto 2 \log x - \frac{3}{4} x\), which is concave with a maximum of \(2(\log(\nfrac{8}{3}) - 1) < 0\).
\end{proof}

Now we combine the previous claims to prove the guarantees of \Cref{alg:main}.
\begin{proof}[Proof of \Cref{thm:main_one}.] We first discuss the approximation guarantee and correctness, then the memory and runtime complexity of \Cref{alg:main}.

\paragraph{Approximation guarantee}
We break the analysis of the approximation guarantee by cases, depending on the aspect ratio.
If \(\nfrac{R}{r} \leq \frac{1}{2} d \log d\),
then by \Cref{claim:alg_r_est} we have \(\max_{1 \leq t^{(l)} \leq n} \| \vc_0 - \vz_{t^{(l)}} \| \leq r d \log d\),
and the algorithm never enters Phase II.
By \Cref{claim:approx_local}, the final approximation factor is \(\nfrac{2R}{r}\).
If \(\nfrac{R}{r} > d \log d\), then by \Cref{claim:alg_r_est} we have \(\max_{1 \leq t^{(l)} \leq n} \| \vc_0 - \vz_{t^{(l)}} \| > r d \log d\), and the algorithm must enter Phase II.
Then \Cref{claim:approx_full} applies, and the final approximation factor is \(O(d (\log d + \log \nfrac{R}{r}) = O(d \log \nfrac{R}{r})\).

If \(\frac{1}{2} d \log d < \nfrac{R}{r} \leq d \log d\), then it is possible for the algorithm to never enter Phase II or for it to enter Phase II. Either way, we argue that the final approximation factor is \(\frac{1}{\alpha_n} \leq O\inparen{\nfrac{R}{r}}\).
If it does not enter Phase II, then by \Cref{claim:approx_local}, the approximation guarantee we get is \(\frac{1}{\alpha_n} \leq O(\nfrac{R}{r})\).
If it does enter Phase II, then by \Cref{claim:approx_full}
we have \[
\frac{1}{\alpha_n} \leq O(d \log d + d \log \nfrac{R}{r})
\]
Due to the assumption that \(\frac{1}{2} d \log d < \nfrac{R}{r} \leq d \log d\), we also have in this case that \(\frac{1}{\alpha_n} \leq O(\nfrac{R}{r})\).

\paragraph{Correctness} By \Cref{claim:alg_invariant}, to argue that the algorithm solves \Cref{prob:main} it is enough to show that it is monotone, i.e. it satisfies the invariants of \Cref{def:alg-invariant}.
It is clear that the local update in Phase I satisfies the invariants, as the outer ellipsoid is a ball of growing radius
and the inner ellipsoid is kept to the ball of radius \(r\). 
It is also clear that after the algorithm transitions to Phase II, all the full updates are monotone by \Cref{claim:alg_correctness} and the fact that the starting approximation factor for this phase is is \(\alpha_{t_s - 1} = \frac{1}{d \log d} \leq \frac{1}{2}\).
As algorithm transitions to Phase II,
observe that on Line \ref{alg_line:phase_trans} the radius of the outer ellipsoid grows again to \(r d \log d\) before applying the full update,
so the first first full update of Phase II is also monotone.

\paragraph{Memory and runtime complexity}
The memory complexity of the algorithm is \(O(d^2)\).
Observe that \Cref{alg:full_update} only stores a constant number of matrices in \(\R^{d \times d}\), vectors in \(\R^d\), or constants,
so its memory complexity is \(O(d^2)\). It is only instantiated once for each point received in Phase II, so the memory complexity in this phase \(O(d^2)\). Finally, the memory complexity in the first phase is also \(O(d^2)\) because it stores the same kind of quantities as \Cref{alg:full_update}.

To show the runtime of the algorithm is \(\widetilde{O}(nd^2)\), we show that the runtime to process each next point is at most \(\widetilde{O}(d^2)\). This is clear in Phase I, and during the transition to Phase II.
For the full update this is less clear, as \Cref{alg:full_update} uses both \(\mA_{t-1}\) and \(\mA_{t-1}^{-1}\)
which naively would require inverting a matrix on each iteration.
However, if we represent \(\mA\) using the SVD (see the next section and \Cref{claim:full_update_eff}), we can implement the update in \(\widetilde{O}(d^2)\) time.
This would require that \(\mA_{t_s - 1}\) be given in SVD form as well for the first full update, but it is already in that form as a scaled identity matrix.

\end{proof}

\subsubsection{Efficient implementation of the full update step}

In this section, we use a method similar to that in Algorithm 2 from \cite{mmo22} to show that the full update step can be implemented in \(\widetilde{O}(d^2)\) time.
In particular, we use the same subroutine \(\textsc{SVDRankOneUpdate}\) with signature
\begin{align}
    (\mU', \mSigma', \mV') = \textsc{SVDRankOneUpdate}((\mU, \mSigma, \mV), \vy_1, \vy_2)\label{eq:stange_update}
\end{align}
where the result \(\mU' \mSigma' (\mV')^\top\) is the SVD of the matrix \(\mU \mSigma \mV^\top + \vy_1 \vy_2^\top\).
\citet{stange08} shows that this procedure be done in \(O(d^2 \log d)\) time.
We rewrite \Cref{alg:full_update} in \Cref{alg:full_update_eff} to make it clear how to use the SVD representation and the efficient rank-1 update to efficiently implement the full update.
One can readily see that \Cref{alg:full_update_eff} has the exact same behavior as \Cref{alg:full_update}, and so gives the same approximation and correctness guarantees.

\begin{algorithm}[h]
\caption{Efficient full update step \(\cA^{\mathsf{full}}\)}\label{alg:full_update_eff}

\textbf{input}: \((\mU_{t-1}, \mSigma_{t-1}, \mV_{t-1}) \in \R^{d \times d}, \vc_{t-1} \in \R^{d}, \alpha_{t-1} \in [0, \frac{1}{2}], \vz_{t} \in \R^{d}\) \\
\textbf{output}: \((\mU_{t}, \mSigma_{t}, \mV_{t}) \in \R^{d \times d}, \vc_{t} \in \R^{d}, \alpha_{t} \in [0, \alpha_{t}]\)
%represented as \(\{x \in \R^d \colon \|A_n (x - c_n)\| \leq 1\}\), \(\{x \in \R^d \colon \|A_n (x - c_n)\| \leq \alpha_n\}\)
\begin{algorithmic}[1]
% \For{\(t = j+1, \ldots, n\)}
%     \State Receive point \(\vz_i\)
    \State Let \(\vu = \mU_{t-1} \mSigma_{t-1} \mV_{t-1}^\top (\vz_{t} - \vc_{t-1})\), \(\vw = \frac{\vu}{\|\vu\|}\)
    \If{\(\|\vu\| > 1\)}
        \State Let \(\gamma^{\star}_t\) be such that \(a(\gamma^{\star}_t) + c(\gamma^{\star}_t) = \|\vu\|\)\label{alg_line:compute_update_t_eff}
        \State \(\vy_1 = \left(\frac{1}{a(\gamma^{\star}_t)} - \frac{1}{b(\gamma^{\star}_t)}\right) \vw, \vy_2 = \mV_{t-1} \mSigma_{t-1} \mU_{t-1}^\top \vw\)
        \State \((\mU_{t}, \mSigma_{t}, \mV_{t}) = \textsc{SVDRankOneUpdate}((\mU_{t-1}, \frac{1}{b(\gamma^{\star}_t)} \mSigma_{t-1}, \mV_{t-1}), \vy_1, \vy_2)\)
        \State \Return \((\mU_{t}, \mSigma_{t}, \mV_{t}), \ \vc_{t} = \vc_{t-1} + c(\gamma^{\star}_t) \mV_{t-1} \mSigma_{t-1}^{-1} \mU_{t-1}^\top \vw, \ \alpha_t = \alpha'(\gamma^{\star}_t)\)
    \Else
        \State \Return \((\mU_{t}, \mSigma_{t}, \mV_{t}) = (\mU_{t-1}, \mSigma_{t-1}, \mV_{t-1}), \vc_t = \vc_{t-1}, \alpha_t = \alpha_{t-1}\)
    \EndIf
% \EndFor
% \State \Return \(\mA_n, \vc_n, \alpha_n\)
\end{algorithmic}
\end{algorithm}

\begin{remark}\label{remark:compute_update_t_eff}
    We briefly explain why Line \ref{alg_line:compute_update_t_eff}, finding \(\gamma^{\star}\) such that \(a(\gamma^{\star}) + c(\gamma^{\star}) = \|\vu\|\)  can be implemented efficiently. This is a one-dimensional optimization problem, and \(\gamma \mapsto a(\gamma) + c(\gamma)\) using \(a, c\) as defined in \eqref{eqn:update_params} is monotone increasing, 
    so finding an approximate \(\gamma^{\star}\) can be done efficiently with binary search.
    In particular, we can choose \(\gamma^{\star}\) to be a slight overestimate so the update is still monotone after slightly increasing \(\alpha_t\).
    This does not affect the final approximation guarantee beyond constant factors.
\end{remark}

This algorithm performs a constant number of taking norms of vectors, matrix-vector products, and algebraic operations; as well as one rank-one SVD update. As explained in \Cref{remark:compute_update_t_eff}, finding \(\gamma_i^*\) can also be done in effectively constant time. Thus for our runtime guarantee, we have:
\begin{claim}\label{claim:full_update_eff}
    \Cref{alg:full_update_eff} runs in time \(O(d^2 \log d)\).
\end{claim}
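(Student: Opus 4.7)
The plan is to simply walk through each line of \Cref{alg:full_update_eff} and bound its cost, noting that the dominant term comes from the single rank-one SVD update. Throughout, we represent the previous iterate's matrix implicitly through its SVD factors $(\mU_{t-1}, \mSigma_{t-1}, \mV_{t-1})$, so no explicit inversion or full matrix multiplication is ever needed.

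First, I would handle the preliminary computations. The vector $\vu = \mU_{t-1}\mSigma_{t-1}\mV_{t-1}^\top(\vz_t - \vc_{t-1})$ is evaluated by three successive matrix-vector products (applying $\mV_{t-1}^\top$, then scaling by $\mSigma_{t-1}$, then applying $\mU_{t-1}$), each costing $O(d^2)$; normalizing $\vw = \vu/\|\vu\|$ takes $O(d)$. Next, computing $\gamma^\star_t$ on Line \ref{alg_line:compute_update_t_eff} requires only finding the root of a one-dimensional monotone equation $a(\gamma) + c(\gamma) = \|\vu\|$, which, as noted in \Cref{remark:compute_update_t_eff}, can be done to sufficient precision in $O(\log(1/\eps)) = \widetilde{O}(1)$ time by binary search without affecting the approximation factor beyond constants. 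Evaluating $a(\gamma^\star_t)$, $b(\gamma^\star_t)$, $c(\gamma^\star_t)$, $\alpha'(\gamma^\star_t)$ from \eqref{eqn:update_params} are then $O(1)$ arithmetic operations.

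Second, I would bound the costs of the updates themselves. Forming $\vy_1 = (1/a(\gamma^\star_t) - 1/b(\gamma^\star_t))\vw$ is $O(d)$, and forming $\vy_2 = \mV_{t-1}\mSigma_{t-1}\mU_{t-1}^\top\vw$ is again three successive matrix-vector products for a total of $O(d^2)$. The key step is the call to $\textsc{SVDRankOneUpdate}$ with signature given in \eqref{eq:stange_update}, which by \cite{stange08} runs in $O(d^2\log d)$ time and returns the SVD factors of $\frac{1}{b(\gamma^\star_t)}\mU_{t-1}\mSigma_{t-1}\mV_{t-1}^\top + \vy_1\vy_2^\top$; note that scaling the middle factor of the input SVD by $1/b(\gamma^\star_t)$ is free since it only rescales the diagonal entries of $\mSigma_{t-1}$. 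Finally, the new center $\vc_t = \vc_{t-1} + c(\gamma^\star_t)\mV_{t-1}\mSigma_{t-1}^{-1}\mU_{t-1}^\top\vw$ involves three matrix-vector products (with $\mSigma_{t-1}^{-1}$ being diagonal and thus $O(d)$), totaling $O(d^2)$.

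Summing all contributions, every step is $O(d^2)$ or cheaper except for the SVD rank-one update, which is $O(d^2\log d)$, so the overall runtime per invocation is $O(d^2\log d)$. The only real subtlety worth verifying is that representing $\mA_t$ by its SVD throughout is consistent with the update rule in \Cref{alg:full_update}: the rank-one update identity $\mU_t\mSigma_t\mV_t^\top = \frac{1}{b}\mU_{t-1}\mSigma_{t-1}\mV_{t-1}^\top + (\frac{1}{a} - \frac{1}{b})\vw\vw^\top \mU_{t-1}\mSigma_{t-1}\mV_{t-1}^\top$ is precisely $\hat{\mA}\mA_{t-1}$ with $\hat{\mA} = \frac{1}{b}\mI_d + (\frac{1}{a} - \frac{1}{b})\vw\vw^\top$, matching Line \ref{alg_line:update_mat} of \Cref{alg:full_update}. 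I do not anticipate any main obstacle beyond this bookkeeping; the only nontrivial external fact invoked is the $O(d^2\log d)$ cost of the rank-one SVD update from \cite{stange08}.
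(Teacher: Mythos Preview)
Your proposal is correct and follows essentially the same approach as the paper: the paper's argument (which is really just the paragraph immediately preceding the claim) notes that the algorithm performs a constant number of matrix-vector products, norm computations, and algebraic operations, plus one rank-one SVD update costing $O(d^2\log d)$ by \cite{stange08}, with $\gamma^\star_t$ found in effectively constant time per \Cref{remark:compute_update_t_eff}. Your write-up is simply a more detailed line-by-line version of this same accounting, together with the (correct) verification that the SVD-based update indeed realizes $\hat{\mA}\mA_{t-1}$.
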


\subsection{Fully-online asymmetric ellipsoidal rounding algorithm}

In this subsection, we prove Theorem \ref{thm:main_two}. See Algorithm \ref{alg:fully_online_approx}.

\begin{algorithm}%[H]
\caption{Fully online asymmetric ellipsoidal rounding}\label{alg:fully_online_approx}
\begin{algorithmic}[1]
\State \textbf{Input:} Stream of points $\vz_t$; 
 monotone update rule $\cA$ (Definition \ref{def:alg-invariant}) that takes as input the previous ellipsoid matrix $\mA$, center $\vc$, approximation factor $\alpha$, and update point $\vz$ and outputs the next ellipsoid matrix $\mA'$, center $\vc'$, and approximation factor $\alpha'$.
\State \textbf{Output:} Ellipsoid $\cE$, center $\vc$, and scale $\alpha \in (0,1)$ such that $\vc + \alpha \cdot \cE \subseteq \mathsf{conv}\inparen{\inbraces{\vz_1,\dots,\vz_n}} \subseteq \vc +\cE$.
\State Receive $\vz_1$; set $\mA = \mI_d$, $d_1 = 1$, $\vc_1 = \vz_1$, $\alpha_1 = 1$.
\For{$t = 2,\dots,n$}
    \State Receive $\vz_{t}$.
    \If{$\vz_{t} - \vc_{t-1} \notin \vspan{\vz_{1} - \vc_{t-1},\dots,\vz_{t-1}-\vc_{t-1}}$}\Comment{Irregular update step.}\label{line:irregular_step}
        \State Let $\vv_1,\dots,\vv_{d_{t-1}}$ be the singular vectors of $\mA$ corresponding to the semiaxes of $\cE_{t-1}$.
        \State Let $d_t = d_{t-1}+1$.
        \State Let $\vz_{d_{t}}' \coloneqq \frac{\vz_t - \sum_{i=1}^{d_{t-1}} \vv_i\ip{\vv_i,\vz_t}}{\norm{\vz_t - \sum_{i=1}^{d_{t-1}} \vv_i\ip{\vv_i,\vz_t}}_2}$.
        \State Let $\mM \coloneqq \mI_d - \frac{1}{\ip{\vv_{d_{t}}',\vz}}\cdot\inparen{\vz_t - \sqrt{1+2\alpha_{t-1}} \cdot \vv_{d_{t}}'}(\vv_{d_{t}}')^T$.
        \State Update $\mA_t \gets \mA_{t-1}\mM$.\Comment{Use \eqref{eq:stange_update} of \citet{stange08} to update $\vv_1,\dots,\vv_d$.}
        \State Update $\vc_t = \frac{\alpha_{t-1}}{1+2\alpha_{t-1}}\cdot\vz_t + \inparen{1-\frac{\alpha_{t-1}}{1+2\alpha_{t-1}}}\cdot\vc_{t-1}$.
        \State Update $\nfrac{1}{\alpha_t} \gets \nfrac{1}{\alpha_{t-1}}+1$.
    \Else  
        \State $\mA_t, \vc_t, \alpha_t = \cA(\mA_{t-1}, \vc_{t-1}, \alpha_{t-1}, \vz_{t})$
        \State $d_t \gets d_{t-1}$.
    \EndIf
\EndFor
\State \textbf{Output:} $(\vc_n, \cE_n, \alpha_n)$.
\end{algorithmic}
\end{algorithm}

% \begin{theorem}
% \label{thm:fully_online_approx}
% For all $t \le n$, let $r_t$ be the largest parameter such that there exists $\vc_0 \in \R^d$ for which
% \begin{align*}
%     \vc_0 + r_t \cdot \inparen{B_2^d \cap \vspan{\vz_1-\vc_0,\dots,\vz_t-\vc_0}} \subseteq \mathsf{conv}\inparen{\vz_1,\dots,\vz_t},
% \end{align*}
% let $R$ be the smallest parameter such that there exists $\vc_0 \in \R^d$ (not necessarily the same $\vc_0$ as before) for which
% \begin{align*}
%     \mathsf{conv}\inparen{\vz_1,\dots,\vz_n} \subseteq \vc_0 + R \cdot \inparen{B_2^d \cap \vspan{\vz_1-\vc_0,\dots,\vz_n-\vc_0}},
% \end{align*}
% and let
% \begin{align*}
%     d_t &= \mathsf{dim}\inparen{\vspan{\vz_1-\vc_t,\dots,\vz_t-\vc_t}}.
% \end{align*}
% For all $t \le n$, Algorithm \ref{alg:fully_online_approx} maintains an ellipsoid $\cE_t$ and a center $\vc_t$ for which
% \begin{align*}
%     \vc_t + \alpha_t \cdot \cE_t \subseteq \mathsf{conv}\inparen{\vz_1,\dots,\vz_t} \subseteq \vc_t + \cE_t
% \end{align*}
% where
% \begin{align*}
%     \frac{1}{\alpha_t} &= O\inparen{d_t\log d_t + d_t\logv{\max_{i \le t} \frac{R}{r_i}}}.
% \end{align*}
% \end{theorem}

To prove Theorem \ref{thm:main_two}, we need to show that our \textit{irregular update step} (a timestep $t$ when we have to update the dimensionality of our ellipsoid $\cE_{t-1}$ -- see Line \ref{line:irregular_step} of Algorithm \ref{alg:fully_online_approx}) still maintains the invariants we desire (Definition \ref{def:alg-invariant}).

Our plan is to first consider the special case of the irregular update where the new point to cover is conveniently located with respect to our current ellipsoids. We will see later that this special case is nearly enough for us to conclude the proof.

\begin{claim}
\label{claim:irregular_update_easy}
Let $Z \subset \R^{d}$ be a convex body where $Z$ lies in $\vspan{\vv_1,\dots,\vv_{d'}}$ for $d' < d$. For $0 < \alpha \le 1$, suppose we have
\begin{align*}
    \alpha \cdot \inbraces{\vz \in \vspan{\vv_1,\dots,\vv_{d'}} \suchthat \norm{\vz}_2 \le 1} \subseteq Z \subseteq \inbraces{\vz \in \vspan{\vv_1,\dots,\vv_{d'}} \suchthat \norm{\vz}_2 \le 1}.
\end{align*}
Then, for any $\vv_{d'+1}$ such that $\ip{\vv_i, \vv_{d'+1}} = 0$ for all $i \in [d']$ and for which
\begin{align*}
    \cE' &\coloneqq \inbraces{\vz \in \vspan{\vv_1,\dots,\vv_{d'+1}} \suchthat \norm{\vz}_2 \le \frac{1+\alpha}{\sqrt{1+2\alpha}}} \\
    \vc &\coloneqq \frac{\alpha}{\sqrt{1+2\alpha}} \cdot \vv_{d'+1}
\end{align*}
we have
\begin{align*}
    \vc + \frac{1}{1 + \nfrac{1}{\alpha}} \cdot \cE' \subseteq \mathsf{conv}\inparen{Z \cup \inbraces{\sqrt{1+2\alpha} \cdot \vv_{d'+1}}} \subseteq \vc + \cE'.
\end{align*}
\end{claim}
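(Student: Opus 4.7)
The key observation is that although $\cE'$ is introduced as a general ellipsoid, the definition in the claim makes it a Euclidean ball of radius $R \coloneqq (1+\alpha)/\sqrt{1+2\alpha}$ inside the $(d'+1)$-dimensional subspace $V \coloneqq \vspan{\vv_1,\dots,\vv_{d'+1}}$. Hence $\vc + \cE'$ is a ball centered at $\vc$ lying in $V$, and the inner body $\vc + \frac{1}{1+1/\alpha}\cE' = \vc + \frac{\alpha}{\alpha+1}\cE'$ is a ball of radius $r \coloneqq \alpha/\sqrt{1+2\alpha}$ centered at $\vc$ in $V$. I plan to prove the two set inclusions separately by working in coordinates aligned with $\vv_1,\dots,\vv_{d'+1}$, writing each $\vx \in V$ as $(\vx^{\parallel}, h)$ where $\vx^{\parallel} \in \vspan{\vv_1,\dots,\vv_{d'}}$ and $h = \ip{\vx,\vv_{d'+1}}$.

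For the outer containment $\vc + \cE' \supseteq \mathsf{conv}(Z \cup \{\sqrt{1+2\alpha}\,\vv_{d'+1}\})$: since $\vc + \cE'$ is convex it suffices to check the two extreme sets. The new point $\sqrt{1+2\alpha}\,\vv_{d'+1}$ sits at signed distance $\sqrt{1+2\alpha} - \alpha/\sqrt{1+2\alpha} = (1+\alpha)/\sqrt{1+2\alpha} = R$ from $\vc$ along $\vv_{d'+1}$, placing it on the boundary. For any $\vz \in Z$, Pythagoras and $\norm{\vz} \le 1$ give $\norm{\vz - \vc}^2 = \norm{\vz}^2 + \alpha^2/(1+2\alpha) \le 1 + \alpha^2/(1+2\alpha) = R^2$, so $\vz \in \vc + \cE'$.

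For the inner containment, I use that $\alpha \cdot (B_2^d \cap \vspan{\vv_1,\dots,\vv_{d'}}) \subseteq Z$, which reduces the task to showing the inner ball is contained in the cone $K \coloneqq \mathsf{conv}(\alpha B \cup \{\sqrt{1+2\alpha}\,\vv_{d'+1}\})$, where $B$ is the unit ball of $\vspan{\vv_1,\dots,\vv_{d'}}$. Both the inner ball and $K$ are bodies of revolution about the $\vv_{d'+1}$-axis (the inner ball because its center lies on this axis), so I can compare their cross-sections at each height $h$. The inner ball is nonempty for $h \in [0, 2r]$ with cross-sectional radius $\sqrt{r^2 - (h-r)^2}$, and at height $h \in [0,\sqrt{1+2\alpha}]$ the cone $K$ has cross-sectional radius $\alpha\bigl(1 - h/\sqrt{1+2\alpha}\bigr)$. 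I will check that $2r \le \sqrt{1+2\alpha}$ (equivalently $2\alpha \le 1+2\alpha$), then square both cross-sectional radii and simplify. The inequality I need reduces, after substituting $r^2 = \alpha^2/(1+2\alpha)$ and collecting terms in $h$, to
\[
\Bigl(h - \tfrac{\alpha\sqrt{1+2\alpha}}{1+\alpha}\Bigr)^{2} \ge 0,
\]
which is automatic; this is the algebraic heart of the argument and the only step requiring real computation.

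The proof itself is essentially just two direct calculations, so the only real obstacle is algebraic bookkeeping. The fact that the inner containment collapses to a perfect square (rather than a more delicate inequality) explains why the particular choice of center $\vc = \frac{\alpha}{\sqrt{1+2\alpha}}\vv_{d'+1}$ and radius $R = \frac{1+\alpha}{\sqrt{1+2\alpha}}$ is exactly right — it is the unique choice (up to scaling) that makes the inner ball tangent to the bottom hyperplane and simultaneously tangent to the slant side of the cone $K$, so any other choice of either parameter would either break outer containment or fail the inner containment.
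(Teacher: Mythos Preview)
Your proof is correct and follows essentially the same approach as the paper. Both proofs handle the outer containment by the identical Pythagorean check, reduce the inner containment to showing the ball of radius $\alpha/\sqrt{1+2\alpha}$ lies in the cone $\conv{\alpha B \cup \{\sqrt{1+2\alpha}\,\vv_{d'+1}\}}$, and exploit rotational symmetry about the $\vv_{d'+1}$-axis to make this a two-dimensional check; the only difference is that the paper finishes by citing the triangle inradius formula $K/s$ for the triangle with vertices $(\pm\alpha,0),(0,\sqrt{1+2\alpha})$, whereas you verify the same tangency directly by comparing cross-sectional radii and collapsing the inequality to a perfect square.
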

\begin{proof}[Proof of Claim \ref{claim:irregular_update_easy}]
We will show that the pair of ellipsoids given below satisfy the conditions promised by the statement of Claim \ref{claim:irregular_update_easy}.
\begin{align}
    &\inbraces{\vz \in \vspan{\vv_1,\dots,\vv_{d'+1}} \suchthat \norm{\vz}_2 \le \frac{1+\alpha}{\sqrt{1+2\alpha}}} +  \frac{\alpha}{\sqrt{1+2\alpha}} \cdot \vv_{d'+1}\label{eq:irregular_outer} \\
    &\inbraces{\vz \in \vspan{\vv_1,\dots,\vv_{d'+1}} \suchthat \norm{\vz}_2 \le \frac{1+\alpha}{\sqrt{1+2\alpha}}} \cdot \frac{\alpha}{1+\alpha} +  \frac{\alpha}{\sqrt{1+2\alpha}} \cdot \vv_{d'+1}\label{eq:irregular_inner}
\end{align}
Clearly, the two ellipsoids given above are apart by a factor of $\nfrac{1+\alpha}{\alpha} = \nfrac{1}{\alpha}+1$, which means the approximation factor increases by exactly $1$ as a result of this update. It now suffices to show that the ellipsoid described by (\ref{eq:irregular_outer}) contains $\mathsf{conv}\inparen{B_2^{d'} \cup \inbraces{\sqrt{1+2\alpha} \cdot \vv_{d'+1}}}$ and that the ellipsoid described by (\ref{eq:irregular_inner}) is contained by the cone whose base is $\alpha \cdot B_2^{d'}$ and whose apex is $\sqrt{1+2\alpha}\cdot\vv_{d'+1}$.

For the first part, it suffices to verify that every point $\vz \in Z$ and $\sqrt{1+2\alpha} \cdot \vv_{d'+1}$ is contained by (\ref{eq:irregular_outer}). We give both the calculations below, from which the result for (\ref{eq:irregular_outer}) follows.
\begin{align*}
    \vz \in Z :&\quad\quad \norm{\vz - \frac{\alpha}{\sqrt{1+2\alpha}} \cdot \vv_{d'+1}}_2 = \sqrt{\norm{\vz}_2^2 + \frac{\alpha^2}{1+2\alpha}} \le \frac{1+\alpha}{\sqrt{1+2\alpha}} \\
    \vz = \sqrt{1+2\alpha} \cdot \vv_{d'+1} :&\quad\quad \norm{\vz - \frac{\alpha}{\sqrt{1+2\alpha}} \cdot \vv_{d'+1}}_2 = \sqrt{1+2\alpha} - \frac{\alpha}{\sqrt{1+2\alpha}} = \frac{1+\alpha}{\sqrt{1+2\alpha}}
\end{align*}

We now analyze (\ref{eq:irregular_inner}). Our task is to show the below inclusion.
\begin{align*}
    &\inbraces{\vz \in \vspan{\vv_1,\dots,\vv_{d'+1}} \suchthat \norm{\vz - \frac{\alpha}{\sqrt{1+2\alpha}}\cdot \vv_{d'+1}}_2 \le \frac{\alpha}{\sqrt{1+2\alpha}}}\\
    \subseteq &\mathsf{conv}\inparen{\alpha \cdot \inbraces{\vz \in \vspan{\vv_1,\dots,\vv_d} \suchthat \norm{\vz}_2 \le 1} \cup \inbraces{\sqrt{1+2\alpha} \cdot \vv_{d'+1}}}
\end{align*}
Let $\vw$ be an arbitrarily chosen unit vector in $\vspan{\vv_1,\dots,\vv_{d'}}$. Observe that it is enough to show
\begin{align*}
    \inbraces{\vz \in \vspan{\vw, \vv_{d'+1}} \suchthat \norm{\vz - \frac{\alpha}{\sqrt{1+2\alpha}} \cdot \vv_{d'+1}}_2 \le \frac{\alpha}{\sqrt{1+2\alpha}}} \subseteq \mathsf{conv}\inparen{\pm \alpha \cdot \vw, \sqrt{1+2\alpha} \cdot \vv_{d'+1}}.
\end{align*}
Since the above is a two-dimensional problem and that $\ip{\vw, \vv_{d'+1}} = 0$, it is equivalent to show that the inradius of the triangle with vertices $(-\alpha, 0)$, $(\alpha, 0)$, and $(0, \sqrt{1+2\alpha})$ is $\nfrac{\alpha}{\sqrt{1+2\alpha}}$ and that its incenter is $(0, \nfrac{\alpha}{\sqrt{1+2\alpha}})$.

Recall that the inradius of a triangle can be written as $\nfrac{K}{s}$ where $K$ is the area of the triangle (in this case, $\alpha\sqrt{1+2\alpha}$) and $s$ is the semiperimeter of the triangle (in this case, $1+2\alpha$). This implies that the inradius is indeed $\nfrac{\alpha}{\sqrt{1+2\alpha}}$. Finally, since the triangle in question is isosceles with its apex being the $y$-axis, the $x$-coordinate of its incenter must be $0$. These observations imply that the incenter is $(0, \nfrac{\alpha}{\sqrt{1+2\alpha}})$.

This is sufficient for us to conclude the proof of Claim \ref{claim:irregular_update_easy}.
\end{proof}

We will now see that the analysis for the convenient update that we gave in Claim \ref{claim:irregular_update_easy} is nearly enough for us to fully analyze the irregular update step. See Claim \ref{claim:irregular_update_hard}, where we analyze the irregular update step in full generality (up to translating by $\vc_{t-1}$).

\begin{claim}
\label{claim:irregular_update_hard}
Let $Z \subset \R^d$ be a convex body such that $Z$ lies in a subspace $H$ of dimension $d' < d$. Let $\cE$ be an ellipsoid and let $0 < \alpha \le 1$ be such that
\begin{align*}
    \alpha \cdot \cE \subseteq Z \subseteq \cE.
\end{align*}
Let $\vz \notin H$. Then, there exists a center $\vc$ and an ellipsoid $\cE'$ such that
\begin{align*}
    \vc + \frac{1}{1+\nfrac{1}{\alpha}} \cdot \cE' \subseteq \mathsf{conv}\inparen{Z \cup \inbraces{\vz}} \subseteq \vc + \cE'.
\end{align*}
\end{claim}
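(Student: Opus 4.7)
The plan is to reduce the claim to the special case already handled in \Cref{claim:irregular_update_easy} by applying an invertible linear transformation $T \colon \R^d \to \R^d$. Invertible affine maps send ellipsoids to ellipsoids, preserve convex hulls, and preserve the scaling factor $\alpha$ in relations of the form $\vc + \alpha \cdot \cE \subseteq K \subseteq \vc + \cE$, so it suffices to find a $T$ that transforms $(\cE, \vz)$ into a pair satisfying the hypotheses of \Cref{claim:irregular_update_easy}; once such a $T$ is in hand, pulling the conclusion of that claim back through $T^{-1}$ will produce the desired $\vc$ and $\cE'$.

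To build $T$, first decompose $\vz = \vz^{\parallel} + \vz^{\perp}$ with $\vz^{\parallel} \in H$ and $\vz^{\perp} \in H^{\perp}$, noting that $\vz^{\perp} \neq 0$ since $\vz \notin H$. Next, choose an orthonormal basis $\vu_1, \dots, \vu_{d'}$ of $H$ that diagonalizes $\cE$, giving semiaxes $s_1, \dots, s_{d'}$, and set $\vu_{d'+1} = \vz^{\perp} / \norm{\vz^{\perp}}$, extending to an orthonormal basis of $H^{\perp}$. Define $T$ on this basis by $T(\vu_i) = \vu_i / s_i$ for $i \le d'$, $T(\vu_i) = \vu_i$ for $i > d'+1$, and
\[
T(\vu_{d'+1}) = -\frac{T(\vz^{\parallel})}{\norm{\vz^{\perp}}} + \frac{\sqrt{1+2\alpha}}{\norm{\vz^{\perp}}} \cdot \vu_{d'+1}.
\]
A short computation confirms $T(\cE) = B_2^d \cap H$ and $T(\vz) = \sqrt{1+2\alpha} \cdot \vu_{d'+1}$, while $T$ is invertible because its matrix in the basis $\vu_1, \dots, \vu_d$ is block triangular with nonzero diagonal.

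With $T$ in hand, the hypothesis $\alpha \cE \subseteq Z \subseteq \cE$ transforms to $\alpha \cdot \inparen{B_2^d \cap H} \subseteq T(Z) \subseteq B_2^d \cap H$, and together with $T(\vz) = \sqrt{1+2\alpha} \cdot \vu_{d'+1}$ (with $\vu_{d'+1} \perp H$) we land exactly in the setting of \Cref{claim:irregular_update_easy}. That claim then produces a center $\vc^{\star}$ and ellipsoid $\cE^{\star}$ satisfying $\vc^{\star} + \frac{1}{1 + 1/\alpha} \cdot \cE^{\star} \subseteq \mathsf{conv}\inparen{T(Z) \cup \inbraces{T(\vz)}} \subseteq \vc^{\star} + \cE^{\star}$. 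Setting $\vc = T^{-1}(\vc^{\star})$ and $\cE' = T^{-1}(\cE^{\star})$ and applying $T^{-1}$ to all three sets pulls the inclusions back to exactly the form required by the claim.

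The main obstacle I expect is guessing the right $T$. The key insight is that the requirement $T(\cE) = B_2^d \cap H$ does not constrain the action of $T$ along the direction $\vu_{d'+1} \in H^{\perp}$, and this single free direction carries just enough freedom to simultaneously absorb the $\vz^{\parallel}$ component of $\vz$ (so that $T(\vz)$ has no $H$-component) and to scale the result to exactly the norm $\sqrt{1+2\alpha}$ required by \Cref{claim:irregular_update_easy}.
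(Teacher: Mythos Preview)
Your proposal is correct and follows essentially the same approach as the paper: both reduce to \Cref{claim:irregular_update_easy} by constructing an invertible linear map that sends $\cE$ to the unit ball in $H$ and $\vz$ to $\sqrt{1+2\alpha}\,\vu_{d'+1}$, then pull back. The only cosmetic difference is that the paper factors the map as $\mA_{t-1}\mM$ with $\mM$ an explicit rank-one perturbation of the identity (which plugs directly into the algorithm's matrix representation), whereas you define $T$ in one shot via its action on a basis; the geometry and logic are identical.
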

\begin{proof}[Proof of Claim \ref{claim:irregular_update_hard}]
Recall that $\vv_1,\dots,\vv_{d'} \in \R^{d}$ are the unit vectors corresponding to the semiaxes of $\cE$; notice that these form a basis for $H$. Observe that $\vv_{d'+1}$ is a unit vector orthogonal to $\vv_1,\dots,\vv_{d'}$ such that $\vz$ can be expressed as $\sum_{i=1}^{d'+1} \vv_{i}\ip{\vv_i, \vz}$.

As stated in Algorithm \ref{alg:fully_online_approx}, let
\begin{align*}
    \mM \coloneqq \mI_d - \frac{1}{\ip{\vv_{d_t}',\vz_t}}\cdot\inparen{\vz_t - \sqrt{1+2\alpha} \cdot \vv_{d_t}'}(\vv_{d_t}')^T.
\end{align*}
We calculate
\begin{align*}
    \mM\vz_t = \vz_t - \frac{1}{\ip{\vv_{d_t}',\vz_t}}\cdot\inparen{\vz_t - \sqrt{1+2\alpha} \cdot \vv_{d_t}'}(\vv_{d_t}')^T\vz_t = \vz_t - \vz_t + \sqrt{1+2\alpha} \cdot \vv_{d_t}'  = \sqrt{1+2\alpha} \cdot \vv_{d_t}'.
\end{align*}
By the definition of $\mA_{t-1}$, we have
\begin{align*}
    \mA_{t-1}\mM\vz_t = \sqrt{1+2\alpha} \cdot \mA_{t-1}\vv_{d_t}' = \sqrt{1+2\alpha} \cdot \vv_{d_t}'.
\end{align*}
Next, for any $\vz \in Z$, we have $\vz \in H_{t-1}$. This means that
\begin{align*}
    \mM\vz = \vz - \frac{1}{\ip{\vv_{d_t}',\vz}}\cdot\inparen{\vz - \sqrt{1+2\alpha} \cdot \vv_{d_t}'}(\vv_{d_t}')^T\vz = \vz - 0 = \vz.
\end{align*}
By Claim \ref{claim:irregular_update_easy}, we know for
\begin{align*}
    \mA_{t-1}\mM\vc_t &= \frac{\alpha}{\sqrt{1+2\alpha}}\cdot\vv_{d_t}'\\
    \mA_{t-1}\mM\cE_t &= \inbraces{\vz \in \vspan{\vv_1,\dots,\vv_{d_{t-1}},\vv_{d_t}'} \suchthat \norm{\vz}_2 \le 1}\\
\end{align*}
that
\begin{align*}
    \mA_{t-1}\mM\vc_t + \frac{1}{1+\nfrac{1}{\alpha_{t-1}}} \cdot \mA_{t-1}\mM\cE_t \subseteq \mathsf{conv}\inparen{\mA_{t-1}\mM\cdot Z \cup \inbraces{\mA_{t-1}\mM\vz_t}} \subseteq \mA_{t-1}\mM\vc_t + \mA_{t-1}\mM\cE_t
\end{align*}
and, since $\mA_{t-1}\mM$ is invertible (owing to the invertibility of $\mA_{t-1}$ and $\mM$),
\begin{align*}
    \vc_t + \frac{1}{1+\nfrac{1}{\alpha_{t-1}}} \cdot \cE_t \subseteq \mathsf{conv}\inparen{Z \cup \inbraces{\vz_t}} \subseteq \vc_t + \cE_t.
\end{align*}
Finally, note that
\begin{align*}
    \vc_t &= \frac{\alpha}{\sqrt{1+2\alpha}} \cdot \mM^{-1}\mA_{t-1}^{-1}\vv_{d_t}' = \frac{\alpha}{1+2\alpha} \cdot \vz_t \\
    \cE_t &= \inbraces{\vz \in \vspan{\vz_1,\dots,\vz_t} \suchthat \norm{\mA_{t-1}\mM\vz}_2 \le 1}
\end{align*}
and then translate by $\vc_{t-1}$, which concludes the proof of Claim \ref{claim:irregular_update_hard}.
\end{proof}

We are now ready to prove \Cref{thm:main_two}.

\begin{proof}[Proof of \Cref{thm:main_two}]
Using \Cref{claim:irregular_update_hard}, we have that the ellipsoids maintain our desired invariants (Definition \ref{def:alg-invariant}) throughout the process. Hence, \Cref{alg:fully_online_approx} maintains an ellipsoidal approximation to $\conv{\inbraces{\vz_1,\dots,\vz_t}}$ for all $t$.

It remains to verify the approximation factor $\alpha_t$ of Algorithm \ref{alg:fully_online_approx}.

Consider a timestep $t$. For every $t' \leq t$, let $H_{t'} = \vspan{\vz_1,\dots,\vz_{t'}}$,  $r_{t'} = r(Z_{t'})$ be the inradius of $Z_{t'} = \conv{\vz_1,\dots, \vz_{t'}}$, and $R_{t'} = R(Z_{t'})$ be the circumradius of $Z_t$. 
Let ${\hat r} = \min_{t'\leq t} r_{t'}$. Consider the $d$-dimensional ellipsoid $T(\cE_{t'})$ which is exactly equal to $\cE_{t'}$ in the space $H_{t'}$ and whose remaining semiaxes orthogonal to $H_{t'}$ are equal and have length $\hat r$. Observe that for a regular update step $t'$ (with $d_{t'} = d_{t'-1}$), we have
$$\frac{\mathsf{vol}_{d_{t'}}(\cE_{t'})}{\mathsf{vol}_{d_t}(\cE_{t'-1})} = \frac{\mathsf{vol}_d(T(\cE_{t'}))}{\mathsf{vol}_d(T(\cE_{t'-1}))}.$$
Now applying the evolution condition (\ref{eq:overview_evolution}) to the update restricted to $H_{t'}$, we get 
$$
\frac{1}{\alpha_{t'}} - \frac{1}{\alpha_{t'-1}} \leq C\log \frac{\mathsf{vol}_{d_{t'}}(\cE_t)}{\mathsf{vol}_{d_{t'}}(\cE_{t'-1})} = C\log \frac{\mathsf{vol}_d(T(\cE_{t'}))}{\mathsf{vol}_d(T(\cE_{t'-1}))}.
$$
We have obtained the following upper bound on the approximation-factor increase:
\begin{align}
    \frac{1}{\alpha_{t'}} - \frac{1}{\alpha_{t'-1}} \leq \begin{cases} 1 & \text{if $t'$ is an irregular update step} \\ C\logv{\frac{\mathsf{vol}_{d}(T(\cE_{t'}))}{\mathsf{vol}_{d}(T(\cE_{t'-1}))}} & \text{otherwise}\end{cases}\label{eq:fully_online_invariant}
\end{align}
Let $T_{\mathsf{reg}}$ consist of all the timesteps $t' \le t$ where we perform a regular update. Then we have,
\[
    \frac{1}{\alpha_t} - \frac{1}{\alpha_0}= \alpha_0 + \sum_{t'=1}^t  \left(\frac{1}{\alpha_{t'}} - \frac{1}{\alpha_{t'-1}} \right)\leq d_t + C\sum_{t'\in T_{\mathsf{reg}}} \logv{\frac{\mathsf{vol}_{d}(T(\cE_{t'}))}{\mathsf{vol}_{d}(T(\cE_{t'-1}))}}.
\]
Now we show that $\logv{\frac{\mathsf{vol}_{d}(T(\cE_{t'}))}{\mathsf{vol}_{d_t}(T(\cE_{t'-1}))}} \geq 0$ for an irregular step: let $\sigma_1\geq\dots\geq\sigma_d$ and $\sigma_1'\geq \dots\geq\sigma_d'$ be the lengths of semi-axes of $T(\cE_{t'})$ and $T(\cE_{t'-1})$, respectively. Then $\sigma_i \geq \sigma'_i$ for $1\leq i \leq d_{t'}-1$, since $\cE_{t'-1} \subset \cE_{t'}$; $\sigma_{d_{t'}} \geq r_{t'} \geq \hat r = \sigma'_{d_{t'}}$; and $\sigma_{i} = \hat r = \sigma'_{i}$ for $i > d_{t'}$. Therefore,
\[\logv{\frac{\mathsf{vol}_{d}(T(\cE_{t'}))}{\mathsf{vol}_{d_t}(T(\cE_{t'-1}))}}=
\logv{\frac{\sigma_1\cdot \ldots \cdot \sigma_d}{\sigma_1'\cdot \ldots \cdot \sigma_d'}} \geq \log 1 = 0.
\]
Using this inequality and plugging in $\alpha_0 = 1$, we get
\begin{align*}
    \frac{1}{\alpha_t} &= 1 + d_t + C\sum_{t'\in T_{\mathsf{reg}}} \logv{\frac{\mathsf{vol}_{d}(T(\cE_{t'}))}{\mathsf{vol}_{d}(T(\cE_{t'-1}))}}
    \leq 1+ d_t + C\sum_{t'=1}^t \logv{\frac{\mathsf{vol}_{d}(T(\cE_{t'}))}{\mathsf{vol}_{d}(T(\cE_{t'-1}))}} \\
    &\lesssim d_t + \logv{\frac{\mathsf{vol}_{d}(T(\cE_{t}))}{\mathsf{vol}_{d}(T(\cE_{0}))}} 
    \lesssim d_t + \logv{\frac{(R_t/\alpha_t)^{d_t}{\hat r}^{d-d_t}}{{\hat r}^d}} \lesssim d_t +  d_t \logv{\frac{R_t}{\alpha_t\hat r}}.
\end{align*}
We conclude that
\begin{align*}
    \frac{1}{\alpha_t} \lesssim d_t + d_t \logv{\frac{R_t}{\hat r}} + d_t\log d_t.
\end{align*}
This concludes the proof of \Cref{thm:main_two}.
\end{proof}

\subsection{Aspect ratio-independent bounds and proof of \texorpdfstring{Theorem~\ref{thm:main_two_ints}}{Theorem 3}}

To prove \Cref{thm:main_two_ints}, we first establish \Cref{claim:irregular_volume_increase}.

\begin{claim}
\label{claim:irregular_volume_increase}
Let $t$ be an iteration corresponding to an irregular update step in \Cref{alg:fully_online_approx}. Then,
\begin{align*}
    \frac{\vol_{d_{t-1}}\inparen{B_2^{d_{t-1}}}}{\vol_{d_{t}}\inparen{B_2^{d_{t}}}} \cdot\frac{\vol_{d_t}\inparen{\cE_t}}{\vol_{d_t-1}\inparen{\cE_{t-1}}} \ge \frac{\norm{\vz_{t}^{\perp}}_2}{2}
\end{align*}
where $\norm{\vz_{t}^{\perp}}_2$ is the length of the component of $\vz_t$ in the orthogonal complement of $\vspan{\vz_1,\dots,\vz_{t-1}}$.
\end{claim}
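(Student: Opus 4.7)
The plan is to compute $\vol_{d_{t-1}}(\cE_{t-1})$ and $\vol_{d_t}(\cE_t)$ explicitly via the affine map $\mA_{t-1}\mM$ that underlies the irregular update, and then apply change of variables. Set $H_t = \vspan{\vz_1,\dots,\vz_t}$. By \Cref{claim:irregular_update_easy}, applied via the reduction in the proof of \Cref{claim:irregular_update_hard}, this map carries $\cE_{t-1}$ to the unit ball in a $d_{t-1}$-dimensional subspace and carries $\cE_t$ to a $d_t$-dimensional ball of radius $(1+\alpha_{t-1})/\sqrt{1+2\alpha_{t-1}}$. Change of variables then gives
\begin{align*}
\vol_{d_{t-1}}(\cE_{t-1}) &= \frac{\vol_{d_{t-1}}(B_2^{d_{t-1}})}{|\det(\mA_{t-1}|_{H_{t-1}})|}, \\
\vol_{d_t}(\cE_t) &= \left(\frac{1+\alpha_{t-1}}{\sqrt{1+2\alpha_{t-1}}}\right)^{d_t}\cdot\frac{\vol_{d_t}(B_2^{d_t})}{|\det(\mA_{t-1}\mM|_{H_t})|}.
\end{align*}

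The main geometric step is to evaluate $|\det(\mA_{t-1}\mM|_{H_t})|$. I will work in the orthonormal basis $\vv_1,\dots,\vv_{d_{t-1}},\vv_{d_t}'$ of $H_t$, where the $\vv_i$ are the semi-axis directions of $\cE_{t-1}$ and $\vv_{d_t}' = \vz_t^\perp/\|\vz_t^\perp\|_2$. Since $\mM$ fixes $H_{t-1}$ pointwise, its matrix in this basis is block upper-triangular with $\mI_{d_{t-1}}$ in the top-left block and $\sqrt{1+2\alpha_{t-1}}/\|\vz_t^\perp\|_2$ in the bottom-right, so $|\det(\mM|_{H_t})| = \sqrt{1+2\alpha_{t-1}}/\|\vz_t^\perp\|_2$. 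I will then take the worst-case representation $\mA_{t-1}\vv_{d_t}' = \vv_{d_t}'$ (any scaling $\lambda<1$ on $\vv_{d_t}'$ only enlarges $\vol_{d_t}(\cE_t)$, and induction on the algorithm's steps verifies $\lambda\leq 1$), giving $|\det(\mA_{t-1}|_{H_t})| = |\det(\mA_{t-1}|_{H_{t-1}})|$ and therefore $|\det(\mA_{t-1}\mM|_{H_t})| = |\det(\mA_{t-1}|_{H_{t-1}})|\cdot\sqrt{1+2\alpha_{t-1}}/\|\vz_t^\perp\|_2$. Substituting into the volume formulas above collapses the left-hand side of the claim to
\begin{align*}
\frac{(1+\alpha_{t-1})^{d_t}}{(1+2\alpha_{t-1})^{(d_t+1)/2}}\cdot\|\vz_t^\perp\|_2.
\end{align*}

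It remains to lower-bound the prefactor by $1/2$. Using $(1+\alpha)^2 \geq 1+2\alpha$ gives $(1+\alpha_{t-1})^{d_t}\geq(1+2\alpha_{t-1})^{d_t/2}$, so the prefactor is at least $1/\sqrt{1+2\alpha_{t-1}}$. Since $\alpha_{t-1}\leq 1$ is maintained throughout \Cref{alg:fully_online_approx} (initialized at $1$ and non-increasing across updates), $\sqrt{1+2\alpha_{t-1}}\leq\sqrt{3}<2$, yielding the desired bound of $\|\vz_t^\perp\|_2/2$. The main technical care lies in the block-triangular computation for $\mM|_{H_t}$ and in verifying that the canonical representation of $\mA_{t-1}|_{\lspan{\vv_{d_t}'}}$ is indeed worst case; both are straightforward once the correct basis and inductive invariant on the action of $\mA_{t-1}$ along $H_{t-1}^\perp$ are identified.
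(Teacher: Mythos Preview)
Your final expression and the concluding inequality are correct, but the route you take introduces an unnecessary complication that you do not fully justify, and the paper's proof avoids it entirely.

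The problematic step is ``take the worst-case representation $\mA_{t-1}\vv_{d_t}' = \vv_{d_t}'$.'' The matrix $\mA_{t-1}$ is fixed by the algorithm; you cannot choose how it acts on $\vv_{d_t}'$. Your parenthetical argument presupposes that $\vv_{d_t}'$ is an \emph{eigenvector} of $\mA_{t-1}$ with eigenvalue $\lambda$, which is not established (the regular updates left-multiply by $\hat{\mA}$, whose eigenbasis has no reason to align with $L_{t-1}^\perp$). Even granting an inductive bound $\|\mA_{t-1}|_{L_{t-1}^\perp}\|_{\mathrm{op}}\le 1$, that yields only the inequality $|\det(\mA_{t-1}|_{H_t})| \le |\det(\mA_{t-1}|_{H_{t-1}})|$, not the equality you write. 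More importantly, your volume formula for $\cE_t$ already assumes that $\mA_{t-1}\mM$ lands \emph{exactly} in the configuration of \Cref{claim:irregular_update_easy} (i.e.\ that $\mA_{t-1}\vv_{d_t}'$ is a unit vector orthogonal to $\mA_{t-1}(L_{t-1})$); otherwise $\mA_{t-1}\mM(\cE_t)$ is not the ball of radius $(1+\alpha_{t-1})/\sqrt{1+2\alpha_{t-1}}$, and the change-of-variables formula you wrote is not valid as stated.

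The paper's proof sidesteps all of this with a single observation: the ratio
\[
\frac{\vol_{d_{t-1}}(B_2^{d_{t-1}})}{\vol_{d_t}(B_2^{d_t})}\cdot\frac{\vol_{d_t}(\cE_t)}{\vol_{d_{t-1}}(\cE_{t-1})}\cdot\frac{1}{\|\vz_t^\perp\|_2}
\]
is invariant under affine maps. (Geometrically, for any linear $T$ one has $|\det(T|_{L_t})|/|\det(T|_{L_{t-1}})| = \|T(\vz_t)^{\perp T(L_{t-1})}\|/\|\vz_t^{\perp L_{t-1}}\|$; this is the base-times-height formula for $d_t$-volumes.) Hence one may pass directly to the special case $\cE_{t-1}=B_2^{d_{t-1}}$, $\vz_t=\sqrt{1+2\alpha}\,\ve_{d_t}$, where the left side equals $\bigl((1+\alpha)/\sqrt{1+2\alpha}\bigr)^{d_t}\ge 1$ and the right side equals $\sqrt{1+2\alpha}/2<1$. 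No analysis of $\mA_{t-1}$ is required at all. If you want to salvage your explicit route, replace the ``worst-case'' step by this affine-invariance identity; then your determinant ratio $|\det(\mA_{t-1}|_{H_{t-1}})|/|\det(\mA_{t-1}|_{H_t})|$ cancels against the discrepancy in $\|(\mA_{t-1}\mM\vz_t)^\perp\|$ automatically, and you recover the same one-line bound.
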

\begin{proof}[Proof of \Cref{claim:irregular_volume_increase}]
By affine invariance, we can apply an affine transformation to map $\vz_t$ and $\cE_{t-1}$ to a convenient position. Hence, following the proof of \Cref{claim:irregular_update_easy}, without loss of generality, suppose we have $\cE_{t-1} = B_{2}^{d_{t-1}}$ and $\vz_t = \sqrt{1+2\alpha} \cdot \ve_{d_t}$. By \Cref{claim:irregular_update_easy}, the ellipsoid $\cE_t$ is a ball of radius $\nfrac{(1+\alpha)}{\sqrt{1+2\alpha}}$. Let $z \coloneqq \sqrt{1+2\alpha}$.
We now have
\begin{align*}
    \frac{\vol_{d_{t-1}}\inparen{B_2^{d_{t-1}}}}{\vol_{d_{t}}\inparen{B_2^{d_{t}}}} \cdot \frac{\vol_{d_t}\inparen{\cE_t}}{\vol_{d_{t-1}}\inparen{\cE_{t-1}}} = \inparen{\frac{1+\alpha}{\sqrt{1+2\alpha}}}^{d_t} \geq 1 > \frac{\|\vz_t\|_2}{2}
\end{align*}
since $\|\vz_t\|_2 =\sqrt{1+2\alpha} \leq \sqrt{3} < 2$. This concludes the Proof of \Cref{claim:irregular_volume_increase}.
\end{proof}

We will also need \Cref{claim:det_prod_identity}, which we take from \citet{gk10}.

\begin{claim}
\label{claim:det_prod_identity}
Let $\mM \in \R^{r \times d}$ have linearly independent rows $\vm_1,\dots,\vm_r$. Then,
\begin{align*}
    \prod_{i=1}^r \norm{\vm_i}_2 = \sqrt{\detv{\mM\mM^T}}.
\end{align*}
\end{claim}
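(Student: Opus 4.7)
My plan is to prove the identity via the QR decomposition of $\mM^\top$. Since the rows $\vm_1, \ldots, \vm_r$ are linearly independent, so are the columns of $\mM^\top$, and Gram--Schmidt orthogonalization applied to these columns yields a factorization $\mM^\top = \mQ \mR$, where $\mQ \in \R^{d \times r}$ has orthonormal columns (so $\mQ^\top \mQ = \mI_r$) and $\mR \in \R^{r \times r}$ is upper triangular with strictly positive diagonal entries $r_{11}, \ldots, r_{rr}$ (positivity follows from the $\vm_i$ being linearly independent).

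From $\mQ^\top \mQ = \mI_r$ I would then compute
\[
\mM \mM^\top = \mR^\top \mQ^\top \mQ \mR = \mR^\top \mR.
\]
Taking determinants and using the triangular structure of $\mR$ gives $\det(\mM \mM^\top) = (\det \mR)^2 = \prod_{i=1}^r r_{ii}^2$. Taking positive square roots yields $\sqrt{\det(\mM \mM^\top)} = \prod_{i=1}^r r_{ii}$, reducing the claim to identifying the diagonal entries $r_{ii}$ with $\|\vm_i\|_2$.

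By the Gram--Schmidt construction, $r_{ii}$ equals the norm of the component of $\vm_i$ orthogonal to $\vspan(\vm_1, \ldots, \vm_{i-1})$. In the setting where \Cref{claim:det_prod_identity} is applied (tracking the volume contributions of the irregular update steps of \Cref{alg:fully_online_approx}), the relevant vectors $\vm_i$ are the newly introduced orthogonal directions produced at each irregular step — concretely, unit normalizations of the components $\vz_t - \sum_{j < t} \vv_j \ip{\vv_j, \vz_t}$ that are perpendicular to $\vspan(\vz_1 - \vc_{t-1}, \ldots, \vz_{t-1} - \vc_{t-1})$ by construction. Consequently $\vm_i \perp \vspan(\vm_1, \ldots, \vm_{i-1})$, the Gram--Schmidt residual of $\vm_i$ is just $\vm_i$ itself, and $r_{ii} = \|\vm_i\|_2$.

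The main obstacle is really only the last identification step: confirming that the invocation context guarantees pairwise orthogonality of the $\vm_i$, since without it Hadamard's inequality would give only the inequality $\prod_i \|\vm_i\|_2 \geq \sqrt{\det(\mM \mM^\top)}$ rather than equality. Once this orthogonality is verified from the algorithmic construction above, the QR computation delivers the stated identity with no further work.
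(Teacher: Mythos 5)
You are right that the claim as printed is false: taking $\vm_1 = (1,0)$ and $\vm_2 = (1,1)$ gives $\prod_i \norm{\vm_i}_2 = \sqrt{2}$ while $\sqrt{\detv{\mM\mM^T}} = 1$. The paper gives no internal proof --- it cites \cite{gk10} --- but the way the claim is applied in the proof of \Cref{thm:main_two_ints} makes clear that what is meant is the Gram--Schmidt (parallelepiped volume) identity $\prod_{i=1}^r \norm{\vm_i^\perp}_2 = \sqrt{\detv{\mM\mM^T}}$, where $\vm_i^\perp$ is the component of $\vm_i$ orthogonal to $\lspan{\vm_1,\ldots,\vm_{i-1}}$. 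Your QR computation proves exactly this corrected identity: from $\mM^\top = \mQ\mR$ with $\mQ^\top\mQ = \mI_r$ one gets $\detv{\mM\mM^\top} = \detv{\mR^\top\mR} = \prod_{i=1}^r r_{ii}^2$, and $r_{ii}$ is by construction the length of the Gram--Schmidt residual $\vm_i^\perp$. That is all the paper needs: in its application $\mM = \mZ\vert_S$ has as rows the un-normalized integer points $\vz_t$ for $t\in S$, and because $\vz_1 = 0$ after the WLOG translation and the dimension of $\lspan{\vz_1,\ldots,\vz_{t-1}}$ grows only at the irregular steps, the quantity $\norm{\vz_t^\perp}_2$ on the left-hand side of the paper's display is precisely the corresponding $r_{ii}$ for $\mZ\vert_S$.

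Your final paragraph, however, goes astray. You attempt to salvage the claim \emph{as stated} by identifying the rows $\vm_i$ with the unit vectors $\vz_{d_t}'$ produced in the irregular update step of \Cref{alg:fully_online_approx}. That cannot be the intended reading: those vectors are normalized, so $\prod_i \norm{\vm_i}_2 = 1$ and $\detv{\mM\mM^T} = 1$, rendering the identity vacuous --- whereas the determinant actually used is $\detv{\mZ\vert_S\mZ\vert_S^T}$ with the raw integer rows (it is their integrality, not any orthogonality, that later yields $\detv{\mZ\vert_S\mZ\vert_S^T}\ge 1$). The right fix is to amend the statement of the claim rather than its inputs: carry your QR argument through to $\sqrt{\detv{\mM\mM^T}} = \prod_i r_{ii} = \prod_i \norm{\vm_i^\perp}_2$ and stop there, without attempting to force $r_{ii} = \norm{\vm_i}_2$.
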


We are now ready to prove \Cref{thm:main_two_ints}.

\begin{proof}[Proof of \Cref{thm:main_two_ints}]
Our approach is reminiscent of that used in the proof of Theorem 1.5 in \citet{woodruff2022high}.

By applying a translation to all points, we may assume without loss of generality that $\vz_1 = 0$. We will prove the guarantee for the last timestamp $t=n$ to simplify the notation. By replacing $n$ with $n'$, we can get a proof for any time stamp $t=n'$.

Let $S$ be the set of timestamps of irregular update steps excluding the first step. Since the update rule satisfies the evolution condition \eqref{eq:overview_evolution}, we have for all $t \notin S$ (recall that $d_t=d_{t-1}$ for $t\notin S$)
\begin{align*}
    \frac{\vol_{d_t}\inparen{\cE_{t}}}{\vol_{d_{t-1}}\inparen{\cE_{t-1}}} \ge \expv{\frac{1}{\alpha_t}-\frac{1}{\alpha_{t-1}}}.
\end{align*}
Next, by \Cref{claim:irregular_volume_increase}, we have for every irregular update step $t> 1$ 
%for which $d_t \ge 2$ that
\begin{align*}
    \frac{\vol_{d_{t-1}}\inparen{B_2^{d_{t-1}}}}{\vol_{d_{t}}\inparen{B_2^{d_{t}}}} \cdot\frac{\vol_{d_t}\inparen{\cE_t}}{\vol_{d_{t-1}}\inparen{\cE_{t-1}}} \ge \frac{\norm{\vz_{t}^{\perp}}_2}{2}.
\end{align*}
Here, we assume that $\vol_0(\{0\}) = 1$ and define $\norm{\vz_2^{\perp}}_2 = \norm{\vz_2}_2$.
Inductively combining the above for all $t > 1$ gives
\begin{align} % change the range of the products
    \vol_{d_n}\inparen{\cE_n} &\ge \prod_{t \notin S} \expv{\frac{1}{\alpha_t}-\frac{1}{\alpha_{t-1}}} \cdot \prod_{t \in S} \frac{\norm{\vz_t^{\perp}}_2}{2}\cdot \prod_{j=1}^{d_n} \frac{\vol_{j}\inparen{B_2^{j}}}{\vol_{j-1}\inparen{B_2^{j-1}}}\nonumber \\
    &= \prod_{t \notin S} \expv{\frac{1}{\alpha_t}-\frac{1}{\alpha_{t-1}}} \cdot \prod_{t \in S} \frac{\norm{\vz_t^{\perp}}_2}{2}\cdot \vol_{d_n}\inparen{B_2^{d_n}}\label{eq:int_potential}
\end{align}
Here we used that $\vol_0(\cE_{0}) = \vol_0(B_2^0) = 1$. Now invoking~\Cref{claim:det_prod_identity}, we get
\begin{align*}
    \prod_{t \in S} \frac{\norm{\vz_t^{\perp}}_2}{2} \ge 2^{-\abs{S}}\sqrt{\detv{\mZ\vert_S\mZ\vert_S^T}} \ge 2^{-\abs{S}} = 2^{-d_n},
\end{align*}
where we used that $\detv{\mZ\vert_S\mZ\vert_S^T} \ge 1$ because all the vectors $\vz_t$ have integer coordinates. Moreover, since all coordinates are at most $N$ in absolute value, all the vectors  $\vz_t$ have length at most $N\sqrt{d}$. Therefore, $\frac{\vol\inparen{\cE_n}}{\vol\inparen{B_2^{d_n}}} \le \inparen{N\sqrt{d}}^{d_n}$. We plug these bounds back into \eqref{eq:int_potential}, rearrange, and take the logarithm of both sides, yielding
\begin{align*}
    \sum_{t \notin S} \frac{1}{\alpha_t}-\frac{1}{\alpha_{t-1}} \lesssim d_n\logv{d N}.
\end{align*}
Finally, by \eqref{eq:fully_online_invariant}, we have $\frac{1}{\alpha_t} - \frac{1}{\alpha_{t-1}} = 1$ for every $t \in S$. Combining everything gives
\begin{align*}
    \sum_{t \le n} \frac{1}{\alpha_t}-\frac{1}{\alpha_{t-1}} \lesssim d_n\logv{d N} + |S| \lesssim d_n\logv{d N},
\end{align*}
thereby concluding the proof of \Cref{thm:main_two_ints}.
\end{proof}
\section{Forming Small Coresets for Convex Bodies (Proof of Theorem~\ref{thm:main_three})}
\label{sec:coreset}

In this section, we prove Theorem \ref{thm:main_three}. See Algorithm \ref{alg:ellipse_to_coreset}.

\begin{algorithm}[H]
\caption{Streaming coreset for convex hull}\label{alg:ellipse_to_coreset}
\begin{algorithmic}[1]
\State \textbf{Input:} Stream of points $\vz_t$; Update rule for Algorithm \ref{alg:fully_online_approx} $\cA$.
\State \textbf{Output:} Set $S \subseteq [n]$.
\For{$t = 1,\dots,n$}
    \State Receive $\vz_{t}$.
    \State Let $\cE_{\mathsf{test}} = \cA(\vc_{t-1}, \cE_{t-1}, \vz_t)$.
    \State Let $d_t = \mathsf{dim}\inparen{\vspan{\vz_1-\vc_{t-1},\dots,\vz_t-\vc_{t-1}}}$.
    \If{$d_t > d_{t-1}$ or $\frac{\mathsf{Vol}_{d_t}(\cE_{\mathsf{test}})}{\mathsf{Vol}_{d_t}(\cE_{t-1})} \ge e$}
        \State Let $\vc_t, \cE_{t} = \cA(\vc_{t-1},\cE_{t-1},\vz_{t})$.
        \State Update $S_t = S_{t-1} \cup \inbraces{\vz_{t}}$.
    \Else
        \State Let $\vc_t,\cE_{t}=\vc_{t-1},\cE_{t-1}$.
        \State Let $S_{t-1} = S_t$.
    \EndIf
\EndFor
\State \textbf{Output:} $S_n$
\end{algorithmic}
\end{algorithm}

For a sketch of the intuition and the argument we will use for the proof, see Section \ref{sec:overview_coreset}.

\begin{proof}[Proof of Theorem \ref{thm:main_three}]

We prove two properties of Algorithm \ref{alg:ellipse_to_coreset}. First, we show $\abs{S_t} \le O\inparen{d_t \cdot \logv{d_t \cdot \max_{t' \le t} \nfrac{R_t}{r_{t'}}}}$ and, further, $\abs{S_t} \le O\inparen{d_t \cdot \logv{dN}}$ if points $\vz_t$ have integer coordinates between $-N$ and $N$. Second, we  show that $\conv{Z\vert_{S_t}} \subseteq \conv{Z\vert_{[t]}} \subseteq O\inparen{d_t \cdot \logv{d_t \cdot \max_{t' \le t} \nfrac{R_t}{r_{t'}}}}\cdot \conv{Z\vert_{S_t}}$ and $\conv{Z\vert_{S_t}} \subseteq \conv{Z\vert_{[t]}} \subseteq O\inparen{d_t \cdot \logv{dN}} \cdot \conv{Z\vert_{S_t}}$.

\paragraph{Bounding $\abs{S_t}$.}

It is enough to count the number of steps $t$ for which we have $\frac{\mathsf{Vol}_{d_t}(\cE_{\mathsf{test}})}{\mathsf{Vol}_{d_t}(\cE_{t-1})} \ge e$.

It is easy to see that for all $t$, we have  $r(Z\vert_{[t]}) \cdot \inparen{B_2^d \cap \vspan{\vz_1-\vc_t,\dots,\vz_t-\vc_t}} \subseteq \vc_{t} + \cE_t$. Additionally, by the definition of $R(Z)$, we always have $Z\vert_{[t]} \subseteq R(Z) \cdot \inparen{B_2^d \cap \vspan{\vz_1-\vc_t,\dots,\vz_t-\vc_t}}$. These are enough to give volume lower and upper bounds in each step. Next, for each step in which we add an element to $S_{t-1}$ to obtain $S_t$, the volume must increase by a factor of $e$. It easily follows that the number of elements in $S_t$ satisfies
\begin{align*}
    \abs{S_t} \le \logv{\max_{t' \le t} \frac{\prod_{i=1}^{d_t} R(Z\vert_{[t]})}{\prod_{i=1}^{d_t} r(Z\vert_{[t']})}} = d_t\logv{\max_{t' \le t} \frac{R(Z\vert_{[t]})}{r(Z\vert_{[t']})}}.
\end{align*}
We now give an upper bound for the case when all coordinated of $\vz_t$ are integers not exceeding $N$ in absolute value. It is easy to see that the update rule in \Cref{alg:ellipse_to_coreset} exactly corresponds to the steps where we have
\begin{align*}
    \frac{1}{\alpha_{t}} - \frac{1}{\alpha_{t-1}} \gtrsim 1,
\end{align*}
and in the same way as in the proof of \Cref{thm:main_two_ints}, we have for all $t$ that
\begin{align*}
    \sum_{t \ge 1} \frac{1}{\alpha_{t}} - \frac{1}{\alpha_{t-1}} \lesssim d_t\logv{dN}.
\end{align*}
It therefore follows that $\abs{S} \lesssim d_t\logv{dN}$, as desired.

\paragraph{Bounding the distortion of the chosen points.}

Consider some iteration $t' \le t$. Without loss of generality, let $\vc_{t'-1} = 0$. Suppose $\vz_{t'}$ does not result in an update to $S_{t'-1}$. This implies that $\vz_{t'} \in 2e \cdot \cE_{t'-1}$. Next, observe that $0 \in \vc_t + \cE_t$. Putting these together, we have $\vz_{t'} \in \inparen{\vc_t + \cE_t} + 2e \cdot \cE_{t'-1}$. Since $\cA$ is monotone, we must have $2e \cdot \cE_{t'-1} \subseteq \vc_t + e\cdot\cE_t$; hence, we may write $\vz_{t'} \in \vc_t + \inparen{2e+1}\cE_t$.

The inner ellipsoid $\vc_t + \alpha_t \cdot \cE_t$ will still be an inner ellipsoid for the points determined by $S_t$. Stitching together all our inclusions, we have
\begin{align}
    \vc_t + \alpha_t \cdot \cE_t \subseteq Z\vert_{S_t} \subseteq Z \subseteq \vc_t + (2e+1)\cE_t \subseteq \frac{2e+1}{\alpha_t} \cdot Z\vert_{S_t}\label{eq:ellipsoid_coreset_inclusions}.
\end{align}
which means that
\begin{align*}
    Z\vert_{S_t} \subseteq Z \subseteq O\inparen{d_t \cdot \logv{d_t \cdot \max_{t' \le t} \frac{R(Z\vert_{([t] \cap S_t)})}{r(Z\vert_{([t'] \cap S_t)})}}} \cdot Z\vert_{S_t}.
\end{align*}
Notice that this is nearly what we want, except that the aspect ratio term is in terms of the subset body $Z\vert_{S_t}$. To obtain the final guarantee in terms of the aspect ratio of $Z\vert_{[t]}$, observe that the above guarantee readily implies that
\begin{align*}
    O\inparen{d_t \cdot \logv{d_t \cdot \max_{t' \le t} \frac{R(Z\vert_{([t] \cap S_t)})}{r(Z\vert_{([t'] \cap S_t)})}}} \le O\inparen{d_t \cdot \logv{d_t \cdot \max_{t' \le t} \frac{R(Z\vert_{[t]})}{r(Z\vert_{[t']})}}}.
\end{align*}
We now give the corresponding improvement when the $\vz_t$ are integer-valued. As before, \eqref{eq:ellipsoid_coreset_inclusions} holds. From this, we get
\begin{align*}
    Z\vert_{S_t} \subseteq Z \subseteq O\inparen{d_t \cdot \logv{dN}} \cdot Z\vert_{S_t},
\end{align*}
as desired. This concludes the proof of Theorem \ref{thm:main_three}.
\end{proof}

% \subsection{Improved Distortion for Symmetric Convex Bodies}
% \input{main_select_vertices_symmetric}

\section{Lower Bound}

In this section, we show \Cref{thm:main_four}.

\subsection{Lower bound adversary}

Our proof of \Cref{thm:main_four} constructs an adversary, which given a monotone algorithm \(\cA\) and \(\kappa \geq 1\),
constructs a sequence of points \(\vz_1, \ldots, \vz_n\) satisfying \(\kappa(\conv{\vz_1, \ldots, \vz_n}) \leq \kappa\) to witness that the algorithm does not produce an approximation better than \(\widetilde{\Omega}(d \log \kappa)\).
While by definition \(\kappa = \frac{R}{r}\), our construction keeps \(r = 1\) (notice that any lower bound construction must be scale-invariant),
and for simplicity we use \(R = \kappa\).

Let \(\vz^\Delta_1, \vz^\Delta_2, \ldots, \vz^\Delta_{d+1} \in \R^{d}\) be the \(d+1\) vertices of a regular simplex \(\Delta_{d}\) that circumscribes \(B_2^d\).
Our adversary is described in \cref{alg:lower_bound}.
It uses a first phase that feeds \(\cA\) the vertices of \(\Delta_d\),
then a second phase that repeatedly feeds \(\cA\) points at a constant distance from the previous ellipsoid.
Specifically, every new point \(\vz_t\) in the second phase is in \(\vc_{t-1} + 2 \cdot \cE_{t-1}\), i.e. its distance is 2 from \(\vc_{t-1}\) in the norm that is the gauge of \(\cE_{t-1}\).

\begin{algorithm}[h]
\caption{Lower bound adversary}\label{alg:lower_bound}
\textbf{Input}: Monotone algorithm \(\cA\), \(R \geq 1\)
\begin{algorithmic}[1]
\State \((\vc_0 + \cE_0, \alpha_0) = (0 + B_2^d, 1)\)
\Comment{Initialize to the unit ball}
\For{\(t \in \{1, 2, \ldots, d+1\}\)}
    \Comment{Phase I: feed \(\cA\) the vertices of a simplex}
    \State \((\vc_{t} + \cE_{t}, \alpha_{t}) = \cA(\vc_{t-1} + \cE_{t-1}, \alpha_{t-1}, \vz^\Delta_{t})\)
\EndFor
\State \(t \leftarrow d+2\)
\While{\(\vol(\cE_{t-1}) \leq \vol\inparen{\frac{R}{2} \cdot B_2^d}\)}\label{alg_line:adv_stopvol}
\Comment{Phase II: feed \(\cA\) points outside the previous ellipsoid}
    \State Let \(F_{t-1} = \partial(\vc_{t-1} + 2 \cE_{t-1}) \cap (R \cdot B_2^d)\)
    \If{\(F_{t-1} = \varnothing\)}
        \State \textbf{stop}\label{alg_line:adv_stopdisj}
    \EndIf
    \State Let arbitrary \(\vz_{t} \in F_{t-1}\)
    \State \((\vc_{t} + \cE_{t}, \alpha_{t}) = \cA(\vc_{t-1} + \cE_{t-1}, \alpha_{t-1}, \vz_{t})\)
    \State \(t \leftarrow t + 1\)
\EndWhile
\end{algorithmic}
\end{algorithm}

\begin{remark}
 This particular construction we give of the hard case is adaptive, meaning that the adversary's choice of points depend on the previous ellipsoids the algorithm outputs.
   However, this adversary can be made non-adaptive by taking an \(\varepsilon\)-net \(S\) of \(B_2^d\) for sufficiently small \(\varepsilon\),
   then feeding \(\cA\) the sequence of points in sets \(S, 2 \cdot S, 4 \cdot S, \ldots, 2^{\log_2 R - 1}, 2^{\log_2 R} \cdot S\).
   In consequence, this means that randomization on the part of the monotone algorithm does not help, unlike some other online settings.
\end{remark}

Let \(T\) be the largest value of \(t -1\) before the adversary halts.
We first show that the adversary only gives finitely many points before halting.
\begin{claim}
    \(T \leq O(d \log R)\)
\end{claim}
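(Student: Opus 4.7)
The plan is to bound the number of Phase~II iterations by showing that the outer ellipsoid's volume grows by at least a constant multiplicative factor per iteration, and then combine this with the volume stopping criterion on Line~\ref{alg_line:adv_stopvol}. Phase~I contributes exactly $d+1$ points, so it suffices to bound the number of Phase~II iterations by $O(d\log R)$.

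The key geometric step is: if iteration $t$ is in Phase~II, then $\vol(\cE_t) \geq \frac{3}{2}\vol(\cE_{t-1})$. By construction $\vz_t \in \partial(\vc_{t-1} + 2\cE_{t-1})$, so the affine transformation that maps $\vc_{t-1} + \cE_{t-1}$ to $B_2^d$ sends $\vz_t$ to a point on the sphere of radius $2$; rotational symmetry of $B_2^d$ lets us further assume the image point is $2\ve_1$. By invariant~\eqref{item:def-invariant-1}, the image $\widehat{\cE}$ of the ellipsoid $\vc_t + \cE_t$ under this map must contain both $B_2^d$ and the point $2\ve_1$. Two observations then complete the claim. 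First, $\widehat{\cE} \supseteq B_2^d$ forces every width of $\widehat{\cE}$ to be at least $2$, which in turn forces every semi-axis of $\widehat{\cE}$ to have length at least $1$. Second, since $-\ve_1 \in B_2^d \subseteq \widehat{\cE}$ and $2\ve_1 \in \widehat{\cE}$, the width of $\widehat{\cE}$ in direction $\ve_1$ is at least $3$; since any width is bounded by twice the largest semi-axis, the largest semi-axis of $\widehat{\cE}$ is at least $\frac{3}{2}$. Multiplying the semi-axis lengths yields $\vol(\widehat{\cE}) \geq \frac{3}{2}\vol(B_2^d)$, and since volume ratios are affine invariants, this gives $\vol(\cE_t)/\vol(\cE_{t-1}) \geq \frac{3}{2}$.

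The remainder is bookkeeping. Monotonicity gives $\vol(\cE_{d+1}) \geq \vol(\cE_0) = \vol(B_2^d)$, and iterating the geometric bound shows that after $k$ Phase~II steps $\vol(\cE_{d+1+k}) \geq \bigl(\frac{3}{2}\bigr)^k \vol(B_2^d)$. Phase~II continues only while $\vol(\cE_{t-1}) \leq (R/2)^d\vol(B_2^d)$ (the disjointness halt on Line~\ref{alg_line:adv_stopdisj} can only stop sooner), so $\bigl(\frac{3}{2}\bigr)^k \leq (R/2)^d$, which yields $k = O(d\log R)$. Adding the $d+1$ Phase~I iterations gives $T = O(d\log R)$, as claimed. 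The only nontrivial step is the geometric lemma; I expect it to be the main obstacle, though the affine reduction avoids having to optimize over the algorithm's choice of update, and the width/semi-axis bound then finishes it cleanly.
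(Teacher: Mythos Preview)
Your proposal is correct and follows essentially the same argument as the paper: reduce via an affine map to the case $\cE_{t-1}=B_2^d$ and $\vz_t=2\ve_1$, use containment of $B_2^d$ to force all semi-axes $\geq 1$ and containment of the segment $[-\ve_1,2\ve_1]$ to force one semi-axis $\geq \tfrac{3}{2}$, then iterate the resulting $\tfrac{3}{2}$ volume growth against the $(R/2)^d$ stopping threshold. Your bookkeeping (separating out the $d+1$ Phase~I steps and invoking monotonicity for $\vol(\cE_{d+1})\geq\vol(B_2^d)$) is slightly more explicit than the paper's, but the substance is identical.
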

\begin{proof}
    We argue that the volume of \(\cE_{t}\) increases by at least a constant factor on each iteration.
    This is sufficient to bound the number of iterations by \(O(d \log R)\), 
    as \(\cE_0 = B_2^d\),
    and Line \ref{alg_line:adv_stopvol} is no longer true when the volume of \(\cE_t\) exceeds \(\inparen{\frac{R}{2}}^d \cdot \vol(B_2^d)\).

    We claim that for all \(t \geq d + 2\), \(\vol(\cE_{t}) \geq \frac{3}{2} \cdot \vol(\cE_{t-1})\).
    By applying a nonsingular affine transformation, we can assume without loss of generality that \(\cE_{t-1} = B_2^d\).
    With a further rotation, we can assume the newly received point is \(\vz_t = 2 \ve_1\).
    From monotonicity of \(\cA\) we must have that \(\vc_{t} + \cE_{t} \supseteq B_2^d \cup \{2 \ve_1\}\).
    Clearly every semi-axis of \(\cE_{t}\) must have length 
    at least 1 in order to contain \(\cE_{t-1}\).
    Observe that \(\cE_{t}\) must also contain the segment connecting
    \(-1 \ve_1\) and \(2 \ve_1\), and so at least one semi-axis must have length at least \(\frac{3}{2}\)
    (if not, the diameter of \(\cE_{t}\) would be strictly less than \(3\)).
    Hence as \(\frac{\vol(\cE_{t})}{\vol(B_2^d)}\) equals the product of the length of the semi-axes of \(\cE_{t}\), we have \(\vol(\cE_{t}) \geq \frac{3}{2} \vol(B_2^d)\).
\end{proof}

For the analysis we define quantities \(A_t, P_t\) associated with the sequence of ellipsoids for \(1 \leq t \leq T\):
\[A_t \defeq \frac{1}{\alpha_t}, \quad P_t \defeq \log \inparen{\frac{\vol(\cE_t)}{\vol(B_2^d)}}\]
By the monotonicity of \(\cA\), we have that \(A_t\) and \(P_t\) are both nondecreasing in \(t\).
We first observe that the adversary guarantees that the final volume of the ellipsoid output by \(\cA\) is large:
\begin{claim}\label{claim:lower_bound_vol}
    At the conclusion of \Cref{alg:lower_bound}'s execution, we have
   \[P_T  \geq d \log \frac{R}{2}\]
\end{claim}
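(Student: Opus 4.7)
The plan is to do a case analysis on why \Cref{alg:lower_bound} halts. There are two possibilities: either the volume test on \Cref{alg_line:adv_stopvol} fails at iteration $T+1$ (so $\vol(\cE_T) > \vol(\tfrac{R}{2} \cdot B_2^d)$), or the algorithm hits \textbf{stop} on \Cref{alg_line:adv_stopdisj} because $F_T = \varnothing$. The first case immediately gives $P_T \geq d\log(R/2)$, so the real work is in the second case.

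First I would establish the structural fact that $B_2^d \subseteq \vc_t + \cE_t$ for every $t$. This is a simple induction using the monotonicity invariant~\eqref{item:def-invariant-1-alt}: the base case is $\vc_0 + \cE_0 = B_2^d$ by initialization, and the inductive step is $\vc_t + \cE_t \supseteq \vc_{t-1} + \cE_{t-1}$. In particular, the origin lies in $\vc_T + 2\cE_T$, and in fact lies in its interior because $B_2^d \subseteq \vc_T + \cE_T$ gives a whole unit ball around $0$ inside $\vc_T + 2\cE_T$.

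Now suppose $F_T = \partial(\vc_T + 2\cE_T) \cap (R \cdot B_2^d) = \varnothing$. I will show this forces $R \cdot B_2^d \subseteq \vc_T + 2\cE_T$. Fix any $\vx$ with $\|\vx\|_2 \leq R$ and consider the segment from $0$ to $\vx$. This segment lies entirely in $R \cdot B_2^d$, it starts inside the convex body $\vc_T + 2\cE_T$, and if it ever left that body it would have to cross $\partial(\vc_T + 2\cE_T)$ at a point of $R \cdot B_2^d$, contradicting $F_T = \varnothing$. Hence $\vx \in \vc_T + 2\cE_T$. Taking volumes and using $\vol(\vc_T + 2\cE_T) = 2^d \vol(\cE_T)$ yields $\vol(\cE_T) \geq (R/2)^d \vol(B_2^d)$, i.e.\ $P_T \geq d\log(R/2)$.

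The only mildly delicate point is the segment argument, which requires that the path does not exit $\vc_T + 2\cE_T$ except through its boundary; this is standard for closed convex sets containing the initial point in their interior. Everything else reduces to the invariant~\eqref{item:def-invariant-1-alt} and the volume scaling of ellipsoids, so I do not expect a substantive obstacle.
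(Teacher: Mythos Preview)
Your proposal is correct and follows essentially the same approach as the paper: split on the two halting conditions, handle the volume-test case directly, and in the $F_T = \varnothing$ case use monotonicity to get $B_2^d \subseteq \vc_T + \cE_T$ and then deduce $R \cdot B_2^d \subseteq \vc_T + 2\cE_T$. The paper phrases the last step as a case enumeration (either the bodies are disjoint or one contains the other, and monotonicity rules out disjointness), whereas your segment argument is a slightly more explicit version of the same connectivity reasoning.
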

\begin{proof}
    There are two ways that the adversary stops:
    if the condition in Line \ref{alg_line:adv_stopvol} is no longer true,
    or if Line \ref{alg_line:adv_stopdisj} is reached.
    If the former occurs, then we have \(\vol(\cE_{T}) > \vol(\frac{R}{2} \cdot B_2^d)\),
    and clearly then \(P_T \geq d \log\inparen{\frac{R}{2}}\).
    
    In the latter stopping condition, the algorithm halts at time \(T\) when \(\partial (c_{T} + 2 \cE_{T}) \cap R \cdot B_2^d = \varnothing\).
    The sets \(\partial (c_{T} + 2 \cE_{T})\) and \(R \cdot B_2^d\) can be disjoint in two cases:
    \(c_T + 2 \cE_T\) and \(R \cdot B_2^d\) are disjoint;
    or \(R \cdot B_2^d \subseteq c_T + 2 \cE_t\) with the boundaries of both ellipsoids disjoint.
    By the monotonicity of \(\cA\), we have \(1 \cdot B_2^d \subseteq c_T + 2 \cE_{T}\),
    and so eliminate the former case.
    But then \(\vol(2 \cdot \cE_T) \geq \vol(R \cdot B_2^d) = R^d \vol(B_2^d)\), and taking logarithms on both sides yields the claim.
\end{proof}

Now in contrast to the upper bound where we essentially gave an algorithm for which \(\frac{\Delta A}{\Delta P}\) was upper bounded by a constant,
here we will show a constant lower bound on the same quantity for any monotone algorithm.
\begin{claim}\label{claim:lb_step_main}
    There exists a constant \(C_{\ref{eqn:lb_step_main}} > 0\) such that if \(A_t \geq d\), we have
    \begin{equation}\label{eqn:lb_step_main}
    A_{t+1} - A_t \geq C_{\ref{eqn:lb_step_main}} (P_{t+1} - P_t)
    \end{equation}
\end{claim}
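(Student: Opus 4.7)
The plan is to prove Claim~\ref{claim:lb_step_main} by reducing a single update step to a two-dimensional problem via Steiner symmetrization, then directly analyzing the monotonicity constraints in two dimensions.

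First I would normalize: by applying a non-singular affine transformation, I may assume without loss of generality that $\cE_t = B_2^d$ and $\vc_t = \vzero$, since both $A_{t+1} - A_t$ and $P_{t+1} - P_t$ depend only on the affine invariants $\alpha_{t+1}/\alpha_t$ and $\vol(\cE_{t+1})/\vol(\cE_t)$. Since $\vz_{t+1}$ lies on $\partial(2 B_2^d)$ by the adversary's construction, a further rotation allows $\vz_{t+1} = 2\ve_1$.

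Next I would reduce to two dimensions using Steiner symmetrization. Apply the composite symmetrization $S = S_{\ve_2} \circ \cdots \circ S_{\ve_d}$ to both the new outer ellipsoid $\vc_{t+1} + \cE_{t+1}$ and the concentric inner ellipsoid $\vc_{t+1} + \alpha_{t+1}\cE_{t+1}$. By Claim~\ref{claim:steiner_ellipsoid} each image remains an ellipsoid, and a short verification on the slice decomposition shows that $S_{\vu}$ commutes with central scaling of an ellipsoid, so the symmetrized pair remains concentric and related by the same ratio $\alpha_{t+1}$. The sets $\conv{B_2^d \cup \{2\ve_1\}}$ and $\conv{\alpha_t B_2^d \cup \{2\ve_1\}}$ are bodies of revolution about $\ve_1$ and hence fixed by $S$ by Claim~\ref{claim:steiner_sym}; monotonicity of Steiner symmetrization thus preserves both containments from Definition~\ref{def:alg-invariant}. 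After symmetrization, I may assume the new outer ellipsoid has the form $(c, 0, \ldots, 0) + \cE'$ where $\cE'$ has semiaxis $a$ along $\ve_1$ and semiaxis $b$ in every perpendicular direction. In particular $P_{t+1} - P_t = \log a + (d-1)\log b$, and the remaining analysis only concerns the cross-section in $\vspan{\ve_1, \ve_2}$.

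In this two-dimensional picture, the outer ellipse $(c, 0) + \cE'$ must contain $B_2^2$ and $(2, 0)$, forcing $a \ge 3/2$, $b \ge 1$, $c \in [2-a, a-1]$, together with the tangency condition $c^2/a^2 + 1/b^2 \le 1$ coupling $a, b, c$. The new inner ellipse, centered at $(c, 0)$ with semiaxes $\alpha_{t+1} a$ and $\alpha_{t+1} b$, must lie inside the ``ice-cream cone'' $\conv{\alpha_t B_2^2 \cup \{(2, 0)\}}$. The rightmost-point constraint $c + \alpha_{t+1} a \le 2$ gives $1/\alpha_{t+1} \ge a/(2-c)$. Using the tangent lines from $(2, 0)$ to $\alpha_t B_2^2$, whose slope magnitudes $\alpha_t/\sqrt{4-\alpha_t^2}$ lie in $[\alpha_t/2, \alpha_t/\sqrt{3}]$ once $\alpha_t \le 1/d \le 1/2$ (guaranteed by the hypothesis $A_t \ge d$), a tangency analysis of the inner ellipse against these lines yields a second bound of the form $1/\alpha_{t+1} \gtrsim bd/(2-c)$.

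The main obstacle is combining these two bounds with the outer-tangency condition into the reverse evolution condition via a case analysis on which of $\log a$ or $(d-1)\log b$ dominates $P_{t+1} - P_t$. When $\log a$ dominates, the rightmost-point bound combined with the tangency-driven improvement $1/\alpha_{t+1} \gtrsim \max(a, A_t) \cdot (1 + \Omega(1))$ (which I would derive by explicitly solving for the tangent point between the inner ellipse and the cone boundary, as in the model computation $\alpha_{t+1} \le 2/(\sqrt{5} A_t)$ when $(a, b, c) = (A_t, 1, 0)$) gives $A_{t+1} - A_t \gtrsim \max(a, A_t) \gtrsim \log a$. When $(d-1)\log b$ dominates, the vertical bound gives $A_{t+1} - A_t \gtrsim bd/(2-c) - A_t \gtrsim d(b-1) \ge (d-1)\log b$ via the elementary inequality $b - 1 \ge \log b$ for $b \ge 1$, using $A_t \le 1/\alpha_t \le bd/2$ implicit in this regime. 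Edge cases where $2-c$ is small or where the two bounds nearly coincide are handled by invoking the outer-tangency constraint $c^2/a^2 + 1/b^2 \le 1$ to rule out pathological parameter combinations; the hypothesis $A_t \ge d$ is used throughout both to validate the tangent-slope estimates and to ensure the volume-increase term $d(b-1)$ dominates $A_t$ in the second case.
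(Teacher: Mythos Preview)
Your high-level strategy matches the paper's: normalize so that the previous ellipsoid is the unit ball with the new point at $2\ve_1$, symmetrize to reduce to a body of revolution about $\ve_1$, and then carry out a two-dimensional case analysis. However, there is a genuine gap in the symmetrization step and the two-dimensional analysis is too loose to stand as written.

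\textbf{The symmetrization is incomplete.} Applying $S_{\ve_2}\circ\cdots\circ S_{\ve_d}$ does put the center on the $\ve_1$-axis and makes the quadratic form diagonal, but it does \emph{not} equalize the perpendicular semiaxes. A direct Schur-complement computation shows that $S_{\ve_i}$ leaves the $(i,i)$ diagonal entry of the quadratic form unchanged, so after your composite you end up with semiaxes $a,b_2,\dots,b_d$ that are in general all different (already for $\cE'=\cE_{\diag{1,1,4}}$ your map is the identity). Consequently your identification $P_{t+1}-P_t=\log a+(d-1)\log b$ with a single $b$ is unjustified. The paper handles this with a two-stage Steiner symmetrization (first to move the center onto the $\ve_1$-axis, then to make $\ve_1$ a semiaxis) followed by a separate geometric-mean replacement $b\defeq(\prod_{i\ge 2}b_i)^{1/(d-1)}$; this last step is \emph{not} a Steiner symmetrization, but preserves volume and both monotonicity inclusions because $\min_i b_i\le b\le\max_i b_i$ and the constraint sets are bodies of revolution. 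You need this (or an equivalent) step.

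\textbf{The two-dimensional argument does not close.} Your Case ``$(d-1)\log b$ dominates'' asserts $A_{t+1}-A_t\gtrsim bd/(2-c)-A_t\gtrsim d(b-1)$, invoking ``$A_t\le bd/2$ implicit in this regime''. No such upper bound on $A_t$ is available; you only know $A_t\ge d$. The clean route (and the paper's) is to use the simpler height bound $\alpha_{t+1}b\le\alpha_t$, i.e.\ $A_{t+1}\ge bA_t$, giving $\Delta A\ge(b-1)A_t\ge(b-1)d\ge(d-1)\log b$ directly. Your Case ``$\log a$ dominates'' is also only a sketch: the claimed bound $1/\alpha_{t+1}\gtrsim\max(a,A_t)\cdot(1+\Omega(1))$ is not established, and the paper's actual treatment of this regime requires a substantial case split (on whether the new inner ellipse sits close to the old one, on the sizes of $a$ and $b$, and on the relative sizes of two competing lower bounds on $A_{t+1}$) that your outline does not capture.
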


Observe that this lower bound requires \(A_t \geq d\),
hence necessitating a first phase using the simplex,
whose optimal roundings show tightness for John's theorem for general convex bodies.
In order to prove the lower bound we also need a second property, that \(A_{t}\) is large compared to \(P_t\).
\begin{claim}\label{claim:lb_simplex_symmetry}
    Let \(0 \leq \alpha \leq 1, \vc \in \R^d\), and \(\cE\) be an ellipsoid such that
    \[\vc + \alpha \cdot \cE \subseteq \Delta_d \subseteq \vc + \cE\]
    then we have: \begin{enumerate} 
        \item\label{item:lb_simplex_symmetry_1} \(\alpha \leq \frac{1}{d}\)
        \item \(\log\inparen{\frac{\vol(\cE)}{\vol(B_2^d)}} \leq O\inparen{\log(d) \cdot \frac{1}{\alpha}}\)
    \end{enumerate}
\end{claim}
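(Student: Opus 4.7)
The plan is to prove the two parts separately: Part (\ref{item:lb_simplex_symmetry_1}) is the classical tight case of John's theorem for simplices, while the second part is a volumetric argument that uses Part (\ref{item:lb_simplex_symmetry_1}) to upgrade a crude $d \log(1/\alpha)$ estimate into the desired $O(\log(d)/\alpha)$ form.

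For Part (\ref{item:lb_simplex_symmetry_1}), my approach is to reduce to a standard statement via affine invariance. Since the inclusions $\vc + \alpha \cE \subseteq \Delta_d \subseteq \vc + \cE$ and the scaling factor $\alpha$ are all preserved under invertible affine maps, I apply the transformation $\vx \mapsto L(\vx - \vc)$ with $L$ sending $\cE$ to $B_2^d$, reducing to the case $\alpha B_2^d \subseteq \Delta' \subseteq B_2^d$ where $\Delta'$ is an affine image of $\Delta_d$ (and so is itself a $d$-simplex). The desired bound $\alpha \leq 1/d$ then follows from the tight case of John's theorem for simplices: the ratio between concentric inscribed and circumscribed balls of a $d$-simplex is at most $1/d$, with equality achieved by the regular simplex circumscribing $B_2^d$. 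I plan to cite this classical fact directly.

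For the second part, the inclusion $\vc + \alpha \cE \subseteq \Delta_d$ immediately gives $\alpha^d \vol(\cE) \leq \vol(\Delta_d)$. Because $\Delta_d$ is the regular $d$-simplex circumscribing $B_2^d$, its circumradius equals $d$, so $\Delta_d \subseteq d \cdot B_2^d$, yielding $\vol(\Delta_d) \leq d^d \vol(B_2^d)$. Taking logarithms gives
\[
\log\left(\frac{\vol(\cE)}{\vol(B_2^d)}\right) \leq d \log d + d \log(1/\alpha).
\]
To convert the right-hand side to $O(\log(d)/\alpha)$, I combine two observations, both leveraging Part (\ref{item:lb_simplex_symmetry_1}): first, $d \leq 1/\alpha$ implies $d \log d \leq \log(d)/\alpha$; second, the function $t \mapsto t \log(1/t)$ is bounded on $(0, 1/d]$ by $O((\log d)/d)$ for every $d \geq 2$ (the maximum is attained at $t = 1/d$ when $d \geq 3$, and at the interior critical point $t = 1/e$ when $d = 2$), and rearranging this bound with $t = \alpha$ gives $d \log(1/\alpha) \leq O(\log(d)/\alpha)$. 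Summing the two estimates completes the proof.

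The main obstacle I expect is producing a fully self-contained derivation of Part (\ref{item:lb_simplex_symmetry_1}). The tightness of John's theorem for simplices is classical but not trivial to compactly re-derive, so my plan is to cite it rather than reproduce the argument here. The rest of the proof consists of elementary volume comparisons together with the short calculus estimates above, both of which are routine once Part (\ref{item:lb_simplex_symmetry_1}) is in hand.
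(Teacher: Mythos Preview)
Your proposal is correct and follows essentially the same approach as the paper: both cite the tight case of John's theorem for simplices to obtain Part~(\ref{item:lb_simplex_symmetry_1}), and for the second part both use the volume chain $\vc + \alpha \cE \subseteq \Delta_d \subseteq d \cdot B_2^d$ to reach $\log(\vol(\cE)/\vol(B_2^d)) \leq O(d \log(1/\alpha))$, then convert this to $O((\log d)/\alpha)$ via the same monotonicity fact (the paper phrases it as $x/\log x$ increasing for $x \geq e$, while you phrase it via the maximum of $t \log(1/t)$ on $(0,1/d]$, but these are equivalent).
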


With the statements of these claims in hand, we are ready to prove the lower bound.
\begin{proof}[Proof of \Cref{thm:main_four}]
 It is clear that \(\kappa(\conv{\vz_1, \ldots, \vz_T)} \leq R\), as for every \(1 \leq t \leq T\) the adversary guarantees \(1 \leq \|\vz_t\| \leq R\). So we focus on showing a lower bound on the quality of the approximation produced by \(\cA\).

As \(\cA\) is monotone, after the end of Phase I
we must have that \begin{equation}\label{eq:lb_ph1_approx}
\vc_{d+1} + \alpha_{d+1} \cdot \cE_{d+1} \subseteq \Delta_d \subseteq \vc_{d+1} + \cE_{d+1}
\end{equation}
Now because \(\cE_{d+1}\) satisfies the conditions of \Cref{claim:lb_simplex_symmetry},
we get using the definition \(A_{d+1} = \frac{1}{\alpha_{d+1}}\) that \(A_{t} \geq A_{d+1} \geq d\) for any \(t \geq d+1\).
Then we can apply \Cref{claim:lb_step_main} for every \(t \geq d + 1\) until termination of the algorithm:
\begin{align*}
A_{d+2} - A_{d+1} &\geq C_{\ref{eqn:lb_step_main}} \inparen{P_{d+2} - P_{d+1}} \\
A_{d+3} - A_{d+2} &\geq C_{\ref{eqn:lb_step_main}} \inparen{P_{d+3} - P_{d+2}} \\
&~\vdots \\
A_{T-1} - A_{T-2} &\geq C_{\ref{eqn:lb_step_main}} \inparen{P_{T-1} - P_{T-2}} \\
A_{T} - A_{T-1} &\geq C_{\ref{eqn:lb_step_main}} \inparen{P_{T} - P_{T-1}}
\end{align*}
Summing these inequalities, we have
\begin{align*}
    \sum_{t=d+1}^{T-1} A_{t+1} - A_{t} \geq C_{\ref{eqn:lb_step_main}} \inparen{\sum_{t=d+1}^{T-1} P_{t+1} - P_t}
\end{align*}
Both sides of this inequality are telescoping sums, so simplifying we get
\begin{equation}\label{eqn:thm_lower_bound_1}
A_T \geq A_{d+1} + C_{\ref{eqn:lb_step_main}} (P_T - P_{d+1})
\end{equation}

% Thus \(\vc_{d+1} + \cE_{d+1}\) is a rounding ellipsoid of the \(d\)-simplex, and it is well known that in this case the approximation factor is \(\alpha_{d+1} \leq \frac{1}{d}\) (see e.g. \cite{howard1997john}, Remark 1.1); i.e. \(A_{d+1} \geq d\).\maxcmt{fix flow}
Again because we can apply \Cref{claim:lb_simplex_symmetry} for \(\cE_{d+1}\), we have \(P_{d+1} \leq O(\log(d) \cdot A_{d+1})\), which along with (\ref{eqn:thm_lower_bound_1}) yields
\[A_T \geq A_{d+1} + \Omega(P_T - \log(d) \cdot A_{d+1}) \geq  \Omega (P_T - \log(d) \cdot A_{d+1})\]
Thus we have \[
A_T \geq \Omega(\max(A_{d+1}, P_T - \log(d) \cdot A_{d+1})) \geq \Omega\inparen{\frac{P_T}{\log(d)}}
\]
and we get the desired bound using \Cref{claim:lower_bound_vol}.
\end{proof}

Our proof of \Cref{claim:lb_step_main}
relies on a symmetrization argument to a reduced case (essentially two-dimensional, like for our algorithms). We now define this reduced case, and related quantities.
\begin{definition}\label{defn:lb_reduced}
In the reduced case, the previous outer and inner ellipsoids are given by \(B_2^d, \alpha \cdot B_2^d\), and the received point is \(\vz = 2 \ve_1\).
The next outer and inner ellipsoids are given by \(c \cdot \ve_1 + \cE_{\mM}, c \cdot \ve_1 + \alpha' \cdot \cE_{\mM}\)
for \(c \in \R\), and \(\mM = \diag{a, b, b, \ldots, b, b}\) for \(a, b \geq 1\).
We let \(\Delta A = \frac{1}{\alpha'} - \frac{1}{\alpha}\) and \(\Delta P = \log \inparen{\frac{\vol(\cE_{\mM})}{\vol(B_2^d)}}\).
\end{definition}

Note that the update in this reduced case is monotone if
\(B_2^d \cup \{ 2 \ve_1 \} \subseteq c \cdot \ve_1 + \cE_{\mM}\)
and \(c \cdot \ve_1 + \alpha' \cdot \cE_{\mM} \subseteq \conv{(\alpha \cdot B_2^d) \cup \{ 2 \ve_1 \}}\).

Now we state the lower bound on \(\frac{\Delta A}{\Delta P}\) in this setting, which is established in \Cref{sec:lb_2d}.
It is exactly the bound of \Cref{claim:lb_step_main} in this special case.
\begin{claim}\label{claim:lb_reduced}
     In the reduced case, for any monotone update \(c \cdot \ve_1 + \cE_{\mM}, c \cdot \ve_1 + \alpha' \cdot \cE_{\mM}\) when \(\alpha' \leq \frac{1}{d}\) we have
    \begin{equation}
    \frac{\Delta A}{\Delta P} \geq C_{\ref{eqn:lb_step_main}}
     \end{equation}
\end{claim}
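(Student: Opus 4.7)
The plan is to work directly in the reduced setting, exploiting the axisymmetry about $\ve_1$. Since the previous ellipsoids, received point, and new ellipsoids are all bodies of revolution about $\ve_1$, the monotonicity containments hold in $\R^d$ if and only if they hold when intersected with $\lspan{\ve_1, \ve_2}$. In this 2D $(x, y)$-plane, the previous outer and inner ellipses are $B_2^2$ and $\alpha \cdot B_2^2$, the received point is $2 \ve_1$, and the new outer ellipse has center $(c, 0)$ and semiaxes $a \geq 1$ (horizontal) and $b \geq 1$ (vertical).

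The outer inclusion $B_2^2 \cup \{2 \ve_1\} \subseteq (c, 0) + \cE_{\mM}$, evaluated at $(\pm 1, 0), (2, 0)$, and $(0, \pm 1)$, gives $a \geq c + 1$, $a \geq 2 - c$, and $c^2/a^2 + 1/b^2 \leq 1$; the first two imply $a \geq 3/2$ (so $\Delta P \geq \log(3/2)$), and the third gives $b \geq a/\sqrt{a^2 - c^2}$. The inner inclusion $(c, 0) + \alpha' \cE_{\mM} \subseteq \conv{(\alpha \cdot B_2^2) \cup \{2 \ve_1\}}$ yields two bounds: (i) the leftmost and rightmost points of the inner ellipse on the $x$-axis give $\alpha' a \leq c + \alpha$ and $\alpha' a \leq 2 - c$, hence $A' = 1/\alpha' \geq 2a/(2 + \alpha) \geq 2a/3$; and (ii) since the cone's maximum vertical half-width is $\alpha$ (attained in its ball region), matching it against the inner ellipse's vertical semiaxis yields $\alpha' b \leq \alpha$, hence $A' \geq bA$. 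So $a \leq 3A'/2$ and $b \leq A'/A$.

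Writing $A' = r A$ with $r \geq 1$, these give $\Delta P = \log a + (d - 1) \log b \leq d \log r + \log A + O(1)$ and $\Delta A = (r - 1) A$. The hypothesis $\alpha' \leq 1/d$ gives $r A \geq d$, and combined with the surrounding proof's invariant $A \geq d$ (guaranteed by \Cref{claim:lb_simplex_symmetry} after the simplex Phase I), the claim reduces to $(r - 1) A \geq \Omega(d \log r + \log A + 1)$. For $r \geq 2$, the bound $(r - 1) A \geq A \geq d$ dominates both $d \log r$ (using $\log r \leq r$ and $r/(r-1) \leq 2$) and $\log A \leq A$. For $r$ close to $1$, I combine the outer lower bound $b \geq a/\sqrt{a^2 - c^2}$ with the inner vertical bound $b \leq r$ to force $r \geq a/\sqrt{a^2 - c^2}$, and combine this with the horizontal constraint $\alpha'(2 - c) \leq c + \alpha$ (i.e., $r \geq \alpha(2 - c)/(c + \alpha)$) to show that $r - 1$ and $b - 1$ cannot both be small simultaneously. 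The main obstacle is precisely this marginal case analysis: since $c$ is chosen adversarially, a careful bound on the cone's half-width at $x = c$ in both its ball region and its tangent-line region is needed, and the extracted constraints must be combined tightly enough to yield a uniform constant lower bound.
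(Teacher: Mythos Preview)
Your setup and easy case are correct. The 2D reduction, the bounds $a \geq 3/2$, $\alpha' b \leq \alpha$ (giving $b \leq r := A'/A$), $a \leq (2+\alpha)/(2\alpha')$, and the resulting estimate $\Delta P \leq d \log r + \log A + O(1)$ are all valid; for $r \geq 2$ the inequality $(r-1)A \gtrsim d\log r + \log A + O(1)$ follows from $\log r \leq r-1$ together with $A \geq d$. This part matches the paper's basic constraints (their \Cref{claim:lb_1} and neighbors), and your use of the ambient hypothesis $A\geq d$ is exactly what the paper does as well.

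The gap is the regime $r$ close to $1$, which you flag but do not resolve. Your extracted constraints as stated do not close it: from $a \leq r(1 + c/\alpha)$ and $|c| \leq a\sqrt{1-1/r^2}$ alone, one can have $a$ of order~$2$ and $r - 1$ of order $\alpha^2$, yielding $\Delta A \approx \alpha$ against $\Delta P \approx \log 2$. Also, the phrase ``$r-1$ and $b-1$ cannot both be small'' is slightly off: since $b\leq r$ always, both \emph{are} small together; the actual issue is whether $\log a$ is forced small when $r-1$ is small.

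The paper supplies two ingredients you are missing. First, it replaces the crude projection bound $\alpha' b \leq \alpha$ by the sharper cone half-width at $x=c$, namely $\alpha' b \leq (2-c)\cdot \tfrac{\alpha/2}{\sqrt{1-(\alpha/2)^2}}$ (\Cref{claim:lb_3}); this is precisely the ``half-width in the tangent-line region'' bound you mention but do not deploy. Second, its case tree is organized not by $r$ but by whether the rightmost point $c+\alpha' a$ of the new inner ellipse exceeds $\tfrac{11}{10}\alpha$, and then by whether $a$ and $b$ are bounded. One leaf (all of $a,b,A$ in a compact box) is dispatched by a compactness argument: $\Delta A$ is continuous and strictly positive on the feasible compact set, hence bounded below by some constant. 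Your plan names the right tools, but the paper's proof shows that combining them still requires several further case splits and one non-explicit compactness step; as written, your marginal case is a promissory note rather than a proof.
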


We now give the symmmetrization argument that shows that the above bound in the special case
implies the bound in the general case.
\begin{proof}[Proof of \Cref{claim:lb_step_main}]
    % We show this using a symmetrization argument.
    % Recall that an update \(\vc_t + \cE_t, \vc_{t+1} + \cE_{t+1}\) is monotone if 
    By the monotonicity of \(\cA\), we have \((\vc_t + \cE_t) \cup \{\vz_{t+1}\} \subseteq \vc_{t+1} + \cE_{t+1}\)
    and \(\vc_{t+1} + \alpha_{t+1} \cdot \cE_{t+1} \subseteq \conv{(\vc_t + \alpha_t \cdot \cE_t) \cup \{\vz_{t+1}\}}\).
    Without loss of generality we assume that \(\vc_t + \cE_t = B_2^d\) and \(\vz_{t+1} = 2 \cdot \ve_1\);
    we do this by applying a nonsingular affine transformation that maps \(\vc_t\) to the origin and \(\cE_t\) to \(B_2^d\), then apply a rotation
    that maps \(\vz_{t+1}\) to \(2 \cdot e_1\).
    Let \(\vc = \vc_{t+1}\), \(\cE = \cE_{t+1}\), and \(\alpha = \alpha_{t+1}\).
    Summarizing the conditions guaranteed by the monotonicity of \(\cA\), we have that \(B_2^d \cup \{ 2 \ve_1 \} \subseteq \vc + \cE\)
    and \(\vc + \alpha \cdot \cE \subseteq \conv{(\alpha_t \cdot B_2^d) \cup \{ 2 \ve_1 \}}\).

    To perform the reduction to the two-dimensional case, we apply a sequence of volume-preserving symmetrizations to the new inner and outer ellipsoids; these symmetrizations will also ensure that the update remains monotone.
    We will first apply two Steiner symmetrizations. The first of these Steiner symmetrizations transforms the ellipsoids so that their center lies on the \(\ve_1\)-axis. The second ensures that the ellipsoids have a semi-axis that is parallel to \(\ve_1\).
    Then, by a final symmetrization step we can transform the ellipsoids into bodies of revolution about \(\ve_1\).
    At that point it will suffice to consider the two-dimensional reduced case.

    Let \(\vc'\) be the projection of \(\vc\) onto the \(\ve_1\)-axis.
    The goal of the first symmetrization step is to transform \(\vc + \cE\) to \(\vc' + \cE'\) so that \(\vc'\) lies on the \(\ve_1\) axis. If \(\vc = \vc'\) then we do not need to do anything, otherwise we apply Steiner symmmetrization and consider \(S_{\vc - \vc'}(\vc+ \cE)\).
    By \Cref{claim:steiner_ellipsoid} this is still an ellipsoid,
    and we also have that the center of \(S_{\vc - \vc'}(\vc + \cE)\) is actually \(\vc'\);
    thus we may write \(S_{\vc - \vc'}(\vc + \cE) = \vc' + \cE'\) for some \(\cE'\).
    Further, we have that \(S_{\vc - \vc'}(\vc + \alpha \cdot \cE) = \vc' + \alpha \cdot \cE'\), as the Steiner symmetrization acts similarly on the scaled version of \(\cE\).
    To show that the update is still monotone, we observe that \(\vc' + \cE' = S_{\vc - \vc'}(\vc + \cE) \subseteq S_{\vc - \vc'}( \conv{(\alpha_t \cdot B_2^d) \cup \{ 2 \ve_1 \}})\).
    But by \Cref{claim:steiner_sym} and that \(\vc - \vc' \perp \ve_1\), \(\conv{(\alpha_t \cdot B_2^d) \cup \{ 2 \ve_1 \}}\) is invariant under the symmetrization \(S_{\vc - \vc'}\) and so we still have the inclusion \(\vc' + \alpha \cdot \cE' \subseteq \conv{(\alpha_t \cdot B_2^d) \cup \{ 2 \ve_1 \}}\).
    The `outer' inclusion
    \(B_2^d \cup \{ 2 \ve_1 \} \subseteq \vc' + \cE'\) follows in the same way.

    We now apply the second and final Steiner symmetrization. Let \(\vr\) be the rightmost point of \(\vc' + \cE'\) along \(\ve_1\); i.e. \(\vr = \arg\max_{\vr \in \vc' + \cE'} \inangle{\vr, \ve_1}\).
    Also let \(\vr'\) be its projection along the \(\ve_1\)-axis;
    if \(\vr = \vr'\) we again do not need to perform this symmetrization step, otherwise
    the Steiner symmetrization we apply is \(S_{\vr - \vr'}(\vc' + \cE')\).
    Since \(\vc'\) is at the midpoint of \(\vc' + \R (\vr - \vr')\) the center of the new ellipsoid is still \(\vc'\),
    so we may write \(S_{\vr - \vr'}(\vc' + \cE') = \vc' + \cE''\) and similarly
    \(S_{\vr - \vr'}(\vc' + \alpha \cdot \cE') = \vc' + \alpha \cdot \cE''\).
    Like for the previous symmetrization, the fact that \(\vr - \vr' \perp \ve_1\)
    means that both inclusions of the monotone update are preserved.
    Note finally that \(\vr'\) is the rightmost point of \(\vc' + \cE''\)
    and that the tangent plane of \(\vc + \cE''\) at \(\vr'\) is orthogonal to the line segment \(\overline{\vc' \vr'}\),
    so \(\ve_1\) is a semi-axis of \(\vc' + \cE''\).

    Our last transformation is a symmetrization of a different form, to turn \(\vc' + \cE''\) into a body of revolution.
    Let \(\sigma_1\) be the length of the semi-axis \(\ve_1\) of \(\cE''\), and \(\sigma_2, \ldots, \sigma_d\) be the lengths of the other semi-axes of \(\cE''\).
    We let \(\cE'''\) be the ellipsoid that has a \(\ve_1\) as a semi-axis of length \(\sigma_1\),
    and where every other semi-axis of \(\cE'''\) has length \(\sigma' \defeq \inparen{\prod_{i=2}^d \sigma_i}^{1/(d-1)}\).
    Clearly \(\vc' + \cE'''\) is now a body of revolution about \(\ve_1\) whose volume is the same as that of \(\vc' + \cE''\) (and hence also of \(\vc + \cE\)).
    Note that \(\vc' + \alpha \cE'''\) is also now a body of revolution.
    Since \(\sigma' \ge \min_{2 \le i \le d} \sigma_i\) we have \(B_2^d \cup \{ 2 \ve_1 \} \subseteq \vc' + \cE'''\), and correspondingly since \(\sigma' \le \max_{2 \le i \le d} \sigma_i\) we have that \(\vc' + \alpha \cdot \cE''' \subseteq \conv{(\alpha_t \cdot B_2^d) \cup \{ 2 \ve_1 \}}\).
    
    Clearly \(\vc' + \cE''', \vc + \alpha \cdot \cE'''\) now adhere to the reduced case of \Cref{defn:lb_reduced}.
    Since the update is monotone as well (and still \(\alpha \le 1/d\)) we can apply \Cref{claim:lb_reduced}.
    As \(\vol(\cE''') = \vol(\cE)\), this means we have
    \[
    A_{t + 1} - A_{t} \geq C_{\ref{eqn:lb_step_main}} \cdot (P_{t+1} - P_{t})
    \]
    as desired.
\end{proof}

\begin{proof}[Proof of \Cref{claim:lb_simplex_symmetry}]
For the first property, this is exactly the well-known fact that the best ellipsoidal rounding for the simplex \(\Delta_d\) (see e.g. \cite[Remark 1.1]{howard1997john}) has approximation factor \(d\).

Now we show the second property.
Again because the ball rounds the simplex \(\Delta_d\) with approximation factor \(d\), we have
\[\frac{1}{d} \cdot \Delta_d \subseteq B_2^d \subseteq \Delta_d\]
As a result of this, we have
\begin{align*}
    \log \inparen{\frac{\vol(\vc + \alpha \cdot \cE)}{\vol(B_2^d)}} &\le \log \inparen{\frac{\vol(\Delta)}{\vol(B_2^d)}} \\
    &\le \log \inparen{\frac{\vol(d \cdot B_2^d)}{\vol(B_2^d)}} \\
    &\le O(d \log d) 
\end{align*}
And so
\begin{align*}
\log \inparen{\frac{\vol(\cE)}{\vol(B_2^d)}} &= \log \inparen{\frac{\vol(\vc + \alpha \cdot \cE)}{\vol(B_2^d)}} + d \log\inparen{\frac{1}{\alpha}} \\
&\le O\inparen{d \log \inparen{\frac{1}{\alpha}}} & \text{as } \alpha \le \frac{1}{d}
\end{align*}
To establish the second property, it remains to show \(d \log(1/\alpha) \le O((1/\alpha) \log(d))\).
Observe that \(x \mapsto \frac{x}{\log x}\) is increasing for \(x \ge e\), so we have
\[\frac{d}{\log d} \le O\inparen{\frac{\nfrac{1}{\alpha}}{\log(\nfrac{1}{\alpha})}}\]
for all \(d \ge 2\) as \(d \le \nfrac{1}{\alpha}\).
Rearranging gives the desired inequality and thus the second property.
\end{proof}

\subsection{Analysis of the reduced case}
\label{sec:lb_2d}
In this section, we establish a lower bound on \(\frac{\Delta A}{\Delta P}\),
assuming we are in the `reduced case' defined in \Cref{defn:lb_reduced}.
Observe that in this case all relevant convex bodies \(\cE, \alpha \cE, c \cdot \ve_1 + \cE', c \cdot \ve_1 + \alpha' \cE', \conv{\alpha \cE \cup \{\vz\}}\) are all bodies of revolution about the \(x_1\)-axis, so to analyze the quantities involved we may instead look at any two-dimensional slice.
Accordingly we talk about the ellipses \(\cE, \alpha \cE, c + \cE', c + \alpha' \cE'\) in this two-dimensional slice,
where again \(\cE = B_2^2\), and \(c + \cE'\) and \(c + \alpha' \cE'\) are defined by
\begin{gather*}
c + \cE' = \inbraces{(x, y) \in \R^d \middle| \inparen{\frac{x-c}{a}}^2 + \inparen{\frac{y}{b}}^2 \leq 1}\\
c + \alpha' \cE' = \inbraces{(x, y) \in \R^d \middle| \inparen{\frac{x-c}{a}}^2 + \inparen{\frac{y}{b}}^2 \leq \alpha'^2}
\end{gather*}
for \(a, b > 0, c \in \R\).
We also use for convenience \(A = \frac{1}{\alpha}\) and \(A' = \frac{1}{\alpha'}\) so that \(\Delta A = A' - A\).
Also note in this reduced case we have by symmetry that
\[\Delta P = \log\inparen{\frac{\vol(c \cdot \ve_1 + \alpha' \cE')}{\vol(B_2^d)}} - \log\inparen{\frac{\vol(B_2^d)}{\vol(B_2^d)}} = \log(a \cdot b^{d-1})\]

Our lower bound in this reduced case is the following:
\begin{claim} There exists a fixed constant \(C_{\ref{eq:lb_cpct}} > 0\) such that 
    \[\frac{\Delta A}{\Delta P} \geq \min\inparen{C_{\ref{eq:lb_cpct}}, \frac{1}{10} \frac{A}{d}}\]    
\end{claim}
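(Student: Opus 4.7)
By the rotational symmetry of all bodies involved about the $\ve_1$-axis, it suffices to work in a single two-dimensional slice. In this slice the new outer and inner ellipses take the form $(x - c)^2/a^2 + y^2/b^2 \le 1$ and its $\alpha'$-scaling, with $a \ge 1$, $b \ge 1$, $c \in \R$, $\alpha' \in (0, 1]$, and $\Delta P = \log(ab^{d-1})$ (the factor $d - 1$ arising from the rotated copies of the $b$-axis).

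The first step is to translate the monotonicity invariants into algebraic constraints on $(a,b,c,\alpha')$. From the outer invariant $B_2^2 \cup \{(2,0)\} \subseteq c + \cE'$, substituting $(2, 0)$ and $(0, \pm 1)$ yields $c + a \ge 2$ and $c^2/a^2 + 1/b^2 \le 1$. From the inner invariant $c + \alpha' \cE' \subseteq \conv{\alpha B_2^2 \cup \{(2, 0)\}}$, three upper bounds on $\alpha'$ are critical: the rightmost bound $\alpha' \le (2 - c)/a$ (so that $(c + \alpha' a, 0)$ does not exceed $(2, 0)$); the leftmost bound $\alpha' \le (c + \alpha)/a$ (so that $(c - \alpha' a, 0)$ stays inside $\alpha B_2^2$, whose leftmost point is $(-\alpha, 0)$); and the tangent-line bound $\alpha' \le \alpha(2 - c)/\sqrt{a^2\alpha^2 + b^2(4 - \alpha^2)}$, obtained by requiring the entire inner ellipse to lie below the tangent line $y = \alpha(2 - x)/\sqrt{4 - \alpha^2}$ from $(2, 0)$ to $\alpha B_2^2$, computing the maximum of $\alpha x + \sqrt{4 - \alpha^2}\, y$ over the ellipse in the direction normal to this line.

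Together these give $A' = 1/\alpha' \ge \max\bigl(\tfrac{a}{2-c},\, \tfrac{a}{c+\alpha},\, \tfrac{\sqrt{a^2\alpha^2 + b^2(4-\alpha^2)}}{\alpha(2-c)}\bigr)$. The proof of $\Delta A/\Delta P \ge \min(C, A/(10d))$ then proceeds by case analysis on the shape of $(a, b, c)$. In the \emph{near-extremal regime} where $(a, b, c) \approx (2, 1, 0)$ and $\Delta P$ is close to its minimum $\log 2$, writing $a = 2 - \delta_a$, $b = 1 + \delta_b$, $c = \delta_c$, the outer constraint forces $\delta_c \ge \delta_a$ and $\delta_c^2 \lesssim 8 \delta_b$; when $\delta_c \lesssim \alpha$ the leftmost bound binds and gives $\alpha' \lesssim \alpha/2$, so $\Delta A \gtrsim A$ which far exceeds $\Delta P \cdot \min(C, A/(10d))$; when $\delta_c \gg \alpha$ the tangent bound gives $\Delta A \gtrsim A \delta_b$ while $\Delta P \approx \log 2 + (d-1)\delta_b$, so $\Delta A/\Delta P \gtrsim A/d$ after optimization. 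In the \emph{bulk regime} $b \ge 1 + \Omega(1)$, the tangent bound (dominant for $\alpha$ small) simplifies to $A' \gtrsim 2bA/(2-c)$; using $c + a \ge 2$ to bound $2 - c \le a$, we get $\Delta A \gtrsim A(2b/a - 1)$, and comparison with $\Delta P = \log a + (d-1) \log b$ gives $\Delta A/\Delta P \gtrsim A/d$, matching the second argument of the $\min$.

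\textbf{Main obstacle.} The delicate point is the near-extremal regime $(a, b, c) \approx (2, 1, 0)$, where all three bounds on $\alpha'$ are simultaneously in play and the binding constraint depends sensitively on the ratio $\delta_c/\alpha$. A careful first-order expansion along the feasibility curve $\delta_c^2 \lesssim 8\delta_b$ (forced by the outer constraint) is needed to ensure that regardless of the regime of $\delta_c/\alpha$, the increase $\Delta A$ is at least the claimed multiple of $\Delta P$. Tracking how the $\alpha$-dependence enters both the tangent-line and leftmost bounds, and verifying that no degeneracy occurs at the transition $\delta_c \sim \alpha$ (where the two bounds cross and must be combined), is the main technical hurdle.
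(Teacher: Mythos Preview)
Your constraint derivation is essentially the same as the paper's (the paper uses your leftmost bound as \eqref{eq:lb_4}, your rightmost bound as \eqref{eq:lb_1}--\eqref{eq:lb_3}, and a version of your tangent bound as \Cref{claim:lb_3}). The case analysis, however, has a genuine gap.

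Your two regimes do not cover the parameter space. ``Near-extremal'' is declared to mean $(a,b,c)\approx(2,1,0)$, and ``bulk'' to mean $b\ge 1+\Omega(1)$. The region $b\approx 1$ with $a$ large (say $a=100$, $b=1.01$) falls in neither. Even inside your bulk regime the argument breaks: you pass from $A'\gtrsim 2bA/(2-c)$ to $A'\gtrsim 2bA/a$ via $2-c\le a$, and then write $\Delta A\gtrsim A(2b/a-1)$. When $a>2b$ this is vacuous, and nothing in your sketch prevents $a\gg b$ in the bulk regime. Finally, the near-extremal analysis for $\delta_c\gg\alpha$ asserts ``$\Delta A\gtrsim A\delta_b$ while $\Delta P\approx\log 2+(d-1)\delta_b$, so $\Delta A/\Delta P\gtrsim A/d$ after optimization''; but when $\delta_b\to 0$ this ratio degenerates to $0/\log 2$, and an explicit argument exploiting a lower bound on $\delta_b$ (or switching to a different constraint) is needed and not supplied.

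The paper organizes the case analysis differently. It first observes the simple projection bound $\alpha' b\le\alpha$, i.e.\ $A'\ge bA$ (\Cref{claim:lb_1}), which immediately handles the case $(d-1)\log b\ge\log a$: then $\Delta P\le 2(d-1)\log b$ and $\Delta A\ge(b-1)A$, giving $\Delta A/\Delta P\ge\tfrac12 A/d$. This cleanly disposes of all ``$b$ large relative to $a$'' situations. The remaining case $\Delta P\le 2\log a$ is then handled by a detailed tree with explicit numerical thresholds (whether $c+\alpha' a\le\tfrac{11}{10}\alpha$, whether $a\le 16$, whether $A\ge 2$, etc.), and in one compact subregion ($a\le 16$, $b\le 100$, $A\le 2$) the paper invokes a compactness argument---$\Delta A$ is continuous and strictly positive on a compact box of parameters, hence bounded below---which is where the absolute constant $C_{\ref{eq:lb_cpct}}$ actually comes from. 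Your first-order expansion would, if completed, have to reproduce something equivalent, but as written it does not.
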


Clearly this claim yields \Cref{claim:lb_reduced} as a corollary, as by assumption in \Cref{claim:lb_reduced} we have \(A \geq d\) and so we get \(\frac{\Delta A}{\Delta P} \geq \Omega(1)\).

The inner ellipses in this lower bound, and some relevant points used in the proof of this claim, are depicted in \Cref{fig:lb_inner}.
%GDrive Link: https://docs.google.com/drawings/d/1QNMgITYQrYlHKHLYSxmiCLNFNIxT9kxx4jiWoWz-PdI/edit?usp=sharing
\begin{figure}[h]
\centering
\includegraphics[height=6cm]{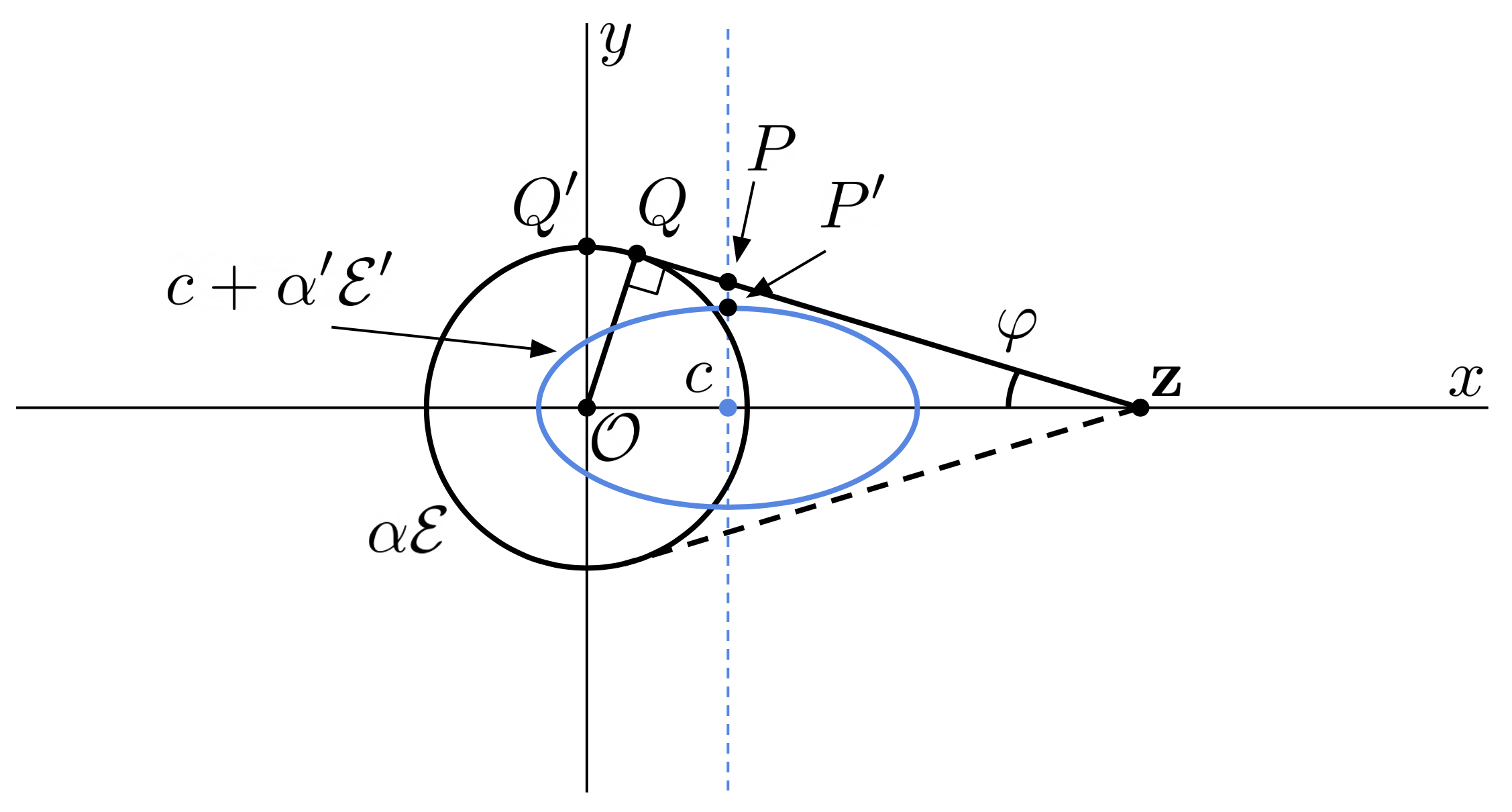}
\caption{The inner ellipses in the two-dimensional lower bound. \(\cO\) is the origin. The black solid circle is the previous inner ellipse \(\alpha \cE\),
and the blue solid circle is the next inner ellipse \(c + \alpha' \cE'\). The vertical dotted blue line \(x=c\) through the center \(c\) marks the location of the next inner ellipse on the \(x\)-axis. The new point is \(\vz = 2 \ve_1\),
and \(\overline{\vz Q}\) is one of the lines through \(\vz\) tangent to \(\alpha \cE\), with \(Q\) the point of tangency.
\(Q'\) is the intersection of \(\alpha \cE\) with the \(y\)-axis on the same side of the \(x\)-axis as \(Q\).
\(P'\) is the intersection of the line \(x=c\) with \(c + \alpha' \cE'\) on the same side as \(Q\),
and \(P\) is the intersection of this line with \(\overline{\vz Q}\).
We denote the angle \(\angle P \vz c\) with \(\varphi\).
}
\label{fig:lb_inner}
\end{figure}

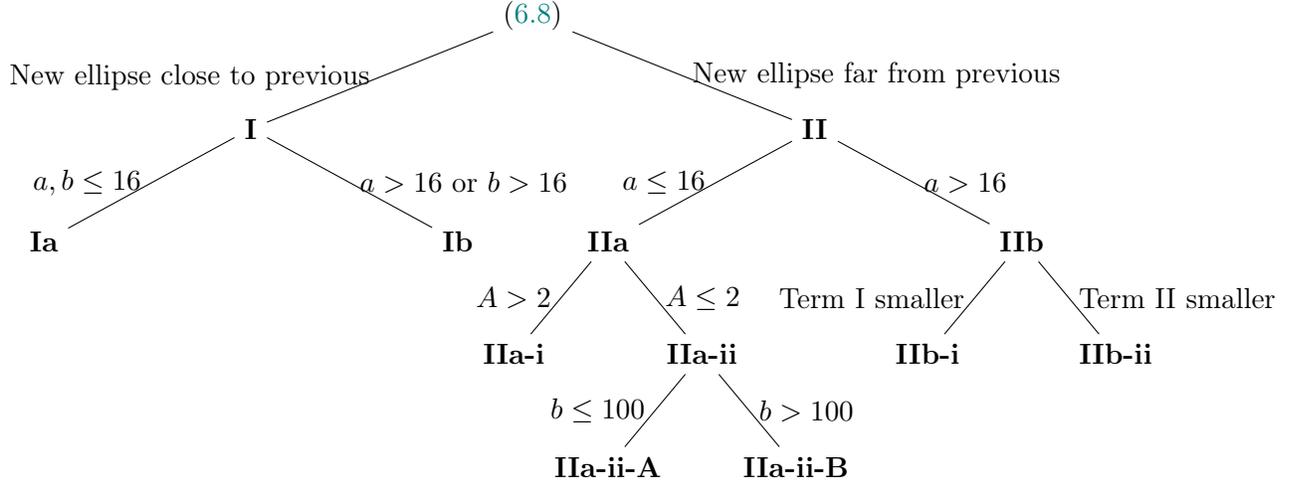
\begin{figure}[h]
\centering
\begin{tikzpicture}[
level 1/.style={sibling distance=75mm},
level 2/.style={sibling distance=55mm},
level 3/.style={sibling distance=25mm},
]
\tikzset{v/.append style={font=\bfseries}}
  \node {\eqref{eq:lb_5}}
    child {
        node[v] {I}
        child {
            node[v] {Ia}
            edge from parent
                node[left] {\(a, b \le 16\)}
        }
        child {
            node[v] {Ib}
            edge from parent
                node[right] {\(a > 16\) or \(b > 16\)}
        }
        edge from parent
            node[left] {New ellipse close to previous}
    }
    child {
        node[v] {II}
        child {
            node[v] {IIa}
            child {
                node[v] {IIa-i}
                edge from parent
                    node[left] {\(A > 2\)}
            }
            child {
                node[v] {IIa-ii}
                child {
                    node[v] {IIa-ii-A}
                    edge from parent
                        node[left] {\(b \le 100\)}
                }
                child {
                    node[v] {IIa-ii-B}
                    edge from parent
                        node[right] {\(b > 100\)}
                }
                edge from parent
                    node[right] {\(A \le 2\)}
            }
            edge from parent
                node[left] {\(a \le 16\)}
        }
        child {
            node[v] {IIb}
            child {
                node[v] {IIb-i}
                 edge from parent
                    node[left] {Term I smaller}
            }
            child {
                node[v] {IIb-ii}
                 edge from parent
                    node[right] {Term II smaller}
            }
            edge from parent
                    node[right] {\(a > 16\)}
        }
        edge from parent
            node[right] {New ellipse far from previous}
    };
\end{tikzpicture}
\caption{Tree of cases in the lower bound}
\label{fig:lb_cases}
\end{figure}

\begin{proof}
We establish this claim through a geometric argument that we break down by cases
(the logical tree of cases is visualized in \Cref{fig:lb_cases}).
First, as the new outer ellipse \(c + \cE'\) contains \(\cE = B_2^d\) we readily have that \(a, b \geq 1\).

As the rightmost point of the new outer ellipse must be to the right of \(\vz\), we have 
\begin{equation}\label{eq:lb_1}
c + a > 2
\end{equation}

As the leftmost point of the new inner ellipse must be to the right of the leftmost point of the previous inner ellipse, we have
\begin{equation}\label{eq:lb_4}
c \geq \alpha' a - \alpha
\end{equation}

\begin{claim}\label{claim:lb_1}
    We have \(\alpha' \cdot b \leq \alpha\), or equivalently \(A' \geq b \cdot A\).
\end{claim}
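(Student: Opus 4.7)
The plan is to use the monotonicity condition on the inner ellipses, namely $c + \alpha' \cE' \subseteq \conv{\alpha \cE \cup \{\vz\}}$, and compare the maximum $y$-coordinates on both sides. This reduces to a one-line observation once the relevant quantities are identified.

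First, I would identify the maximum $y$-coordinate over $\conv{\alpha\cE \cup \{\vz\}}$. Since the convex hull of a set cannot attain a $y$-coordinate exceeding that of any of the points in the set, and the maximum $y$-coordinate of $\alpha \cE = \alpha B_2^2$ is $\alpha$ while $\vz = (2,0)$ has $y$-coordinate $0$, the maximum $y$-coordinate of the convex hull equals $\alpha$ (attained at $(0, \alpha)$).

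Next, I would compute the maximum $y$-coordinate of $c + \alpha' \cE'$. Since $c + \alpha' \cE'$ is the ellipse centered at $(c, 0)$ with horizontal semi-axis $\alpha' \cdot a$ and vertical semi-axis $\alpha' \cdot b$, its maximum $y$-coordinate is exactly $\alpha' \cdot b$, attained at $(c, \alpha' b)$.

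Finally, combining the two observations via the inclusion $c + \alpha' \cE' \subseteq \conv{\alpha \cE \cup \{\vz\}}$ yields $\alpha' \cdot b \leq \alpha$, which is the first form of the claim. Dividing both sides by $\alpha'\alpha > 0$ and using the definitions $A = 1/\alpha$, $A' = 1/\alpha'$ gives the equivalent form $A' \geq b \cdot A$. There is no substantive obstacle here; the entire argument is a direct consequence of the monotonicity invariant once both sides are projected onto the $y$-axis.
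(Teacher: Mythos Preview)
Your proof is correct and matches the paper's approach: both arguments use the inner-ellipsoid invariant $c + \alpha' \cE' \subseteq \conv{\alpha \cE \cup \{\vz\}}$ and project onto the $y$-axis, observing that the vertical extent of the inner ellipse is $\alpha' b$ while that of the convex hull is $\alpha$. The paper phrases this via the labeled points $P'$ and $Q'$ in Figure~\ref{fig:lb_inner}, but the content is identical.
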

\begin{proof}
    The geometry of this fact is visualized in \Cref{fig:lb_inner}.
    We overload notation so that \(c\) will also denote the point \((c, 0)\), the center of the new inner ellipse.
We denote \(\cO\) as the origin.
Let \(\overline{\vz Q}\) be one of the lines through \(\vz\) and tangent to \(\alpha \cE\),
with \(Q\) the point of tangency  (the choice of which line is arbitrary, in the figure we choose the one whose intersection with \(\alpha \cE\) is above the \(x\)-axis).
We let \(P\) be the intersection of the vertical line through \((c, 0)\) with \(\overline{\vz Q}\),
and \(P'\) be the intersection of this line with the ellipse \(c + \alpha' \partial \cE'\)
on the same side of the \(x\)-axis as \(Q\).

    Observe that \(\alpha' b = \overline{c P'}\) as the vertical semi-axis of the ellipse \(c + \alpha' \cE\), and \(\alpha = \overline{\cO Q'}\).
    Due to the fact that \(c + \alpha' \cE' \subseteq \conv{\alpha \cE \cup \{\vz\}}\),
    the projection of both sets onto the \(y\)-axis satisfies the same inclusion, and this gives the desired inequality.
\end{proof}

Observe that as \(c + \cE'\) must contain both the points \((-1, 0)\) and \((0, 2)\), we have
\begin{equation}
\label{eq:lb_a_3_2}
a \geq \frac{3}{2}
\end{equation}

Observe that we can split \(\Delta P\) into two terms: \[\Delta P = \underbrace{(d-1) \log b}_{\text{I}} + \underbrace{\log a}_{\text{II}}\]
First we show that if term I is larger, then we have a constant lower bound on \(\frac{\Delta A}{\Delta P}\). 
\begin{claim}\label{claim:lb_term2}
    If \((d-1) \log b \geq \log a\), then \(\frac{\Delta A}{\Delta P} \geq \frac{1}{2} \frac{A} {d}\).
\end{claim}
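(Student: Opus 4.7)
The plan is to combine the multiplicative lower bound on $A'$ from Claim \ref{claim:lb_1} with the assumed dominance of Term I over Term II in $\Delta P$. First, I would rewrite $\Delta A$ using Claim \ref{claim:lb_1}, which gives $A' \geq b \cdot A$ and hence
\[
\Delta A = A' - A \geq (b-1) \cdot A.
\]
Second, the hypothesis $(d-1)\log b \geq \log a$ implies that $\Delta P = (d-1)\log b + \log a \leq 2(d-1)\log b \leq 2d \log b$. Putting these two bounds together yields
\[
\frac{\Delta A}{\Delta P} \;\geq\; \frac{(b-1) \cdot A}{2d \log b}.
\]

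The remaining step is to handle the ratio $(b-1)/\log b$. Recall that $b \geq 1$ (since $c + \cE'$ must contain the unit ball $\cE = B_2^d$, so both semi-axes of $\cE'$ have length at least $1$). The elementary inequality $\log(1+x) \leq x$ for $x \geq 0$, applied with $x = b-1$, gives $\log b \leq b - 1$, so $(b-1)/\log b \geq 1$ on $b > 1$ (and the limit at $b = 1$ is also $1$, so the inequality holds throughout). The edge case $b = 1$ should be treated separately: if $b = 1$, then Term I vanishes, so by the hypothesis Term II also vanishes, meaning $a = 1$; but then $\Delta P = 0$ and the claim is vacuous (or one may interpret $\Delta A/\Delta P$ as $+\infty$, in which case the bound trivially holds).

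Plugging this in, we obtain
\[
\frac{\Delta A}{\Delta P} \;\geq\; \frac{(b-1) \cdot A}{2d \log b} \;\geq\; \frac{A}{2d},
\]
which is exactly the desired inequality. There is no real obstacle here; the argument is a direct two-line calculation once Claim \ref{claim:lb_1} and the standard inequality $\log b \leq b - 1$ are in hand. The only thing to be careful about is the degenerate case $b = 1$, which one can dispatch by a short remark as above.
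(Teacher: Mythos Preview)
Your proof is correct and follows essentially the same approach as the paper: bound $\Delta A \geq (b-1)A$ via Claim~\ref{claim:lb_1}, bound $\Delta P \leq 2(d-1)\log b$ from the hypothesis, and finish with the elementary inequality $(b-1)/\log b \geq 1$. The only cosmetic difference is that the paper keeps the factor $2(d-1)$ rather than relaxing to $2d$, obtaining $\tfrac{A}{2(d-1)}$ before weakening to $\tfrac{A}{2d}$; your direct route to $\tfrac{A}{2d}$ is just as good.
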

\begin{proof}
    Under the assumption, we have \(\Delta P \leq 2 (d-1) \log b\).
    Combining this with \Cref{claim:lb_1}, we have
    \begin{align*}
    \frac{\Delta A}{\Delta P} &\geq \frac{A' - A}{2 (d-1) \log b} \\
    &\geq \frac{b \cdot A - A}{2 (d-1) \log b} \\
    &= \frac{b-1}{2 \log b} \frac{A}{d-1} \\
    &\geq \frac{1}{2} \frac{A}{d} \\
    \end{align*}
    where the last line uses that \(\frac{x-1}{2 \log (x)} > \frac{1}{2}\) when \(x > 1\).
\end{proof}

In light of \Cref{claim:lb_term2}, we can then assume in the sequel Term II is larger, meaning that
\begin{equation}\label{eq:lb_5}
\Delta P \leq 2 \log a
\end{equation}

\textbf{Case I }\textit{(New ellipse is close to the previous one)}. Assume that 
\begin{equation}\label{eq:lb_2}
    c + \alpha' \cdot a \leq \frac{11}{10} \alpha
\end{equation}
i.e. that the rightmost point of the new inner ellipse is to the left of \(\frac{11}{10}\).

\begin{claim}\label{claim:lb_close_1}
    In Case I, we have \(\Delta A \geq \frac{4}{11} A\).
\end{claim}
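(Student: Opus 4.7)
The plan is to derive a lower bound on $\alpha/\alpha'$ directly from the Case~I hypothesis together with the leftmost-point constraint~\eqref{eq:lb_4}, and then translate it into a lower bound on $\Delta A$.

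First, I would use~\eqref{eq:lb_4}, which says $c \geq \alpha' a - \alpha$ and encodes the fact that the leftmost point of the new inner ellipse cannot extend past the leftmost point $(-\alpha,0)$ of $\alpha\cE$. Rewriting as $c + \alpha' a \geq 2\alpha' a - \alpha$ and combining with the Case~I hypothesis~\eqref{eq:lb_2}, namely $c + \alpha' a \leq \tfrac{11}{10}\alpha$, yields
\[2\alpha' a - \alpha \;\leq\; \tfrac{11}{10}\alpha, \qquad \text{i.e.,}\qquad \alpha' a \;\leq\; \tfrac{21}{20}\,\alpha.\]

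Next, I would invoke~\eqref{eq:lb_a_3_2}, which gives $a \geq \tfrac{3}{2}$ (this comes from $c + \cE'$ having to contain both $(-1,0)\in \cE$ and $\vz=(2,0)$, so its $x$-diameter $2a$ is at least $3$). This forces $\tfrac{3}{2}\alpha' \leq \tfrac{21}{20}\alpha$, hence $\alpha' \leq \tfrac{7}{10}\alpha$. Taking reciprocals, $A' \geq \tfrac{10}{7}A$, so
\[\Delta A \;=\; A' - A \;\geq\; \tfrac{3}{7}\,A.\]
Since $\tfrac{3}{7} > \tfrac{4}{11}$ (equivalently $33 > 28$), the claim follows.

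There is essentially no obstacle here: the proof is a short chain of elementary inequalities built from already-established constraints. The only care needed is to use the sharper bound $a \geq 3/2$ from~\eqref{eq:lb_a_3_2} rather than the weaker $a \geq 1$; the constants $\tfrac{11}{10}$ in the Case~I hypothesis and $\tfrac{4}{11}$ in the conclusion appear to be tuned so that the inequalities line up with a modest slack.
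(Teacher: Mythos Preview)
Your proof is correct and in fact cleaner than the paper's. The paper argues by a two-case split: if $\alpha' \leq \alpha/2$ the bound is immediate, while if $\alpha' > \alpha/2$ it first uses~\eqref{eq:lb_4} together with~\eqref{eq:lb_1} to deduce $c>0$, then combines $c>0$ with~\eqref{eq:lb_2} to obtain $\alpha' a \leq \tfrac{11}{10}\alpha$, and finally applies $a \geq \tfrac{3}{2}$ to get $A' \geq \tfrac{15}{11}A$.

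You bypass the case split entirely by adding~\eqref{eq:lb_4} directly to~\eqref{eq:lb_2}, obtaining $\alpha' a \leq \tfrac{21}{20}\alpha$ without needing to know the sign of $c$. This is the same idea the paper later uses in \Cref{claim:lb_2}, except that you retain the full strength of~\eqref{eq:lb_4} rather than weakening it to $c \geq -\alpha$; that extra factor of $2$ is exactly what makes the bound strong enough here. Your route is shorter, avoids invoking~\eqref{eq:lb_1}, and even yields the slightly better constant $\tfrac{3}{7}$ in place of $\tfrac{4}{11}$.
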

\begin{proof}
    We prove this by cases.
    First, if we assume that \(\alpha' \leq \frac{\alpha}{2}\),
    we get \(A' \geq 2 A\) and \(\Delta A \geq A\).

    In the second case, we have \(\alpha' > \frac{\alpha}{2}\).
    We first use this to show \(c > 0\).
    By (\ref{eq:lb_4}) and (\ref{eq:lb_1}) we have \(\frac{c + \alpha}{\alpha'} \geq a > 2 - c\), so
    \(c (1 + \alpha') > 2 \alpha' - a > 0\) and so \(c > 0\).

    Using (\ref{eq:lb_2}) and that \(c > 0\), we have \(\alpha' a \leq \frac{11}{10} \alpha\).
    Thus \(A' \geq \frac{10}{11} a A\).
    By (\ref{eq:lb_a_3_2}) we get \(A' \geq \frac{15}{11} A\), and finally
    \(\Delta A \geq \frac{4}{11} A\).
\end{proof}

\begin{claim}\label{claim:lb_2}
    In Case I, we have \(A' \geq \frac{10}{21} a \cdot A\).
\end{claim}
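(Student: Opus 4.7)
\textbf{Proof proposal for \Cref{claim:lb_2}.} The plan is to directly combine the Case I assumption \eqref{eq:lb_2} (which upper bounds the \emph{rightmost} point of the new inner ellipse) with the general leftmost-point inequality \eqref{eq:lb_4} (which holds throughout the reduced case) to obtain an upper bound on $\alpha' a$ in terms of $\alpha$, and then invert to get the lower bound on $A'$.

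Concretely, I would substitute the lower bound $c \geq \alpha' a - \alpha$ from \eqref{eq:lb_4} into the left-hand side of the Case I assumption \eqref{eq:lb_2}, yielding
\[
2\alpha' a - \alpha \;\leq\; c + \alpha' a \;\leq\; \tfrac{11}{10}\,\alpha.
\]
Solving for $\alpha' a$ gives $\alpha' a \leq \tfrac{21}{20}\,\alpha$, i.e., $\alpha' \leq \tfrac{21}{20}\cdot\tfrac{\alpha}{a}$. Taking reciprocals,
\[
A' \;=\; \tfrac{1}{\alpha'} \;\geq\; \tfrac{20}{21}\cdot\tfrac{a}{\alpha} \;=\; \tfrac{20}{21}\, a A,
\]
which is strictly stronger than the claimed bound $A' \geq \tfrac{10}{21}\, a A$.

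There is essentially no obstacle in this proof: both \eqref{eq:lb_2} and \eqref{eq:lb_4} are already in hand (the former by the defining assumption of Case I, the latter as a direct consequence of the inner-ellipse monotonicity invariant $c + \alpha' \cE' \subseteq \conv{\alpha \cE \cup \{\vz\}}$, whose leftmost point along the $x$-axis is $-\alpha$), and the derivation is a single line of algebraic manipulation with no case analysis required. The slack between the derived factor $\tfrac{20}{21}$ and the stated $\tfrac{10}{21}$ suggests that the looser form is recorded simply because the extra factor of $2$ is harmless in how \Cref{claim:lb_2} is subsequently consumed.
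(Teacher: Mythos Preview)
Your proof is correct and follows essentially the same approach as the paper: both combine the Case~I assumption \eqref{eq:lb_2} with the leftmost-point bound \eqref{eq:lb_4}. The only difference is that the paper uses the weaker consequence $c \geq -\alpha$ of \eqref{eq:lb_4}, yielding $\alpha' a \leq \tfrac{21}{10}\alpha$ and hence exactly the constant $\tfrac{10}{21}$, whereas you retain the full strength of \eqref{eq:lb_4} and obtain the sharper $\tfrac{20}{21}$.
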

\begin{proof}
From (\ref{eq:lb_4}) we also get the weaker lower bound \(c \geq - \alpha\).
Combined with (\ref{eq:lb_2}), this gives \(\alpha' a \leq \frac{21}{10} \alpha\), which is equivalent to the desired inequality.
\end{proof}

We divide Case I into two sub-cases.

\textbf{Case Ia} \textit{(\(a, b \leq 16\))}.
First, assume that \(a, b \leq 16\).
Then \(\Delta P = \log(a \cdot b^{d-1}) \leq d \log(16)\),
and by \Cref{claim:lb_close_1} we get \[\frac{\Delta A}{\Delta P} \geq \frac{4}{11 \log(16)} \frac{A}{d} \geq \frac{1}{10} \frac{A}{d}\]

\textbf{Case Ib} \textit{(\(a > 16\) or \(b > 16\))}. Now assume that either \(a\) or \(b\) is greater than \(16\).

Combining \Cref{claim:lb_1} and \Cref{claim:lb_2} together, we get \(A'^2 \geq \frac{10}{21} ab A^2\), or \(A' \geq \sqrt{\frac{10}{21} a b} \cdot A\).
We have \(\Delta A = A' - A \geq \inparen{\sqrt{\frac{10}{21} a b} - 1} A\). As \(ab \geq 16\) we get \(\sqrt{ab} \geq 2 \sqrt{\frac{21}{10}}\), so
\(\Delta A = \inparen{\sqrt{\frac{10}{21} a b} - 1} A \geq \sqrt{\frac{5}{21} ab} \cdot A\).

Using the analytic inequality that \(\log x \leq \sqrt{x}\) for all \(x > 0\), we get
\(\Delta P = \log a + (d-1) \log b \leq \sqrt{a} + (d-1) \sqrt{b} \leq d \cdot \sqrt{ab}\).

Combining these inequalites for \(\Delta A\) and \(\Delta P\), we obtain
\[\frac{\Delta A}{\Delta P} \geq \sqrt{\frac{5}{21}} \frac{A}{d} \geq \frac{4}{10} \frac{A}{d}\]

\textbf{Case II }\textit{(New ellipse is far from the previous one)}. Assume that
\begin{equation}\label{eq:lb_3}
    c + \alpha' \cdot a > \frac{11}{10} \alpha
\end{equation}

\begin{claim}\label{claim:lb_3}
    We have 
    \begin{equation}\label{eq:lb_trig_b}
    \alpha' b \leq (2 - c) \cdot \frac{\frac{\alpha}{2}}{\sqrt{1 - \inparen{\frac{\alpha}{2}}^2}}
    \end{equation}
\end{claim}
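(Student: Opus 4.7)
The plan is to derive the bound on $\alpha' b$ directly from the monotonicity inclusion $c + \alpha' \cE' \subseteq \conv{\alpha\cE \cup \{\vz\}}$ by using the upper tangent line from $\vz$ to $\alpha\cE$ as a supporting hyperplane for the convex hull.

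First I would identify this tangent line $L$ together with the angle $\varphi = \angle P\vz c$ labelled in Figure~\ref{fig:lb_inner}. The segment $\overline{\vz Q}$ is tangent to the circle $\alpha\cE$ at $Q$, so triangle $\cO Q \vz$ is right-angled at $Q$, with hypotenuse $\overline{\cO \vz}$ of length $2$ and leg $\overline{\cO Q}$ of length $\alpha$. Hence $\sin \varphi = \alpha/2$, which gives
\[
\tan\varphi = \frac{\alpha/2}{\sqrt{1 - (\alpha/2)^2}},
\qquad
L\colon\ y = \tan(\varphi)\,(2-x).
\]

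Next I would verify that the closed half-plane $H = \{(x,y) : y \le \tan(\varphi)(2-x)\}$ contains the entire convex hull $\conv{\alpha\cE \cup \{\vz\}}$. The point $\vz = (2,0)$ lies on $L \subset H$, and the disk $\alpha\cE$ lies in $H$ because $L$ is tangent to its boundary at $Q$ from above (i.e., $\alpha\cE$ meets $\{y > \tan(\varphi)(2-x)\}$ only at $Q$). Since $H$ is convex, $\conv{\alpha\cE \cup \{\vz\}} \subseteq H$.

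Finally, I would observe that the topmost point of $c + \alpha' \cE'$, obtained by maximizing $y$ subject to $((x-c)/a)^2 + (y/b)^2 \le \alpha'^2$, is $P' = (c,\ \alpha' b)$. By the monotonicity invariant (Definition~\ref{def:alg-invariant}), $P' \in c + \alpha' \cE' \subseteq \conv{\alpha\cE \cup \{\vz\}} \subseteq H$, so $\alpha' b \le \tan(\varphi)(2-c)$, which is precisely \eqref{eq:lb_trig_b}. No step is delicate; the only mild technicality is the trigonometric identification of the slope of $L$, which is elementary. Note that the argument does not actually use the Case II hypothesis \eqref{eq:lb_3}, so the claim holds unconditionally in the reduced case.
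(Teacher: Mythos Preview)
Your proof is correct and takes essentially the same approach as the paper: both arguments identify the tangent line from $\vz$ to $\alpha\cE$, compute $\sin\varphi = \alpha/2$ from the right triangle $\cO Q \vz$, and then bound $\alpha' b$ by the height of this tangent line at $x=c$. The only cosmetic difference is that you phrase the final step as membership of $P'=(c,\alpha' b)$ in the supporting half-plane $H$, whereas the paper phrases it as the segment $\overline{cP'}$ being contained in $\overline{cP}$; these are the same observation, and your remark that the Case~II hypothesis is not actually used is also correct.
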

\begin{proof}
Again, the proof of this claim is pictured in \Cref{fig:lb_inner},
where we construct the points in the same way as in the proof of \Cref{claim:lb_1}.
Let \(\angle P \vz c\) be denoted by \(\varphi\).
Note that the angle \(\angle P c \vz\) is a right angle, and so \(\tan \varphi = \frac{\overline{c P}}{\overline{c v}}\). The line \(\overline{c \vz}\) has length \(2 - c\), so we get \(\overline{c P} = (2 - c) \tan \varphi\).
The segment \(\overline{c P'}\), of length \(\alpha' b\), is contained within the segment \(\overline{c P}\), and so
\(\alpha' b \leq (2 - c ) \tan \varphi\).
    
Observe that the angle \(\angle \cO Q \vz\) is also a right angle.
Further, clearly the length of \(\overline{\cO Q}\) is \(\alpha\) and the length of \(\overline{0 v}\) is \(2\).
Since we have that \(\varphi\) is also the angle \(\angle Q \vz \cO\), we get \(\sin \varphi = \frac{\alpha}{2}\).
Now using the standard trigonometric identity that \(\tan \varphi = \frac{\sin \varphi}{\sqrt{1 - \sin^2 \varphi}}\) for $\varphi\in[-\pi/2,\pi/2]$,
we get the desired inequality.
\end{proof}

We split Case II into several sub-cases, as for Case I. 

\textbf{Case IIa} \textit{(\(a \leq 16\))}. First, we look at the case where \(a \leq 16\).

\textbf{Case IIa-i} \textit{(\(A \geq 2\))}. Assume \(A \geq 2\).
\begin{claim}\label{claim:lb_4}
    When \(A \geq 2\), we have \(\Delta A \geq \frac{1}{2}\).
\end{claim}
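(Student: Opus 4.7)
The plan is to combine Claim~\ref{claim:lb_3} with the monotonicity constraints on the outer ellipsoid to derive an upper bound on $\alpha'$ strong enough to imply $\alpha' \le \frac{2\alpha}{2+\alpha}$, which is the algebraic restatement of $\Delta A = \frac{1}{\alpha'} - \frac{1}{\alpha} \ge \frac{1}{2}$.

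I would start from $\alpha' b \le (2-c)\tan\varphi$ (Claim~\ref{claim:lb_3}), where $\tan\varphi = \alpha/\sqrt{4-\alpha^2}$ since $\sin\varphi = \alpha/2$. I would sharpen the usual $b \ge 1$ lower bound by using that $\vc + \cE'$ must contain $(0, \pm 1)$, which forces $\frac{c^2}{a^2} + \frac{1}{b^2} \le 1$ and hence $b \ge a/\sqrt{a^2 - c^2}$. Substituting yields
\[
\alpha' \;\le\; \frac{(2-c)\sqrt{a^2-c^2}}{a} \cdot \frac{\alpha}{\sqrt{4-\alpha^2}}.
\]
A complementary ``box'' bound $\alpha' \le \frac{2+\alpha}{2a}$ follows from requiring that the feasibility interval $c \in [\alpha' a - \alpha,\; 2 - \alpha' a]$ be non-empty: the left endpoint is (\ref{eq:lb_4}), and the right endpoint comes from the containment $\vz \in \vc + \cE'$ (which gives $c + \alpha' a \le 2$).

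The next step is to verify that, across the range $\alpha \in (0, \tfrac{1}{2}]$ (from $A \ge 2$) and $a \in [\tfrac{3}{2}, 16]$ (from Case~IIa together with~(\ref{eq:lb_a_3_2})), these two bounds together imply $\alpha' \le \frac{2\alpha}{2+\alpha}$. For $a$ large relative to $1/\alpha$, the box bound already suffices by itself. For the remaining values of $a$---in particular $a$ close to $\tfrac{3}{2}$, where the monotonicity constraint $c \in [2-a,\, a-1]$ collapses and $c$ is effectively pinned down (at $a = \tfrac{3}{2}$, to $c = \tfrac{1}{2}$)---the sharper height bound above is the binding one, and after substituting the worst-case $c$ the desired inequality reduces to a routine polynomial inequality in $\alpha$ and $a$ that uses $\alpha \le \tfrac{1}{2}$ in an essential way.

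The main obstacle will be carrying out this case analysis uniformly over $(\alpha, a)$ and handling the intermediate range where neither bound is individually sharp; the tightening via $b \ge a/\sqrt{a^2-c^2}$ is unavoidable whenever $a$ is small. Once $\alpha' \le \frac{2\alpha}{2+\alpha}$ is established across all sub-cases, rearranging immediately yields $\Delta A \ge \frac{1}{2}$ and completes the proof.
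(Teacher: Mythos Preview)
Your approach has a genuine gap: you never invoke the Case~II hypothesis \eqref{eq:lb_3}, and without it your two bounds (i) and (ii) are not strong enough. Concretely, take $a=2$, $\alpha=0.1$ (so $A=10\ge 2$), and $c=0$, which lies in your outer-ellipsoid range $[2-a,\,a-1]=[0,1]$. Your height bound~(i) gives
\[
\alpha'\;\le\;\frac{(2-0)\sqrt{4-0}}{2}\cdot\frac{0.1}{\sqrt{3.99}}\;\approx\;0.1001,
\]
and your box bound~(ii) gives $\alpha'\le \tfrac{2.1}{4}=0.525$; neither is below the target $\tfrac{2\alpha}{2+\alpha}\approx 0.0952$. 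The point that actually rescues this configuration is \eqref{eq:lb_4} \emph{applied at the chosen $c$}, namely $\alpha'\le\frac{c+\alpha}{a}=0.05$; but you only feed \eqref{eq:lb_4} into the interval-nonemptiness bound~(ii), which throws away the dependence on $c$ and is far too weak here. Since bound~(i) is maximised over $c\in[2-a,a-1]$ precisely at small $c$ (where \eqref{eq:lb_4} is the binding constraint), the hole is systematic, not a boundary artefact.

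The paper's proof avoids this entirely by using the Case~II assumption. Adding \eqref{eq:lb_3} and \eqref{eq:lb_4} immediately gives $c>\tfrac{1}{20}\alpha$, and then a single substitution of this lower bound into \eqref{eq:lb_trig_b} with the crude $b\ge 1$ yields a one-variable inequality in $A$ that is easily seen to give $\Delta A\ge \tfrac12$ for $A\ge 2$. No sharpened bound on $b$, no optimisation over $c$, and no case split on $a$ is needed. If you want to salvage your route, you would have to keep the per-$c$ constraint $\alpha'\le\frac{c+\alpha}{a}$ as a \emph{third} bound alongside (i) and~(ii) and then argue that for every admissible $(a,\alpha,c)$ the minimum of the three is below $\tfrac{2\alpha}{2+\alpha}$; but that is strictly more work than the paper's two-line argument.
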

\begin{proof}
Adding (\ref{eq:lb_4}) and (\ref{eq:lb_3}) together gives \(c > \frac{1}{20} \alpha\).
Using this in (\ref{eq:lb_trig_b}) and rearranging using the definitions of \(A\) and \(A'\) yields
% \[\alpha' b \leq \inparen{1 - \frac{1}{40} \alpha} \frac{\alpha}{\sqrt{1 - \inparen{\frac{\alpha}{2}}^2}}\]
\[A' \geq b A \cdot \frac{1}{1-\frac{1}{40} \frac{1}{A}} \inparen{1 - \frac{1}{A^2}}\]
and therefore we get the inequality
\[\Delta A \geq b A \cdot \frac{1}{1-\frac{1}{40} \frac{1}{A}} \inparen{1 - \frac{1}{A^2}} - 1\]
and using \(b \geq 1\), we obtain
\[\Delta A \geq A \cdot \frac{1}{1-\frac{1}{40} \frac{1}{A}} \inparen{1 - \frac{1}{A^2}} - 1\]
To prove the claim, it suffices to show the right hand side exceeds \(\frac{1}{2}\) when \(A \geq 2\). Upon rearranging, this is equivalent to the inequality \(A - \frac{77}{80} \frac{1}{A} \geq \frac{3}{2}\) when \(A \geq 2\).
\end{proof}
Combining the assumption that \(a \leq 16\) with (\ref{eq:lb_5}) and \Cref{claim:lb_4} yields \(\frac{\Delta A}{\Delta P} \geq \frac{1}{4 \log 16} \geq \frac{1}{20}\).

\textbf{Case IIa-ii} \textit{(\(A \leq 2\))}. Next, we look at the other case where \(A \leq 2\).

\textbf{Case IIa-ii-A} \textit{(\(b \leq 100\))}. Now we look at the case where \(b \leq 100\).
\begin{claim}
    If \(a \leq 16, A \leq 2, b \leq 100\), then there is \(C_{\ref{eq:lb_cpct}} > 0\) such that
    \begin{equation}\label{eq:lb_cpct}
        \frac{\Delta A}{\Delta P} \geq C_{\ref{eq:lb_cpct}}
    \end{equation}
\end{claim}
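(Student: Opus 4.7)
The plan is a compactness argument: I will show that $\Delta A / \Delta P$ is continuous and strictly positive on a compact parameter set, and hence attains a positive minimum. The Case IIa-ii-A hypotheses give us the ranges $a \in [3/2, 16]$, $b \in [1, 100]$, $A \in [1, 2]$ (so $\alpha \in [1/2, 1]$).

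First, I observe that $\Delta P$ is bounded above by an absolute constant. Under the standing assumption \eqref{eq:lb_5} that Term II dominates, $(d-1)\log b \le \log a \le \log 16$, hence
\[\log(3/2) \le \Delta P = \log a + (d-1)\log b \le 2\log 16,\]
independently of $d$, using $a \ge 3/2$ from \eqref{eq:lb_a_3_2}.

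Next, I claim $\Delta A > 0$ strictly on every feasible tuple. Suppose for contradiction that $\alpha' = \alpha$. By \Cref{claim:lb_1}, $\alpha' b \le \alpha$, so $b \le 1$. On the other hand, projecting the required containment $B_2^d \subseteq c\,\ve_1 + \cE_\mM$ onto the $\ve_2$-axis yields $[-1,1] \subseteq [-b,b]$, so $b \ge 1$. Hence $b = 1$; substituting into the containment of the single point $\ve_2$ then forces $c^2/a^2 \le 0$, so $c = 0$. But now \eqref{eq:lb_4} reads $0 = c \ge \alpha' a - \alpha = \alpha(a-1)$, forcing $a \le 1$ and contradicting $a \ge 3/2$. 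Hence $\alpha' < \alpha$ strictly on every feasible tuple, so $\Delta A = 1/\alpha' - 1/\alpha > 0$.

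Finally, the set $K$ of feasible tuples $(a,b,c,A,\alpha')$ in Case IIa-ii-A satisfying Case II \eqref{eq:lb_3} and the monotonicity constraints is closed (all defining conditions are given by closed set-inclusions and inequalities) and bounded ($|c| \le a-1 \le 15$, $\alpha' \in (0,1]$). The only non-compactness arises from $\alpha' \to 0^+$, but there $\Delta A \to \infty$ and the ratio diverges. Truncating to $\alpha' \ge \varepsilon$ for any fixed $\varepsilon > 0$ gives a genuinely compact set on which the continuous strictly positive function $\Delta A/\Delta P$ attains a positive minimum $C_\varepsilon$; on the complement $\{\alpha' < \varepsilon\}$, $\Delta A > 1/\varepsilon - 2$ yields $\Delta A/\Delta P > (1/\varepsilon - 2)/(2\log 16)$, which can be made arbitrarily large. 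Choosing $\varepsilon$ small and setting $C_{\ref{eq:lb_cpct}} = \min\{C_\varepsilon,\, (1/\varepsilon - 2)/(2\log 16)\}$ gives the claim. The main technical step is the strict-positivity argument, which crucially uses the interplay between \Cref{claim:lb_1}, the outer-containment requirement $b \ge 1$, and \eqref{eq:lb_4} to rule out $\alpha' = \alpha$; the compactness half is standard.
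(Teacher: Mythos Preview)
Your proof is correct and takes essentially the same compactness approach as the paper: both bound $\Delta P \le 2\log 16$, restrict to a compact parameter box, and argue that $\Delta A$ is continuous and strictly positive there. The only cosmetic difference is in the strict-positivity step: the paper splits on whether $c=0$ or $c\neq 0$, whereas you assume $\alpha'=\alpha$ and chain $b=1 \Rightarrow c=0 \Rightarrow a\le 1$ to a contradiction; these use the same constraints (\Cref{claim:lb_1}, the outer containment, and \eqref{eq:lb_4}) packaged differently, and the paper eliminates $\alpha'$ as a free parameter by minimizing it rather than handling $\alpha'\to 0^+$ separately as you do.
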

\begin{proof}
    We show this by a compactness argument.
    By (\ref{eq:lb_5}) and the assumption that \(a \leq 16\) we have \(\frac{\Delta A}{\Delta P} \geq \frac{\Delta A}{2 \log 16}\).
    Now, observe that next outer and inner ellipsoids \(c + \cE'\) and \(c + \alpha' \cE'\) are fully determined by the parameters
    \(a, b, c, A, A'\).
    Further, we assume without loss of generality that \(A'\) is a function of the other parameters.
    This is because when \(A'\) is decreased as much as possible while preserving the monotonicity of the update, \(\Delta A = A' - A\) only decreases.
    To show a lower bound on \(\Delta A\) it then suffices to only do so in this hardest case.
    
    Note that we have \(1 \leq a \leq 16, 1 \leq b \leq 100, -1 \leq c \leq 2\), and \(1 \leq A \leq 2\); thus all the parameters defining the next inner and outer ellipsoids are bounded.
    Observe that \(\Delta A\) is a continuous function of these parameters, and as a continuous function of a compact set it attains its minimum. Finally, we argue that it is impossible for the minimum of \(\Delta A\) to be zero, and so the minimum is some strictly positive constant \(\frac{C_{\ref{eq:lb_cpct}}}{2 \log 16}\), which suffices to prove the claim.
    
    The following argument only concerns the inner ellipsoids, and can be pictured in \Cref{fig:lb_inner}. If \(c = 0\), then as the leftmost point of \(c + \alpha' \cE'\) must be to the right of the leftmost point of \(\alpha \cE'\),
    we have \(\alpha \geq \alpha' a\). But by (\ref{eq:lb_a_3_2}), we have \(\alpha \geq \frac{3}{2} \alpha'\), so \(\alpha' < \alpha\) and \(\Delta A > 0\).
    If \(c \neq 0\) then the vertical semi-axis of \(c + \alpha' \cE'\) must have length strictly less than \(\alpha\),
    and so \(\alpha > \alpha' b\). As \(b \geq 1\), this also gives \(\alpha > \alpha'\) and again \(\Delta A > 0\).
\end{proof}

\textbf{Case IIa-ii-B} \textit{(\(b > 100\))}.
Observe that the horizontal axis of the next inner ellipsoid \(c + \alpha' \cE'\) must be contained within the interval \([-\alpha, 2]\), thus we have that \(2 + \alpha \geq 2 b \alpha' > 200 \alpha'\).
Using the definitions of \(A, A'\) this is equivalent to \(2 + \frac{1}{A} > \frac{200}{A'}\), i.e. \(A' > \frac{200}{1 + \frac{1}{A}}\). As \(A \geq 1\) we get \(2 + \frac{1}{A} \leq 3\), and so \(A' \geq \frac{200}{3}\).

As \(A \leq 2\), we obtain \(\Delta A = A' - A \geq \frac{194}{3}\).
Now by \eqref{eq:lb_5} and that \(a \leq 16\) we have 
\[\frac{\Delta A}{\Delta P} \geq \frac{\Delta A}{2 \log 16} \geq \frac{194}{6 \log 16} \geq 11\]

\textbf{Case IIb} \textit{(\(a > 16\))}. Now, we examine the case where \(a > 16\).
Scaling (\ref{eq:lb_4}) by \(\frac{11}{10}\) and adding it to (\ref{eq:lb_3}), we have \(\frac{21}{10} c - \frac{1}{10} \alpha' a \geq 0\), i.e. \(c > \frac{1}{21} \alpha' a\).
Using this in (\ref{eq:lb_trig_b}), using \(b \geq 1\), using the definitions of \(A\) and \(A'\) and rearranging, we obtain
\[A' \geq A \cdot \frac{1}{1 - \frac{1}{42} \cdot \frac{a}{A'}} \cdot \sqrt{1 - \inparen{\frac{\alpha}{2}}^2}\]

Using the inequalities \(\sqrt{1-\inparen{\frac{x}{2}}^2} \geq 1 - \frac{x^2}{7}\) for \(0 \leq x \leq 1\) and \(\frac{1}{1-x} \geq 1 + x\) for \(0 \leq x \leq 1\), we have
\[A' \geq A \inparen{1- \frac{\alpha^2}{7}} \inparen{1 + \frac{1}{42} \frac{a}{A'}}\]
and thus
\(A'^2 - A \cdot A' \inparen{1 - \frac{\alpha^2}{7}} - \frac{1}{42}  \inparen{1 - \frac{\alpha^2}{7}} \geq 0\),
which implies by the quadratic formula that \begin{align*}
A' &\geq \frac{A \inparen{1 - \frac{\alpha^2}{7}} + \sqrt{A^2 \inparen{1 - \frac{\alpha^2}{7}} + \frac{4}{42} a A \inparen{1 - \frac{\alpha^2}{7}}}}{2} \\
&= A \inparen{1 - \frac{\alpha^2}{7}} \cdot \frac{1 + \sqrt{1 + \frac{2}{21} \frac{a}{A \inparen{1 - \frac{\alpha^2}{7}}}}}{2} \\
&= A \inparen{1 - \frac{\alpha^2}{7}} \cdot \inparen{1 + \frac{\sqrt{1 + \frac{2}{21} \frac{a}{A \inparen{1 - \frac{\alpha^2}{7}}}} - 1}{2}}
\end{align*}
Using the inequality \(\sqrt{1+x}-1 \geq \frac{2}{5} \min(x, \sqrt{x})\) for all \(x \geq 0\), we get that
\begin{equation}\label{eq:lb_2b}
A' \geq A \inparen{1 - \frac{\alpha^2}{7}} \inparen{1+ \frac{1}{5}\min\inparen{\smash[b]{\underbrace{\frac{2}{21} \frac{a}{A \inparen{1 - \frac{\alpha^2}{7}}}}_{\text{I}}, \underbrace{\sqrt{\frac{2}{21} \frac{a}{A \inparen{1 - \frac{\alpha^2}{7}}}}}_{\text{II}}}}}
\end{equation}
\\ % need this so the term I II in eqn don't collide with the paragraph below

To finish this case, we show the lower bound in the case where either term in the \(\min\) of (\ref{eq:lb_2b}) is the smaller term.

\textbf{Case IIb-i} \textit{(Term I in (\ref{eq:lb_2b}) is smaller)}. In this case, (\ref{eq:lb_2b}) is equivalent to
\begin{align*}
    A' &\geq A \inparen{1 - \frac{\alpha^2}{7}} + \frac{2}{105} a
\end{align*}
Using the definition of \(A\), we have 
\(A' \geq A - \frac{1}{7A} + \frac{2}{105} a\), and so \(\Delta A \geq - \frac{1}{7A} + \frac{2}{105} a\).
As \(A \geq 1\), we have \(\Delta A \geq \frac{2}{105} a - \frac{1}{7}\).

Now by \eqref{eq:lb_5}, we get
\[\frac{\Delta A}{\Delta P} \geq \frac{\frac{2}{105} a - \frac{1}{7}}{2 \log a}\]
Now, we complete this case by noticing the right hand side is at least \(\frac{1}{35}\) when \(a > 16\).

\textbf{Case IIb-ii} \textit{(Term II in (\ref{eq:lb_2b}) is smaller)}.  In this case, (\ref{eq:lb_2b}) is equivalent to
\begin{equation}\label{eq:lb_c2b2_1}
    A' \geq A \inparen{1 - \frac{\alpha^2}{7}} + \sqrt{\frac{2}{525} a A \inparen{1 - \frac{\alpha^2}{7}}}
\end{equation}
Using the definition of \(A\) and that \(A \geq 1\), we have \(A \inparen{1 - \frac{\alpha^2}{7}} = A - \frac{1}{7A} \geq - \frac{1}{7}\). Using this and the definition of \(\Delta A\) in \eqref{eq:lb_c2b2_1}, we have \[
\Delta A \geq - \frac{1}{7} + \sqrt{\frac{2}{525} a A \inparen{1 - \frac{\alpha^2}{7}}}
\]
Further, as \(0 \leq \alpha \leq 1\) we get \(A \inparen{1 - \frac{\alpha^2}{7}} = \frac{1}{\alpha} - \frac{\alpha}{7} \geq \frac{6}{7}\), so
\[\Delta A \geq - \frac{1}{7} + \sqrt{\frac{12}{3675} a}
\]
Now by \eqref{eq:lb_5}, we get
\[\frac{\Delta A}{\Delta P} \geq \frac{- \frac{1}{7} + \sqrt{\frac{12}{3675} a}}{2 \log a}\]
We finish with the fact that the right hand side is at least \(\frac{1}{65}\) when \(a > 16\).
\end{proof}

\section{Details of Analysis in \texorpdfstring{\Cref{sec:two_dim_update}}{Section~\ref{sec:two_dim_update}}}
Here, we give the details for the outstanding claims in \Cref{sec:two_dim_update}, where we also use the notation from that section (specifically, the definition of parameters in (\ref{eqn:update_params})). We first give some well-known bounds on \(e^x\).

\begin{claim}\label{claim:wk_e} Well-known inequalities on \(e^x\).
\begin{enumerate}
\item \label{item:wk_1} \(1 + x \leq e^x\) for all \(x \in \R\)
\item \label{item:wk_2} \(1 + x + \frac{x^2}{2} \leq e^x\) for \(x \geq 0\)
\end{enumerate}
\end{claim}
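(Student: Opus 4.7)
The plan is to prove each inequality by reducing it to a simple calculus fact about a single-variable function whose minimum we can locate. Both inequalities are standard consequences of the convexity (more precisely, of the derivatives) of $e^x$, and the proofs will proceed by showing that the corresponding ``error'' function is nonnegative.

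For part (\ref{item:wk_1}), I would define $f(x) = e^x - 1 - x$ and compute $f'(x) = e^x - 1$. Since $f'(x) < 0$ for $x < 0$, $f'(0) = 0$, and $f'(x) > 0$ for $x > 0$, the function $f$ attains a global minimum at $x = 0$. Evaluating at the minimum gives $f(0) = 0$, so $f(x) \geq 0$ for every $x \in \R$, which is exactly the claimed inequality $1 + x \leq e^x$.

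For part (\ref{item:wk_2}), I would bootstrap from part (\ref{item:wk_1}). Let $g(x) = e^x - 1 - x - \tfrac{x^2}{2}$, so that $g(0) = 0$ and $g'(x) = e^x - 1 - x$. By part (\ref{item:wk_1}), $g'(x) \geq 0$ for all $x$, and in particular for $x \geq 0$. Hence $g$ is nondecreasing on $[0, \infty)$, so $g(x) \geq g(0) = 0$ for every $x \geq 0$, which gives the second inequality.

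There is no real obstacle here: both statements are textbook facts whose proofs amount to identifying the right auxiliary function and differentiating once. The only ``choice'' to make is whether to prove (\ref{item:wk_2}) by a direct second-derivative argument or by reusing (\ref{item:wk_1}); I prefer the latter since it keeps the two parts tightly linked and avoids repeating essentially the same analysis.
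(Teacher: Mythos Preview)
Your proof is correct. The paper itself does not prove this claim at all; it simply states the two inequalities as well-known facts and invokes them later, so your standard calculus argument (via the auxiliary functions $f(x)=e^x-1-x$ and $g(x)=e^x-1-x-\tfrac{x^2}{2}$) is entirely adequate and, if anything, more than the paper provides.
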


We will also use a more specialized upper bound on \(e^x\).
\begin{claim}\label{claim:custom_exp_upper_bound} For \(0 \leq x \leq \frac{4}{3}\), we have
\(e^x \leq 1 + x + \frac{x^2}{2} + \frac{x^3}{4}\).
\end{claim}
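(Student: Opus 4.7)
The plan is to reduce the claim to a Taylor-tail bound on \(e^x\). Writing \(e^x = \sum_{n \ge 0} x^n/n!\) and moving the first four terms to the right-hand side, the inequality becomes
\[\sum_{n \ge 4} \frac{x^n}{n!} \;\le\; \frac{x^3}{4} - \frac{x^3}{6} \;=\; \frac{x^3}{12}.\]
The case \(x = 0\) is trivial (both sides vanish). For \(x \in (0, 4/3]\), divide through by \(x^3\) to reduce the task to showing
\[\sum_{n \ge 1} \frac{x^n}{(n+3)!} \;\le\; \frac{1}{12}.\]

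Next, I would control the left-hand side by a geometric series. A one-line induction gives \((n+3)! \ge 24 \cdot 5^{n-1}\) for every integer \(n \ge 1\): the base case \(n = 1\) is the equality \(4! = 24\), and the inductive step uses \((n+4)(n+3)! \ge 5 \cdot 24 \cdot 5^{n-1}\) since \(n + 4 \ge 5\). Substituting this bound yields
\[\sum_{n \ge 1} \frac{x^n}{(n+3)!} \;\le\; \frac{1}{24} \sum_{n \ge 1} x \left(\frac{x}{5}\right)^{n-1} \;=\; \frac{5x}{24(5-x)},\]
which is monotone increasing in \(x\) on \([0, 5)\).

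The proof is then closed by evaluating the geometric bound at the endpoint \(x = 4/3\): it simplifies to \(5/66\), which is strictly less than \(1/12\). By monotonicity in \(x\), the reduced inequality \(\sum_{n \ge 1} x^n/(n+3)! \le 1/12\) holds throughout \([0, 4/3]\), establishing the claim.

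I do not foresee any real obstacle. The only quantitative check is the final numerical comparison \(5/66 < 1/12\); the ratio \(5\) in the factorial lower bound \((n+3)! \ge 24 \cdot 5^{n-1}\) was chosen precisely so that this comparison succeeds at the endpoint \(x = 4/3\) with a bit of slack, so nothing tighter is needed.
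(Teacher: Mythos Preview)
Your proof is correct and follows essentially the same approach as the paper: both reduce the claim to the tail bound \(\sum_{k \ge 4} x^{k-3}/k! \le 1/12\), observe that the tail is increasing in \(x\), and verify the inequality at the endpoint \(x = 4/3\). The only difference is that the paper simply asserts the numerical inequality at \(x = 4/3\) (equivalently, that \(1 + x + x^2/2 + x^3/4 - e^x > 0\) there), whereas you bound the tail by a geometric series to get the explicit estimate \(5/66 < 1/12\), which is slightly more self-contained.
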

\begin{proof}
Using the Taylor series for \(e^x\) about \(0\),
we get that \[
\left(1 + x + \frac{x^2}{2} + \frac{x^3}{4}\right) - e^x = \frac{x^3}{12} - \sum_{k=4}^\infty \frac{x^k}{k!}
= x^3 \left(\frac{1}{12} - \sum_{k=4}^\infty \frac{x^{k-3}}{k!}\right)\]
Clearly \(x^3 \geq 0\) for \(x \geq 0\), so it remains to show \(\frac{1}{12} - \sum_{k=4}^\infty \frac{x^{k-3}}{k!} \geq 0\) for \(0 \leq x \leq \frac{4}{3}\). 
\(\sum_{k=4}^\infty \frac{x^{k-3}}{k!}\) is increasing (the derivative is clearly positive when \(x \ge 0\)),
and we finish by noting: \[\frac{1}{12} - \left.\sum_{k=4}^\infty \frac{x^{k-3}}{k!} \right|_{x = \frac{4}{3}} = \left.\frac{1 + x + \frac{x^2}{2}+\frac{x^3}{4} - e^x}{x^3} \right|_{x = \frac{4}{3}} > 0\]
\end{proof}

Now we show some facts used in \Cref{claim:pty_outer}, which \Cref{claim:update_step_outer} reduces to.
The proof of \Cref{claim:pty_outer} will reduce to the following analytic inequality.
\begin{claim}\label{claim:outer_t_ineq}
  For all \(\gamma \geq 0\),
  \[\frac{(e^\gamma - 1)^2}{e^{2\gamma} - (1 + \frac{\gamma}{4})^2} \leq \frac{3}{2} \gamma\]
\end{claim}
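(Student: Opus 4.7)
The plan is to reduce the inequality to a single-variable monotonicity statement. Since the denominator $e^{2\gamma} - (1+\gamma/4)^2$ is strictly positive for $\gamma > 0$ (using $e^{2\gamma} \geq 1 + 2\gamma + 2\gamma^2$ from \Cref{claim:wk_e}-(\ref{item:wk_2}) with $x=2\gamma$, one checks $e^{2\gamma} - (1+\gamma/4)^2 \geq \tfrac{3\gamma}{2} + \tfrac{31\gamma^2}{16}$), the desired inequality is equivalent to $F(\gamma) \leq 0$ on $[0,\infty)$, where
\[F(\gamma) \coloneqq \left(1-\tfrac{3}{2}\gamma\right)e^{2\gamma} - 2e^\gamma + 1 + \tfrac{3\gamma}{2}\left(1+\tfrac{\gamma}{4}\right)^2.\]

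First, I would verify $F(0) = 1 - 2 + 1 + 0 = 0$ by inspection. Next, a direct differentiation gives
\[F'(\gamma) = \left(\tfrac{1}{2} - 3\gamma\right)e^{2\gamma} - 2e^\gamma + \tfrac{3}{2}\left(1+\tfrac{\gamma}{4}\right)^2 + \tfrac{3\gamma}{4}\left(1+\tfrac{\gamma}{4}\right),\]
and at $\gamma = 0$ this evaluates to $\tfrac{1}{2} - 2 + \tfrac{3}{2} + 0 = 0$.

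The heart of the argument is to show that $F$ is concave on $[0,\infty)$, i.e. $F''(\gamma) \leq 0$ for all $\gamma \geq 0$. A second differentiation (collecting terms carefully) yields
\[F''(\gamma) = -(2+6\gamma)e^{2\gamma} - 2e^\gamma + \tfrac{3}{2} + \tfrac{9\gamma}{16}.\]
Using only $e^{2\gamma} \geq 1$ and $e^\gamma \geq 1$ for $\gamma \geq 0$ (from \Cref{claim:wk_e}-(\ref{item:wk_1})), we get
\[(2+6\gamma)e^{2\gamma} + 2e^\gamma \geq (2+6\gamma) + 2 = 4 + 6\gamma \geq \tfrac{3}{2} + \tfrac{9\gamma}{16},\]
where the last inequality holds coefficient-wise since $4 \geq \tfrac{3}{2}$ and $6 \geq \tfrac{9}{16}$. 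Hence $F''(\gamma) \leq 0$ on $[0,\infty)$.

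Finally, concavity with $F'(0) = 0$ forces $F'(\gamma) \leq 0$ for $\gamma \geq 0$, so $F$ is non-increasing; combined with $F(0) = 0$, this gives $F(\gamma) \leq 0$ for all $\gamma \geq 0$. Dividing by the positive denominator for $\gamma > 0$ (and noting both sides are $0$ at $\gamma = 0$) yields the claimed bound. The only real obstacle is careful bookkeeping in the computation of $F'$ and $F''$; once those are in hand, the crude bounds $e^{2\gamma} \geq 1$ and $e^\gamma \geq 1$ suffice for concavity, and the rest is the standard ``matching value and derivative at $0$, plus one-sided concavity'' trick.
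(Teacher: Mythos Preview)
Your proof is correct and takes a genuinely different route from the paper. The paper first weakens the numerator via $(e^\gamma-1)^2 \le e^{2\gamma}-2\gamma-1$, then splits into the two cases $\gamma \ge \tfrac{2}{3}$ and $\gamma < \tfrac{2}{3}$ (the split is forced because the factor $\tfrac{3}{2}\gamma-1$ changes sign, so one needs a lower bound on $e^{2\gamma}$ in one regime and an upper bound in the other), invoking the special bound of \Cref{claim:custom_exp_upper_bound} for the small-$\gamma$ case. Your argument avoids all of this: by passing to $F(\gamma)=(1-\tfrac{3}{2}\gamma)e^{2\gamma}-2e^\gamma+1+\tfrac{3\gamma}{2}(1+\tfrac{\gamma}{4})^2$ and checking $F(0)=F'(0)=0$ together with $F''\le 0$, you get the inequality in one shot using only $e^{\gamma}\ge 1$ and $e^{2\gamma}\ge 1$. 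This is cleaner, avoids case analysis, and renders \Cref{claim:custom_exp_upper_bound} unnecessary. The only nit is the phrase ``both sides are $0$ at $\gamma=0$'': strictly speaking the left-hand side is $0/0$ there, but this is a cosmetic issue with the statement itself and your handling of $\gamma>0$ is what matters.
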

\begin{proof}
For the numerator of the left hand side, we have \((e^\gamma - 1)^2 = e^{2\gamma} - 2 e^\gamma + 1 \leq e^{2\gamma} - 2\gamma - 1\) using \Cref{claim:wk_e}-(\ref{item:wk_1}), \(1 + x \leq e^x\).
Further, \(e^\gamma \geq 1 + \frac{\gamma}{4}\) implies \(e^{2 \gamma} - (1 + \frac{\gamma}{4})^2 \geq 0\), so after multiplying both sides by \(e^{2 \gamma} - (1 + \frac{\gamma}{4})^2\) and rearranging it suffices to show
\[\frac{3}{2} \gamma \left(1 + \frac{\gamma}{4}\right)^2 - 1 - 2\gamma \leq \left(\frac{3}{2} \gamma - 1\right) e^{2 \gamma}\]
We split this into two cases, based on the value of \(\gamma\).
If \(\gamma \geq \frac{2}{3}\), then the right hand side is at least \(\left(\frac{3}{2} \gamma - 1\right) ( 1 + 2 \gamma + 2 \gamma^2)\) using \Cref{claim:wk_e}-(\ref{item:wk_2}), \(1 + x + \frac{x^2}{2} \leq e^x\),
so it is sufficient to show \(\left(\frac{3}{2} \gamma - 1\right) ( 1 + 2 \gamma + 2 \gamma^2) \geq \frac{3}{2} \gamma \left(1 + \frac{\gamma}{4}\right)^2 - 1 - 2 \gamma\).
Expanding both sides, this is equivalent to showing \(\frac{\gamma^2}{4} + \frac{93}{32} \gamma^3 \geq 0\), which is clearly true for \(\gamma \geq 0\).

If \(\gamma < \frac{2}{3}\), then we use \Cref{claim:custom_exp_upper_bound} to lower bound the right hand side with \(\left(\frac{3}{2} \gamma - 1 \right) (1 + 2 \gamma + 2 \gamma^2 + 2 \gamma ^3)\),
so it is sufficient to show  \(\left(\frac{3}{2} \gamma - 1\right) ( 1 + 2 \gamma + 2 \gamma^2 + 2 \gamma^3) \geq \frac{3}{2} \gamma \left(1 + \frac{\gamma}{4}\right)^2 - 1 - 2\gamma\).
Similar to before, after expanding both sides this is equivalent to showing \(\frac{\gamma^2}{4} + \frac{93}{32} \gamma^3 + \frac{3}{2} \gamma^4 \geq 0\), which is true for \(\gamma \geq 0\).
\end{proof}

We also show some relations between the parameters in the update step.
Recall that we assumed \(\alpha \leq \frac{1}{2}\).
\begin{claim}\label{claim:outer_int}
We have
\begin{enumerate}
\item \(b \leq 1 + \frac{\gamma}{4}\) \label{item:outer_int_1}
\item \(b \leq a\) \label{item:outer_int_2}
\item \(\frac{(a-1)^2}{a^2-b^2} \leq 1\) \label{item:outer_int_3}
\item \(b^2 \geq 1 + \alpha - \alpha'\) \label{item:outer_int_4}
\end{enumerate}
\end{claim}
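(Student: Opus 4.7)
The plan is to prove the four inequalities in the stated order, leveraging the explicit formulas from~(\ref{eqn:update_params}) together with the standing assumption $\alpha \leq \nfrac{1}{2}$. The key technical observation, from which the rest follow, is a uniform upper bound on the quantity $\alpha - \alpha'$ in terms of $\gamma$.

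First I would establish (\ref{item:outer_int_1}). Rearranging $\nfrac{1}{\alpha'} = \nfrac{1}{\alpha} + 2\gamma$ gives $\alpha' = \alpha/(1+2\gamma\alpha)$, hence
\[\alpha - \alpha' = \frac{2\gamma\alpha^2}{1+2\gamma\alpha} \leq 2\gamma\alpha^2 \leq \frac{\gamma}{2},\]
where the last step uses $\alpha \leq \nfrac{1}{2}$. Substituting into the definition of $b$ immediately yields $b = 1 + \tfrac{\alpha-\alpha'}{2} \leq 1 + \tfrac{\gamma}{4}$.

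For (\ref{item:outer_int_2}), I would chain (\ref{item:outer_int_1}) with \Cref{claim:wk_e}-(\ref{item:wk_1}):
\[b \leq 1 + \tfrac{\gamma}{4} \leq 1 + \gamma \leq e^{\gamma} = a.\]
For (\ref{item:outer_int_3}), the inequality is equivalent (since $a^2 > b^2$ by (\ref{item:outer_int_2})) to $b^2 \leq 2a - 1$. Squaring the bound from (\ref{item:outer_int_1}) gives $b^2 \leq 1 + \tfrac{\gamma}{2} + \tfrac{\gamma^2}{16}$, and applying \Cref{claim:wk_e}-(\ref{item:wk_2}) gives $2a - 1 = 2e^\gamma - 1 \geq 1 + 2\gamma + \gamma^2$. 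The desired inequality then reduces to $\tfrac{3\gamma}{2} + \tfrac{15\gamma^2}{16} \geq 0$, which is immediate for $\gamma \geq 0$.

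Finally, (\ref{item:outer_int_4}) is a direct algebraic expansion: by definition of $b$,
\[b^2 = \left(1 + \tfrac{\alpha-\alpha'}{2}\right)^2 = 1 + (\alpha - \alpha') + \tfrac{(\alpha-\alpha')^2}{4} \geq 1 + \alpha - \alpha',\]
using only that $(\alpha - \alpha')^2 \geq 0$ (and $\alpha' \leq \alpha$ from \Cref{claim:update_params}-(\ref{item:claim_update_params_1}), so the leading linear term is nonnegative in magnitude). No step here should present a genuine obstacle; the only part requiring care is verifying the constants in (\ref{item:outer_int_1}) and (\ref{item:outer_int_3}), which is where the hypothesis $\alpha \leq \nfrac{1}{2}$ is used.
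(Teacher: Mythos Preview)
Your proof is correct and follows essentially the same approach as the paper's own argument: both compute $\alpha-\alpha'$ explicitly to bound $b$, chain with $1+x\le e^x$ for part~(\ref{item:outer_int_2}), reduce part~(\ref{item:outer_int_3}) to $b^2\le 2a-1$ and verify this via the Taylor bound $e^\gamma\ge 1+\gamma+\gamma^2/2$, and expand the square for part~(\ref{item:outer_int_4}). One minor remark: in part~(\ref{item:outer_int_3}) you write ``since $a^2>b^2$'' but part~(\ref{item:outer_int_2}) only gives $a\ge b$; the strict inequality fails at $\gamma=0$, where the ratio is $0/0$---the paper glosses over this degenerate case as well, and it is harmless since the claim is only invoked for nontrivial updates.
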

\begin{proof}
We start by showing (\ref{item:outer_int_1}). As \(\frac{1}{\alpha'} = \frac{1}{\alpha} + 2\gamma\), we have \(\alpha = \alpha' + 2 \gamma \alpha \alpha'\).
Thus \(b = 1 + \gamma \alpha \alpha' \leq 1 + \gamma \alpha^2 \leq 1 + \gamma/4\) as \(\alpha' \leq \alpha \leq \frac{1}{2}\).

For (\ref{item:outer_int_2}), observe that \(b \leq 1 + \frac{\gamma}{4} \leq 1 + \gamma \leq e^\gamma = a\) using \Cref{claim:wk_e}-(\ref{item:wk_1}), \(1+x \leq e^x\), so \(a \geq b\).

To show (\ref{item:outer_int_3}), we first argue it is sufficient to show \(1 + b^2 \leq 2 a\). As a consequence \(-2a + 1 \leq - b^2\), so \((a-1)^2 \leq a^2 - b^2\).
Because \(b \geq 1\) by \Cref{claim:update_params}-(\ref{item:claim_update_params_2}), from (\ref{item:outer_int_2}) we can say that \(a^2 - b^2 \geq 0\), so that \(\frac{(a-1)^2}{a^2 - b^2} \leq 1\).

Now to show \(1 + b^2 \leq 2a\), we write as a series in terms of \(\gamma\).
On the left hand side using (\ref{item:outer_int_1}), we have \(1 + b^2 \leq 1 + \left(1 + \frac{\gamma}{4}\right)^2 = 2 + \frac{\gamma}{2} + \frac{\gamma^2}{4}\).
Further, by \Cref{claim:wk_e}-(\ref{item:wk_2}), \(e^x \geq 1 + x + \frac{x^2}{2}\), we have that \(2a \geq 2 + 2\gamma + \gamma^2\).
Clearly \(2 + \frac{\gamma}{2} + \frac{\gamma^2}{4} \leq 2 + 2\gamma + \gamma^2\) when \(\gamma \geq 0\), so we are finished.

For (\ref{item:outer_int_4}), we have by definition that \(b^2 = 1 + \alpha - \alpha' + \frac{(\alpha - \alpha')^2}{4}\), so \(b^2 \geq 1 + \alpha - \alpha'\).
\end{proof}

As the proof of \Cref{claim:update_step_outer} shows, that claim reduces to the following inequality.
\begin{claim}\label{claim:pty_outer}
    \begin{equation}\label{eqn:pty_outer_eqn} % equation: property outer equation
        c^2 \leq \frac{b^2 - 1}{b^2} \cdot (a^2 - b^2)
    \end{equation}
\end{claim}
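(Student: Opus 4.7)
The plan is to reduce the inequality to a simple algebraic fact in $\gamma$ and $\alpha$ by repeatedly invoking \Cref{claim:outer_t_ineq} and \Cref{claim:outer_int}. The first step is to obtain a clean upper bound on $c$. From the update relation $1/\alpha' = 1/\alpha + 2\gamma$ one has $\alpha - \alpha' = 2\gamma\alpha\alpha'$, so $c = \alpha' a - \alpha = \alpha'(a-1) - 2\gamma\alpha\alpha'$. Since $2\gamma\alpha\alpha' \geq 0$, this immediately yields $c \leq \alpha'(a-1)$, hence $c^2 \leq \alpha'^2(a-1)^2$.

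Next I would combine \Cref{claim:outer_t_ineq} with \Cref{claim:outer_int}-(\ref{item:outer_int_1}). The latter gives $b \leq 1 + \gamma/4$, so $(1+\gamma/4)^2 \geq b^2$ and therefore $e^{2\gamma} - (1+\gamma/4)^2 \leq a^2 - b^2$. Plugging this into \Cref{claim:outer_t_ineq} produces $(a-1)^2 \leq \tfrac{3\gamma}{2}(a^2 - b^2)$, and combined with the previous step,
\[
c^2 \leq \tfrac{3\gamma\alpha'^2}{2}(a^2-b^2).
\]
In order to conclude \eqref{eqn:pty_outer_eqn}, it then suffices to show $\tfrac{3\gamma\alpha'^2 b^2}{2} \leq b^2 - 1$. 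By \Cref{claim:outer_int}-(\ref{item:outer_int_4}), $b^2 - 1 \geq \alpha - \alpha' = 2\gamma\alpha\alpha'$, so it is enough to prove $\tfrac{3\gamma\alpha'^2 b^2}{2} \leq 2\gamma\alpha\alpha'$, i.e.\ $3\alpha' b^2 \leq 4\alpha$ (the case $\gamma = 0$ is trivial since then $c = 0$ and $b = 1$).

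Using $\alpha' = \alpha/(1+2\gamma\alpha)$, the remaining inequality is equivalent to $3b^2 \leq 4 + 8\gamma\alpha$. To verify this, I would write $3b^2 = 3 + 6(b-1) + 3(b-1)^2$, noting that $b-1 = \gamma\alpha\alpha' = (\alpha-\alpha')/2 \leq \alpha/2 \leq 1/4$, and that $6(b-1) = 3(\alpha-\alpha') = 6\gamma\alpha\alpha' \leq 6\gamma\alpha \cdot \tfrac{1}{2} \cdot 1 = 3\gamma\alpha$ by the bounds $\alpha' \leq \alpha \leq 1/2$. Therefore $4 + 8\gamma\alpha - 3b^2 \geq 1 + 8\gamma\alpha - 3\gamma\alpha - 3/16 = 13/16 + 5\gamma\alpha > 0$, which completes the argument.

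The only nontrivial piece is aligning slack correctly in this last verification: the bound $c \leq \alpha'(a-1)$ is not by itself strong enough (it ignores the $-2\gamma\alpha\alpha'$ correction), and I expect the main obstacle to be ensuring that once this is combined with \Cref{claim:outer_t_ineq} and the inequality $b^2 - 1 \geq 2\gamma\alpha\alpha'$, the leftover constants actually close the gap uniformly in $\gamma$. The factor $3/2$ coming from \Cref{claim:outer_t_ineq} is just small enough to leave slack after dividing by $b^2 \leq 25/16$, which is why using $\alpha \leq 1/2$ at the very end suffices.
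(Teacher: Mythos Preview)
Your proof is correct and follows essentially the same route as the paper: both bound $c \leq \alpha'(a-1)$, invoke \Cref{claim:outer_int}-(\ref{item:outer_int_1}) and (\ref{item:outer_int_4}) to control $b$, and finish via \Cref{claim:outer_t_ineq}. Your organization is in fact slightly more streamlined—you apply \Cref{claim:outer_t_ineq} up front to get $(a-1)^2 \le \tfrac{3\gamma}{2}(a^2-b^2)$ directly, which lets you bypass the paper's use of \Cref{claim:outer_int}-(\ref{item:outer_int_3}) and reduces the residual check to the elementary bound $3\alpha' b^2 \le 4\alpha$.
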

\begin{proof}
We first upper bound \(c\) to reduce the number of variables in (\ref{eqn:pty_outer_eqn}).
As \(b = 1 + \frac{\alpha - \alpha'}{2}\), we have \(2 (b - 1) = \alpha - \alpha'\) and so
\(\alpha = \alpha' + 2(b-1)\).
Thus \[c = - \alpha + \alpha' \cdot a = -(\alpha' + 2(b-1)) + \alpha' \cdot a = \alpha' \cdot (a - 1) + 2(1-b) a\]
As \(b \geq 1\) by \Cref{claim:update_params}-(\ref{item:claim_update_params_2}), we have that \(2(1-b) \leq 0\) and therefore
\begin{equation}\label{eqn:pty_outer_uc} % equation: outer property upper bound on c
    c \leq \alpha' \cdot (a-1) \nonumber
\end{equation}
Using this in (\ref{eqn:pty_outer_eqn}), it suffices to show \(\alpha'^2 (a-1)^2 \leq \frac{b^2 - 1}{b^2} \cdot (a^2 - b^2)\),
which rearranges to
\begin{equation}\label{eqn:pty_outer_eqn1} % property outer equation 1
\frac{b^2 - 1}{\alpha'^2} \geq \frac{(a-1)^2 b^2}{a^2 - b^2} \nonumber
\end{equation}

Using \Cref{claim:outer_int}-(\ref{item:outer_int_4}), this reduces to \begin{equation}\label{eqn:pty_outer_eqn2}
    \frac{\alpha - \alpha'}{\alpha'^2} \geq \frac{(a-1)^2 b^2}{a^2 - b^2}
\end{equation}
The left hand side of (\ref{eqn:pty_outer_eqn2}) equals \(\frac{1}{\alpha'} \left(\frac{\alpha}{\alpha'} - 1\right)\).
    Because \(\frac{1}{\alpha'} = \frac{1}{\alpha} + 2 \gamma\), we have \(\frac{\alpha}{\alpha'} - 1 = 2 \gamma \alpha\),
so \(\frac{1}{\alpha'}\left(\frac{\alpha}{\alpha'} - 1\right) = 2\gamma \cdot \frac{\alpha}{\alpha'} = 2\gamma \cdot (1 + 2 \gamma \alpha)\).
So it is sufficient to show
\begin{equation}\label{eqn:pty_outer_eqn3}
2\gamma (1 + 2 \gamma \alpha)\geq \frac{(a-1)^2 b^2}{a^2 - b^2}
\end{equation}

Now we will eliminate the other variables in this inequality to transform it
into a statement involving only \(\gamma\).
We have
\begin{align*}
\frac{(a-1)^2 b^2}{a^2 - b^2} &= \frac{(a-1)^2}{a^2 - b^2} \left(1 + 2 \alpha \alpha' \gamma + \alpha'^2 \alpha^2 \gamma^2\right)\\
&\leq \frac{(a-1)^2}{a^2 - b^2} + \frac{\gamma}{2} + \alpha \frac{\gamma^2}{8}
\end{align*}
where the first line uses that \(b = 1 + \alpha \alpha' \gamma\), and
the second line inequality follows from \Cref{claim:outer_int}-(\ref{item:outer_int_3}) and the fact that \(\alpha' \leq \alpha \leq \frac{1}{2}\).
Thus we can reduce (\ref{eqn:pty_outer_eqn3}) to \(\frac{(a-1)^2}{a^2-b^2} \leq \frac{3}{2} \gamma + \frac{31}{8} \alpha \gamma^2\), or further to

\begin{equation}\label{eqn:pty_outer_eqn4}
\frac{(a-1)^2}{a^2 - b^2} \le \frac{3}{2} \gamma
\end{equation}

Using \Cref{claim:outer_int}-(\ref{item:outer_int_1}) and that \(a = e^\gamma\), we have \(\frac{(a-1)^2}{a^2 - b^2} \leq \frac{(e^\gamma - 1)^2}{e^{2\gamma} - (1 + \frac{\gamma}{4})^2}\), so finally (\ref{eqn:pty_outer_eqn4}) reduces to
\[\frac{(e^\gamma - 1)^2}{e^{2\gamma} - (1 + \frac{\gamma}{4})^2} \le \frac{3}{2} \gamma\]
which is proved in \Cref{claim:outer_t_ineq}.
\end{proof}

Recall in the proof of \Cref{claim:translate_ellipse_angle} we defined \(\ell_1 = \frac{1}{c+a}, \ell_2 = \sqrt{\frac{1}{\alpha^2} - \frac{1}{(c+a)^2}}, r = \frac{a^2 \ell_1^2}{b^2 \ell_2^2}\).
That claim reduces to the following.
\begin{claim}\label{claim:pty_inner}
\[a - \alpha' \cdot a \sqrt{\frac{1+r}{r}} \geq 0\]
\end{claim}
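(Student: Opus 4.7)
The inequality $a - \alpha' a\sqrt{(1+r)/r} \geq 0$ is equivalent, after dividing by $a > 0$ and squaring, to $\alpha'^2(1+r) \leq r$. Expanding $r = a^2\ell_1^2/(b^2\ell_2^2)$ with $\ell_1 = 1/(c+a)$ and $\ell_2^2 = 1/\alpha^2 - 1/(c+a)^2$ reduces the target to the polynomial inequality
\[
  \alpha'^2\, b^2\,\bigl((c+a)^2 - \alpha^2\bigr) \leq a^2\,\alpha^2\,\bigl(1 - \alpha'^2\bigr).
\]
I would then factor both sides using $c = \alpha' a - \alpha$, which yields $c+a+\alpha = a(1+\alpha')$ and $c+a-\alpha = a(1+\alpha') - 2\alpha$; combined with $1-\alpha'^2 = (1-\alpha')(1+\alpha')$, the common positive factor $a(1+\alpha')$ cancels, so it suffices to prove
\[
  \alpha'^2\, b^2\,\bigl(a(1+\alpha') - 2\alpha\bigr) \leq a\,\alpha^2\,(1-\alpha').
\]

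To verify this reduced form, I substitute the explicit update parameters $a = e^\gamma$, $\alpha - \alpha' = 2\gamma\alpha\alpha'$ (so $b = 1 + \gamma\alpha\alpha'$), and introduce $\tau = \gamma\alpha$ together with $V = 1+2\tau$, so that $\alpha' = \alpha/V$ and $b = (V+\alpha\tau)/V$. Multiplying through by $V^5/\alpha^2$ clears all denominators and recasts the target as
\[
  a\, H(\tau,\alpha) \;+\; 2\alpha\, V\,(V+\alpha\tau)^2 \;\geq\; 0, \qquad H(\tau,\alpha) \;=\; V^3(V+1)(2\tau - \alpha) \;-\; \alpha^2\tau\bigl(V(2+\tau) + \alpha\tau\bigr).
\]
The second summand is manifestly nonnegative. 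A direct expansion shows the full expression vanishes at $\tau = 0$ with first-order coefficient $2(1-\alpha+\alpha^2) \geq 3/2$ on $\alpha \in [0,\tfrac12]$, so the inequality is infinitesimally strict at $\tau = 0$.

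The main obstacle is closing the argument globally given this tightness at $\tau = 0$. I would split on the sign of $2\tau - \alpha$. When $2\tau \geq \alpha$, the dominant term $V^3(V+1)(2\tau-\alpha)$ of $H$ is already nonnegative, and using $\alpha \leq \tfrac12$ and $V \geq 1$ one checks that the subtracted $\alpha^2\tau(\cdots)$ contribution in $H$ is absorbed by the positive term $2\alpha V(V+\alpha\tau)^2$. When $2\tau < \alpha$, we have $\gamma = \tau/\alpha < \tfrac12$ and hence $a = e^\gamma \leq e^{1/2}$, so the explicit Taylor bounds on $e^\gamma$ from \Cref{claim:wk_e} and \Cref{claim:custom_exp_upper_bound} give $|H| = O(\alpha^2\tau + \alpha(\alpha-2\tau))$, which is dominated by $2\alpha V(V+\alpha\tau)^2 \geq 2\alpha$ once the constants from $a \leq e^{1/2}$ are absorbed. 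In either case the inequality holds, completing the proof.
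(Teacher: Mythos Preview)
Your reduction up to the target
\[
aH(\tau,\alpha)+2\alpha V(V+\alpha\tau)^2\geq 0,\qquad H=V^3(V+1)(2\tau-\alpha)-\alpha^2\tau\bigl(V(2+\tau)+\alpha\tau\bigr),
\]
is correct (and indeed matches the paper's route through the inequality $b^2\bigl(1+\alpha'-\tfrac{2\alpha}{a}\bigr)\leq\tfrac{\alpha^2}{\alpha'^2}(1-\alpha')$). The observation that the full expression vanishes at $\tau=0$ with first-order coefficient $2(1-\alpha+\alpha^2)\geq 3/2$ is also right. But your case split does not close the argument.

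\textbf{Case 2 fails outright.} You claim that Taylor bounds on $e^\gamma$ give $|H|=O(\alpha^2\tau+\alpha(\alpha-2\tau))$. First, $H$ does not contain $a=e^\gamma$ at all, so the Taylor bounds on $e^\gamma$ cannot say anything about $H$. Second, the bound itself is false: at $\tau=0$ we have $|H|=|1\cdot 2\cdot(-\alpha)|=2\alpha$, whereas your right-hand side is $O(\alpha^2)$; since $\alpha\leq\tfrac12$ can be arbitrarily small this is not $O(\alpha^2)$. More fundamentally, the very tightness you identified at $\tau=0$ (where $aH=-2\alpha$ and $2\alpha V(V+\alpha\tau)^2=2\alpha$ exactly cancel) means that any argument which separately bounds $|aH|$ and compares to the lower bound $2\alpha V(V+\alpha\tau)^2\geq 2\alpha$ is doomed: there is no slack to absorb constants.

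\textbf{Case 1 is also incomplete.} Saying the subtracted part of $H$ ``is absorbed by $2\alpha V(V+\alpha\tau)^2$'' ignores that the full expression carries a factor of $a$ on $H$. When $2\tau$ is only slightly larger than $\alpha$ one has $H<0$, so you need $a\cdot\alpha^2\tau[\cdots]\leq 2\alpha V(V+\alpha\tau)^2$, not merely $\alpha^2\tau[\cdots]\leq 2\alpha V(V+\alpha\tau)^2$. And for large $\tau$ the reason the inequality holds is that $H$ itself becomes nonnegative, which you do not argue.

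The paper avoids the polynomial-times-exponential difficulty by killing $a$ \emph{before} expanding: since the target is equivalent to $b^2\bigl(1+\alpha'-\tfrac{2\alpha}{a}\bigr)\leq\tfrac{\alpha^2}{\alpha'^2}(1-\alpha')$, one uses $\tfrac{1}{a}=e^{-\gamma}\geq 1-\gamma$ to replace $\tfrac{2\alpha}{a}$ by $2\alpha(1-\gamma)$, obtaining a purely polynomial inequality in $(\alpha,\gamma)$ that can be cleared of denominators and verified coefficient-by-coefficient. That single inequality $e^{-\gamma}\geq 1-\gamma$ is the missing idea in your approach.
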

\begin{proof}
%To show that the translated center \(c_t\) is to the right of \(c + \alpha' \cE'\), we need that \(c \leq c_t\),
% or equivalently that \[a - \alpha' \cdot a \sqrt{\frac{1+r}{r}} = a \left(1 - \alpha' \sqrt{\frac{1+r}{r}}\right) \geq 0\]
As by definition \(a \geq 0\), it suffices to show
\begin{equation}\label{eqn:pty_inner_3}
    \alpha'^2 \cdot \left(\frac{1}{r}+1\right) \leq 1
\end{equation}
Observe that \(\ell_2^2 = \frac{1}{\alpha^2} - \ell_1^2\),
so we can write \(\frac{1}{r} = \frac{b^2}{a^2} \left(\frac{1}{\alpha^2 \ell_1^2} - 1\right)\), and hence rewrite (\ref{eqn:pty_inner_3}) as
\[\alpha'^2 \left(1 + \frac{b^2}{a^2} \left(\left(\frac{c+a}{\alpha}\right)^2 - 1\right)\right) \leq 1\]

Multiplying both sides by \(\frac{\alpha^2}{\alpha'^2}\) and rearranging, this is equivalent to
\begin{equation}\label{eqn:pty_inner_4}
\frac{b^2}{a^2} \left((c+a)^2 - \alpha^2\right) \leq \frac{\alpha^2}{\alpha'^2} - \alpha^2
\end{equation}
Now, by definition of \(c\) we can write \(c + a = a(1+\alpha') - \alpha\), so that \((c+a)^2 - \alpha^2 = a^2 (1+\alpha')^2 - 2 \alpha a (1 + \alpha')\).
Thus, (\ref{eqn:pty_inner_4}) is equivalent to
\[\frac{b^2}{a^2} (a^2 ( 1 + \alpha')^2 - 2 \alpha a (1+\alpha')) \leq \frac{\alpha^2}{\alpha'^2} (1 + \alpha')(1 - \alpha')\]
Dividing by \(1 + \alpha'\) and simplifying the left hand side, this is equivalent to
\[
b^2 \left(1 + \alpha' - \frac{2 \alpha}{a}\right) \leq \frac{\alpha^2}{\alpha'^2}(1-\alpha')
\]
which we show in \Cref{claim:pty_inner_main}.
\end{proof}

\begin{claim}\label{claim:pty_inner_main}
    \[
b^2 \left(1 + \alpha' - \frac{2 \alpha}{a}\right) \leq \frac{\alpha^2}{\alpha'^2}(1-\alpha')
\]
\end{claim}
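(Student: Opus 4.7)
The plan is to substitute the definitions from~\eqref{eqn:update_params} to express both sides purely in terms of $\gamma$ and $\alpha$, and then reduce to a polynomial inequality that can be verified using the Taylor bounds on $e^{\gamma}$ collected in Claims~\ref{claim:wk_e} and~\ref{claim:custom_exp_upper_bound}, in the spirit of the proofs of Claims~\ref{claim:outer_t_ineq} and~\ref{claim:pty_outer}.

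First, I will use the identities $\alpha-\alpha'=2\gamma\alpha\alpha'$ (which follows from $\tfrac{1}{\alpha'}=\tfrac{1}{\alpha}+2\gamma$), $b=1+\gamma\alpha\alpha'$, and $\tfrac{\alpha}{\alpha'}=1+2\gamma\alpha$. Multiplying both sides of the claim by $a=e^{\gamma}>0$ eliminates the $\tfrac{1}{a}$ term and rewrites the inequality as
\[
(1+\gamma\alpha\alpha')^{2}\bigl((1+\alpha')e^{\gamma}-2\alpha\bigr)\;\leq\;(1+2\gamma\alpha)^{2}(1-\alpha')\,e^{\gamma}.
\]
A direct check shows that at $\gamma=0$ both sides equal $1-\alpha$, so the inequality is tight to zeroth order, which means the Taylor bounds used must be sharp through the quadratic term.

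Second, I will split into two regimes in $\gamma$. In the regime $\gamma\leq \tfrac{4}{3}$, I use $e^{\gamma}\leq 1+\gamma+\gamma^{2}/2+\gamma^{3}/4$ from Claim~\ref{claim:custom_exp_upper_bound} on the LHS (where $e^{\gamma}$ appears with a positive coefficient) and $e^{\gamma}\geq 1+\gamma+\gamma^{2}/2$ from Claim~\ref{claim:wk_e}-(\ref{item:wk_2}) on the RHS. Substituting $\alpha'=\alpha/(1+2\gamma\alpha)$ and clearing denominators by the positive quantity $(1+2\gamma\alpha)^{k}$ for a suitable $k$ reduces the claim to the nonnegativity of a polynomial in $\gamma$ whose coefficients are polynomials in $\alpha\in(0,\tfrac{1}{2}]$; this can be verified by collecting like powers of $\gamma$ and using $\alpha\leq \tfrac12$ to absorb cross terms. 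In the regime $\gamma\geq \tfrac{4}{3}$, the LHS can be crudely bounded using $b\leq 1+\alpha/2$ (which follows from $\gamma\alpha\alpha'=\gamma\alpha^{2}/(1+2\gamma\alpha)\leq \alpha/2$) and the trivial bound $(1+\alpha')e^{\gamma}-2\alpha\leq (1+\alpha')e^{\gamma}$, while the RHS already carries a factor $(1+2\gamma\alpha)^{2}\,e^{\gamma}$; the resulting ratio LHS/RHS is dominated by $(1+\gamma\alpha\alpha')^{2}/(1+2\gamma\alpha)^{2}$, which is bounded away from $1$ when $\gamma$ is large enough, and the residual edge case $\alpha\downarrow 0$ is handled by observing that in that limit LHS and RHS both converge to $e^{\gamma}$ and the correction on the RHS is positive.

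The main obstacle will be the small-$\gamma$ analysis: because the inequality is tight at $\gamma=0$ and the linear-in-$\gamma$ contributions cancel exactly on both sides, I must carefully track the quadratic and cubic remainders in the Taylor expansions, and a coarser estimate such as $e^{\gamma}\leq 1+\gamma+\gamma^{2}/2$ would flip the inequality. Once the polynomial inequality has been isolated, verification reduces to checking that a sum-of-products of nonnegative expressions in $\gamma\geq 0$ and $\alpha\in(0,\tfrac12]$ is nonnegative, which is routine but tedious arithmetic of the same flavor as in the proof of Claim~\ref{claim:pty_outer}.
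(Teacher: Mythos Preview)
Your reduction to a polynomial inequality via Taylor bounds is in the right spirit, but the paper takes a much shorter route that avoids the regime split entirely. Instead of multiplying through by $a=e^{\gamma}$ (which places $e^{\gamma}$ on both sides and forces you to pair an upper Taylor bound against a lower one), the paper bounds the single occurrence of $1/a=e^{-\gamma}$ by the linear estimate $e^{-\gamma}\geq 1-\gamma$. This replaces $-2\alpha/a$ by the larger $-2\alpha(1-\gamma)$ and eliminates the exponential in one stroke, leaving a pure polynomial inequality in $\gamma,\alpha$ valid for all $\gamma\geq 0$. After substituting $\alpha'=\alpha/(1+2\gamma\alpha)$ and clearing denominators, the paper simply expands, divides by $2\gamma\alpha$, and observes that the result $4\alpha^{3}\gamma(2\gamma^{2}-2\gamma+1)+\alpha^{2}(12\gamma^{2}-6\gamma+1)+\alpha(6\gamma-1)+1$ is nonnegative because the two displayed quadratics in $\gamma$ are positive and $1-\alpha+6\gamma\alpha\geq 0$. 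No case split, no cubic bound.

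Your large-$\gamma$ argument has a concrete gap as written. After dropping $-2\alpha$, the ratio you need to control is $\dfrac{(1+\gamma\alpha\alpha')^{2}(1+\alpha')}{(1+2\gamma\alpha)^{2}(1-\alpha')}$, not just $\dfrac{(1+\gamma\alpha\alpha')^{2}}{(1+2\gamma\alpha)^{2}}$; the omitted factor $(1+\alpha')/(1-\alpha')>1$ cannot be ignored. Moreover, for any fixed $\gamma$ this ratio tends to $1$ as $\alpha\to 0$, so it is not ``bounded away from $1$'' on the regime $\gamma\geq 4/3$. Your proposed fix---``the correction on the RHS is positive''---is insufficient because the first-order-in-$\alpha$ correction on the LHS is also positive once $\gamma>\log 2$; you would need to compare the two corrections and show their difference $(4\gamma-2)e^{\gamma}+2$ is nonnegative, which is true for $\gamma\geq 1/2$ but is not what you wrote. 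The small-$\gamma$ case is plausible in outline, but note that ``$e^{\gamma}\leq 1+\gamma+\gamma^{2}/2$'' is false for $\gamma>0$, so your remark about a coarser estimate ``flipping the inequality'' is misstated. All of this is repairable, but the single $e^{-\gamma}\geq 1-\gamma$ substitution used in the paper sidesteps every one of these complications.
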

\begin{proof}
Using \Cref{claim:wk_e}-(\ref{item:wk_1}), \(e^{-x} \geq 1 - x\); and the fact that by definition \(\frac{1}{a} = e^{-\gamma}\), it suffices to show
\[
b^2 \left(1 + \alpha' - 2 \alpha (1-\gamma)\right) \leq \frac{\alpha^2}{\alpha'^2}(1-\alpha') \nonumber
\]
Using \Cref{claim:outer_int}-(\ref{item:outer_int_4}), this reduces further to
\begin{equation}\label{eqn:pim_1}
    (1 + \alpha - \alpha')(1 + \alpha' - 2 \alpha ( 1 -  \gamma)) \leq \frac{\alpha^2}{\alpha'^2} (1 - \alpha')
\end{equation}
We expand both sides of this inequality into polynomials involving \(\gamma\) and \(\alpha\), and then analyze the resulting expression.
Using the definition of \(\alpha'\), we have \(\alpha ' = \frac{\alpha}{1+2\gamma \alpha}\), and thus
\(1 + \alpha' = \frac{1 + 2 \gamma \alpha + \alpha}{1 + 2 \gamma \alpha}\) and \(1 - \alpha' = \frac{1 + 2 \gamma \alpha - \alpha}{1 + 2 \gamma \alpha}\).
We also have \(\frac{\alpha}{\alpha'} = 1 + 2 \gamma \alpha\), and finally \(\alpha - \alpha' = \frac{2 \gamma \alpha^2}{1 + 2 \gamma \alpha}\).
Substituting these equalities into (\ref{eqn:pim_1}), we obtain the equivalent inequality
\[
\left(\frac{1+2 \gamma \alpha + \gamma \alpha^2}{1 + 2 \gamma \alpha}\right) \left(\frac{1 + 2 \gamma \alpha + \alpha}{1 + 2 \gamma \alpha} - 2 \alpha (1 - \gamma)\right) \leq (1 + 2 \gamma \alpha)^2 \left(\frac{1 + 2 \gamma \alpha - \alpha}{1 + 2 \gamma \alpha}\right)
\]
Multiplying both sides by \((1 + 2 \gamma \alpha)^2\) and rearranging the terms so that they are all on the same side, we get
\[(1 + 2 \gamma \alpha)^3 (1+2 \gamma \alpha - \alpha)-\left(1 + 2 \gamma \alpha + \gamma \alpha^2\right) (1+2 \gamma \alpha + \alpha - 2 \alpha (1- \gamma) (1 + 2 \gamma \alpha)) \geq 0\]
Next, we expand this inequality:
\begin{align*}
    16 \alpha ^4 \gamma^4-16 \alpha ^4 \gamma^3+8 \alpha ^4 \gamma^2+24 \alpha ^3 \gamma^3-12 \alpha ^3 \gamma^2+12
   \alpha ^2 \gamma^2+2 \alpha ^3 \gamma-2 \alpha ^2 \gamma+2 \alpha  \gamma \geq 0
\end{align*}
As \(\gamma \alpha \geq 0\), we can divide both sides of this inequality by \(2 \gamma \alpha\).
Grouping by powers of \(\alpha\), we obtain:
\[4 \alpha ^3 \gamma \left(2 \gamma^2-2 \gamma+1\right)+\alpha ^2 \left(12 \gamma^2-6 \gamma+1\right)+\alpha  (6
   \gamma-1)+1 \geq 0\]
Upon inspection, both quadratics \(2 \gamma^2 - 2\gamma + 1\) and \(12 \gamma^2 - 6 \gamma + 1\) are positive for all \(\gamma\).
Thus we only need to show \(\alpha (6\gamma-1)+1 \geq 0\), but this is clear from writing it as \(1 - \alpha + 6 \gamma \alpha \geq 0\) and using that \(\alpha \leq 1\).

\end{proof}

\newpage
\printbibliography[
    heading=bibintoc
]

\end{document}